\documentclass[a4paper,cleveref,thm-restate]{lipics-v2021}

\hideLIPIcs
\nolinenumbers

\usepackage{preamble-fv-simple}
\usepackage{macros}
\usepackage{todonotes}
\usepackage{multicol}

\title{Tight Bounds for some W[1]-hard Problems Parameterized by Multi-clique-width}
\date{}

\ccsdesc[300]{Theory of computation~Parameterized complexity and exact algorithms}
\keywords{Parameterized complexity, multi-clique-width, tight bounds, ETH}

\authorrunning{B.\ Bergougnoux, V.\ Chekan, S.\ Kratsch}

\author{Benjamin Bergougnoux}{Aix Marseille Université, CNRS, LIS, Marseille, France}{benjamin.bergougnoux@lis-lab.fr}{https://orcid.org/0000-0002-6270-3663}{}

\author{Vera Chekan}{Humboldt-Universität zu Berlin, Germany}{vera.chekan@hu-berlin.de}{https://orcid.org/0000-0002-6165-1566}{}

\author{Stefan Kratsch}{Humboldt-Universität zu Berlin, Germany}{stefan.kratsch@hu-berlin.de}{https://orcid.org/0000-0002-0193-7239}{}

\begin{document}

\maketitle

\begin{abstract}
    In this work we contribute to the study of the fine-grained complexity of problems parameterized by multi-clique-width, which was initiated by Fürer~[ITCS 2017] and pursued further by Chekan and Kratsch [MFCS 2023].
    Multi-clique-width is a parameter defined analogously to clique-width but every vertex is allowed to hold multiple labels simultaneously.
    This parameter is upper-bounded by both clique-width and treewidth (plus a constant), hence it generalizes both of them without an exponential blow-up.
    Conversely, graphs of multi-clique-width $k$ have clique-width at most $2^k$, and there exist graphs with clique-width at least $2^{\Omega(k)}$.
    Thus, while the two parameters are functionally equivalent, the fine-grained complexity of problems may differ relative to them.
    
    As our first and main result we show that under ETH the \textsc{Max Cut} problem cannot be solved in time $n^{2^{o(k)}} \cdot f(k)$ on graphs of multi-clique-width $k$ for any computable function $f$.
    For clique-width~$k$ an $n^{\cO(k)}$ algorithm by Fomin et al.~[SIAM J.\ Comput.\ 2014] is tight under ETH.
    This makes \textsc{Max Cut} the first known problem for which the tight running times differ for parameterization by clique-width and multi-clique-width and it contributes to the short list of known lower bounds of form $n^{2^{o(k)}} \cdot f(k)$.
    As our second contribution we show that \textsc{Hamiltonian Cycle} and \textsc{Edge Dominating Set} can be solved in time $n^{\cO(k)}$ on graphs of multi-clique-width $k$ matching the tight running time for clique-width.
    These results answer three questions left open by Chekan and Kratsch~[MFCS 2023].
\end{abstract}

\newpage
\section{Introduction}

The framework of parameterized complexity allows a more detailed study of computational complexity by expressing problem complexity as a function of both input size $n$ and one or more parameters. A problem with parameter $k$ is fixed-parameter tractable (FPT) if it can be solved in time $f(k)\cdot n^{\cO(1)}$, and slice-wise polynomial (XP) if it can be solved in time $n^{g(k)}$. While these definitions allow any dependence on $k$, it is natural to seek to make this as low as possible. Moreover, there has been much research activity on fine-grained (parameterized) complexity, that is to find FPT and XP algorithms with optimal dependence on $k$ subject to hypotheses such as the (Strong) Exponential-Time Hypothesis (ETH and SETH).\footnote{ETH is the hypothesis that there is $c>0$ such that \textsc{3-SAT} cannot be solved in time $\cO(2^{cn})$, while SETH states that for each $\varepsilon>0$ there is $q\in\mathbb{N}$ such that \textsc{$q$-SAT} cannot be solved in time $\cO((2-\varepsilon)^n)$.}

This has been most successful relative to structural parameters of graphs such as treewidth ($\tw$) and clique-width.
For example, many standard problems (e.g., \textsc{Vertex Cover}, \textsc{Dominating Set}, \textsc{3-Coloring}) cannot be solved in subexponential FPT time, i.e., in~$2^{o(\tw)} n^{\cO(1)}$, under ETH \cite{0086373,LokshtanovMS11}. 
On the other hand, for these problems standard dynamic-programming algorithms on tree decompositions achieve those bounds implying their tightness.
Similar single-exponential tightness results for treewidth were shown for several problems, e.g., \textsc{Feedback Vertex Set}, \textsc{Hamiltonian Cycle}, \textsc{Treewidth}~\cite{BodlaenderCKN15,CyganNPPRW22,Bonnet25}.
The results of this flavor are not restricted to single-exponential running times.
It was shown that for some problems (e.g.,  \textsc{Planar Vertex Cover}, \textsc{Planar Hamiltonian Cycle})~$2^{\Theta(\sqrt{\tw})} n^{\cO(1)}$ is optimal \cite{LokshtanovMS11}, for some other problems (e.g., \textsc{Vertex Disjoint Paths}, \textsc{Chromatic Number}) $2^{\Theta(\tw \log \tw)} n^{\cO(1)}$ is optimal~\cite{LokshtanovMS18slighly,CyganNPPRW22,HanakaL22,EsmerM25,LampisMV25}, and sometimes (e.g., for \textsc{Metric Dimension}, \textsc{Geodetic Set}) even double-exponential running time $2^{2^{\Theta(\tw)}} n^{\cO(1)}$ is required~\cite{MarxM16,LampisM17,BliznetsH24,FoucaudGK0IST24,HanakaKL26}.
ETH was also used to prove some tight bounds for XP algorithms for \W[1]-hard problems. 
For example, \textsc{List Coloring} and \textsc{CSP} (with the treewidth of the primal graph) cannot be solved in time $n^{o(\tw)} f(\tw)$ for any computable function~$f$~\cite{FellowsFLRSST11}.

For problems solvable in single-exponential time, even finer bounds can be achieved under the stronger hypothesis, namely SETH.
This line of research was initiated by Lokshtanov et al.~\cite{LokshtanovMS18} who showed that for several standard problems (e.g., \textsc{Independent Set}, \textsc{Dominating Set}, \textsc{Odd Cycle Transversal}), folklore dynamic-programming algorithms on tree decompositions are optimal under SETH.
For instance, \textsc{Independent Set} cannot be solved in time $(2-\varepsilon)^{\tw} n^{\cO(1)}$ for any $\varepsilon > 0$ unless SETH fails. 
Since then, many results of this flavor were shown for treewidth (see e.g., \cite{RooijBR09,CyganKN18,MarxSS21,CyganNPPRW22,DubloisLP22,EsmerFMR24,FockeMR24,LampisV24,EsmerM25,toct/FockeMINSSW25,FockeMINSSW25,abs-2506-01645,GreilhuberSW25,LampisV25}).

One of the limitations of treewidth is that only sparse graphs have bounded treewidth,
more precisely, a graph of treewidth $k$ with $n$ vertices contains at most $kn$ edges.
To overcome this limitation, a variety of parameters generalizing treewidth was studied, most importantly clique-width ($\cw$).
The clique-width of a graph is at most~$k$ if it 
can be constructed using at most $k$ labels where the operations, informally speaking, are such that vertices assigned the same label may only get the same neighbors in the future, and relabeling is possible to allow reusing the labels.
For clique-width, many (S)ETH-tight results are also known (e.g.,~\cite{KoblerR03,BodlaenderLRV10,FominGLS14,CurticapeanM16,FominGLSZ19,BergougnouxKK20,Lampis20,HegerfeldK23,BojikianK24,GanianHKOS24,BojikianK25,LampisV25}).
Even though bounded clique-width generalizes bounded treewidth, an exponential blow-up is necessary sometimes: there exist graphs of treewidth $k$ and clique-width $2^{\Theta(k)}$~\cite{CorneilR05}. Thus, being FPT or XP relative to clique-width transfers to parameter treewidth, but the same is not true for fine-grained bounds.

With this motivation, Martin Fürer introduced the two parameters fusion-width ($\fw$)~\cite{Furer14} and multi-clique-width ($\mcw$)~\cite{Furer17} both generalizing treewidth and clique-width but with no such exponential blow-up, i.e., the following relations apply:
\begin{align*}
    \fw \leq \cw && \mcw \leq \cw && \fw \leq \tw + 2 && \mcw \leq \tw + 2.
\end{align*}
Fusion-width generalizes clique-width by allowing an additional operation which merges all vertices of a certain label,
while multi-clique-width essentially uses the same operations as clique-width but it allows every vertex to hold multiple labels simultaneously.
Even though the three parameters are functionally equivalent~\cite{Furer14,Furer17}, in the context of fine-grained complexity, the precise choice of the parameter matters.
Fürer~\cite{Furer14,Furer17} initiated the study of the complexity of problems parameterized by these two parameters by considering variants of the \textsc{Independent Set} problem.
Chekan and Kratsch~\cite{ChekanK23} continued this line of research by first, establishing the relation $\mcw \leq \fw + 1$ between the two parameters.
Second, they showed that several problems (e.g., $q$-\textsc{Coloring}, \textsc{Chromatic Number}, \textsc{Connected Vertex Cover}) can be solved in the same running time when parameterized by clique-width and its generalization multi-clique-width.
The existing matching lower bounds for clique-width then imply the tightness of their algorithms under (S)ETH.
They left open though whether \textsc{Hamiltonian Cycle}, \textsc{Max Cut}, and \textsc{Edge Dominating Set} can also be solved in the same running time as for clique-width. 
For those problems, they only managed to achieve the same running time for the weaker parameter fusion-width.

In this work, we answer these three questions.
First, we show that \textsc{Max Cut}, known to be solvable in $n^{\cO(k)}$ for clique-width, cannot be solved in time $n^{2^{o(k)}} \cdot f(k)$ for any computable function $f$ for multi-clique-width unless ETH fails.
An algorithm with matching running time can be derived from the $n^{\cO(k)}$ algorithm for clique-width by Fomin et al.~\cite{FominGLS14} using the fact that a multi-$k$-expression can trivially be transformed into a $2^k$-expression~\cite{Furer17}.
This makes \textsc{Max Cut} the first and only problem known to separate clique-width and multi-clique-width in terms of fine-grained parameterized complexity of problems.
Also it contributes to the short
list of ETH-based lower bounds of form $n^{2^{o(k)}}$: to the best of our knowledge, for structural graph parameters such a lower bound is only known for the \textsc{Chromatic Number},~\textsc{$b$-Coloring}, and \textsc{Fall Coloring} problems parameterized by clique-width (and therefore, of course, also for smaller parameters)~\cite{FominGLSZ19,JaffkeLL24}.
Our construction relies on cuts forming anti-matchings.
Crucially, we find out that ``paths of anti-matchings'' have logarithmic multi-clique-width leading to the desired lower bound.

\begin{restatable}{theorem}{thmMaxCut}
\label{thm:maxcut-mcw-lb}
    Unless ETH fails, there is no algorithm solving \textsc{Max Cut} in time $n^{2^{o(k)}} \cdot f(k)$ when the input graph $G$ on $n$ vertices is provided with a multi-$k$-expression of $G$ for any computable function $f$.
\end{restatable}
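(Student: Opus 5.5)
We prove the statement by a parameterized reduction, modelled on the Fomin et al.\ lower bound for \textsc{Max Cut} parameterized by clique-width. Their reduction produces graphs of clique-width $\cO(k)$ and rules out $n^{o(k)}$ algorithms. Our goal is a reduction whose output has multi-clique-width $\cO(\log k)$ while still ruling out $n^{o(k)}$. Concretely, we reduce from \textsc{Multicolored Clique} with $k$ color classes, which under ETH admits no $f(k)n^{o(k)}$ algorithm. The target is an equivalent \textsc{Max Cut} instance on $N = n^{\cO(1)}\cdot g(k)$ vertices, together with a multi-$k'$-expression where $k' = \cO(\log k)$. Then an algorithm running in $N^{2^{o(k')}} f(k')$ would run in $n^{o(k)}\cdot h(k)$ on the clique instance, contradicting ETH.

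\textbf{Encoding.} We would use the idea hinted at in the introduction: cuts forming anti-matchings. For each color class $i$ we build a chain of ``selection gadgets''. Each gadget is a pair of sets of size $n$ (or $\cO(n)$) joined by a complete bipartite graph minus a perfect matching, i.e.\ an anti-matching. Weighted edges, simulated by multiplicities or parallel paths, are chosen so that in an optimal cut exactly one vertex per side flips. The flipped position then encodes the choice $v_i$.

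We then chain such anti-matchings into a ``path of anti-matchings'': $A^1 - A^2 - \dots - A^m$, with anti-matchings between consecutive layers. Heavy weights force the selected index to be the same along the whole path. This lets the choice for color $i$ be propagated to edge-check gadgets for every pair $\{i,j\}$. An edge-check gadget for $\{i,j\}$ attaches to layers of paths $i$ and $j$ and gains a fixed bonus exactly when the two chosen indices form an edge of the input graph, for instance via edge-vertices adjacent to the corresponding positions. The cut value reaches a threshold $t$ iff a multicolored clique exists. The correctness proof is routine weight balancing: the forcing weights dominate, and the bonus counts the pairs that succeed.

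\textbf{Width bound (the main obstacle).} The crux is showing that the whole graph has multi-clique-width $\cO(\log k)$. Note that it cannot be $\cO(\log n)$: that would be too large, so the labels must not depend on $n$.

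The key lemma is that an anti-matching between $\{a_1,\dots,a_n\}$ and $\{b_1,\dots,b_n\}$ can be built with $\cO(1)$ labels when vertices carry multiple labels. We process $p=1,\dots,n$ sequentially, with labels ``old-$a$'', ``old-$b$'', ``new-$a$'', ``new-$b$''. We create $a_p$ and $b_p$, join new-$a$ to old-$b$ and new-$b$ to old-$a$, and then relabel new to old. This gives every edge $a_pb_q$ with $p\neq q$, so anti-matchings have bounded width. A path of anti-matchings can be built in the same interleaved, index-by-index fashion. Each vertex $a^\ell_p$ holds the labels of layer $\ell$, both old and new copies, and with multiple labels a vertex can belong to ``layer $\ell$'' and ``layer $\ell+1$-side'' simultaneously.

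To keep the count logarithmic in $k$ rather than linear in the number of layers and colors, we assign each layer/color a binary code of length $\cO(\log k)$. A vertex holds the labels corresponding to the bits of its code, in positive and negative form. Joins between label pairs are then arranged so that, when building position $p$, the connections realize exactly the intended pairs. This is where multi-labels beat clique-width; the $\cO(\log k)$ encoding is inspired by how a multi-expression yields $2^k$ ordinary labels.

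I expect the difficulty to be designing the gadget wiring so that an $\cO(\log k)$-label sequential construction creates exactly the required edges and no spurious ones. Where spurious edges are unavoidable, they must be uniform enough to shift every cut's value by a predictable amount. To handle this, I would first try making all cross-color interactions ``index-uniform except on the diagonal'', so that they are again anti-matching-like and buildable by the same old/new scheme, and then fold the spurious edges into the threshold $t$.
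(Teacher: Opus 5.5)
Your overall framework is the paper's: a polynomial reduction from a $k$-coloured selection problem to \textsc{Max Cut} whose output has multi-clique-width $\mathcal{O}(\log k)$. The gap is that the decisive step, the $\mathcal{O}(\log k)$ width bound, is left open, and the design you sketch cannot meet it.

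\textbf{Why your design fails.} First, your ``key lemma'' uses no multiple labels. Building the anti-matching position by position with four labels is an ordinary $4$-expression, so it cannot be what separates multi-clique-width from clique-width.

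Second, an edge-check gadget for a pair $\{i,j\}$, with edge-vertices adjacent to the endpoint positions, contains the $1$-subdivision of the bipartite graph between $V_i$ and $V_j$ as an induced subgraph. That graph has no $K_{2,2}$ subgraph. By Gurski and Wanke its clique-width is therefore at least linear in its treewidth, which is $\Theta(n)$ for expander-like instances. Since clique-width is at most $2^{\mathrm{mcw}}$, the output has multi-clique-width $\Omega(\log n)$, which you yourself note is fatal.

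So input edges must be checked one at a time in separate copies. That means $m$ layers rather than $\mathcal{O}(k^2)$, so layers cannot carry $\mathcal{O}(\log k)$-bit codes and must be built one after another with reused labels. With your one-hot (``which position flips'') encoding, consider the moment a copy is finished and the next is not yet started. The $n$ position vertices of one colour class then have pairwise different neighbourhoods in the next copy, forcing at least $\log_2 n$ labels.

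Interleaving all copies position by position does not help. It would require joining a new vertex of layer $\ell$ to exactly the old vertices of layers $\ell\pm1$. Joins act on whole label classes and relabelling only merges or empties classes, so this needs $\Omega(m)$ labels. It also destroys the locality of the edge checks.

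Your Max Cut gadgets are also only asserted: forcing exactly one flip per anti-matching and rewarding a pairwise bonus are not constructed. This is secondary to the width problem.

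\textbf{The missing ideas, as used in the paper.}
\begin{itemize}
\item The choice of $u_i^\gamma$ is encoded by a \emph{count}, $|A_S\cap V_1|=\gamma$, where $A_S\cup B_S$ is a clique. All vertices of a colour group are then twins with respect to every edge crossing between iterations, so no positional information is ever stored in labels.
\item There is one copy per edge of $G$, built sequentially. The count passes from copy $j$ to copy $j+1$ via $F'$-gadgets between $b_S^i(j)$ and $a_S^i(j+1)$, which are created in the same iteration and hence locally.
\item The only long-range structure is a \emph{group-level} anti-matching: $A_S(j)$ is complete to $B_T(j)$ iff $T\neq\overline S$. The $2k'$ groups are indexed by the $k/2$-subsets of $[k]$, with $k=\mathcal{O}(\log k')$.
\item To realise it, give $A_S$ the labels $\{q^A : q\in S\}$ and $B_T$ the labels $\{q^B : q\in T\}$, and join $q^A$ with $q^B$. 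This creates exactly the pairs with $S\cap T\neq\emptyset$, i.e.\ $T\neq\overline S$.
\end{itemize}
This complement of a perfect matching on $2k'$ groups needs one label per group in an ordinary expression but only $\mathcal{O}(\log k')$ multi-labels, and that is the real source of the separation. Your positive/negative bit codes would realise the same kind of group-level anti-matching (adjacent iff codes differ). So you have the right tool, but it must be applied to colour groups together with a counting encoding, not to positions or layers.
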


Second, we show that \textsc{Hamiltonian Cycle} and \textsc{Edge Dominating Set} can both be solved in time $n^{\cO(k)}$ for multi-clique-width, i.e., in the same running time as for clique-width, matching the lower bounds by Fomin et al.~\cite{FominGLS14,FominGLSZ19}.

\begin{restatable}{theorem}{thmHC}
\label{thm:hc-algorithm}
    Given a graph $G$ on $n$ vertices and a multi-$k$-expression of $G$, in time $n^{\cO(k)}$ we can decide whether $G$ admits a Hamiltonian cycle.
\end{restatable}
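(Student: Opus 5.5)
Our plan is to adapt the $n^{\cO(k)}$ dynamic program of Fomin et al.~\cite{FominGLS14} for clique-width to multi-$k$-expressions. That algorithm works with partial solutions that are path covers. It records, for every unordered pair of labels $\{i,j\}$ (including $i=j$), how many paths of the cover currently have their two endpoints labeled $i$ and $j$. Since there are $\cO(k^2)$ such pairs and each count is at most $n$, this gives $n^{\cO(k^2)}$ states. Fomin et al.\ bring this down to $n^{\cO(k)}$ with a ``degree'' argument: they only track, per label, how many path endpoints carry that label, together with a witness structure for connectivity. We will follow that refined approach.

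In a multi-$k$-expression, however, a vertex carries a \emph{set} of labels. Path endpoints therefore have to be classified by label sets, of which there are $2^k$, and that is too many. The key observation we plan to use is that, at every node of the expression, only the label sets actually occurring on some vertex matter. We would like to normalize the multi-$k$-expression (in the spirit of Chekan and Kratsch~\cite{ChekanK23}) so that the relevant information is per label rather than per label set.

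Concretely, for a partial solution (a spanning path cover of the current graph $G_t$) we plan to record, for each label $\ell$, the number $d_\ell$ of path-endpoint occurrences that carry label $\ell$. An endpoint that is an isolated vertex-path counts twice. We will then show that a join $\eta_{i,j}$ can be simulated as follows: every new edge used goes between an endpoint holding $i$ and an endpoint holding $j$. We guess how many such edges are added, and we charge each of them to one unit of $d_i$ and one unit of $d_j$.

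The main obstacle is exactly this charging. A single endpoint may hold both $i$ and $j$, or it may hold several labels, each of which later contributes to its own $d_\ell$. As a result, the counts $d_\ell$ no longer determine consistently which endpoints were consumed. To handle this, we would first try to strengthen the invariant. Instead of the counts $d_\ell$, we would store a multiset over the label sets that actually appear among endpoints. We would then argue that, at each node, the number of distinct label sets of endpoints that can still interact is $\cO(k)$ after suitable pruning. The reason we expect this is that the sets realized by vertices of the current graph form a family closed under the relabelings applied so far, and a witness of $\cO(k)$ representative endpoints suffices for Hamiltonicity. This would be justified by an exchange argument on path covers, rerouting paths so that endpoints concentrate on few label classes.

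We also have to rule out cycles formed prematurely. For this we would use the standard trick: cycles are allowed only at the root and only as a single Hamiltonian cycle. Connectivity is checked at the final join by a counting argument, namely that the number of paths decreases by exactly the number of added edges.

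The running time then follows because there are $n^{\cO(k)}$ states per node and polynomially many nodes, and each operation (create, disjoint union, join, relabel) can be processed in time $n^{\cO(k)}$ by guessing the new edge counts.
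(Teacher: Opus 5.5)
There is a genuine gap, and it sits exactly at the obstacle you flag. Your fix is to keep a multiset over endpoint label sets and prune to $\cO(k)$ classes, but this is asserted, not proved, and there is no reason it holds. At a single node, vertices can carry $2^{\Theta(k)}$ pairwise distinct label sets; the paper's own \textsc{Max Cut} construction does this with a linear multi-expression on $3k+\cO(1)$ labels. Closure under the relabelings applied so far does not bound this number. The rerouting exchange argument is never specified, and it is unclear how it could work, since the endpoint label sets are dictated by the graph and the expression. Multiplicities per label set give about $n^{2^{\Theta(k)}}$ states, which is precisely the bound the paper shows unavoidable for \textsc{Max Cut}. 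So a Hamiltonicity-specific idea is needed.

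The missing idea is this. In a Hamiltonian cycle, every endpoint of a path in the current packing receives exactly \emph{one} further incident edge (an isolated vertex counts as two endpoints). It therefore suffices to guess, per endpoint, a single label that will later create that edge. With such a label choice each endpoint is charged to exactly one label, and your charging problem disappears. A node $\rho_{i\to\emptyset}$ discards states with an endpoint committed to $i$. A node $\rho_{i\to\{i,j\}}$ branches on how many endpoints committed to $i$ switch to $j$. A node $\eta_{i,j}$ consumes, per new edge, one endpoint committed to $i$ and one committed to $j$. A path packing now has many states, which is harmless because only existence matters.

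Your connectivity handling also fails. Per-label counts cannot tell whether an edge added at $\eta_{i,j}$ merges two different paths or joins the two ends of one path. In both cases the number of paths drops by one, so premature cycles go undetected, and at the root a disjoint union of several cycles looks like a Hamiltonian cycle.

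This is why Bergougnoux, Kant\'e, and Kwon keep per-path information; theirs, not Fomin et al.'s, is the $n^{\cO(k)}$ algorithm for clique-width. Each path becomes an edge between its two committed labels in an auxiliary multigraph on $[k]$. A join is processed as $n-1$ single merges, each of two \emph{distinct} edges, with pruning after each step. Pruning keeps one multigraph per class of degree sequence plus partition into connected components. This is a representative set with respect to completions by red-blue Eulerian trails, with $n^{k}2^{\cO(k\log k)}$ classes.

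The genuinely new work for multi-expressions is proving that representation survives the add-label step, alongside forget, union and join. Finally, \textsc{Hamiltonian Cycle} reduces to finding a Hamiltonian $u$-$v$ path for some edge $uv$. You give $u$ and $v$ fresh private labels and check at the root for a single auxiliary edge between those two labels.
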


\begin{restatable}{theorem}{thmEDS}
\label{thm:eds-algorithm}
    Given a graph $G$ on $n$ vertices, a multi-$k$-expression of $G$, and integer $\cardinality$, in time $n^{\cO(k)}$ we can decide whether $G$ admits an edge dominating set of size at most $\cardinality$.
\end{restatable}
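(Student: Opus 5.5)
The statement to prove is \cref{thm:eds-algorithm}. My plan is to reduce \textsc{Edge Dominating Set} to a covering problem on vertices. This is the standard route for clique-width, following Fomin et al.

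A graph $G$ has an edge dominating set of size at most $\cardinality$ if and only if there is a vertex set $S$ with two properties. First, $S$ is a vertex cover of $G$. Second, $|S|$ plus the number of vertices in a minimum edge cover of $G[S]$ is at most $\cardinality$, where that edge cover is allowed to use extra edges to $V\setminus S$. Equivalently, there is a vertex cover $S$ together with a maximum matching $M$ of $G[S]$, and the cost is $|S|-|M|$. Every vertex of $S$ not covered by $M$ needs one more edge, which exists as long as it has some neighbour. So we must decide whether some vertex cover $S$ has $|S| - \nu(G[S]) \le \cardinality$, where $\nu$ denotes maximum matching size. Isolated vertices are handled separately.

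The dynamic program runs over the multi-$k$-expression. For a node $t$ with graph $G_t$, a state records four things. The first is the number $s$ of vertices chosen in $S$. The second is the matching size $m$. The third is a vector $(a_L)_{L}$ indexed by label sets. The fourth is a Boolean vector over labels recording whether some unchosen vertex carries that label. Here $a_L$ counts the chosen vertices that are still unmatched and whose label set is exactly $L$.

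Indexing by label sets gives $2^k$ coordinates, which is far too many for $n^{\cO(k)}$. So the key step is to compress the state to $\cO(k)$ numbers. Unmatched chosen vertices matter only for future matching edges. A future join $\eta_{i,j}$ connects every vertex holding label $i$ to every vertex holding label $j$. I would argue, via an exchange argument, that it suffices to record for each label $i$ the number $u_i$ of unmatched chosen vertices carrying label $i$. Under a join we greedily add matching edges between $i$-holders and $j$-holders. A vertex holding several labels is counted in several $u_i$'s, so the counts overlap, and this is the crux. I expect this to be the main obstacle.

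To resolve it, I would follow the fusion-width approach of Chekan and Kratsch. We postpone matching decisions: vertices are reserved to be matched later through specific labels. That is, each unmatched vertex is \emph{guessed} at creation to be matched via one designated label $i$, or to remain unmatched. Then each vertex contributes to exactly one counter $u_i$. At a join $\eta_{i,j}$, we guess how many pairs between designated-$i$ and designated-$j$ vertices get matched and decrement both counters; such pairs are indeed adjacent. Relabelings $\rho_{i\to j}$ and label deletions merge or shift counters. If a vertex loses its designated label, it becomes permanently unmatched (or the branch is discarded).

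The vertex-cover condition is enforced with the Boolean flags. A join $\eta_{i,j}$ is rejected if both $i$ and $j$ contain unchosen vertices. Disjoint union adds the vectors componentwise. Each state has $\cO(k)$ coordinates bounded by $n$, so there are $n^{\cO(k)}$ states. Each operation combines pairs of states, which costs $n^{\cO(k)}$.

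Correctness has two directions. Soundness holds because every decrement corresponds to a genuine edge between distinct unmatched vertices, each decremented once. Completeness comes from taking an optimal pair $(S,M)$ and assigning each matched vertex the label of the join that creates its matching edge. Each edge of $M$ is created at some join $\eta_{i,j}$, and that vertex holds label $i$ from its creation until then, possibly after relabelings, which we track by updating the designation along relabelings. The main subtlety is relabelings that copy a label while keeping the old one (multi-labels). I would handle this by letting the designation follow one chosen copy, guessed at the relabeling step. The final answer is the minimum of $s - m$ over states accepted at the root.
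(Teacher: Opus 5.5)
Your proposal is correct and is essentially the paper's proof. Your cost $|S|-|M|$ for a vertex cover $S$ and a matching $M$ inside $S$ is the reformulation of \cref{lem:eds-equivalent-formulation}; your earlier phrase about ``$|S|$ plus the number of vertices in a minimum edge cover'' is a slip, but the condition you actually use is right. Your per-vertex designations are the paper's label choices $\phi$, your counters $u_i$ and Boolean flags are the footprint components $\psi$ and $I$, and your ``remain unmatched'' option plays the role of the extra label $\star$ that the paper adds after every introduce-node.

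The only real difference is in handling relabelings. The paper first applies \cref{lem:mcw-special-relabels}, so every relabeling either adds or forgets a single label; this also bounds the expression by $\cO(k^2 n)$ nodes and reduces each add-label step to branching on a single number $r$. You instead branch directly on how $u_i$ splits over the target set of a general $\rho_{i \to T}$, which also costs $n^{\cO(k)}$ per node.
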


To obtain those algorithms, we rely on the fact that for every vertex, only a constant number of labels can participate in the creation of incident edges used by a solution.
For Hamiltonian cycles, we show that this idea can be combined with
the machinery of representative sets, which Bergougnoux et al.~\cite{BergougnouxKK20} used to obtain an $n^{\cO(k)}$ algorithm for clique-width.
As the main technical contribution behind this algorithm, we show that their representation still applies to the more general operations used for multi-clique-width.
To obtain an algorithm for edge dominating sets, we use a known reformulation of the problem in terms of vertex covers and matchings.
This property is in contrast with the \textsc{Max Cut} problem for which every edge may, in general, be used by a solution.
Hence, one intuitively needs to keep track of all labels of a vertex motivating the lower bound in \cref{thm:maxcut-mcw-lb}.

\subparagraph*{Related work} (S)ETH-tight bounds have also been actively studied for various further structural parameters. 
For example, going below treewidth, size of the modulator to constant treewidth and related parameters were considered~\cite{HegerfeldK22,EsmerFMR24-source}.
For cutwidth many standard problems~\cite{JansenN19,GeffenJKM20,GroenlandMNS22,BojikianCHK23}, connectivity problems~\cite{BojikianCK25}, and several more~\cite{MarxSS21,PiecykR21,GroenlandMNPR24} were studied.
Parameterizations by vertex-cover number~\cite{LampisM17,DumontLL0T25,FeldmannL25,LampisMV25,LampisV25-node}, treedepth~\cite{LampisV24,HanakaLMN0V25-steiner,LampisMV25}, vertex integrity~\cite{LampisM24}, and feedback-vertex-set number~\cite{PiecykR21} were also studied.
As for parameters for dense graphs, some results are known for rank-width~\cite{BergougnouxKN23} and modular-treewidth~\cite{HegerfeldK23-mtw}.
Recently, Lampis~\cite{Lampis26a,Lampis26,Lampis25-pw-seth} also initiated a line of research devoted to conjectures weaker than SETH from which the same tight bounds follow, for instance, pw-SETH.

\subparagraph*{Organization of the paper} First, we set up the notation and definitions in \cref{sec:preliminaries}.
In \cref{sec:max-cut} we present the lower-bound construction for \textsc{Max Cut}.
In \cref{sec:hamiltonian-cycle,sec:eds} we provide the algorithms for \textsc{Hamiltonian Cycle} and \textsc{Edge Dominating Set}, respectively.

\section{Preliminaries}\label{sec:preliminaries}
For an integer $k$, we define the sets $[k] = \{i \in \bN \mid 1 \leq i \leq k\}$ (in particular, we have $[0] = \emptyset)$ and $[k]_0 = [k] \cup \{0\}$.
For a set $S$ and an integer $0 \leq i \leq |S|$, we define the family ${S \choose i} = \{T \subseteq S \mid |T| = i\}$.
For a function $f \colon A \to B$, integer $t$, elements $b_1, \dots, b_t \in B$, and pairwise distinct elements $a_1, \dots, a_t \in A$ we define the mapping $f[a_1 \mapsto b_1, \dots, a_t \mapsto b_t] \colon A \to B$ as
\[
    f[a_1 \mapsto b_1, \dots, a_t \mapsto b_t](x) =
    \begin{cases}
        b_i & \text{if } x = a_i \text{ for some } i \in [t] \\
        f(x) & \text{otherwise}.
    \end{cases}
\]
For a subset $A \subseteq C$, the function $f_{|_{C}} \colon C \to B$ is the restriction of $f$ to the domain $C$. 
Further, for disjoint sets $A$ and $C$ and functions $f_1 \colon A \to B$ and $f_2 \colon C \to D$, we define the function $f_1 \cup f_2 \colon A \cup C \to B \cup D$ via 
\[
    (f_1 \cup f_2)(x) =
    \begin{cases}
        f_1(x) & \text{if } x \in A \\
        f_2(x) & \text{if } x \in C.
    \end{cases}
\]

Graphs in this paper are simple and undirected.
For a graph $G$ and a set $Q \subseteq E(G)$ of edges, the set $V(Q)$ denotes the set of endpoints of edges in $Q$.
The subgraph $G[Q]$ \emph{induced} by $Q$ is defined as $G[Q] = \big(V(Q), Q\big)$.
The set $Q$ is a \emph{matching} in~$G$ if no two distinct edges of $Q$ share an endpoint.
For a set $X \subseteq V(G)$ of vertices in $G$, the subgraph $G[X]$ \emph{induced} by~$X$ is $G[X] = \big(X, \{uv \in E(G) \mid u,v \in X\}\big)$.
The set $X$ is a \emph{vertex cover} of $G$ if the subgraph $G\big[V(G) \setminus X\big]$ is edgeless. 

Unless specified otherwise, a \emph{partition} of a graph $G$ is a partition of its vertex set. 
We only consider partitions into two parts, called \emph{sides}, and we allow either side of the partition to be empty.
Further for two sets $A, B$ (not necessarily subsets of $V(G)$), we define the set of edges $E_G(A, B)$ of edges \emph{crossed} by $(A, B)$ as $E_G(A, B) = \big\{ uv \in E(G) \mid u \in A, v \in B\big\}$.

For an integer $k$, a \emph{multi-$k$-labeled} graph is a pair $(H, \lab)$ where $H$ is a graph and~$\lab \colon V(H) \to 2^{[k]}$ is called a \emph{multi-labeling function}.
For a vertex $v \in V(H)$, the set $\lab(v)$ is called the \emph{label set} of $v$ (in $\lab$).
We will consider the following four operations on multi-$k$-labeled graphs.
\begin{enumerate}
    \item \emph{Introduce}: For labels $q \in [k]$, $i_1, \dots, i_q \in [k]$, and an element $v$, the operator $v \langle i_1, \dots, i_q \rangle$ creates a multi-$k$-labeled graph with a single vertex $v$ that has label set $\{i_1, \dots, i_q\}$.
    \item \emph{Union}: The operator $\oplus$ takes two vertex-disjoint multi-$k$-labeled graphs and creates their (disjoint) union. The label sets are preserved.
    \item \emph{Join}: For labels $i \neq j \in [k]$, the operator $\eta_{i, j}$ takes a multi-$k$-labeled graph $(H, \lab)$ satisfying $\{i,j\} \not\subseteq \lab(v)$ for all $v \in V(H)$, i.e., the label set of no vertex contains both labels $i$ and $j$ simultaneously.
    And it creates the supergraph $(H', \lab)$ of $H$ on the same vertex set, i.e., $V(H') = V(H)$, such that $E(H') = E(H) \cup \big\{uv \mid i \in \lab(u), j \in \lab(v)\big\}$.
    The label sets are preserved.
    \item \emph{Relabel}: For labels $i \in [k]$ and $S \subseteq [k]$, the operator $\rho_{i \to S}$ takes a multi-$k$-labeled graph~$(H, \lab)$ and creates the multi-$k$-labeled graph $(H', \lab')$ satisfying $H' = H$ and 
    \[
        \lab'(v) = 
        \begin{cases}
            \big(\lab(v) \setminus \{i\}\big) \cup S & \text{if } i \in \lab(v) \\
            \lab(v) & \text{otherwise}
        \end{cases}
    \]
    for every vertex $v \in V(H)$.
    Informally speaking, the operation replaces every occurrence of the label $i$ by the set $S$.
    Note that it is in particular allowed to have $i \in S$ or $S = \emptyset$.
\end{enumerate}
A well-formed sequence of these four operations is called a \emph{multi-$k$-expression} which we treat as a rooted tree where the nodes naturally correspond to operations of the expression.
A multi-$k$-expression is \emph{linear} if for every union-node at least one of the children is an introduce-node.
For a node $x$ of this tree, the multi-$k$-labeled graph arising in the subtree rooted at $x$ is denoted by $(G_x, \lab_x)$.
And $V_x$ and $E_x$ denote the sets of vertices and edges of $G_x$, respectively.
Note that for the root $r$ of the expression, the (unlabeled) graph $G_x$ is a subgraph of $G_r$ for every node $x$. 
Hence the number of vertices and edges in $G_x$ is upper-bounded by $|V_r|$ and $|E_r|$, respectively.
Chekan and Kratsch~\cite{abs-2307-04628} showed that one can simplify any multi-expression as follows, we will use this in our algorithmic applications:

\begin{lemma}[Lemma 37 in~\cite{abs-2307-04628}]\label{lem:mcw-special-relabels}
    Let $\phi$ be a multi-$k$-expression of a graph $G$ on $n$ vertices.
    Then given $\phi$ in time polynomial in $|\phi|$ and $k$, we can compute a multi-$k$-expression $\xi$ of $G$ such that in $\xi$ there are at most $\mathcal{O}(k^2 \cdot n)$ nodes and for every node $x$ the following holds. 
    If $x$ is a $\rho_{i \to S}$-node for some label $i \in [k]$ and some label set $S \subseteq [k]$, then we have $S = \emptyset$ or~$S = \{i, j\}$ for some $j \neq i \in [k]$.
    And if $x$ is a $v\langle S \rangle$-node for some label set $S \subseteq [k]$ and some vertex $v$, then we have $|S| = 1$.
\end{lemma}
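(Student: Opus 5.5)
The plan is to rewrite $\phi$ bottom-up in three normalization steps. Each step preserves, at every union node, the graph and the label sets of its children, up to a consistent renaming of labels that is recorded and pushed towards the root.

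\emph{Step 1 (introduce nodes).} A node $v\langle i_1,\dots,i_q\rangle$ is a leaf, so it can be replaced by $v\langle i_1\rangle$ followed by $\rho_{i_1\to\{i_1,i_2\}},\dots,\rho_{i_1\to\{i_1,i_q\}}$. This chain acts on a single-vertex graph and clearly produces the label set $\{i_1,\dots,i_q\}$. It uses only singleton introduces and relabels of the allowed form.

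\emph{Step 2 (single relabels).} Consider a general $\rho_{i\to S}$ and write $S\setminus\{i\}=\{j_1,\dots,j_t\}$. If $i\in S$, it equals $\rho_{i\to\{i,j_1\}}\cdots\rho_{i\to\{i,j_t\}}$. If $i\notin S$, we append $\rho_{i\to\emptyset}$ to that chain. Correctness is a direct check on $\lab'(v)$: adding a label that $v$ already has changes nothing. No join lies inside the chain, so the join preconditions are unaffected. Steps 1 and 2 already give the required form of the nodes, but they do not bound the size, since $\phi$ may contain arbitrarily long chains of joins and relabels.

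\emph{Step 3 (compressing chains).} The expression has $n$ introduce nodes and $n-1$ union nodes. It therefore suffices to replace every maximal chain of join and relabel nodes between two such nodes by $\mathcal{O}(k^2)$ nodes. Along a chain, the composite of the relabels is a map $\sigma$ sending each label $a$ to a set $S_a$, and it acts on label sets as $L\mapsto\bigcup_{a\in L}S_a$. A join $\eta_{i,j}$ that occurs after a relabel prefix with composite $\sigma$ can be moved before that prefix. It is replaced by the joins $\eta_{a,b}$ over all pairs with $i\in S_a$ and $j\in S_b$. Pairs with $a=b$ are dropped: a vertex carrying $a$ would later carry both $i$ and $j$, violating the precondition of $\eta_{i,j}$, so no vertex carries $a$ and the dropped join contributes no edges. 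For pairs with $a\neq b$, the same argument shows that the precondition of $\eta_{a,b}$ holds. Hence each chain becomes at most $k^2$ distinct joins followed by one substitution $\sigma$.

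The main obstacle is realizing $\sigma$ with $\mathcal{O}(k^2)$ relabels of the special form. Sequential $\rho_{i\to\{i,j\}}$ operations interfere with one another. For example, a swap of two used labels seems impossible to express without a spare label. I would first delete with $\rho_{a\to\emptyset}$ every label $a$ with $S_a=\emptyset$. Then I would decompose $\sigma=\pi\circ\tau$, where $\tau$ sends each $a$ to a set containing a chosen representative $a$ itself, and $\pi$ is a pure renaming of labels. The map $\tau$ is realizable: first perform all additions $\rho_{a\to\{a,j\}}$, then remove unwanted representatives. The care needed here is to order the additions so that no label is copied from a vertex that only received it during this step. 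I would handle this by processing labels so that additions only read original representatives, and pushing problematic cases into $\pi$.

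The renaming $\pi$ is not performed at all. Instead it is applied to all labels mentioned in the ancestors of the chain: joins, relabels, and, at a union node, by renaming consistently in the sibling subtree. This needs a global bookkeeping argument showing that these renamings compose consistently through union nodes. Since all steps are local rewrites, the total running time is polynomial in $|\phi|$ and $k$, and the final node count is $\mathcal{O}(n)\cdot\mathcal{O}(k^2)$.
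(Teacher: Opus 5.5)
Your Steps 1 and 2 are correct. So are the reduction to $\mathcal{O}(k^2)$ nodes per maximal chain of unary nodes and the moving of joins to the front of a chain. The renaming bookkeeping is routine: at a union node, rename the whole second subexpression by a permutation.

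\textbf{The gap.} It is exactly the step you flag as the main obstacle, and it carries the entire size bound. You do not show that the composite substitution $\sigma$ of a chain can be produced, even up to a renaming $\pi$, by $\mathcal{O}(k^2)$ relabels of the forms $\rho_{i\to\emptyset}$ and $\rho_{i\to\{i,j\}}$ on the same $k$ labels.

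The decomposition you describe does not exist in general. If $\tau(a)\ni a$ for every surviving label and $\pi$ is a permutation, then $a\mapsto\pi(a)$ must be an injective choice with $\pi(a)\in\sigma(a)$. This already fails for the single relabel $\rho_{a\to\{b\}}$ applied while $a$ and $b$ are both in use, since then $\sigma(a)=\sigma(b)=\{b\}$.

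Under the looser reading (add with anchors, then forget the anchors), whether the additions can be ordered without interference depends on $\pi$. Take $k=3$ and the composite $\sigma(a)=\{b\}$, $\sigma(b)=\sigma(c)=\{a\}$. It arises, e.g., from
$\rho_{b\to\{b,c\}},\rho_{b\to\emptyset},\rho_{a\to\{a,b\}},\rho_{a\to\emptyset},\rho_{c\to\{c,a\}},\rho_{c\to\emptyset}$.
\begin{itemize}
\item With $\pi=\mathrm{id}$ you need both $\rho_{a\to\{a,b\}}$ and $\rho_{b\to\{b,a\}}$. Afterwards, vertices that started with $\{a\}$ and vertices that started with $\{b\}$ carry identical label sets forever.
\item With $\pi$ swapping $a$ and $b$, two relabels suffice.
\end{itemize}
``Pushing problematic cases into $\pi$'' is not a rule, and you give no argument that a suitable $\pi$ and addition order always exist.

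\textbf{Why provenance must be used.} Such an argument has to use that $\sigma$ is a composite of relabels, because not every substitution is realizable by these operations on $k$ labels, even up to renaming. Relabels act monotonically with respect to inclusion of label sets. Suppose the first operation is $\rho_{i\to\{i,j\}}$. From then on, a vertex that started with $\{j\}$ carries a subset of the labels of a vertex that started with $\{i\}$, so $\sigma(j)\subseteq\sigma(i)$. If the first operation is $\rho_{i\to\emptyset}$, then $\sigma(i)=\emptyset$. Hence, for $k=3$ with vertices labeled $\{a\}$, $\{b\}$, $\{c\}$ present, the substitution $\sigma(a)=\{a,c\}$, $\sigma(b)=\{a,b\}$, $\sigma(c)=\{b,c\}$ cannot be produced by any sequence, with any renaming, since its values are nonempty and pairwise incomparable. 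Your sketch never uses how $\sigma$ arose, so it cannot be made to work as stated.

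What is missing is a genuine combinatorial lemma about composites of add/forget relabels on a fixed set of $k$ labels showing that $\mathcal{O}(k^2)$ operations suffice. Possible ingredients are a normal form, or a potential argument using invariants such as: inclusions $\sigma(j)\subseteq\sigma(i)$, once created, persist, and the partition of labels by their images only coarsens. Without such a lemma, Step 3 gives no bound on the number of relabel nodes, and the $\mathcal{O}(k^2\cdot n)$ bound is not established.
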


In the view of this lemma, we say that an operation of form $\rho_{i \to \emptyset}$ \emph{forgets label~$i$}. 
And an operation of form $\rho_{i \to \{i, j\}}$ \emph{adds label $j$ to label $i$}.
We always assume that the number of labels used by an expression is upper-bounded by the number of vertices in the arising graph, i.e., $k \leq n$.  
The best known FPT-approximation of multi-clique-width goes via rank-width and has a double-exponential approximation ratio~\cite{ChekanK23}.
For this reason, as usual for results of this flavor, in our algorithms we assume that a multi-$k$-expression is a part of the input.
 
A \emph{multigraph} is a pair $M = (V, E)$ where $V$ is a finite set of vertices and every element of the finite set~$E$ is a pair $\big(e, \{u, v\}\big)$, called a \emph{(multi-)edge}, with $u, v \in V$.
We require that for any pair $\big(e, \{u, v\}\big) \neq \big(e', \{u', v'\}\big)$, we have $e \neq e'$.
Then $e$ is an identifier of the multi-edge~$\big(e, \{u, v\}\big)$, thus we call $u$ and $v$ the \emph{endpoints} of $e$. 
And we sometimes simply speak of the edge~$e$, for example, if the endpoints of~$e$ are irrelevant.
First, note that this definition allows distinct multi-edges to have the same endpoints.
Second, the endpoints $u$ and $v$ may, in general, be equal, allowing the \emph{loops}.
We sometimes use $V(M)$ and $E(M)$ to refer to $V$ and $E$, respectively.
For two vertices $u, v \in V$, the \emph{multiplicity} of $\{u,v\}$ is the number of edges with the endpoints $u$ and $v$.
For a vertex $v \in V$, the value~$\deg_M(v)$ is the \emph{degree} of $v$ in $M$, i.e., $\deg_M(v) = \sum_{\big(e, \{u,w\}\big) \in E} \alpha\big(v, \{u,w\}\big)$ where
\[
    \alpha\big(v,\{u,w\}\big) = 
    \begin{cases}
        0 & \text{if } v \notin \{u,w\} \\
        1 & \text{if } v \in \{u,w\} \text{ and } \{u,w\} \neq \{v\} \\
        2 & \text{if } \{u,w\} = \{v\}.
    \end{cases}
\]
Note that every loop at $v$ contributes two to the degree of~$v$.
We define the binary relation~$\sim_M$ on the vertex set $V$ of $M$ so that for any two vertices~$u \neq v \in V$ of~$M$, we have~$u \sim_M v$ if and only if there exists an edge~$\big(e, \{u, v\}\big) \in E$.
A \emph{connected component} of~$M$ is an equivalence class of the reflexive-transitive closure of the relation~$\sim_M$.

\section{Lower bound for Max Cut}\label{sec:max-cut}

In the \textsc{Max Cut} problem, given a graph $G$ and a value $\budget$, we are asked whether there is a partition $(V_1, V_2)$ of $G$ satisfying $\abs*{E_G(V_1, V_2)} \geq \budget$. 
In this section, we prove the following.
\thmMaxCut*
As a starting point of our reduction we use the \multicolored{} problem.
In this problem, we are given a graph $G = (U_1 \cup \dots \cup U_{k'}, E)$ where $U_1,\dots,U_{k'}$ are pairwise disjoint equal-sized independent sets, each of size $n'$.
The goal is to find a multicolored set~$X$---a set that contains exactly one vertex from each of the sets $U_1, \dots, U_{k'}$---that is an independent set in $G$.
The following lower bound for this problem is well-known:
\footnote{The lower bound in~\cite{LokshtanovMS11} is stated for \textsc{Multicolored Clique} but there is a trivial parameter-preserving reduction from \textsc{Multicolored Clique} to \multicolored{} which simply flips adjacencies between all pairs of sets $U_i$ and $U_j$ with $i \neq j$.}
\begin{theorem}[Theorem 5.2 in~\cite{LokshtanovMS11}]
\label{thm:multicolored-lb}
    Unless ETH fails, there is no $f(k') \cdot (n')^{o(k')}$ time algorithm for \multicolored{} for any computable function $f$.
\end{theorem}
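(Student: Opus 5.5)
The statement immediately above is \cref{thm:multicolored-lb}, which the paper imports from \cite{LokshtanovMS11}. I would not build a new argument; I would derive it from the stated lower bound for \textsc{Multicolored Clique} by the footnoted reduction. So the plan is a parameter-preserving, polynomial-time reduction from \textsc{Multicolored Clique} to \multicolored{}.

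Let the \textsc{Multicolored Clique} instance be a graph $H$ with vertex classes $U_1,\dots,U_{k'}$, each of size $n'$; without loss of generality each $U_i$ is independent. The reduction proceeds as follows.

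\begin{enumerate}
    \item Build $G$ on the same vertex set with the same classes $U_1,\dots,U_{k'}$.
    \item For $i\neq j$, $u\in U_i$ and $v\in U_j$, make $uv\in E(G)$ if and only if $uv\notin E(H)$.
    \item Add no edges inside any $U_i$, so each $U_i$ stays an independent set.
\end{enumerate}

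Correctness then follows directly from this complementation between classes. Take a multicolored set $X$, that is, one vertex from each class. Every pair in $X$ lies in two different classes. Hence $X$ is a clique in $H$ exactly when no pair of $X$ is an edge of $G$, i.e.\ exactly when $X$ is independent in $G$.

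The reduction keeps $k'$ and $n'$ unchanged and runs in time $\cO((k'n')^2)$. An algorithm for \multicolored{} in time $f(k')(n')^{o(k')}$ would therefore solve \textsc{Multicolored Clique} within the same bound. That contradicts Theorem~5.2 of \cite{LokshtanovMS11} under ETH.

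The only point that needs care is the exact form of the source theorem. It must be stated for the multicolored version with equal-sized classes, which is the standard formulation in \cite{LokshtanovMS11}. If the source bound were stated only for plain \textsc{Clique}, I would first insert the usual reduction to \textsc{Multicolored Clique}: take $k'$ copies of the vertex set and connect copies of distinct vertices across different classes. This also preserves $k'$ and keeps $n'$ polynomial, so the bound carries over.
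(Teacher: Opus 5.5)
Your proposal is correct and is the same argument the paper gives in its footnote: flip all adjacencies between distinct classes $U_i,U_j$ and keep each class independent, so multicolored cliques of the source instance become multicolored independent sets, with $k'$ and $n'$ unchanged. One wording fix in your optional fallback: in the Clique-to-Multicolored-Clique step, join the copies of $u$ and $v$ in different classes only when $uv$ is an edge of the original graph, not merely when $u\neq v$.
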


So we now fix an instance $G = (U_1 \cup \dots \cup U_{k'}, E)$ of \multicolored.
We may assume that $n'>1$ holds as the instance is trivial otherwise.
We define the value~$n = n' - 1$.
In the following, we introduce some basic gadgets used in our reduction, then we present a high-level overview of the reduction itself, after that we formalize the construction, and finally we prove its correctness. 

\subsection{Basic gadgets}
Given a graph $Q$, the value $\mcut(Q)$ is defined as the maximum $\abs*{E_Q(V_1, V_2)}$ over all partitions~$(V_1, V_2)$ of $Q$.
A partition $(V_1, V_2)$ of $Q$ is called \emph{optimal} if $\abs*{E_Q(V_1, V_2)} = \mcut(Q)$, and it is called \emph{suboptimal} otherwise.

Let $D \geq 1$ be a value to be fixed later.
And let 
\begin{equation}\label{eq:def-of-c}
    C = D^2 \cdot {2n \choose 2} + 1.
\end{equation}
We use the values $C$ and $D$ in the construction of our gadgets in such a way that the number of edges crossed by a suboptimal partition of one gadget $Q$ is at most $\mcut(Q) - D$.
We will choose $D$ sufficiently large (but still polynomial in $n'$ and $k'$) so that every optimal partition of our final graph needs to respect the properties imposed by all the gadgets.

We will rely on the following gadgets introduced by Fomin et al.~\cite{FominGLS14}.
For two distinct vertices~$u$ and $v$, an $F$-\emph{gadget} between $u$ and $v$, denoted by $F(u, v)$, is a graph consisting of~$C$ vertex-disjoint paths, each of length 2, between $u$ and $v$. 
For two distinct vertices~$u$ and $v$, an~$F'$-\emph{gadget} between $u$ and $v$, denoted by $F'(u, v)$, is a graph consisting of $C$ vertex-disjoint paths, each of length 3, between $u$ and~$v$ (see \cref{fig:simple-gadgets}). 
Then these gadgets have the following properties:
\begin{figure}[b]
    \includegraphics{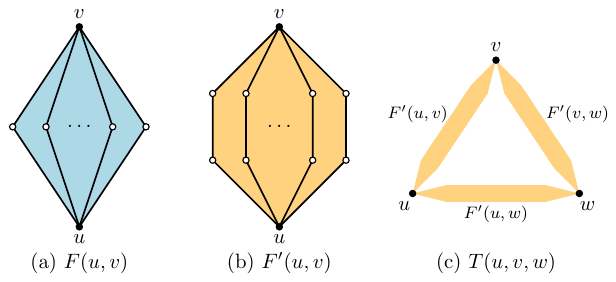}
    \centering
    \caption{Gadgets (a) $F$, (b) $F'$, and (c) $T$.} 
    \label{fig:simple-gadgets}
\end{figure}
\begin{lemma}[Lemma 4.2 in~\cite{FominGLS14}]\label{lem:F-gadgets}
The following properties apply (they are also summarized in \cref{tab:Fgadgets}):
    \begin{enumerate}
        \item We have $\mcut(F(u, v)) = 2C$ and $\mcut\big(F'(u, v)\big) = 3C$.
        \item For every partition $(V_1, V_2)$ of $F(u, v)$ with $u \in V_1$ and $v \in V_2$, we have $\abs*{E_{F(u,v)}(V_1, V_2)} \leq \mcut\big(F(u, v)\big) - C$. And every partition of $\{u,v\}$ placing the two vertices on different sides can be extended to a partition $(V_1, V_2)$ of $F(u, v)$ satisfying $\abs*{E_{F(u,v)}(V_1, V_2)} = \mcut\big(F(u, v)\big) - C$.
        \item For every partition $(V_1, V_2)$ of $F'(u, v)$ with $u \in V_1$ and $v \in V_1$, we have $\abs*{E_{F'(u,v)}(V_1, V_2)} \leq \mcut\big(F'(u, v)\big) - C$.
        And every partition of $\{u,v\}$ placing the two vertices on the same side can be extended to a partition $(V_1, V_2)$ of $F'(u, v)$ satisfying $\abs*{E_{F'(u,v)}(V_1, V_2)} = \mcut\big(F'(u, v)\big) - C$.
        \item Any partition of $\{u, v\}$ such that $u$ and $v$ are on the same side / on different sides can be extended to an optimal partition of $F(u, v)$ / $F'(u, v)$. 
    \end{enumerate}
    \begin{table}[htb]
        \centering
        \begin{tabular}{c|c|c}
             Gadget $Q$ & $u$ and $v$ on the same side & $u$ and $v$ on different sides\\
             \hline
             $F(u,v)$ & $\mcut(F) = 2C$ & $\mcut(F) - C$  \\
             \hline
             $F'(u,v)$ & $\mcut(F') - C$ & $\mcut(F') = 3C$
        \end{tabular}
        \caption{Maximum value of $\abs*{E_Q(V_1,V_2)}$ over the partitions $(V_1,V_2)$ of a gadget $Q$ depending on whether $u$ and $v$ on the same side or on different sides. 
        Furthermore, every partition of $\{u,v\}$ can be extended to a partition $(V_1, V_2)$ of the whole gadget so that the number of crossed edges is equal to the corresponding table entry.
        This follows from \cite[Lemma 4.2]{FominGLS14}.}
        \label{tab:Fgadgets}
    \end{table}
    
\end{lemma}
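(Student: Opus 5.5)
The gadgets decompose into independent pieces, so I will analyse them path by path. $F(u,v)$ is the union of $C$ internally vertex-disjoint paths $u - w_\ell - v$ for $\ell \in [C]$. $F'(u,v)$ is the union of $C$ internally vertex-disjoint paths $u - a_\ell - b_\ell - v$. Distinct paths share only $u$ and $v$. Hence, once the sides of $u$ and $v$ are fixed, the number of crossed edges is the sum over the paths of the edges crossed within each path. Each path's internal vertices can also be placed independently of the others. So every claim in \cref{lem:F-gadgets} reduces to a small case analysis on a single path of length 2 or 3, multiplied by $C$.

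\textbf{Single path of length 2} ($u - w - v$). If $u$ and $v$ are on the same side, putting $w$ on the other side crosses both edges, which gives $2$. If they are on different sides, then whichever side $w$ is on, exactly one of its two edges is crossed, which gives $1$. The path has only two edges, so its maximum is $2$ in every case. Summing over the $C$ paths:
\begin{itemize}
  \item $\mcut(F(u,v)) = 2C$, attained whenever $u$ and $v$ are on the same side (part 4 for $F$);
  \item every partition with $u$ and $v$ on different sides crosses exactly $C = 2C - C$ edges, for any placement of the internal vertices, which gives part 2 (both the bound and its attainment).
\end{itemize}

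\textbf{Single path of length 3} ($u - a - b - v$). If $u$ and $v$ are on different sides, take $a$ opposite to $u$ and $b$ opposite to $a$. Then $b$ is on $u$'s side, hence opposite to $v$, and all $3$ edges are crossed. If $u$ and $v$ are on the same side, going along the path from $u$ back to $u$'s side means the number of side changes is even. So at most $2$ edges are crossed, and $2$ is attained by putting $a$ and $b$ both on the other side. Summing over the $C$ paths:
\begin{itemize}
  \item $\mcut(F'(u,v)) = 3C$, attained whenever $u$ and $v$ are on different sides (part 4 for $F'$);
  \item with $u$ and $v$ on the same side, at most $2C = 3C - C$ edges are crossed, and this value is attained, which gives part 3.
\end{itemize}
Part 1 follows from these two computations together with the upper bounds $2C = |E(F(u,v))|$ and $3C = |E(F'(u,v))|$.

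There is no real obstacle here. The only point that needs care is the parity argument for the same-side case of the length-3 path, since it is what rules out crossing all three edges.
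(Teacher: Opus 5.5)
Your proof is correct: the $C$ internally disjoint paths partition the gadget's edge set, and their internal vertices can be placed independently, so multiplying your single-path analysis by $C$ gives all four items and the table. That analysis is one crossed edge per length-2 path when $u,v$ are separated, and the parity bound of at most two crossed edges per length-3 path when $u,v$ are on the same side. The paper gives no proof of its own here and imports the statement from Lemma~4.2 in~\cite{FominGLS14}, so your argument supplies the elementary verification that this citation stands for.
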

To simplify the notation, we will sometimes discard the endpoints of the gadget if either they are clear from the context or we only care about the maximum number of edges crossed.
In particular, we will use the notation $\mcut(F)$ and $\mcut(F')$.
In view of the above lemma, 
for vertices~$u$ and $v$
we say that a partition of a superset of $\{u,v\}$ \emph{satisfies} $F$-gadget (resp.~$F'$-gadget) if the vertices $u$ and $v$ are on the same side (resp.\ on different sides) of the partition.
On a high-level, the choice of a sufficiently large value $C$ will later ensure that if a partition of the final graph falsifies at least one of the used gadgets, there are not enough edges to compensate for the loss of $C$ in the remaining graph.
In other words, every optimal partition of the final graph will necessarily satisfy all gadgets.

Let now $u, v, w$ be three pairwise distinct vertices and let $T(u, v, w)$ denote the gadget that consists of these vertices, an $F'$-gadget between $u$ and $v$, an $F'$-gadget between $u$ and~$w$, and an $F'$-gadget between $v$ and $w$ (see \cref{fig:simple-gadgets}).
The following lemma follows from the properties of $F'$-gadgets:
\begin{lemma}\label{lem:triangle-gadget}
    \begin{enumerate}
        \item We have $\mcut\big(T(u, v, w)\big) = 3\cdot \mcut(F')-C$.
        \item For every partition $(V_1, V_2)$ of $T(u, v, w)$ with $u, v, w \in V_1$, we have $\abs*{E_{T(u, v, w)}(V_1, V_2)} \leq \mcut\big(T(u,v,w)\big) - 2C$.
        \item Any partition of $u, v, w$ not placing all vertices to the same side can be extended to an optimal partition of $T(u, v, w)$.
    \end{enumerate} 
\end{lemma}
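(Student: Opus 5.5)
The key observation is that the three $F'$-gadgets $F'(u,v)$, $F'(u,w)$, $F'(v,w)$ share only the terminal vertices $u,v,w$ and are otherwise internally vertex-disjoint and edge-disjoint. Hence, for any partition $(V_1,V_2)$ of $T(u,v,w)$, the number of crossed edges is the sum of the crossed edges within the three gadgets. Moreover, once the sides of $u,v,w$ are fixed, the internal vertices of the three gadgets can be chosen independently.

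By \cref{lem:F-gadgets}, each $F'$-gadget restricted to a fixed placement of its endpoints contributes at most $\mcut(F')$ if the endpoints are separated. If the endpoints are on the same side, it contributes at most $\mcut(F') - C$. In both cases this value is attained, as summarized in \cref{tab:Fgadgets}. So the maximum over extensions of a given partition of $\{u,v,w\}$ equals
\[
3\cdot\mcut(F') - C\cdot s,
\]
where $s$ is the number of pairs among $\{u,v\},\{u,w\},\{v,w\}$ placed on the same side.

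It remains to count $s$. With two sides and three vertices, the pigeonhole principle gives $s\ge 1$ for every placement. If all three vertices are on one side, then $s=3$. Otherwise exactly two of them share a side and the third is alone, so $s=1$.

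This gives all three items of the statement.
\begin{enumerate}
\item Taking the maximum over all placements of $u,v,w$ gives $\mcut(T(u,v,w))=3\cdot\mcut(F')-C$, attained by any non-monochromatic placement.
\item If $u,v,w\in V_1$, then $s=3$, so the cut is at most $3\cdot\mcut(F')-3C = \mcut(T(u,v,w))-2C$.
\item For any non-monochromatic placement we have $s=1$. Extending each gadget optimally for its endpoint placement, via item 4 and the extension claims of items 2 and 3 of \cref{lem:F-gadgets}, yields exactly $\mcut(T(u,v,w))$.
\end{enumerate}

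No step is a real obstacle. The only point needing care is that the extensions on the three gadgets can be combined independently, which holds because their internal vertex sets are pairwise disjoint.
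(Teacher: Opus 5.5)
Your proof is correct and follows the paper's argument. You split $T(u,v,w)$ into its three edge-disjoint $F'$-gadgets, apply the bounds and extension properties of \cref{lem:F-gadgets} to each gadget, and note that a monochromatic placement violates all three gadgets while any other placement violates exactly one. The count $s$ of same-side pairs makes the upper bound for non-monochromatic placements slightly more explicit than the paper does; citing item 2 of \cref{lem:F-gadgets} is unnecessary, since that item concerns $F$-gadgets.
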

\begin{proof}
    We rely on the properties of $F'$-gadgets from~\cref{lem:F-gadgets}.
    If all three vertices~$u, v, w$ are placed on the same side, then each of the three $F'$-gadgets is not satisfied, i.e., $\mcut(T(u, v, w)) \leq 3 \cdot \big(\mcut(F')- C\big) = 3 \cdot \mcut(F') - 3C$ holds.
    And if each side of the partition of $u, v, w$ contains at least one vertex, then precisely one of the $F'$-gadgets is not satisfied, namely the one with endpoints on the same side.
    So we can extend it to a partition $(V_1, V_2)$ of $T(u, v, w)$ satisfying $\abs*{E_{T(u, v, w)}(V_1, V_2)} = 3 \cdot \mcut(F') - C = \mcut\big(T(u, v, w)\big)$.
\end{proof}
We use the shortcut $\mcut(T)$ for $\mcut\big(T(u, v, w)\big)$.
In view of this lemma, 
for vertices $u$ and $v$
we say that a partition of a superset of $\{u,v,w\}$ \emph{satisfies} $T$-gadget if this partition does not place all vertices $u,v,w$ on the same side.

Next, for integers $1 \leq \alpha \leq t \leq n$ and pairwise distinct vertices $x_1, \dots, x_t, y, z$ we define the gadget $\gadgetIf_{\alpha, t}(x_1, \dots, x_t, y, z)$~(see \cref{fig:h-if-gadget}) as follows.
The vertices $x_1, \dots, x_t, y, z$ are called the \emph{entry points} of $\gadgetIf_{\alpha, t}(x_1, \dots, x_t, y, z)$.
We rely on the idea by Fomin et al.~\cite{FominGLS14} and generalize what they call ``an $H_{s,t}(x_1, \dots, x_s,y)$ gadget'' to our purposes.
Our gadget gets one more input vertex $z$ and exhibits different types of behavior depending on whether $y$ and $z$ are placed on the same side of the partition.
The gadget $\gadgetIf_{\alpha, t}(x_1, \dots, x_t, y, z)$ has the following structure.
First, for every $i \in [2n]$ we add a set $P_i = \big\{p_{i,j} \mid j \in [D]\big\}$ of new vertices called the \emph{$i$-th column}.
For $j \in [D-1]$ we add an $F$-gadget between $p_{i,j}$ and $p_{i,j+1}$.
And we add an edge $p_{i, j} p_{i', j'}$ for all $i \neq i' \in [2n]$ and $j, j' \in [D]$ so that the columns form a complete $2n$-partite graph.
In our future analysis, we will refer to the subgraph created so far by $H$.
Next, for $i \in [t]$ we add an $F$-gadget between $p_{i, D}$ and $x_i$, and we say that $P_i$ is \emph{attached} to $x_i$.
Further, for $i \in [n - \alpha]$, we add a new vertex $r_i$, an $F$-gadget between~$r_i$ and $y$, and a gadget $T(p_{n+i, D}, r_i, z)$.
And we say that the column $P_{n+i}$ is \emph{attached} to this~$T$-gadget.
This concludes the construction of the gadget $\gadgetIf_{\alpha,t}(x_1, \dots, x_t, y, z)$.

\begin{figure}[hbt]
    \includegraphics[width=0.75\linewidth]
    {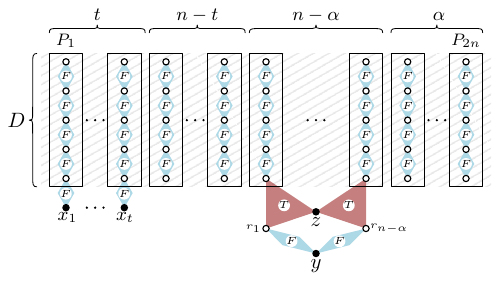}
    \centering
    \caption{Gadget $\gadgetIf_{\alpha, t}(x_1,\dots,x_t,y,z)$ (adapted Figure 2 in \cite{FominGLS14} by Fomin et al.). There is a complete bipartite graph between any two columns $P_i$ and $P_j$. The gray hatched area represents the subgraph $H$.}
    \label{fig:h-if-gadget}
\end{figure}

To prove the crucial properties of $\gadgetIf_{\alpha, t}(x_1,\dots,x_t,y,z)$, we first show some properties of the subgraph $H$ of this gadget:
\begin{lemma}\label{lem:properties-of-H}
    \begin{enumerate}
        \item We have $\mcut(H) = 2n \cdot (D-1) \cdot \mcut(F) + n^2 \cdot D^2$.
        \item For every optimal partition $(V_1,V_2)$ of $H$, every column is either contained in $V_1$ or it is contained in $V_2$. Furthermore, there exist precisely $n$ columns contained in $V_1$.
        \item Every partition $(V'_1, V'_2)$ of $\bigcup_{i \in [2n]} P_i$ such that $V'_1$ consists of precisely $n$ columns can be extended to an optimal partition $(V_1, V_2)$ of $H$.
        \item For every partition $(V_1, V_2)$ of $H$ such that either some column contains vertices in both $V_1$ and $V_2$, or each column is fully contained in one of $V_1$ and $V_2$ but $V_1$ contains a number of columns other than $n$, we have $\abs*{E_H(V_1, V_2)} \leq \mcut(H) - D^2$.
    \end{enumerate}
\end{lemma}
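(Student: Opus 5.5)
Write $\mathcal{E}_F$ for the edges of the $2n(D-1)$ $F$-gadgets inside the columns (chains $p_{i,1},\dots,p_{i,D}$), and $\mathcal{E}_K$ for the edges of the complete $2n$-partite graph between columns. These edge sets are disjoint and together make up $E(H)$. So for every partition $(V_1,V_2)$ of $H$ we can count the crossed edges as
\[
\abs*{E_H(V_1,V_2)} = \sum_{\text{$F$-gadgets}} (\text{crossed edges in the gadget}) + \abs*{\mathcal{E}_K \cap E_H(V_1,V_2)}.
\]
By \cref{lem:F-gadgets}, each $F$-gadget contributes at most $\mcut(F)$. It contributes exactly $\mcut(F)$ when its endpoints lie on the same side, and at most $\mcut(F)-C$ otherwise.

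For the complete multipartite part, let $a_i = \abs*{P_i\cap V_1}$, so $\abs*{P_i \cap V_2} = D-a_i$. The number of crossed edges is $\sum_{i\neq i'} a_i(D-a_{i'})$ over ordered pairs with $i \neq i'$. Writing $A=\sum_i a_i$, this equals $A(2nD-A) - \sum_i a_i(D-a_i)$. This is at most $A(2nD-A) \le (nD)^2 = n^2D^2$. Equality holds iff every $a_i\in\{0,D\}$ (each column is monochromatic) and $A=nD$ (exactly $n$ columns lie in $V_1$).

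For the upper bound in part 1, we combine the two bounds: every partition crosses at most $2n(D-1)\mcut(F)+n^2D^2$ edges. For the matching lower bound and for part 3, we take any assignment of exactly $n$ whole columns to $V_1$. By item 4 of \cref{lem:F-gadgets}, each $F$-gadget, whose endpoints are on the same side, extends optimally. The resulting partition attains the upper bound, which proves parts 1 and 3 simultaneously.

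For parts 2 and 4, we consider a partition violating the structure. In the first case, some column $P_i$ is not monochromatic. Then some consecutive pair $p_{i,j},p_{i,j+1}$ lies on different sides, so that $F$-gadget loses at least $C = D^2\binom{2n}{2}+1 > D^2$. The multipartite part never exceeds its bound, so the total is at most $\mcut(H)-D^2$. In the second case, all columns are monochromatic but the number of columns in $V_1$ is $m\neq n$. Then all $F$-gadgets can be at their optimum, but $A=mD$, and the multipartite count is $mD(2n-m)D = D^2(n^2-(n-m)^2) \le n^2D^2 - D^2$. Together these give part 4. Part 2 follows, since any partition of the excluded types is strictly suboptimal ($D\ge1$).

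The only step needing care is the multipartite identity and its equality case. The factor $C$ being larger than $D^2$ is exactly what the choice in \eqref{eq:def-of-c} guarantees, so I do not expect a real obstacle.
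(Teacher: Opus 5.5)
Your proof is correct, and its skeleton matches the paper's: you split $E(H)$ into the $F$-gadget edges and the complete $2n$-partite edges, and you maximize $\ell(2n-\ell)$ over assignments of whole columns. The one real difference is the non-monochromatic case. The paper does not bound the inter-column edges sharply there. It counts all ${2n \choose 2} D^2$ of them as possibly crossed and absorbs this with the $C$ lost at the violated $F$-gadget, which is exactly why $C$ is set to $D^2 {2n \choose 2} + 1$ in \eqref{eq:def-of-c}. You instead use the identity $\sum_{i\neq i'} a_i(D-a_{i'}) = A(2nD-A) - \sum_i a_i(D-a_i)$ to show that the multipartite part is at most $n^2D^2$ for \emph{every} partition, with equality exactly in the balanced monochromatic case. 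This gives a uniform upper bound, so item~1 follows in one line and both failure modes of item~4 are handled by the same inequality. It also shows that $C \geq D^2$ would already suffice for this lemma. The paper's route avoids the identity at the price of a larger $C$, which is harmless since $C$ stays polynomial. One wording fix: an $F$-gadget whose endpoints lie on the same side contributes \emph{at most} $\mcut(F)$ and \emph{can be extended} to contribute exactly $\mcut(F)$ (item~4 of \cref{lem:F-gadgets}). It does not contribute exactly $\mcut(F)$ for every placement of its middle vertices. Your argument only ever uses the correct version, so nothing breaks.
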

\begin{proof}
    There are ${2n \choose 2} \cdot D^2$ edges of $H$ with endpoints in different columns.
    First, consider a partition $(V_1, V_2)$ of $H$ in which there exists a column containing a vertex in each of $V_1$ and~$V_2$.
    Then at least one $F$-gadget with endpoints in this column is not satisfied, i.e., at most $\mcut(F) - C$ edges of this gadget are crossed (cf.\ \cref{lem:F-gadgets}).
    So we have
    \begin{align}
        \abs*{E_H(V_1, V_2)} &\leq \big(2n \cdot (D-1) \cdot \mcut(F) - C\big) + {2n \choose 2} \cdot D^2 \nonumber \\ 
        &\stackrel{\eqref{eq:def-of-c}}{=} 2n \cdot (D-1) \cdot \mcut(F) - 1 
        < 2n \cdot (D-1) \cdot \mcut(F) + n^2 \cdot D^2 - D^2. \label{eq:gap-h-gadget}
    \end{align}
    
    Next consider a partition $(V_1', V_2')$ of $\bigcup_{i \in [2n]} P_i$ for which there exists an integer $\ell$ with~$0 \leq \ell \leq n$ such that precisely $\ell$ columns are in $V'_1$ and the other $2n-\ell$ columns are in $V'_2$.
    By \cref{lem:F-gadgets} it is possible to extend it to a partition $(V_1, V_2)$ of $H$ which is optimal for every~$F$-gadget.  
    In this case we have
    \[
        \abs*{E_H(V_1, V_2)} \leq 2n \cdot (D-1) \cdot \mcut(F) + \ell \cdot (2n-\ell) \cdot D^2.
    \]
    The value $\ell = n$ is the unique maximizer of the product $\ell \cdot (2n-\ell)$ over the integers~$0 \leq \ell \leq 2n$.
    First, this implies the first three claims.
    Second, whenever $\ell \neq n$, we have $\abs*{E_H(V_1, V_2)} \leq \mcut(H) - D^2$.
    Together with \eqref{eq:gap-h-gadget}, this implies the last claim.
\end{proof}    

We emphasize that the gadget $H$ is independent of the vertices $x_1, \dots, x_t, y, z$ and the values $t$ and $\alpha$.
Now we are ready to prove the properties of $\gadgetIf_{\alpha, t}(x_1,\dots,x_t,y,z)$:
\begin{lemma}\label{lem:H-if-properties}
    \begin{enumerate}
        \item We have 
        \[
            \mcut\big(\gadgetIf_{\alpha,t}(x_1,\dots,x_t,y,z)\big) = \mcut(H) + (t + n - \alpha) \cdot \mcut(F) + (n - \alpha) \cdot \mcut(T).
        \]
        \item For every optimal partition $(V_1, V_2)$ of the gadget $\gadgetIf_{\alpha,t}(x_1,\dots,x_t,y,z)$ with $y,z\in V_2$, we have $\abs*{\{x_1, \dots, x_t\} \cap V_1} \leq \alpha$.
        \item Every partition $(V_1',V_2')$ of $\{x_1,\dots,x_t,y,z\}$ such that:
        \begin{itemize}
            \item either $y,z\in V_2'$ and $\abs*{\{x_1, \dots, x_t\} \cap V_1'} \leq \alpha$, 
            \item or $y \in V_2'$ and $z \in V_1'$ 
        \end{itemize}
        can be extended to an optimal partition $(V_1,V_2)$ of $\gadgetIf_{\alpha,t}(x_1,\dots,x_t,y,z)$. 

        \item For every partition $(V_1,V_2)$ of the gadget $\gadgetIf_{\alpha,t}(x_1,\dots,x_t,y,z)$ such that $y,z\in V_2$ and $\abs*{\{x_1, \dots, x_t\} \cap V_1} \geq \alpha+1$, we have 
        \[
            \abs*{E_{\gadgetIf_{\alpha, t}(\{x_1,\dots,x_t\},y,z)}(V_1,V_2)} \leq \mcut(\gadgetIf_{\alpha,t}(\{x_1,\dots,x_t\},y,z)) - D^2.
        \]
    \end{enumerate}
\end{lemma}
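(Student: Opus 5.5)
The gadget $\gadgetIf_{\alpha,t}(x_1,\dots,x_t,y,z)$ is an edge-disjoint union of several pieces: the subgraph $H$, the $t$ $F$-gadgets between $p_{i,D}$ and $x_i$, the $n-\alpha$ $F$-gadgets between $r_i$ and $y$, and the $n-\alpha$ gadgets $T(p_{n+i,D},r_i,z)$. So the cut value of any partition is the sum of the values on these pieces. This gives the upper bound in claim~1 immediately, namely $\mcut(H)+(t+n-\alpha)\mcut(F)+(n-\alpha)\mcut(T)$. For the matching lower bound, I will make every piece optimal at once, and the same explicit construction will give claim~3.

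Fix a partition $(V_1',V_2')$ of the entry points as in claim~3.

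In the case $y\in V_2'$ and $z\in V_1'$:
\begin{itemize}
\item Put every $r_i$ into $V_2$, which satisfies the $F$-gadgets $F(r_i,y)$.
\item Each $T(p_{n+i,D},r_i,z)$ is then satisfied whatever side $P_{n+i}$ is on, since $r_i$ and $z$ lie on different sides.
\item Choose $P_i$ on the side of $x_i$ for $i\le t$, which satisfies the attaching $F$-gadgets.
\item Fill the remaining columns so that exactly $n$ columns lie in $V_1$. This is possible only if at most $n$ of the first $t$ columns were forced into $V_1$ and at most $n$ into $V_2$.
\end{itemize}
The counting constraint is where care is needed. Columns $t+1,\dots,2n$ are free, so I need $|\{i\le t : x_i\in V_1\}|\le n$ and $|\{i\le t : x_i\in V_2\}|\le n$. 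Both hold because $t\le n$. Then \cref{lem:properties-of-H}(3), \cref{lem:F-gadgets}(4) and \cref{lem:triangle-gadget}(3) extend this to an optimal partition.

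In the case $y,z\in V_2'$ and at most $\alpha$ of the $x_i$ lie in $V_1'$:
\begin{itemize}
\item Again put every $r_i$ into $V_2$ to satisfy $F(r_i,y)$.
\item Now $r_i$ and $z$ are both in $V_2$, so satisfying $T(p_{n+i,D},r_i,z)$ forces $P_{n+i}\subseteq V_1$ for all $i\in[n-\alpha]$.
\item Put $P_i$ on the side of $x_i$ for $i\le t$. Together with the previous step, at most $\alpha+(n-\alpha)=n$ columns are forced into $V_1$, and at most $t\le n$ into $V_2$.
\item The columns not yet placed, among indices $t+1,\dots,n$ and $2n-\alpha+1,\dots,2n$, are free. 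Distribute them so that exactly $n$ columns lie in $V_1$; this works since $2n$ columns are split with each side's forced count at most $n$.
\end{itemize}
Extend as before. This proves claims~1 and~3.

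For claims~2 and~4, let $y,z\in V_2$ and suppose $\ge\alpha+1$ of the $x_i$ lie in $V_1$. I compare the cut with the sum of the piecewise maxima and show that some piece must lose.
\begin{itemize}
\item If any $F$- or $T$-gadget is unsatisfied, the loss is at least $C>D^2$ by \cref{lem:F-gadgets} and \cref{lem:triangle-gadget}.
\item Otherwise every $r_i\in V_2$, so each $P_{n+i}$ with $i\in[n-\alpha]$ meets $V_1$ at its endpoint $p_{n+i,D}$.
\item Each $P_i$ with $x_i\in V_1$ meets $V_1$ at $p_{i,D}$.
\end{itemize}
These give at least $n+1$ columns meeting $V_1$, with distinct indices since $t\le n$. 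So either some column is split, or more than $n$ columns lie in $V_1$. In both cases \cref{lem:properties-of-H}(4) yields a loss of at least $D^2$ on $H$. This is claim~4, and claim~2 follows since such a partition is not optimal, $D\ge1$.

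The main obstacle is the bookkeeping of forced columns in claim~3, specifically verifying that both sides' forced counts stay $\le n$. This is exactly where the hypotheses $t\le n$ and $\le\alpha$ enter.
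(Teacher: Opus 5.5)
Your proof is correct and follows the paper's route. It uses the same edge-disjoint split into $H$, the $t+n-\alpha$ $F$-gadgets and the $n-\alpha$ $T$-gadgets. The extension is also the same: all $r_i$ go into $V_2$, and columns are placed according to the $x_i$ and the $T$-gadgets. The paper fixes the padding explicitly by putting $P_{n+1},\dots,P_{n+(n-\beta)}$ into $V_1$, whereas you argue by counting. Items~2 and~4 rest on the same forcing $r_i\in V_2\Rightarrow p_{n+i,D}\in V_1$.

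The one real difference is how items~2 and~4 are organized, and it works in your favor. The paper proves item~2 for optimal partitions. It then derives item~4 from the blanket claim that every partition suboptimal on some piece loses at least $D^2$. That claim is not literally true: an $F$-gadget with correctly placed endpoints but one misplaced middle vertex loses only $2$. Your case split avoids this. Either some gadget has misplaced endpoints, giving a loss of at least $C>D^2$, or at least $n+1$ distinct columns meet $V_1$ and \cref{lem:properties-of-H}(4) applies. This proves item~4 directly, and item~2 follows.
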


\begin{proof}
    Let $\gadgetIf$ be a shortcut for $\gadgetIf_{\alpha,t}(x_1,\dots,x_t,y,z)$.
    In this proof, by $F$-gadgets we refer to the $F$-gadgets of $\gadgetIf$ not contained in $H$.
    First of all observe that we have 
    \begin{equation}\label{eq:upper-bound-if}
        \mcut(\gadgetIf) \leq \mcut(H) + (t + n - \alpha) \cdot \mcut(F) + (n - \alpha) \cdot \mcut(T),
    \end{equation}
    this is simply because we can partition the edges of $\gadgetIf$ into the edges of the gadget~$H$,~$(t+n-\alpha)$~$F$-gadgets, and~$(n-\alpha)$ $T$-gadgets.
    
    Now consider an arbitrary partition $(V_1', V_2')$ of $\{x_1, \dots, x_t, y, z\}$ satisfying either $y,z \in V'_2$ and $\abs*{\{x_1, \dots, x_t\} \cap V'_1} \leq \alpha$, or $y \in V_2'$ and $z \in V_1'$.
    Let $I = \big\{i \in [t] \mid x_i \in V'_1\big\}$ and let~$\beta = \abs*{I}$.
    Then we extend $(V_1', V_2')$ to a partition $(V_1, V_2)$ of $H$ as follows.
    We start by putting for every $i \in [2n]$ all vertices of $P_i$:
    \begin{itemize}
        \item into $V_1$, if $i \in I$,
        \item into $V_1$, if $n+1 \leq i \leq n+(n - \beta)$,
        \item and into $V_2$, otherwise.
    \end{itemize}
    Observe that every column of $H$ is contained in either $V_1$ or $V_2$ and precisely $\beta + (n-\beta) = n$ columns are contained in $V_1$.
    So we extend the current partition to an optimal partition of $H$ (cf.\ \cref{lem:properties-of-H}).
    Next we put $r_1, \dots, r_{n-\alpha}$ into $V_2$.
    Now all $(n-\alpha+t)$ $F$-gadgets are satisfied: indeed, $x_i$ and $p_{i, D}$ are on the same side for $i \in [t]$, and $y$ and $r_i$ are on the same side for $i \in [n-\alpha]$. 
    So we extend the current partition to an optimal partition of $F$-gadgets.
    To see that the $T$-gadgets are also satisfied, we make a case distinction.
    
    First, suppose that $y,z \in V'_2$ and $\beta = \abs*{\{x_1, \dots, x_t\} \cap V'_1} \leq \alpha$ holds.
    We then have~$n + (n-\beta) \geq n + (n-\alpha)$.
    For every $i \in [n-\alpha]$, we then have $p_{n+i,D} \in V_1$ and $z \in V_2$ so the gadget $T(p_{n+i,D}, r_i, z)$ is satisfied (cf. \cref{lem:triangle-gadget}).  
    And we finally extend the current partition to an optimal partition of the gadget $T(p_{n+i,D}, r_i, z)$.
    
    We remain with the case $y \in V'_2$ and $z \in V'_1$.
    Then for every $i \in [n-\alpha]$, we have $r_i \in V_2$ and $z \in V_1$ so the gadget $T(p_{n+i,D}, r_i, z)$ is satisfied  (cf. \cref{lem:triangle-gadget}). 
    And we also extend the current partition to the optimal partition of $T(p_{n+i}, r_i, z)$.
    
    In both cases, we get 
    \[
        \abs*{E_{\gadgetIf}(V_1, V_2)} = \mcut(H) + (t + n - \alpha) \cdot \mcut(F) + (n - \alpha) \cdot \mcut(T)
    \]
    because we have an optimal partition for each of the $(t+n-\alpha)$ $F$-gadgets, $(n-\alpha)$ $T$-gadgets, and the subgraph $H$, and those subgraphs partition the edge set of $\gadgetIf$.
    This implies that inequality in \eqref{eq:upper-bound-if} is, in fact, an equality proving the first and the third item.
    Moreover, a partition is optimal if and only if it is optimal in each of those gadgets.
    Further, due to~$D^2 < C$ (cf.\ \eqref{eq:def-of-c}), any partition $(V_1, V_2)$ of $\gadgetIf$ that is suboptimal for at least one of those gadgets satisfies    \begin{equation}\label{eq:suboptimal-if-gadget}
        \abs*{E_{\gadgetIf}(V_1, V_2)} \leq \mcut(\gadgetIf) - D^2
    \end{equation}
    by \cref{lem:F-gadgets,lem:triangle-gadget,lem:properties-of-H}.
    
    Now consider a partition $(V_1, V_2)$ of $\gadgetIf$ satisfying $y, z \in V_2$ and $\abs*{E_{\gadgetIf}(V_1, V_2)} = \mcut(\gadgetIf)$.
    Then this partition is optimal for $H$ and all $F$- and $T$-gadgets. 
    Due to~$F$-gadgets, we have~$r_1, \dots, r_{n-\alpha} \in V_2$.
    Then for every $i \in [n - \alpha]$, because the gadget~$T(p_{n+i, D}, r_i, z)$ is satisfied and we have $z, r_i \in V_2$, we get $p_{n+i,D} \in V_1$.
    Now due to the optimal partition of $H$ (cf.\ \cref{lem:properties-of-H}), the following properties apply.
    First, the $n-\alpha$ columns~$P_{n+1}, \dots, P_{n+(n-\alpha)}$ are contained in $V_1$.
    Second, there exist precisely $\alpha$ further columns in~$P_1, \dots, P_n$, $P_{n+(n-\alpha)+1}$, $\dots$, $P_{2n}$ contained in $V_1$.
    Finally, due to~$F$-gadgets, for every $i \in [t]$, the vertex $x_i$ is in $V_1$ if and only if $P_i$ is contained in $V_1$.
    In particular, at most $\alpha$ vertices from $x_1, \dots, x_t$ belong to $V_1$.
    This shows that every partition $(V_1, V_2)$ of $\gadgetIf$ with $y, z \in V_2$ and $\abs*{\{x_1, \dots, x_t\} \cap V_1} \geq \alpha+1$ is suboptimal, in particular, proving the second item.
    Furthermore, above we argued that every suboptimal partition satisfies \eqref{eq:suboptimal-if-gadget} proving the last item and concluding the proof.
\end{proof}
Note that the value $\mcut\big(\gadgetIf_{\alpha,t}(x_1,\dots,x_n,y,z)\big)$ depends on $\alpha$ and $t$ but not on the entry points $x_1,\dots,x_n,y,z$ so we use the shortcut $\mcut\big(\gadgetIf_{\alpha,t}\big)$ for this value.

\paragraph*{Families $\cS$ and $\tildecS$}
Now we define two useful set families on which our reduction will rely on.
Recall that $k'$ denotes the size of the sought-after multicolored independent set in the $k'$-partite graph~$G$ in which each of the independent sets $U_1, \dots, U_{k'}$ has size $n' = n + 1$. 
We may assume that~$k' = {k \choose{k/2}} / 2$ holds for some integer $k$: indeed, we can achieve that $k'$ is of this form by at most doubling it and adding to $G$ an independent set with a suitable number of vertices each of which is adjacent to every vertex in $U_1 \cup \dots \cup U_{k'}$.
Let $k \in \cO(\log k')$ thus be fixed in the remainder of this section.
We define the sets:
\begin{align*}
    \cS = \left\{S \in {[k] \choose {k/2}} \mid 1 \in S\right\} && \tildecS = \left\{S \in {[k] \choose {k/2}} \mid 1 \notin S\right\}.
\end{align*}
For a set $S \subseteq [k]$, the set $\overline{S}$ is defined as $[k] \setminus S$.
Then we have: 
\begin{observation}
    \begin{enumerate}
        \item We have $\abs*{\cS} = \abs*{\tildecS} = k'$.
        \item The sets $\cS$ and $\tildecS$ are disjoint.
        \item For every set $S \in {[k] \choose{k/2}}$, precisely one of the sets $S$ and $\overline S$ contains the element $1$.
    \end{enumerate}
\end{observation}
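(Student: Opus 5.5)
The statement is the final Observation about the families $\cS$ and $\tildecS$. I will first prove item 3, then deduce item 2 from the definitions and item 1 from a complementation bijection.

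For the third item, fix $S \in {[k] \choose k/2}$. The sets $S$ and $\overline{S} = [k]\setminus S$ partition $[k]$. Since $1 \in [k]$, the element $1$ lies in exactly one of them. Moreover $|\overline S| = k - k/2 = k/2$, so $\overline S$ is again in ${[k] \choose k/2}$.

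For the second item, disjointness is immediate from the definitions: every member of $\cS$ contains $1$ and no member of $\tildecS$ does.

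For the first item, I will use the complementation map $S \mapsto \overline S$. By the third item it sends $\cS$ into $\tildecS$, and it sends $\tildecS$ into $\cS$. The map is an involution, so it is a bijection between the two families and $|\cS| = |\tildecS|$. The two families partition ${[k] \choose k/2}$, because every $k/2$-subset either contains $1$ or does not. Hence $|\cS| + |\tildecS| = {k \choose k/2}$. Combining these gives $|\cS| = |\tildecS| = {k \choose k/2}/2 = k'$, which is exactly the assumption on $k'$ fixed above.

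There is no real obstacle here. The only points needing care are that $k$ is even, which is implicit since $k/2$ is an integer in the assumed form of $k'$, and that $k \ge 2$, so that $1 \in [k]$ and the families are nonempty.
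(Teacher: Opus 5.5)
Your proof is correct and complete. The paper states this observation without proof, and your argument is exactly the immediate reasoning it leaves implicit: item 3 because $S$ and $\overline{S}$ partition $[k]$, item 2 directly from the definitions, and item 1 via the complementation involution $S \mapsto \overline{S}$ together with the assumption $k' = {k \choose k/2}/2$. Your side conditions ($k$ even and $k \geq 2$) are indeed forced by $k'$ being a positive integer of that form.
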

So we fix an arbitrary bijection $\phi \colon [k'] \to \cS$. 
And for every $i \in [k]$, let $U_i = \big\{u_i^\gamma \mid \gamma \in [n]_0\big\}$ be the vertices in the independent set $U_i$.

Now are ready to present our lower-bound construction.
We start with a high-level overview and provide formal details after that.
In the end, we argue the correctness and provide a multi-expression of the arising graph using only $\cO(k)$ labels.

\subsection{High-level overview}
The construction of the instance $(G^*, \budget)$ of \textsc{Max Cut} can be sketched as follows (see \cref{fig:construction,fig:construction-path} for an illustration).
Even though we rely on some gadgets introduced by Fomin et al.~\cite{FominGLS14}, our reduction is quite different to theirs, for instance, we start with a different problem and we do not preserve the structure of the starting graph.

The main component of our graph $G^*$ is a \emph{selection gadget} such that every optimal partition $(V_1,V_2)$ of this gadget is associated with a multicolored set $X$ of $G$; we say that~$(V_1,V_2)$ \emph{selects} $X$ on this gadget.
Our graph $G^*$ has a copy of this selection gadget for each edge of $G$.
And these copies are connected in a path-like way such that every partition~$(V_1,V_2)$ of $G^*$ with $\abs*{E_{G^*}(V_1,V_2)} = \budget$ selects the same multicolored set in each selection gadget.
Moreover, each selection gadget associated with an edge $v_i w_i$ is connected to a combination of five $\gadgetIf$-gadgets forcing the selected multicolored set $X$ to contain at most one vertex among $v_i$ and $w_i$.

We start by describing how a single selection gadget works. 
The function $\phi$ is a bijection such that each of the independent sets $U_1, \dots, U_t$ of the graph $G = (U_1 \cup \dots \cup U_{k'}, E)$ corresponds to a set $S = \phi(i)$ from $\cS$.
Therefore, we start by adding a set $A_S$ on $n$ vertices forming a clique to the graph $G^*$.
In our reduction, \emph{selecting} a vertex $u_i^\gamma \in U_i$ into a multicolored set $X$ of $G$ transfers to placing precisely $\gamma$ vertices of $A_S$ to the side $V_1$ of a partition $(V_1, V_2)$ of $G^*$, i.e., having the equality $\abs*{A_S \cap V_1} = \gamma$.

However, the partitions of the clique $A_S$ crossing the largest number of edges are the ones where precisely half (i.e., $n/2$) of the vertices of $A_S$ are in $V_1$.
But our aim, for now, is to allow selecting any vertex of $U_i$, i.e., to place any number of vertices of $A_S$ into $V_1$ and still potentially be able to extend it to an optimal partition of $G^*$.
To achieve this, we add a set $B_S$ of $n$ new vertices to $G^*$ so that $A_S \cup B_S$ forms a clique.
Now no matter what the partition of $A_S$ is, it is possible to ``compensate'' it by partitioning $B_S$ accordingly so that the clique $A_S \cup B_S$ is crossed optimally.
Note that in this case $A_S$ and $B_S$ are partitioned in a \emph{complementary} way, i.e., $\abs*{A_S \cap V_1} = \abs*{B_S \cap V_2}$.

Next we ensure that the vertices selected in each of $U_1, \dots, U_{k'}$ form an independent set in $G$.
For every edge $u_{i_1}^{\alpha} u_{i_2}^{\beta}$ of $G$, we are allowed to select at most one vertex among the endpoints $u_{i_1}^{\alpha}$ and $u_{i_2}^{\beta}$ of this edge.
So in the graph $G^*$, we forbid that at the same time, exactly $\alpha$ vertices of $A_{\phi(i_1)}$ and exactly $\beta$ vertices of $A_{\phi(i_2)}$ are put into $V_1$.
In other words, at least one of the following four cases has to apply:
\begin{enumerate}
    \item at most $\alpha - 1$ vertices of $A_{\phi(i_1)}$ are put into $V_1$,
    \item \label{item:at-least-u} or at least $\alpha + 1$ vertices of $A_{\phi(i_1)}$ are put into $V_1$, 
    \item or at most $\beta - 1$ vertices of $A_{\phi(i_2)}$ are put into $V_1$,
    \item \label{item:at-least-v} or at least $\beta + 1$ vertices of $A_{\phi(i_2)}$ are put into $V_1$.
\end{enumerate}
We use an $\gadgetIf$-gadget to implement a disjunction of those cases. 
And to capture each of the cases, we also use a $\gadgetIf$-gadget (cf.\ \cref{lem:H-if-properties}).
An $\gadgetIf$-gadget enforces that at most a certain number of vertices are put into $V_1$.
We provide more details on the ``at least''-cases later.
Now if we attach the gadgets for all edges of $G$ (there may be $\Omega(n^2)$ of them) to the same selection gadget, we get a huge number of ``neighborhoods'' preventing an expression of low multi-clique-width.

Instead, as standard in lower-bound constructions of this flavor, we use copies of selection gadgets.
Namely, for every edge of $G$, we create a copy of $\bigcup_{S \in \cS} A_S \cup B_S$ and attach the corresponding five $\gadgetIf$-gadgets to this copy. 
It remains to ensure that an optimal partition of the final graph selects the same multicolored set on each the selection gadgets.
Let $m$ be the number of edges.
So for every $j \in [m-1]$ and every vertex of $B_S$ in the $j$-th copy, we connect this vertex and its counterpart in the $j+1$-st copy of $A_S$ by an $F'$-gadget (cf.~\cref{lem:F-gadgets}) to ensure that these vertices appear on different sides of the partition.
This ensures that any optimal partition $(V_1,V_2)$ splits $j$-th copy of $B_S$ and the $j+1$-st copy of $A_S$ in a complementary way. 
Since the $j$-th copies of $A_S$ and $B_S$ also need to be split this way, the partitions on the $j$-th copy of $A_S$ and on the $j+1$-st copy of $A_S$ coincide, and therefore the same selection is made on all selection gadgets.

\begin{figure}[t]
    \includegraphics{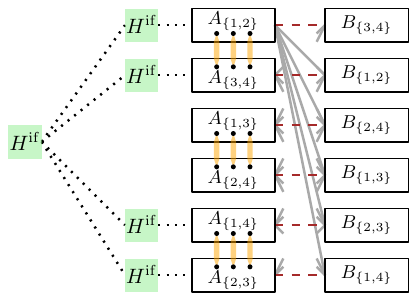}
    \centering
    \caption{A selection gadget for $n = 3$ and $k = 4$. 
    For every set $S \in {[4] \choose 2}$, there are $3$ vertices in each of the sets $A_S$ and $B_S$.
    Each white rectangle is a clique.
    Each vertex in the set $A_S$ is connected by an orange $F'$-gadget to its counterpart in $A_{\overline S}$.
    Between $A$- and $B$-vertices, there is a gray ``thickened anti-matching'', i.e., almost all edges are present, but there are no edges (in dashed red) between $A_S$ and $B_{\overline{S}}$ for every set $S$.
    The attached $\gadgetIf$-gadgets handle an edge of $G$ with one endpoint in $U_{\phi^{-1}(\{1,2\})}$ and the other in $U_{\phi^{-1}(\{1,4\})}$.
    Dotted lines sketch the connections from those $\gadgetIf$-gadgets to their entry points.
    We emphasize that those dotted lines are not simply edges/non-edges but rather the~$\gadgetIf$-connections sketched in \cref{fig:h-if-gadget}.
    } 
    \label{fig:construction}
\end{figure}

As it is now, the multi-clique-width of the construction is very high.
In every copy there is an induced matching of size $k'$ in the bipartite subgraph with sides $\bigcup_{S \in \cS} A_S$ and $\bigcup_{S \in \cS} B_S$: for every set $S \in \cS$, it uses one edge between $A_S$ and $B_S$.
Further, from every vertex in $B_S$, there is an $F'$-gadget to its counterpart in the next copy of $A_S$. 
And on the level of \emph{cuts of the expression}, each $F'$-gadget behaves similarly to an edge of an induced matching: it yields a private neighborhood disjoint from the neighborhoods of the other vertices.
So there is a long sequence of \emph{cuts} with, alternatingly, large induced matchings and large numbers of~$F'$-gadgets.
To overcome this issue, we finally adapt every copy as follows.

First, we extend the whole construction from the family $\cS$ to the family $\cS \cup \tildecS$.
And we add an $F'$-gadget between every vertex of $A_S$ and its counterpart in $A_{\overline{S}}$ in the same copy to enforce complementary behavior.
Also, to ensure that \emph{at least} a certain number of vertices of $A_S$ are put into $V_1$ as desired (cf.\ \cref{item:at-least-u,item:at-least-v}), we use a $\gadgetIf$-gadget attached to $A_{\overline{S}}$ and ensure that \emph{at most} the same number of vertices of $A_{\overline{S}}$ are in $V_1$.
Second, to decrease the size of an induced matching, in every copy of a selection gadget we make a clique of each set~$A_S \cup B_{T}$ satisfying $S \in \cS \cup \tildecS$ and $T \in (\cS \cup \tildecS) \setminus \{\overline{S}\}$.
Observe that because $\cS \cup \tildecS = {[k] \choose {k/2}}$ holds, for every set $S \in \cS \cup \tildecS$, the only set $T \in \cS \cup \tildecS$ with $S \cap T = \emptyset$ is precisely $T = \overline{S}$.   
In other words, the set $A_S \cup B_{T}$ does not become a clique only if $S$ and $T$ are disjoint.
This characterization will be useful to construct a multi-expression in a moment.
This way, the cut between $\bigcup_{S \in \cS} A_S$ and $\bigcup_{S \in \cS} B_S$ of the same copy from a ``thickened matching'' becomes a ``thickened anti-matching'', i.e., there are all edges except for the ones with one endpoint in~$A_S$ and the other in $B_{\overline{S}}$ for $S \in \cS \cup \tildecS$.

\begin{figure}[t]
    \includegraphics{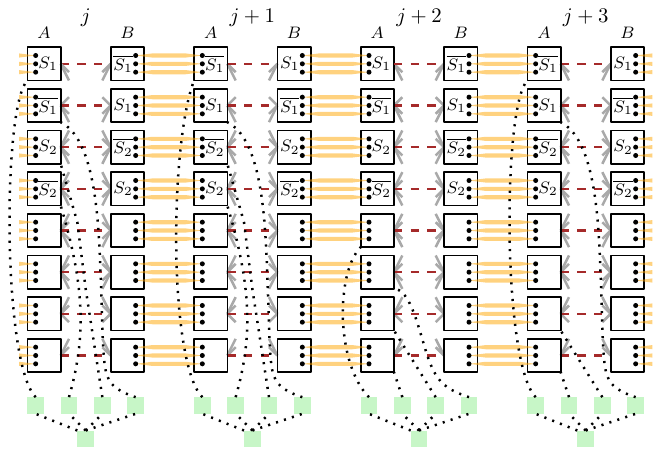}
    \centering
    \caption{Part of the lower-bound construction for four copies between $j$ and $j+3$.
    Between $A$ and $B$ of the same copy all edges are present except when having one endpoint in some $A_S$ and the other in $B_{\overline{S}}$.
    There is an $F'$-gadget between every vertex of $B_S$ and its counterpart in the next copy of $A_S$.
    Five $\gadgetIf$-gadgets handling the corresponding edge of $G$ are attached to each copy.
    Between~$U_{i_1}$ and $U_{i_2}$ with $i_1 \neq i_2$, there may be multiple edges of $G$.
    So there may be multiple copies (e.g., $j$ and $j+1$) where the $\gadgetIf$-gadgets are attached to the vertices of $A_S$ for the same sets~$S \in \cS \cup \tildecS$---the actual connections sketched with dotted lines differ for different edges though.
    } 
    \label{fig:construction-path}
\end{figure}

Crucially, we can now create the edges of the graph $G^*$ in a multi-expression using~$\cO(k)$ labels only.
Essentially, every vertex of the set $A_S$ or $B_S$ gets label set $\left\{q^{\la} \mid q \in S\right\}$ or~$\left\{q^{\lb} \mid q \in S\right\}$, respectively.
And we join $q^{\la}$ with $q^{\lb}$ for every $q \in [k]$. Now there is a complete bipartite graph between $A_S$ and $B_{T}$ if and only if there is a value $q \in S \cap T$, i.e., if~$T \neq \overline{S}$.
We also ensure that, vaguely speaking, the $F'$-gadgets between the $j$-th copy of~$B_S$ and the~$j+1$-st copy of $A_S$ do not belong to the same \emph{cut} of the expression.
Thus we can reuse a constant number of labels for all $F'$-gadgets.
At most five $\gadgetIf$-gadgets are attached to every copy and a constant number of reusable labels suffices to create those edges too.
 
\subsection{Construction}
The instance $(G^*, \budget)$ of \textsc{Max Cut} is defined as follows.
Let $m = \abs*{E}$ be the number of edges in $G$ and let $E = \{v_1w_1, \dots, v_m w_m\}$ be those edges.
We construct the graph $G^*$ as follows.
We start with two vertices $d_1$ and $d_2$ connected by an $F'$-gadget.
And we add a vertex $d_2'$ and connect it to $d_2$ by an $F$-gadget.
The aim of these gadgets is to ensure that in any optimal partition of $G^*$, the vertex $d_1$ is on the one side and the vertices $d_2$ and $d_2'$ are on the other side.
So the function of the vertices $d_1, d_2, d_2'$ is to be the reference vertices for the sides of a partition.
And we proceed as follows for every $j \in [m]$.

For every set $S \in \cS \cup \tildecS$, we add the sets 
\begin{align*}
    A_S(j) = \big\{a^i_S(j) \mid i \in [n]\big\} && \text{ and} && B_S(j) = \big\{b^i_S(j) \mid i \in [n]\big\}
\end{align*}
of $n$ new vertices each such that $A_S(j)$ induces a clique and $B_S(j)$ induces a clique.
For every set $S \in \cS$ and every index $i \in [n]$, we add an $F'$-gadget between the vertices $a_S^i(j)$ and~$a_{\overline{S}}^i(j)$.
Also for all pairs $S, T \in \cS \cup \tildecS$ satisfying $T \neq \overline S$ and all $i, i' \in [n]$, we add the edge~$a_S^i(j) b_{T}^{i'}(j)$. 
And we define the sets 
\begin{align*}
    A(j) = \bigcup_{S \in \cS \cup \tildecS} A_S(j), && A = \bigcup_{j \in [m]} A(j), && B(j) = \bigcup_{S \in \cS \cup \tildecS} B_S(j), && B = \bigcup_{j \in [m]} B(j).
\end{align*}

The pair $v_j w_j \in E$ is an edge of $G$. 
So let the indices $i_1(j) \neq i_2(j) \in [k']$ and $\alpha(j), \beta(j) \in [n]_0$ be such that $v_j = u_{i_1(j)}^{\alpha(j)}$ and $w_j = u_{i_2(j)}^{\beta(j)}$.
Further let $S_1(j) = \phi\big(i_1(j)\big) \in \cS$ and~$S_2(j) = \phi\big(i_2(j)\big) \in \cS$ be the sets from the family $\cS$ corresponding to the independent sets $U_{i_1(j)}$ and~$U_{i_2(j)}$ containing the vertices $v_j$ and $w_j$, respectively.
We add the set
\begin{align*}
    Z(j) = &\left\{z_{1}^{<\alpha(j)}(j) \mid \alpha(j) \neq 0\right\} \cup 
            \left\{z_{1}^{>\alpha(j)}(j) \mid \alpha(j) \neq n\right\}  \\
            \cup
            &\left\{z_{2}^{<\beta(j)}(j) \mid \beta(j) \neq 0\right\} \cup 
            \left\{z_{2}^{>\beta(j)}(j) \mid \beta(j) \neq n\right\}
\end{align*} 
of pairwise distinct new vertices.
Note that each of these four sets consists of at most one vertex: for example, the first set consists of a single vertex $z_{1}^{<\alpha(j)}(j)$ if $\alpha(j)$ is non-zero, and it is empty otherwise. 
So the set $Z(j)$ contains at most four vertices.
Also note that due to~$n'>1$, i.e., $n > 0$, the set $Z(j)$ consists of at least two vertices.

We add the gadget $\gadgetIf_{\abs*{Z(j)} - 1, \abs*{Z(j)}}\big(Z(j), d_2, d_2'\big)$ denoted by $\gadgetIf\big(Z(j)\big)$.
Because the vertices $d_2$ and $d_2'$ may be assumed to be on the same side of the partition of $G^*$ as argued before, this gadget ensures that at most $\abs*{Z(j)} - 1$ out of $\abs*{Z(j)}$ vertices of $Z(j)$ are on the different side (cf.\ \cref{lem:H-if-properties}), i.e., at least one of them is on the same side as $d_2$.
Informally speaking, this employs the distinction of the cases how at least one of the endpoints $v_j$ and $w_j$ of the current edge may be not selected.
To handle these cases, we add the following gadgets:
\begin{itemize}
    \item If $z_{1}^{<\alpha(j)}(j) \in Z(j)$, add $\gadgetIf_{\alpha(j)-1, n}\big(A_{S_1(j)}(j), d_2, z_{1}^{<\alpha(j)}(j)\big)$ denoted by $\gadgetIf\big(z_{1}^{<\alpha(j)}(j)\big)$. 
    \item If $z_{1}^{>\alpha(j)}(j) \in Z(j)$, add $\gadgetIf_{n-(\alpha(j)+1),n}\big(A_{\overline{S_1(j)}}(j), d_2, z_{1}^{>\alpha(j)}(j)\big)$ denoted by $\gadgetIf\big(z_{1}^{>\alpha(j)}(j)\big)$. 
    \item If $z_{2}^{<\beta(j)}(j) \in Z(j)$, add $\gadgetIf_{\beta(j)-1,n}\big(A_{S_2(j)}(j), d_2, z_{2}^{<\beta(j)}(j)\big)$ denoted by $\gadgetIf\big(z_{2}^{<\beta(j)}(j)\big)$.
    \item If $z_{2}^{>\beta(j)}(j) \in Z(j)$, add $\gadgetIf_{n-(\beta(j)+1),n}\big(A_{\overline{S_2(j)}}(j), d_2, z_{2}^{>\beta(j)}(j)\big)$ denoted by $\gadgetIf\big(z_{2}^{>\beta(j)}(j)\big)$.  
\end{itemize}
To understand the function of these gadgets, imagine that the vertex $z_{1}^{<\alpha(j)}(j)$ is on the same side as $d_2$.
Then the first gadget ensures that at most $\alpha(j)-1$ vertices of $A_{S_1(j)}(j)$ are placed on the different side.
In particular, this number is not equal to $\alpha(j)$ and the vertex $v_j$ is not selected as desired.
Further, if the vertex $z_{1}^{>\alpha(j)}(j)$ is on the same side as $d_2$, then the second gadget ensures that at most $n-(\alpha(j)+1)$ vertices of $A_{\overline{S_1(j)}}(j)$ are on the different side.
And because the partitions of $A_{S_1(j)}(j)$ and $A_{\overline{S_1(j)}}(j)$ are ``complementary'' due to~$F'$-gadgets, at most $\alpha(j)+1$ vertices of $A_{S_1(j)}(j)$ are on the different side of $d_2$, i.e.,~$v_j$ is, again, not selected.
The last two gadgets have an analogous function for the endpoint $w_j$.

We define the following upper bound on the number of edges crossed by any partition of these new $\gadgetIf$-gadgets:
\begin{align}
    \budget(j) = &\mcut\big(\gadgetIf_{\abs*{Z(j)} - 1,\abs*{Z(j)}}\big)  \nonumber \\
    +&\chi\Bigl(z_{1}^{<\alpha(j)}(j) \in Z(j)\Bigr) \cdot \mcut\big(\gadgetIf_{\alpha(j)-1,n}\big)  \nonumber \\
    +&\chi\Bigl(z_{1}^{>\alpha(j)}(j) \in Z(j)\Bigr) \cdot \mcut\big(\gadgetIf_{n-(\alpha(j)+1),n}\big)  \nonumber\\
    +&\chi\Bigl(z_{2}^{<\beta(j)}(j) \in Z(j)\Bigr) \cdot \mcut\big(\gadgetIf_{\beta(j)-1,n}\big)  \nonumber \\
    +&\chi\Bigl(z_{2}^{>\beta(j)}(j) \in Z(j)\Bigr) \cdot \mcut\big(\gadgetIf_{n-(\beta(j)+1),n}\big) \label{eq:def-budget-j}
\end{align}
where $\chi$ is the characteristic function defined as 
\[
    \chi\Bigl(x \in X\Bigr)
=
\begin{cases}
    1 & \text{if } x \in X \\
    0 & \text{otherwise}.
\end{cases}
\]
We will show that every optimal partition of the final graph $G^*$ needs to cross exactly $\budget(j)$ edges from these new $\gadgetIf$-gadgets.

Finally, for every $j \in [m-1]$, every $i \in [n]$, and every $S \in \cS \cup \tildecS$, we add the $F'$-gadget between $b_S^i(j)$ and $a_S^i(j+1)$.
This concludes the construction of the graph $G^*$. In the rest of this section, when we refer to a gadget of $G^*$, we ignore the $F'$-gadgets of $G^*$ that are included in $\gadgetIf$-gadgets. This will allow us to state that an optimal solution must satisfy all gadgets of $G^*$. 

We define the value $D$ used in the definition of the gadgets as
\[
    D = m \cdot \Bigl(4 k' \cdot {n \choose 2} + 2k' \cdot (2k'- 1) \cdot n^2\Bigr) > 1.
\]
Observe that the value $D$ is the number of edges of $G^*$ not contained in any gadget and those are precisely the edges in the set $E_{G^*}[A \cup B]$. 
First, there are cliques of form $A_S(j)$ and $B_S(j)$ with~$S \in \cS \cup \tildecS$ and $j \in [m]$.
And second, there is a complete bipartite graph between $A_S(j)$ and $B_{T}(j)$ for every $S,T \in \cS \cup \tildecS$ with $T \neq \overline{S}$. 
So we indeed have
\begin{equation}\label{eq:value-d}
    D = \abs*{E_{G^*}(A, B)}.
\end{equation}
Also note that the values $D$ and $C$ (cf.\ \eqref{eq:def-of-c}) are polynomial in $n' = n + 1$ and $k'$.
Later we will rely on this to bound the running time required for the construction of $G^*$.

We define the target value $\budget$ of our \textsc{Max Cut} instance as 
\begin{equation}\label{eq:def-budget}
    \budget = N \cdot \mcut(F') + \mcut(F) +\bigl(\sum_{j \in [m]} \budget(j)\bigr) + m \cdot L 
\end{equation}
where
\begin{align*}
    & L_1 = k' \cdot (2k'-2) \cdot n^2 && L_2 = 2k' \cdot n^2 \\ 
    & L = L_1 + L_2 && N = 1 + m \cdot k' \cdot n + (m-1) \cdot 2k' \cdot n.
\end{align*}
Observe that $N$ is precisely the number of $F'$-gadgets in $G^*$ not contained in larger gadgets: one is between $d_1$ and $d_2$, further $m \cdot k' \cdot n$ are between pairs of form $\big(a_S^i(j), a_{\overline{S}}^i(j)\big)$, and~$(m-1) \cdot 2k' \cdot n$ more are between pairs of form $\big(b_S^i(j), a_S^i(j+1)\big)$.
The definition of $\budget$ reflects 
that we aim at partitioning every gadget (not contained in a larger gadget) optimally and crossing $m \cdot L$ more edges.
As we show in the next lemma, for every $j \in [m]$, the value $L$ is the maximum number of edges crossed in the subgraph $G^*\big[A(j) \cup B(j)\big]$ by a hypothetical partition $(V_1, V_2)$ that satisfies all the gadgets of $G^*$. 
For this, we show show that $m \cdot L_1$ edges are automatically crossed once the $F'$-gadgets are satisfied, and crossing $m \cdot L_2$ further edges is equivalent to partitioning every clique $A_S(j) \cup B_S(j)$ optimally:
\begin{lemma}\label{lem:max-cut-L-edges}
    Let $j \in [m]$ and let $(V_1, V_2)$ be a partition of $G^*\big[A(j) \cup B(j)\big]$ satisfying all~$F'$-gadgets with both endpoints in $A(j)$. 
    And for every $S \in \cS$, let~$t_S = \abs*{A_S(j) \cap V_1}$.
    Then the following properties hold:
    \begin{enumerate}
        \item We have $\abs*{E_{G^*}(V_1, V_2)} \leq L$,
        \item The equality $\abs*{E_{G^*}(V_1, V_2)} = L$ applies if and only if for every set $S \in \cS$ we have $\abs*{B_S(j) \cap V_1} = n-t_S$ and $\abs*{B_{\overline{S}}(j) \cap V_1} = t_S$.
    \end{enumerate}
\end{lemma}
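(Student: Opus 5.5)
We split the edge set of $G^*\big[A(j) \cup B(j)\big]$ into two groups and bound each separately. The first group, $E_1$, consists of the edges inside the cliques $A_S(j)$ and $B_S(j)$ together with the edges between $A_S(j)$ and $B_T(j)$ for $T \notin \{\overline S, S\}$. The second group, $E_2$, consists of the edges between $A_S(j)$ and $B_S(j)$ for $S \in \cS \cup \tildecS$, which together with the two inner cliques form the $2k'$ cliques $A_S(j) \cup B_S(j)$. Hence it is more natural to regroup as follows: the cliques $A_S(j)\cup B_S(j)$ for all $S \in \cS\cup\tildecS$ (each on $2n$ vertices), plus the complete bipartite graphs between $A_S(j)$ and $B_T(j)$ for $T \notin\{S,\overline S\}$.

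\textbf{Step 1.} Bound the clique contribution. A clique on $2n$ vertices with $s$ vertices in $V_1$ crosses $s(2n-s) \le n^2$ edges, with equality if and only if $s = n$. Summing over the $2k'$ sets gives at most $2k' n^2 = L_2$. Equality holds if and only if $\abs*{B_S(j)\cap V_1} = n - \abs*{A_S(j)\cap V_1}$ for every $S$.

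\textbf{Step 2.} Use the $F'$-gadgets. Satisfying the $F'$-gadget between $a_S^i(j)$ and $a_{\overline S}^i(j)$ means $\abs*{A_{\overline S}(j)\cap V_1} = n - t_S$. Hence the equality condition in Step 1 is exactly $\abs*{B_S(j)\cap V_1}=n-t_S$ and $\abs*{B_{\overline S}(j)\cap V_1}=t_S$, as in the statement.

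\textbf{Step 3 (main point).} Show that the bipartite part crosses exactly $L_1$ edges whenever the $F'$-gadgets are satisfied, independently of how $B(j)$ is split. Fix a vertex $b\in B_T(j)$. Its neighbours in $A(j)$ of this type are $\bigcup_{S\notin\{T,\overline T\}} A_S(j)$. This is a union over complementary pairs $\{S,\overline S\}$ with $S \in \cS\setminus\{T,\overline T\}$, where we use that $S\notin\{T,\overline T\}$ if and only if $\overline S\notin\{T,\overline T\}$.

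For each such pair, $A_S(j)\cup A_{\overline S}(j)$ contains exactly $t_S + (n-t_S) = n$ vertices in $V_1$ and $n$ in $V_2$. So regardless of the side of $b$, exactly $n$ of these $2n$ neighbours lie on the opposite side. There are $k'-1$ such pairs, so $b$ contributes exactly $(k'-1)n$ crossed edges.

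Summing over the $2k'n$ vertices of $B(j)$ gives $2k'n\cdot(k'-1)n = k'(2k'-2)n^2 = L_1$.

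Combining the three steps, the total number of crossed edges is $L_1$ plus a quantity that is at most $L_2$, with equality precisely under the stated condition. This proves both items.

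The only delicate point is the bookkeeping in Step 3: the edges between $A_S(j)$ and $B_S(j)$ must be counted in the cliques and not in $L_1$, so that the pairing argument covers exactly the neighbourhoods $S\notin\{T,\overline T\}$.
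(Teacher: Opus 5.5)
Your proof is correct and follows essentially the same route as the paper. The cliques $A_S(j)\cup B_S(j)$ contribute at most $L_2$, with equality iff each is split evenly. The bipartite edges between $A_S(j)$ and $B_T(j)$ with $T\notin\{S,\overline S\}$ contribute exactly $L_1$ once the $F'$-gadgets are satisfied. The paper pairs the individual edges $a_S^i(j)b_T^\ell(j)$ and $a_{\overline S}^i(j)b_T^\ell(j)$, whereas you count per vertex $b$ and per complementary pair, but this is the same argument.

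Discard the first definition of $E_1$ and $E_2$ in your opening paragraph, which wrongly puts the edges inside $A_S(j)$ and $B_S(j)$ into the bipartite group. The regrouping you actually use afterwards is the correct one.
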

\begin{proof}
    In this proof, by ``all $F'$-gadgets'' we refer to all $F'$-gadgets with both endpoints in~$A(j)$.
    Because $A_S(j)$ consists of $n$ vertices, we have $\abs*{A_S(j) \cap V_2} = n-t_S$ for every $S \in \cS$.
    There is an $F'$-gadget between $a_S^i(j)$ and $a_{\overline{S}}^i(j)$ for every $i \in [n]$.
    An $F'$-gadget forces its endpoints to be on different sides of the partition (cf.\ \cref{lem:F-gadgets}).
    As all $F'$-gadgets are satisfied, we have $\abs*{A_{\overline{S}}(j) \cap V_2} = t_S$ and $\abs*{A_{\overline{S}}(j) \cap V_1} = n-t_S$.
    
    We define the edge sets
    \begin{align*}
        E_1 = &\big\{a_S^i(j) b_T^\ell(j) \mid i, \ell \in [n], S \in \cS, T \in (\cS \cup \tildecS) \setminus \{S, \overline{S}\}\big\} \\
        \cup&\big\{a_{\overline S}^i(j) b_T^\ell(j) \mid i, \ell \in [n], S \in \cS, T \in (\cS \cup \tildecS) \setminus \{S, \overline{S}\}\big\}
    \end{align*}
    and
    \[
        E_2 = \bigcup_{S \in \cS \cup \tildecS} \big\{uv \mid u, v \in A_S(j) \cup B_S(j), u \neq v\big\}.
    \]
    We claim that the sets $E_1$ and $E_2$ partition the set of edges of $G^*\big[A(j) \cup B(j)\big]$.
    This follows from the following properties.
    Let $S \in \cS \cup \tildecS$ be fixed.
    The vertices in $A_S(j)$ induce a clique and are non-adjacent to vertices in $A(j) \setminus A_S(j)$.
    Further, there are no edges between~$A_S(j)$ and~$B_{\overline{S}}(j)$.
    And the set $A_S(j) \cup B_{S}(j)$ forms a clique.
    Finally, for every set $T \in \cS \cup \tildecS$ with~$T \notin \{S, \overline{S}\}$, there is a complete bipartite graph between $A_S(j)$ and $B_T(j)$.
    So now to analyze the size of~$E_{G^*}(V_1, V_2)$ depending on $(V_1, V_2)$, we consider the edges in $E_1$ and $E_2$ separately.

    We claim that the partition $(V_1, V_2)$ crosses exactly $L_1$ edges of $E_1$.
    Consider arbitrary sets $S \in \cS$ and $T \in (\cS \cup \tildecS) \setminus \{S, \overline{S}\}$ and integers $i, \ell \in [n]$.
    Since $(V_1, V_2)$ satisfies $F'$-gadgets we have 
    \[
        \abs*{\{a_S^i(j), a_{\overline S}^i(j)\} \cap V_1} = \abs*{\{a_S^i(j), a_{\overline S}^i(j)\} \cap V_2} = 1.
    \]
    Therefore, precisely one of the edges $a_S^i(j) b_T^\ell(j)$ and $a_{\overline{S}}^i(j) b_T^\ell(j)$ belongs to $E_{G^*}(V_1, V_2)$, no matter what the side of the vertex $b_T^\ell(j)$ is. 
    With $\abs*{\cS} = \abs*{\tildecS} = k'$, we get 
    \[
        \abs*{E_{G^*}(V_1, V_2) \cap E_1} = k' \cdot (2k'-2) \cdot n^2 = L_1.
    \]

    We also claim that $(V_1, V_2)$ crosses at most $L_2$ edges of $E_2$.
    The subgraph~$G^*[E_2]$ is a vertex-disjoint union of $\abs*{\cS \cup \tildecS} = 2k'$ cliques on $2n$ vertices each.
    For a clique of size $2n$, optimal partitions are precisely those which put exactly $n$ vertices on each side, and such partitions cross $n^2$ edges of the clique.
    Thus, we have $\abs*{E_{G^*}(V_1, V_2) \cap E_2} \leq 2k' \cdot n^2 = L_2$ and that implies the first claim.
    
    Moreover, we have $\abs*{E_{G^*}(V_1, V_2)} = L$ if and only if every such clique is crossed optimally, i.e., for every $S \in \cS$, the equalities 
    \begin{align*}
        \abs*{\big(A_S(j) \cup B_S(j)\big) \cap V_1} = n && \text{and} && \abs*{\big(A_{\overline S}(j) \cup B_{\overline S}(j)\big) \cap V_1} = n
    \end{align*}
    are satisfied.
    Recall that $\abs*{A_S(j) \cap V_1} = t_S$ and $\abs*{A_{\overline S}(j) \cap V_1} = n - t_S$ hold.
    We conclude that $\abs*{E_{G^*}(V_1, V_2)} = L$ is equivalent to having $\abs*{B_S(j) \cap V_1} = n-t_S$ and $\abs*{B_{\overline S}(j) \cap V_1} = t_S$ concluding the proof of the second claim.
\end{proof}

\subsection{Correctness of this reduction}
Here we show that the constructed instance of \textsc{Max Cut} is equivalent to the original instance of \multicolored{}.
We start with the simpler direction:
\begin{lemma}
    If $G$ admits a multicolored independent set, then $G^*$ admits a partition $(V_1, V_2)$ with $\abs*{E_{G^*}(V_1, V_2)} \geq \budget$. 
\end{lemma}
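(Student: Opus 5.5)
Let $X=\{u_1^{\gamma_1},\dots,u_{k'}^{\gamma_{k'}}\}$ be a multicolored independent set of $G$. The plan is to write down an explicit partition $(V_1,V_2)$ of $G^*$ that partitions every gadget optimally and crosses exactly $L$ edges inside each $G^*[A(j)\cup B(j)]$. Since the edge set of $G^*$ is partitioned into the $N$ free $F'$-gadgets, the $F$-gadget $F(d_2,d_2')$, the $\gadgetIf$-gadgets attached to each copy, and the edges inside the sets $A(j)\cup B(j)$, this partition will cross exactly $\budget$ edges by \eqref{eq:def-budget}.

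\textbf{Reference vertices and selection gadgets.} Put $d_1\in V_1$ and $d_2,d_2'\in V_2$, which satisfies $F'(d_1,d_2)$ and $F(d_2,d_2')$. For every $j\in[m]$ and $S=\phi(i)\in\cS$, put $a_S^\ell(j)\in V_1$ iff $\ell\le\gamma_i$, so that $\abs*{A_S(j)\cap V_1}=\gamma_i$. Put $a_{\overline S}^\ell(j)$ on the side opposite to $a_S^\ell(j)$. Then put $b_S^\ell(j)$ opposite to $a_S^\ell(j)$ and $b_{\overline S}^\ell(j)$ opposite to $a_{\overline S}^\ell(j)$. This satisfies the $F'$-gadgets inside $A(j)$ and, by \cref{lem:max-cut-L-edges}, crosses exactly $L$ edges in $G^*[A(j)\cup B(j)]$. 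Because the same pattern is used in every copy, $b_T^\ell(j)$ and $a_T^\ell(j+1)$ lie on opposite sides for every $T$, so the $F'$-gadgets between consecutive copies are satisfied too. Every free $F'$/$F$ gadget can then be completed optimally by \cref{lem:F-gadgets}.

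\textbf{The $\gadgetIf$-gadgets.} Fix $j$ and write $\alpha=\alpha(j)$, $\beta=\beta(j)$. Since $X$ is independent, $\gamma_{i_1(j)}\neq\alpha$ or $\gamma_{i_2(j)}\neq\beta$. Say $\gamma_{i_1(j)}\neq\alpha$; the other case is symmetric.
\begin{itemize}
\item If $\gamma_{i_1(j)}<\alpha$, put $z_1^{<\alpha}(j)\in V_2$, which exists since $\alpha\ne0$.
\item If $\gamma_{i_1(j)}>\alpha$, put $z_1^{>\alpha}(j)\in V_2$, which exists since $\alpha\ne n$.
\end{itemize}
Put all other vertices of $Z(j)$ into $V_1$. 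Then at most $\abs*{Z(j)}-1$ vertices of $Z(j)$ lie in $V_1$ while $d_2,d_2'\in V_2$, so $\gadgetIf(Z(j))$ extends optimally by \cref{lem:H-if-properties}(3). For each other $\gadgetIf(z)$ the entry point $z$ is either in $V_1$ (with $d_2\in V_2$), which is the second case of \cref{lem:H-if-properties}(3), or $z$ is the chosen vertex in $V_2$. In the latter case we need the count condition.
\begin{itemize}
\item For $z_1^{<\alpha}$, $\abs*{A_{S_1(j)}(j)\cap V_1}=\gamma_{i_1(j)}\le\alpha-1$.
\item For $z_1^{>\alpha}$, $\abs*{A_{\overline{S_1(j)}}(j)\cap V_1}=n-\gamma_{i_1(j)}\le n-(\alpha+1)$.
\end{itemize}
So every $\gadgetIf$-gadget is extended optimally, contributing exactly $\budget(j)$.

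The $\gadgetIf$-gadgets are edge-disjoint and share only entry points, whose sides are already fixed, so the extensions are compatible. The only point needing care is this case analysis ensuring that the chosen $z$ meets the count condition. Summing the contributions gives exactly $\budget$.
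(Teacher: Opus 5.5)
Your proposal is correct and follows essentially the same argument as the paper: the same explicit placement of $d_1,d_2,d_2'$ and of every copy of $A_S$, $B_S$, the same appeal to \cref{lem:max-cut-L-edges}, and the same choice of one vertex $z\in Z(j)$ in $V_2$ (and the rest in $V_1$) according to which of the four inequalities the independent set guarantees. You also say explicitly that the gadgets share only entry points whose sides are already fixed, so the separate optimal extensions are compatible; the paper uses this fact without stating it.
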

\begin{proof}
    Suppose the graph $G = (U_1 \cup \dots \cup U_{k'}, E)$ admits a multicolored independent set~$\left\{u_1^{\gamma(1)}, \dots, u_{k'}^{\gamma(k')}\right\}$ where $\gamma(1), \dots, \gamma(k') \in [n]_0$.
    For every set $S \in \cS$, we define the value~$i_S = \phi^{-1}(S) \in [k']$.
    Then the independent set $U_{i_S}$ of $G$
    corresponds to the set $S$. 
    With this we construct the partition $(V_1, V_2)$ as follows.
    
    First, we put the vertex $d_1$ into $V_1$ and we put the vertices $d_2$ and $d_2'$ into $V_2$.
    This satisfies the two $F$- and $F'$-gadgets between those vertices (cf. \cref{lem:F-gadgets}). 
    So we extend the partition to an optimal partition of those gadgets.
    Then we partition the sets $A$ and $B$ as follows.
    For every $S \in \cS$, every $j \in [m]$, and every $i \in [n]$:     
    \begin{itemize}
        \item put $a_S^i(j)$ into $V_1$ if $i \leq \gamma(i_S)$ and into $V_2$ otherwise,
        \item put $a_{\overline S}^i(j)$ into $V_2$ if $i \leq \gamma(i_S)$ and into $V_1$ otherwise,
        \item put $b_S^i(j)$ into $V_2$ if $i \leq \gamma(i_S)$ and into $V_1$ otherwise,
        \item put $b_{\overline S}^i(j)$ into $V_1$ if $i \leq \gamma(i_S)$ and into $V_2$ otherwise.
    \end{itemize}
    Now every $F'$-gadget with both endpoints in $A$ is satisfied and we extend the partition to an optimal partition of this gadget (cf. \cref{lem:F-gadgets}). 
    Further, we have 
    \begin{align}
        \gamma(i_S) = \abs*{A_S(j) \cap V_1} = \abs*{B_{\overline S}(j) \cap V_1} && \abs*{A_{\overline{S}}(j) \cap V_1} = \abs*{B_S(j) \cap V_1} = n-\gamma(i_S) \label{eq:on-side-v-1}
    \end{align}
    for every $S \in \cS$ and $j \in [m]$.
    By \cref{lem:max-cut-L-edges} we then have 
    \[
        \abs*{E_{G^*}\bigl((A \cup B) \cap V_1, (A \cup B) \cap V_2\bigr)} = m\cdot L.
    \]
    Also we have $b_S^i(j) \in V_2$ if and only if $a_S^i(j+1) \in V_1$ for all $S \in \cS \cup \tildecS$, $j \in [m-1]$, and~$i \in [n]$. 
    So every $F'$-gadget with one endpoint in $A$ and the other in $B$ is satisfied, and we extend the partition to an optimal partition of this gadget (cf. \cref{lem:F-gadgets}).

    Then for every $j \in [m]$ we proceed as follows.
    Recall that $v_j w_j = u_{i_1(j)}^{\alpha(j)} u_{i_2(j)}^{\beta(j)} \in E$ an edge of $G$.
    And we have $S_1(j) = \phi\big(i_1(j)\big) \in \cS$ and~$S_2(j) = \phi\big(i_2(j)\big) \in \cS$ for the sets from~$\cS$ corresponding to the independent sets $U_{i_1(j)}$ and $U_{i_2(j)}$ of $G$ containing the vertices $v_j$ and~$w_j$, respectively.
    Since the set $\left\{u_1^{\gamma(1)}, \dots, u_{k'}^{\gamma(k')}\right\}$ is an independent set in $G^*$, we have 
    \[
        \Big(u_{i_1(j)}^{\gamma(i_1(j))}, u_{i_2(j)}^{\gamma(i_2(j))}\Big) \neq \Big(u_{i_1(j)}^{\alpha(j)}, u_{i_2(j)}^{\beta(j)}\Big),
    \]
    i.e., 
    \[
        \Big(\gamma\big(i_1(j)\big), \gamma\big(i_2(j)\big)\Big) \neq \Big(\alpha(j), \beta(j)\Big),
    \]
    i.e., at least one of the following cases occurs:
    \begin{align*}
         \gamma\big(i_1(j)\big) \leq \alpha(j)-1 && \lor && \gamma\big(i_1(j)\big) \geq \alpha(j)+1 &&\lor \\  \gamma\big(i_2(j)\big) \leq \beta(j)-1 && \lor && \gamma\big(i_2(j)\big) \geq \beta(j)+1 &&.
    \end{align*}
    By \eqref{eq:on-side-v-1}, this is equivalent to
    \begin{align*}
        \abs*{A_{S_1(j)}(j) \cap V_1} \leq \alpha(j)-1 && \lor && \abs*{A_{\overline{S_1(j)}}(j) \cap V_1} \leq n - \big(\alpha(j)+1\big) && \lor \\ 
        \abs*{A_{S_2(j)}(j) \cap V_1} \leq \beta(j)-1 && \lor && \abs*{A_{\overline{S_2(j)}}(j) \cap V_1} \leq n - \big(\beta(j)+1\big) && . && 
    \end{align*}
    So let $z$ be equal to $z_{1}^{<\alpha(j)}(j)$ / $z_{1}^{>\alpha(j)}(j)$ / $z_{2}^{<\beta(j)}(j)$ / $z_{2}^{>\beta(j)}(j)$ in the first / second / third / fourth case---we choose any of those if multiple cases occur.
    Because the first / second / third / fourth case can only occur if $\alpha(j) \neq 0$ / $\alpha(j) \neq n$ / $\beta(j) \neq 0$ / $\beta(j) \neq n$, we have~$z \in Z(j)$.
    We put $z$ into $V_2$ and the vertices in $Z(j) \setminus \{z\}$ into $V_1$.

    Now the following properties apply.
    First, we have $d_2, d_2' \in V_2$ and $\abs*{Z(j) \cap V_1} = \abs*{Z(j)} - 1$ so we extend the partition to an optimal partition of $\gadgetIf(Z(j))$ (cf.\ \cref{lem:H-if-properties}).
    Second, for every $z' \in Z(j) \setminus \{z\}$ we have $d_2 \in V_2$ and $z' \in V_1$ so we extend the partition to an optimal partition of $\gadgetIf(z')$ (cf.\ \cref{lem:H-if-properties}).
    Finally, consider the gadget $\gadgetIf(z)$.
    We have~$d_2 \in V_2$,~$z \in V_2$, and at most $\alpha(j)-1$ / $n - (\alpha(j)+1)$ / $\beta(j)-1$ / $n - (\beta(j)+1)$ vertices of $A_{S_1(j)}(j)$ / $A_{\overline{S_1(j)}}(j)$ / $A_{S_2(j)}(j)$ / $A_{\overline{ S_2(j)}}(j)$ are in $V_1$ depending on the case that occurs.
    By the choice of $z$ and the definition of $\gadgetIf(z)$, we are able to extend the partition to an optimal partition of this gadget (cf.\ \cref{lem:H-if-properties}).
    
    Altogether, $(V_1, V_2)$ satisfies one $F$-gadget, all $F'$-gadgets, and all $\gadgetIf$-gadgets, and we also have $\abs*{E_{G^*}\bigl((A \cup B) \cap V_1, (A \cup B) \cap V_2\bigr)} = m \cdot L$.
    As the edge sets of the corresponding subgraphs are disjoint, we get $\abs*{E_{G^*}(V_1, V_2)} \geq \budget$ as claimed.
\end{proof}

Now we prove the other direction:
\begin{lemma}
    If $G^*$ admits a partition $(V_1, V_2)$ with $\abs*{E_{G^*}(V_1, V_2)} \geq \budget$, then $G$ admits a multicolored independent set. 
\end{lemma}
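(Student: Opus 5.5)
We plan to argue that any partition reaching $\budget$ must be ``perfect'' on every component, and then read off a multicolored independent set from the sizes $\abs*{A_S(1) \cap V_1}$. Up to swapping sides, we also normalise so that $d_2 \in V_2$.

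First we decompose $E(G^*)$ into the following edge-disjoint parts: the $N$ top-level $F'$-gadgets; the $F$-gadget between $d_2$ and $d_2'$; for each $j$, the (at most five) $\gadgetIf$-gadgets attached to copy $j$; and the edges $E_{G^*}[A\cup B]$, of which there are $D$ by \eqref{eq:value-d}. Each gadget contributes at most its $\mcut$ value.

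Next we show that every gadget is partitioned optimally. If some gadget is partitioned suboptimally, it loses at least $D^2$ by \cref{lem:F-gadgets} (loss $C > D^2$) and \cref{lem:H-if-properties}. We cannot yet compare the $A\cup B$ part with $m\cdot L$, since \cref{lem:max-cut-L-edges} needs the $F'$-gadgets inside each $A(j)$ to be satisfied. So we bound that part crudely: it contributes at most $D$. Since $m\cdot L\ge 0$, we get
\[
\abs*{E_{G^*}(V_1,V_2)} \le \budget - m L - D^2 + D < \budget,
\]
where the last inequality uses $D>1$. Hence all gadgets are optimal. In particular, all $F'$-gadgets in the $A(j)$ are satisfied, so \cref{lem:max-cut-L-edges} applies to every copy. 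That lemma bounds each copy by $L$, and reaching $\budget$ then forces equality in every copy.

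From the equality case of \cref{lem:max-cut-L-edges} we obtain $\abs*{B_S(j)\cap V_1} = n - \abs*{A_S(j)\cap V_1}$ for all $S \in \cS \cup \tildecS$. For $S\in\tildecS$ we use $\overline{S}\in\cS$ together with the $F'$-gadgets inside $A(j)$. The $F'$-gadgets between $b_S^i(j)$ and $a_S^i(j+1)$ give $\abs*{A_S(j+1)\cap V_1} = n - \abs*{B_S(j)\cap V_1} = \abs*{A_S(j)\cap V_1}$. By induction, $\gamma_S := \abs*{A_S(j)\cap V_1}$ is independent of $j$, and $\abs*{A_{\overline S}(j)\cap V_1} = n-\gamma_S$. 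We define $X = \{u_i^{\gamma_{\phi(i)}} \mid i\in[k']\}$, which is multicolored because $\gamma_S\in[n]_0$.

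Finally we check independence, which is the delicate step of matching sides correctly. The $F$-gadget puts $d_2'$ in $V_2$ together with $d_2$. Fix $j$. Optimality of $\gadgetIf(Z(j))$ with $y = d_2$, $z = d_2'$ both in $V_2$ gives, by item 2 of \cref{lem:H-if-properties}, some $z\in Z(j)\cap V_2$. Optimality of $\gadgetIf(z)$, whose $y = d_2$ and $z$ are both in $V_2$, again by item 2 gives one of the following: $\gamma_{S_1(j)}\le\alpha(j)-1$; or $n-\gamma_{S_1(j)}\le n-\alpha(j)-1$, i.e.\ $\gamma_{S_1(j)}\ge\alpha(j)+1$; or the analogous statements for $\beta(j)$ and $S_2(j)$. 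Thus $X$ does not contain both $v_j$ and $w_j$, so $X$ is an independent set.

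The main obstacle is the counting in the second step. The $A\cup B$ part must be bounded crudely by $D$ before we know it is at most $L$, and this only works because every gadget loss is at least $D^2 > D$. Also, the top-level $F'$-gadget between $d_1$ and $d_2$ plays no further role.
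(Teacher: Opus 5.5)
Your proposal is correct and follows the paper's proof essentially step for step. The shared steps are: optimality of all gadgets via the slack $D^2 > D$, with the $A\cup B$ part bounded crudely by $D$; equality in \cref{lem:max-cut-L-edges}; propagation along the $F'$-gadgets between $b_S^i(j)$ and $a_S^i(j+1)$, which you orient correctly, whereas the paper's claim swaps the copy indices; and item~2 of \cref{lem:H-if-properties} applied to $\gadgetIf\big(Z(j)\big)$ and then to $\gadgetIf(z)$. The only cosmetic differences are that you normalise $d_2 \in V_2$ instead of $d_1 \in V_1$, and check each edge directly instead of arguing by contradiction.

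One imprecision you share with the paper: a merely suboptimal gadget need not lose $D^2$. An $F$-gadget with both endpoints on one side and one middle vertex misplaced loses only $2$. So the first step should conclude only that the top-level $F'$-gadgets are satisfied, since an unsatisfied one loses $C > D$. After that, \cref{lem:max-cut-L-edges} caps the $A \cup B$ part at $m \cdot L$ and thereby forces every gadget to be exactly optimal.
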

\begin{proof}
    Without loss of generality assume that $d_1 \in V_1$ holds.
    For every $S \in \cS$ and $j \in [m]$, we define the values 
    \begin{align}
        t_S^j = \abs*{A_S(j) \cap V_1} \in [n]_0 && \text{and} && t_S = t_S^1. \label{eq:def-t-s-j}
    \end{align}
    The function $\phi$ is a bijection so that every integer $i \in [k']$ corresponds to a set $\phi(i)$ from the family $\cS$.
    In the rest of the proof, we argue that
    $\left\{u_1^{t_{\phi(1)}}, \dots, u_{k'}^{t_{\phi(k')}}\right\}$
    is an independent set in the graph~$G$.

    Every edge of $G^*$ belongs to either a gadget of $G^*$ (an $F$-gadget, an $F'$-gadget, or an~$\gadgetIf$-gadget) or to $E_{G^*}[A \cup B]$.
    First, suppose that the partition $(V_1, V_2)$ is suboptimal for some gadget $J$ of $G^*$.
    By the properties of the gadgets we have $\abs*{E_J(V_1, V_2)} \leq \mcut(J) - D^2$ (cf. \cref{lem:F-gadgets,lem:H-if-properties}).
    The value $D$ satisfies $D = \abs*{E_{G^*}[A \cup B]} > 1$ (cf.\ \eqref{eq:value-d}) so we get
    \begin{align*}
        \abs*{E_{G^*}(V_1, V_2)} \leq &\Bigl(\mcut(F) + N \cdot \mcut(F') + \bigl(\sum_{j\in[m]} \budget(j)\bigr) - D^2\Bigr) \\
        +&\abs*{E_{G^*}\big(V_1 \cap (A \cup B), V_2 \cap (A \cup B)\big)} \\
        <&\bigl(\budget - D^2\bigr) + \abs*{E_{G^*}\big(V_1 \cap (A \cup B), V_2 \cap (A \cup B)\big)} \\
        \leq &\bigl(\budget - D^2\bigr) + \abs*{E_{G^*}[A \cup B]} \leq \budget - D^2 + D < \budget
    \end{align*}
    contradicting the cardinality of $E_{G^*}(V_1, V_2)$.
    So the partition of $(V_1, V_2)$ is optimal for each of the above-mentioned gadgets. 
    First, this implies that $d_2, d_2' \in V_2$ holds.
    Second, we have
    \begin{align*}
        &\abs*{E_{G^*}\big(V_1 \cap (A \cup B), V_2 \cap (A \cup B)\big)} \\
        = &\abs*{E_{G^*}(V_1, V_2)} - \Bigl(\mcut(F) + N \cdot \mcut(F') + \bigl(\sum_{j\in[m]} \budget(j)\bigr)\Bigr) \\
        \geq &\budget - \Bigl(\mcut(F) + N \cdot \mcut(F') + \bigl(\sum_{j\in[m]} \budget(j)\bigr)\Bigr) \stackrel{\eqref{eq:def-budget}}{=} m \cdot L. 
    \end{align*}
    \cref{lem:max-cut-L-edges} implies that this value is also upper-bounded by $m \cdot L$, i.e., the equality applies, and moreover, for every set  $S \in \cS$ and every index $j \in [m]$, we have 
    \begin{align}
        \abs*{B_S(j) \cap V_1} = n-\abs*{A_S(j) \cap V_1} = n-t_S^j && \abs*{B_{\overline{S}}(j) \cap V_1} = \abs*{A_S(j) \cap V_1} = t_S^j. \label{eq:a-b-equalities}
    \end{align}
    Because every $F'$-gadget with both endpoints in $A$ is satisfied, we also have 
    \begin{equation}\label{eq:a-s-overline-s}
        \abs*{A_{\overline S}(j) \cap V_1} = n - \abs*{A_S(j) \cap V_1} = n - t_S^j.
    \end{equation}
    With this, we now show that the same selection is made by the partition $(V_1, V_2)$ in every selection gadget. 
    \begin{claim}\label{claim:equal-choices-on-j}
        Let $S \in \cS$, then for all $j \in [m]$ we have $t_S^j = t_S$.
    \end{claim}
    \begin{claimproof}
        By definition we have $t_S = t_S^1$.
        Thus it suffices to show that $t_S^j = t_S^{j+1}$ for every index~$j \in [m-1]$.
        So let $j \in [m-1]$ be arbitrary.
        The $F'$-gadgets with one endpoint in $A$ and the other in $B$ ensure that we have $a_S^i(j) \in V_1$ if and only if $b_S^i(j+1) \in V_2$ for every index~$i \in [n]$.
        Therefore, we have
        \[
            t_S^j = \abs*{A_S(j) \cap V_1} = \abs*{B_S(j+1) \cap V_2} = n - \abs*{B_S(j+1) \cap V_1} \stackrel{\eqref{eq:a-b-equalities}}{=} n-\big(n-t^{j+1}_S\big) = t^{j+1}_S
        \]
        and the claim follows.
    \end{claimproof}
    
    Now suppose that the set $\left\{u_1^{t_{\phi(1)}}, \dots, u_{k'}^{t_{\phi(k')}}\right\}$ is not an independent set in $G$.
    Then there exist indices $i_1 \neq i_2 \in [k']$ such that the vertices $u_{i_1}^{t_{\phi(i_1)}}$ and $u_{i_2}^{t_{\phi(i_2)}}$ are adjacent, i.e., we have~$u_{i_1}^{t_{\phi(i_1)}} u_{i_2}^{t_{\phi(i_2)}} \in E(G)$. 
    So let $j \in [m]$ be such that $\left\{u_{i_1}^{t_{\phi(i_1)}}, u_{i_2}^{t_{\phi(i_2)}}\right\} = \{v_j, w_j\}$ is the~$j$-th edge of $G$.
    Without loss of generality, we assume that $u_{i_1}^{t_{\phi(i_1)}} = v_j$ and $u_{i_2}^{t_{\phi(i_2)}} = w_j$ apply.
    Then the sets $S_1(j) = \phi(i_1)$ and $S_2(j) = \phi(i_2)$ from $\cS$ correspond to the independent sets~$U_{i_1}$ and $U_{i_2}$ of $G$, respectively. 
    The partition $(V_1, V_2)$ satisfies the gadget $\gadgetIf\big(Z(j)\big)$ and we have~$d_2, d_2' \in V_2$. 
    Thus, we have $\abs*{Z(j) \cap V_1} \leq \abs*{Z(j)} - 1$ (cf.\ \cref{lem:H-if-properties}), i.e., there exists a vertex $z \in Z(j) \cap V_2$.
    The gadget $\gadgetIf(z)$ is also satisfied by $(V_1, V_2)$ and we have $z \in V_2$ and~$d_2 \in V_2$ so one of the following cases applies by definition of $\gadgetIf(z)$ (cf.\ \cref{lem:H-if-properties}): 
    \begin{align*}
        \abs*{A_{S_1(j)}(j) \cap V_1} \leq \alpha(j) - 1 && \lor && \abs*{A_{\overline{S_1(j)}}(j) \cap V_1} \leq n - (\alpha(j) + 1) && \lor \\
        \abs*{A_{S_2(j)}(j) \cap V_1} \leq \beta(j) - 1 && \lor && \abs*{A_{\overline{S_2(j)}}(j) \cap V_1} \leq n - (\beta(j) + 1) &&
    \end{align*}
    depending on the vertex $z \in Z(j)$.
    By definition of $t^j_S$     together with equality~\eqref{eq:a-s-overline-s}, this is equivalent to
    \begin{align*}
        t^j_{S_1(j)} \leq t_{S_1(j)} - 1 && \lor && n-t^j_{S_1(j)} \leq n - (t_{S_1(j)} + 1) && \lor \\
        t^j_{S_2(j)} \leq t_{S_2(j)} - 1 && \lor && n-t^j_{S_2(j)} \leq n - (t_{S_2}(j) + 1) &&.
    \end{align*}
    By \cref{claim:equal-choices-on-j} we have $t_{S_1} = t_{S_1}^j$ and $t_{S_2} = t_{S_2}^j$ so this is equivalent to
    \begin{align*}
        t_{S_1} \leq t_{S_1} - 1 && \lor && t_{S_1} \geq t_{S_1} + 1 && \lor \\
        t_{S_2} \leq t_{S_2} - 1 && \lor && t_{S_2} \geq t_{S_2} + 1 &&.
    \end{align*}
    But this is a false expression.
    So $\{u_1^{t_{\phi(1)}}, \dots, u_{k'}^{t_{\phi(k')}}\}$ is indeed a multicolored independent set in $G$.
\end{proof}

\subsection{Construction of a linear multi-expression}

\subparagraph*{Sketch} The main idea behind the construction can be sketched as follows (see \cref{fig:max-cut-expression} for an illustration).
There is an iteration for every $j = 1, \dots, m+1$. 
In iteration $j$ we create all vertices of $A(j)$ and $B(j-1)$ (up to corner cases $j=1$ and $j=m+1$ where~$B(0)$ and~$A(m+1)$ do not exist), whose ordering is specified below.
During this process a vertex~$a_S^i(j)$ (resp.~$b_S^i(j-1)$) gets labels $\left\{q^{\la} \mid q \in S\right\}$ (resp.\ $\left\{q^{\lb} \mid q \in S\right\}$). 
And we maintain the invariant that the vertices of form $a_{T}(j-1)$ have label set $\left\{q^{\la}_{\lold} \mid q \in T\right\}$.
Thus, in the end of the $j$-th iteration we can correctly create the edges between $A(j-1)$ and~$B(j-1)$ by joining $q^{\la}_{\lold}$ with $q^{\lb}$ for every $q \in [k]$.
Then we forget $q^{\la}_{\lold}$ and $q^{\lb}$, and relabel~$q^{\la}$ to $q^{\la}_{\lold}$ so that the invariant holds after iteration $j$.

\begin{figure}[t]
    \centering
    \includegraphics[width=\linewidth]{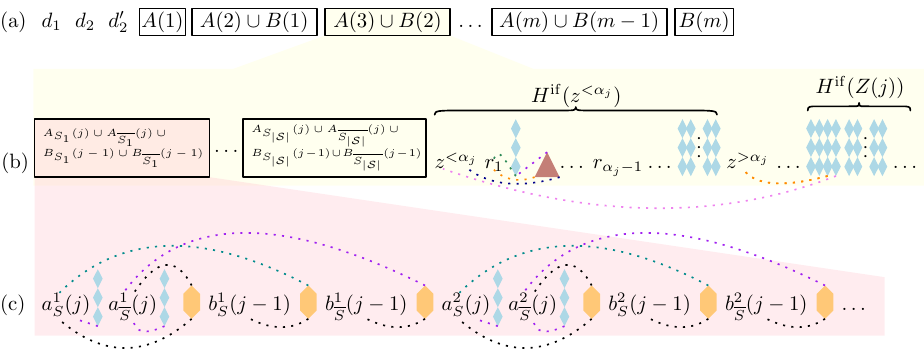}
    \caption{Order of the introduce-nodes in the constructed linear multi-expression. Gadget connections are sketched with dotted lines. Three blue diamonds depict a column of an $\gadgetIf$-gadget. An orange hexagon is an $F'$-gadget and a red triangle is a $T$-gadget. (a) High-level overview. (b) Order for fixed $j \in [m-1]$. (c) Order for fixed $j \in [m-1]$ and fixed $S \in \cS$.}
    \label{fig:max-cut-expression}
\end{figure}

In iteration $j$, we further iterate over the sets $S \in \cS$ and then for a fixed set $S$, we iterate over indices $i \in [n]$.
We put the vertices $a_S^i(j)$, $a_{\overline S}^i(j)$, $b_S(j-1)$, and $b_{\overline S}(j-1)$ almost consecutively in the expression (up to some gadget parts attached to these vertices). 
This ensures that for fixed $S$ and $j$, the vertices of $A_S(j) \cup B_S(j-1) \cup A_{\overline S}(j) \cup B_{\overline S}(j-1)$ are consecutive and a constant number of labels suffices to make each of~$A_S(j)$,~$B_S(j-1)$,~$A_{\overline S}(j)$,~and~$B_{\overline S}(j-1)$ a clique.
Namely, in iteration $i$, we use the \emph{current} labels~$\la$,~$\lb$,~$\labar$,~and~$\lbbar$ for vertices $a_S^i(j)$, $a_{\overline S}^i(j)$, $b^i_S(j-1)$, and $b_{\overline S}^i(j-1)$, respectively, and we ensure that the vertices~$a_S^{i'}(j)$, $a_{\overline S}^{i'}(j)$,~$b^{i'}_S(j-1)$,~and~$b_{\overline S}^{i'}(j-1)$ satisfying $i' < i$, contain \emph{old} labels~$\laold$,~$\lbold$,~$\labarold$,~and~$\lbbarold$.
Then joining each current label with the corresponding old label and then relabeling current labels to the old labels creates the desired cliques.

Our construction ensures that for most $F$- and $F'$-gadgets the endpoints are close to each other so that we can reuse the same labels $\labelf, \labelfprimeleft, \labelfprimeright, \labelfprimeleftedge$, and $\labelfprimerightedge$ to create those gadgets.
For the remaining $F$- and $F'$-gadgets a small number of ``private'' labels suffices.
Similarly, we can construct $\gadgetIf$-gadgets attached to each copy. 
For this we create every column of an $\gadgetIf$-gadget close to the vertex it is attached to.
Every column is basically a path of~$F$-gadgets so it can be created locally using one additional label $\lp$.
At most five~$\gadgetIf$-gadgets are attached to each copy.
So we use at most five current and five old labels~$\lone$,~$\ltwo$,~$\lthree$,~$\lfour$,~$\lfive$ and~$\loneold$,~$\ltwoold$,~$\lthreeold$,~$\lfourold$,~$\lfiveold$, respectively, to create the complete $2n$-partite graph on the columns of one gadget.
Finally, for fixed $j$, the elements of $Z(j)$ use four labels~$\lzone$,~$\lztwo$,~$\lzthree$, and~$\lzfour$.
Now we formalize these ideas.

\begin{lemma}
    The instance $(G^*, \budget)$ and a linear multi-expression of $G^*$ using $3k +\cO(1)$ labels can be computed from the instance $G = (U_1 \cup \dots \cup U_{k'}, E)$ in time polynomial in $n'$ and $k'$.
\end{lemma}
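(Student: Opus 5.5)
The plan splits into two parts: a routine size and time bound for $(G^*,\budget)$, and the explicit linear multi-expression following the sketch above.

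\emph{Size and time.} The values $C$, $D$, $\budget$, $N$, $L$ are polynomial in $n'$ and $k'$. The number of vertices of $G^*$ is then $\mathcal{O}\big(m\cdot k'\cdot n\cdot C + m\cdot n\cdot D\cdot C\big)$, dominated by the $F$-/$F'$-gadgets and the columns of the $\gadgetIf$-gadgets. So $G^*$ and $\budget$ can be written down in polynomial time. The families $\cS,\tildecS$ have size $k'$ and can be enumerated within the same bound.

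\emph{Label scheme.} We use labels $q^{\la},q^{\la}_{\lold},q^{\lb}$ for $q\in[k]$, which accounts for the $3k$ term, plus a constant number of auxiliary labels: current/old labels for the four cliques, the gadget labels, five current and five old labels for $\gadgetIf$-columns, $\lp$, the four $Z$-labels, and private labels for $d_1,d_2,d_2'$. We first construct $d_1,d_2,d_2'$ with their gadgets, keeping $d_2,d_2'$ permanently labelled by private labels, since every $\gadgetIf$-gadget uses them as entry points $y,z$.

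\emph{Iterations.} For $j=1,\dots,m+1$ and each $S\in\cS$ and $i\in[n]$, we introduce $a^i_S(j)$, $a^i_{\overline S}(j)$, $b^i_S(j-1)$, $b^i_{\overline S}(j-1)$, each via introduce of one vertex followed by relabels adding labels. Each vertex receives a current clique label together with $\{q^{\la}\mid q\in S\}$ (resp.\ $\overline S$), or $\{q^{\lb}\mid\cdot\}$ for the $b$'s.

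\emph{Local gadgets.} Immediately after introducing a vertex, we build its local gadgets. The $F'$-gadget between $a^i_S(j)$ and $a^i_{\overline S}(j)$ is built by introducing its $C$ paths one internal vertex at a time, with a constant number of reusable labels, joining to the two endpoints' temporary labels and then forgetting. The $F'$-gadget between $b^i_S(j-1)$ and $a^i_S(j)$ is built the same way; note that $a^i_S(j)$ and its $b$-partner are introduced within the same iteration. If $A_S(j)$ or $A_{\overline S}(j)$ is an entry set of an $\gadgetIf$-gadget at copy $j$, we also build the attached column $P_i$ right away. This column is a path of $F$-gadgets, built with $\lp$ and $\labelf$. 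The column vertices receive that gadget's current column label, and we join it to the gadget's old column label to create the complete multipartite edges. We then relabel current to old, and the same current-to-old step is applied to the clique labels.

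\emph{End of iteration $j$.} After all $S$ and $i$ are processed, we introduce $Z(j)$ and the remaining columns, the $r_i$ vertices and the $T$-gadgets of the five $\gadgetIf$-gadgets. The columns of $\gadgetIf(Z(j))$ attach to $Z(j)$, and the $r_i$ and $T$-gadgets attach to $d_2,d_2'$ and $z$. We then join $q^{\la}_{\lold}$ with $q^{\lb}$ for all $q$, which creates exactly the edges between $A_S(j-1)$ and $B_T(j-1)$ for $S\cap T\neq\emptyset$, that is, $T\ne\overline S$. Finally we forget $q^{\la}_{\lold}$, $q^{\lb}$, all $\gadgetIf$ and $Z$ labels of copy $j-1$, and relabel $q^{\la}\to q^{\la}_{\lold}$.

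\emph{Main obstacle.} The main difficulty is checking that no join creates unintended edges and that no vertex ever carries both labels of a join. For example, $a$-vertices of the current copy carry $q^{\la}$, not $q^{\la}_{\lold}$, at the moment of the $A$–$B$ join. Likewise, gadget temporary labels must be forgotten before they are reused. I would verify this with an invariant stated at the start of each iteration: only the $d$-labels and $\{q^{\la}_{\lold}\}$ are present, and the latter exactly encodes $S$ for $A_S(j-1)$. I would also check the $\gadgetIf$-gadgets of copy $j$ carefully. Their columns attach to $A(j)$, which is introduced during iteration $j$, while $B(j)$ appears only in iteration $j+1$. So these gadgets must be completed within iteration $j$, and their labels distinguished per gadget, which is possible because there are at most five of them. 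Linearity holds because every union has an introduce-node as one child.
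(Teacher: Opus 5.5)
Your proposal is correct and follows essentially the same construction as the paper. It uses the same nested loops over $j\in[m+1]$, $S\in\cS$, $i\in[n]$, the same three label families per $q\in[k]$ to create the anti-matching between $A(j-1)$ and $B(j-1)$, columns attached on the fly with five current/old label pairs, and finishes the gadgets of copy $j$ at the end of iteration $j$. Two bookkeeping details need fixing. First, the old clique labels must be forgotten after index $i=n$ of each $S$, since otherwise $A_S(j)$ gets joined to $A_{S'}(j)$ for the next $S'$. Second, the column and $Z$-labels forgotten at the end of iteration $j$ are those of copy $j$, not copy $j-1$.
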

\begin{proof}
    Since we construct a linear expression it suffices to describe the order of operations.
    An introduce-node is always implicitly followed by a union-node.
    For two labels $\ell_1$ and $\ell_2$, by \emph{relabeling $\ell_1$ to $\ell_2$} we naturally refer to the operation $\rho_{\ell_1 \to \{\ell_2\}}$.
    
    The set of the labels is
    \begin{align*}
        &\big\{q^{\la}_{\lold}, q^{\la}, q^{\lb} \mid q \in [k]\big\} \cup \big\{\lwone, \lwtwo, \lwtwoprime, \labelf, \labelfprimeleft, \labelfprimeright, \labelfprimeleftedge, \labelfprimerightedge\big\} \cup \\ &\big\{\la, \labar, \lb, \lbbar, \laold, \labarold, \lbold, \lbbarold, \lp, \lr, \lone, \loneold, \ltwo, \ltwoold, \lthree, \lthreeold, \lfour, \lfourold, \lfive, \lfiveold, \lzone, \lztwo, \lzthree, \lzfour\big\},
    \end{align*}
    i.e., $3k + \cO(1)$ labels are used.
    
    As the graph $G^*$ contains multiple gadgets, we first explain how some gadgets are added to the expression.
    First, by \emph{adding the gadget vertices of $F(u, v)$} we simply mean introducing a vertex with label set $\{\labelf\}$ $C$ times.
    The label $\labelf$ can later be used to create the missing edges to the endpoints $u$ and $v$ and then be forgotten.
    Second, \emph{adding the gadget vertices of~$F'(u, v)$} means repeating the following sequence of operations $C$ times:
    \begin{enumerate}
        \item introduce a vertex with label set $\big\{\labelfprimeleft, \labelfprimeleftedge\big\}$,
        \item introduce a vertex with label set $\big\{\labelfprimeright, \labelfprimerightedge\big\}$,
        \item join labels $\labelfprimeleftedge$ and $\labelfprimerightedge$,
        \item forget label $\labelfprimeleftedge$,
        \item forget label $\labelfprimerightedge$.
    \end{enumerate}
    Observe that as a result of these operations, a matching on $C$ edges was created such that in every edge, one endpoint has label set $\big\{\labelfprimeleft\big\}$ and the other has label set $\big\{\labelfprimeright\big\}$. 
    These two labels can later be used to create the missing edges to the endpoints $u$ and $v$ of the gadget and then be forgotten. 

    Now we construct the desired linear multi-expression as follows (see \cref{fig:max-cut-expression} for an illustration).
    We introduce the vertices $d_1$, $d_2$, and $d_2'$ with label sets $\{\lwone\}$, $\{\lwtwo\}$, and $\{\lwtwoprime\}$, respectively.
    After that, we add the gadget vertices of $F'(d_1, d_2)$, then join $\lwone$ with $\labelfprimeleft$, forget~$\labelfprimeleft$, join $\lwtwo$ with $\labelfprimeright$, and forget $\labelfprimeright$.
    Then we add the gadget vertices of $F(d_2', d_2)$, join~$\labelf$ with $\lwtwoprime$, join~$\labelf$ with $\lwtwo$, and forget $\labelf$.
    
    To simplify the description for the corner cases $j = 1$ and $m+1$, we stick to the following convention: whenever in the following description, a non-existing vertex is referred to (for instance, when vertices of $B(0)$ and $A(m+1)$ and gadgets attached to them are introduced), we simply skip the corresponding operation.
    Wit this, for every index~$j = 1, \dots, m+1$, every set~$S \in \cS$, and every index~$i \in [n]$, i.e., using three nested loops with outermost being~$j$ and innermost being~$i$, we repeat the following operations.
    
    Introduce the vertex $a_S^i(j)$ with the label set $\big\{q^{\la} \mid q \in S\big\} \cup \{\la\}$.
    If $S = S_1(j)$ and~$\alpha(j) \neq 0$ apply, we proceed as follows to create the gadget~$\gadgetIf\big(z^{<\alpha(j)}_{1}(j)\big)$ (we refer to \cref{fig:h-if-gadget} for an illustration of an $\gadgetIf$-gadget).
    Here the vertices (e.g., column $p_{i,q}$) refer to those vertices of the gadget $\gadgetIf\big(z^{<\alpha(j)}_{1}(j)\big)$.
    \begin{enumerate}
        \item\label{item:column-one-f-gadget} Repeat the following sequence of operations~$D$ times: introduce a vertex with label set~$\{\lp, \lone\}$, join $\lp$ and $\labelf$, forget $\labelf$, add gadget vertices of $F$, join $\lp$ and $\labelf$, forget $\lp$.  
        
        One such repetition creates the next vertex $p_{i,q}$ of the column $P_{i}$ and creates the edges from~$p_{i,q}$ to the gadget vertices of $F(p_{i,q-1}, p_{i,q})$.
        If $q = 1$, this operation creates no edges, and otherwise the gadget vertices of $F(p_{i,q-1}, p_{i,q})$ are precisely the vertices currently containing the label $\labelf$. 
        Then it forgets $\labelf$ so that it can be reused, introduces the gadget vertices of $F(p_{i,q}, p_{i,q+1})$ (or of~$F\big(p_{i,q}, a_S^i(j)\big)$ if $p_{i,q}$ is the last vertex of the column, i.e., if~$q = D$), and finally creates the edges from $p_{i,q}$ to those vertices.
        And it forgets the label $\lp$ because $p_{i,q}$ now has all its incident edges.
        \item\label{item:column-attachment} Join $\la$ and $\labelf$ and forget $\labelf$.
        This step creates the missing edges of the gadget $F(p_{i,D}, a_S^i(j))$. 
        \item Join $\loneold$ and $\lone$. 
        This creates the edges from $P_{i}$ to all columns of the same $\gadgetIf$-gadget created so far.
        In the end, this ensures that the columns of this gadget induce a complete~$2n$-partite graph.
        \item\label{item:relabel-column-to-old} Relabel $\lone$ to $\loneold$. This way we will be able to join further columns of the same $\gadgetIf$-gadget with this one.
    \end{enumerate}
    And if $S = S_2^j$ and $\beta(j) \neq 0$, we proceed analogously to create the gadget~$\gadgetIf\big(z^{<\beta(j)}_{2}(j)\big)$ with the only difference that label $\lone$ becomes $\ltwo$ and label $\loneold$ becomes $\ltwoold$. 

    Similarly, we introduce the vertex $a_{\overline S}^i(j)$ with the label set $\big\{q^{\la} \mid q \in \overline S\big\} \cup \{\labar\}$.
    If one of the gadgets $\gadgetIf\big(z^{>\alpha(j)}_{1}(j)\big)$ and~$\gadgetIf\big(z^{>\beta(j)}_{2}(j)\big)$ touches the vertices in the set~$A_{\overline{S}}(j)$, to attach the corresponding column to $a_{\overline S}^i(j)$, we proceed as in \cref{item:column-one-f-gadget}-\cref{item:relabel-column-to-old} with the following two differences.
    First, in step \cref{item:column-attachment}, we join with $\labar$ instead of $\la$.
    And second, instead of label~$\lone$ and~$\loneold$, we use labels~$\lthree$ and~$\lthreeold$ or~$\lfour$ and~$\lfourold$, respectively, depending on whether~$S = S_1(j)$ or $S = S_2(j)$ applies.
    After that, we add the gadget vertices of $F'\big(a_S^i(j), a_{\overline S}^i(j)\big)$, join $\la$ and~$\labelfprimeleft$, forget~$\labelfprimeleft$, join $\labar$ and $\labelfprimeright$, and forget~$\labelfprimeright$.
    
    No $\gadgetIf$-gadgets are attached to the vertices $b_S^i(j-1)$ and $b_{\overline{S}}^i(j-1)$ so their creation is simpler, namely as follows.
    Introduce the vertex $b_S^i(j-1)$ with label set $\big\{q^{\lb} \mid q \in S\big\} \cup \{\lb\}$, add the gadget vertices of $F'\big(b_S^i(j-1), a_S^i(j)\big)$, join $\lb$ and $\labelfprimeleft$, forget $\labelfprimeleft$, join $\la$ and $\labelfprimeright$, and forget $\labelfprimeright$.
    Then analogously proceed for $b_{\overline S}^i(j-1)$ with the only difference that $\lb$ becomes $\lbbar$ and $\la$ becomes $\labar$.
    After that we:
    \begin{enumerate}
        \item join $\la$ and $\laold$, relabel $\la$ to $\laold$,
        \item join $\labar$ and $\labarold$, relabel $\labar$ to $\labarold$,
        \item join $\lb$ and $\lbold$, relabel $\lb$ to $\lbold$,
        \item join $\lbbar$ and $\lbbarold$, relabel $\lbbar$ to $\lbbarold$.
    \end{enumerate}
    Here joining $\la$ and $\laold$ creates the edges from $a^i_S(j)$ to every vertex $a^{i'}_S(j)$ with $i' \in [i-1]$ so that $A_S(j)$ becomes a clique in the end.
    The remaining three operations have an analogous purpose for $A_{\overline{S}}(j)$, $B_S(j)$, and $B_{\overline{S}}(j)$.
    If $i = n$ applies, i.e., we added all vertices of~$A_S(j)$,~$A_{\overline{S}}(j)$,~$B_S(j)$, and~$B_{\overline{S}}(j)$ and created the corresponding cliques, we forget each of~$\laold$,~$\labarold$,~$\lbold$,~$\lbbarold$.
    
    The remaining operations are only executed if $i = n$ and $S$ is the \emph{last} element of $\cS$, i.e., we are in the last iteration of the loop for the sets~$S \in \cS$ and in the last iteration of the loop for the indices~$i \in [n]$.
    For every index $q \in [k]$, we join $q^{\la}_{\lold}$ with $q^{\lb}$, forget label $q^{\la}_{\lold}$, forget label $q^{\lb}$, and then relabel $q^{\la}$ to $q^{\la}_{\lold}$. 
    We recall that two vertices of form $a_S^i(j-1)$ and $b_T^{i'}(j-1)$ are adjacent if and only if the sets $S$ and $T$ share some element $q \in [k]$.
    So this step correctly creates the edges between $A(j-1)$ and $B(j-1)$.
    The relabeling ensures that now only the vertices of $A(j)$ contain the labels from $\big\{q^{\la}_{\lold} \mid q \in [k]\big\}$ and in the next iteration the edges between the sets $A(j)$ and $B(j)$ will be created.

    Now we finish the construction of the $\gadgetIf$-gadgets touching the $j$-th copy.
    For each of the gadgets of form $\gadgetIf(z)$ with $z \in Z(j)$, so far we created the first $n$ columns and attached them to the correct vertices.
    So it remains to add the missing $n$ columns to each of them and correctly attach those columns as well as to construct the gadget $\gadgetIf\big(Z(j)\big)$.

    If $\alpha(j) \neq 0$, we proceed as follows.
    We introduce the vertex $z^{<\alpha(j)}_{1}(j)$ with label set $\{\lzone\}$.
    For every index~$t \in \big[n-\alpha(j)\big]$, repeat the following sequence of operations.
    Here the vertices (e.g.,~$r_t$) refer to those vertices of the gadget $\gadgetIf\big(z^{<\alpha(j)}_{1}(j)\big)$.
    \begin{enumerate}
        \item Introduce the vertex $r_t$ with label set $\{\lr\}$.
        \item Repeat $D-1$ times: introduce a vertex with label set $\{\lp, \lone\}$, join $\lp$ and $\labelf$, forget $\labelf$, add gadget vertices of $F$, join $\lp$ and $\labelf$, forget $\lp$. Similarly as before, this step essentially creates the column $P_{n+t}$ together with the $F$-gadgets between its vertices.
        The main difference is that this time instead of an $F$-gadget, a $T$-gadget is attached to the vertex~$p_{n+t, D}$.
        For this reason, this step is only repeated $D-1$ times and in the next steps, we create the vertex $p_{n+t, D}$ and the $T$-gadget attached to it.
        \item Introduce the vertex $p_{n+t, D}$ with label set $\{\lp, \lone\}$, join $\lp$ and $\labelf$, forget $\labelf$. 
        The next three steps create the gadget $T\big(p_{n+t,D}, r_t, z^{<\alpha(j)}_{1}(j)\big)$.
        \item Add gadget vertices of $F'(p_{n+t, D}, r_t)$, join $\lp$ and $\labelfprimeleft$, forget $\labelfprimeleft$, join $\lr$ and $\labelfprimeright$, forget $\labelfprimeright$.
        \item Add gadget vertices of $F'\big(p_{n+t, D}, z^{<\alpha(j)}_{1}(j)\big)$, join $\lzone$ and $\labelfprimeleft$, forget $\labelfprimeleft$, join $\lp$ and $\labelfprimeright$, forget $\labelfprimeright$, forget $\lp$.
        \item Add gadget vertices of $F'\big(r_t,z^{<\alpha(j)}_{1}(j)\big)$, join $\lr$ and $\labelfprimeleft$, forget $\labelfprimeleft$, join $\lzone$ and $\labelfprimeright$,  forget~$\labelfprimeright$.
        \item Add gadget vertices of $F(r_t, d_2)$, join $\lr$ and $\labelf$, join $\labelf$ and $\lwtwo$, forget $\labelf$, forget $\lr$.
        \item Join $\lone$ and $\loneold$, relabel $\lone$ to $\loneold$. This step is, as before, to create the complete $2n$-partite graph on the columns.
    \end{enumerate}
    Note that forgetting labels $\lp$ and $\lr$ in each repetition allows reusing them for the next column.
    Now the gadget $\gadgetIf\big(z^{<\alpha(j)}_{1}(j)\big)$ is still missing $\alpha(j)$ further columns each attached to no vertex, we create them by repeating $\alpha(j)$ times the following sequence of operations.
    \begin{enumerate}
        \item Repeat $D-1$ times: introduce a vertex with label set $\{\lp, \lone\}$, join $\lp$ and $\labelf$, forget $\labelf$, introduce gadget vertices of $F$, join $\lp$ and $\labelf$, forget $\lp$. 
        \item Introduce a vertex with label set $\{\lp, \lone\}$, join $\lp$ and $\labelf$, forget $\labelf$, forget $\lp$.
        \item Join $\lone$ and $\loneold$, relabel $\lone$ to $\loneold$.
    \end{enumerate}
    The subgraph $H$ of $\gadgetIf(z^{<\alpha(j)}_{1}(j))$ is finished now so we forget label $\loneold$.
    
    Next we create the column of $\gadgetIf(Z(j))$ attached to $z^{<\alpha(j)}_{1}(j)$ using the following operations.
    Repeat $D$ times: introduce a vertex with label set $\{\lp, \lfive\}$, join $\lp$ and $\labelf$, forget~$\labelf$, introduce gadget vertices of $F$, join $\lp$ and $\labelf$, forget $\lp$. 
    Then join $\labelf$ and $\lzone$, forget $\labelf$, and forget $\lzone$ because all adjacencies of $z^{<\alpha(j)}_{1}(j)$ are already created.
    After that, join $\lfive$ and $\lfiveold$ and relabel $\lfive$ to $\lfiveold$.

    Now in case $\alpha(j) \neq n$ / $\beta(j) \neq 0$ / $\beta(j) \neq n$, we repeat the above process to create the gadget $\gadgetIf(z)$ for each of at most three remaining vertices $z \in Z(j) \setminus \big\{z^{<\alpha(j)}_{1}(j)\big\}$. 
    The only difference is that we use labels $\ltwo, \ltwoold, \lztwo$ / $\lthree, \lthreeold, \lzthree$ / $\lfour, \lfourold, \lzfour$, respectively, instead of~ $\lone$,~$\loneold$,~$\lzone$.
    Now it only remains to finish the construction of the gadget $\gadgetIf\big(Z(j)\big)$.
    
    For this, we first, we create $2n-\big(2 \cdot \abs*{Z(j)}-1\big)$ missing columns of this gadget not attached to any vertex by repeating the following sequence $2n-\big(2 \cdot \abs*{Z(j)}-1\big)$ times.
    Repeat~$D-1$ times: introduce a vertex with label set $\{\lp, \lfive\}$, join $\lp$ and $\labelf$, forget $\labelf$, introduce gadget vertices of $F$, join $\lp$ and $\labelf$, forget $\lp$. 
    Then introduce a vertex with label set $\{\lp, \lfive\}$, join $\lp$ and $\labelf$, forget $\labelf$, forget $\lp$.
    Join $\lfive$ and $\lfiveold$, relabel $\lfive$ to $\lfiveold$.

    And finally, we create the last $n - \big(\abs*{Z(j)} - 1\big)$ columns of this gadget to which $T$-gadgets are attached similarly to as described before for the gadget $\gadgetIf\big(z^{<\alpha(j)}_{1}(j)\big)$.
    Namely, we repeat the following for every index~$t \in \Big[n - \big(\abs*{Z(j)} - 1\big)\Big]$:
    \begin{enumerate}
        \item Introduce the vertex $r_t$ with label set $\{\lr\}$.
        \item Repeat $D-1$ times: introduce a vertex with label set $\{\lp, \lfive\}$, join $\lp$ and $\labelf$, forget $\labelf$, introduce gadget vertices of $F$, join $\lp$ and $\labelf$, forget $\lp$. 
        \item Introduce the vertex $p_{n+t,D}$ with label set $\{\lp, \lfive\}$, join $\lp$ and $\labelf$, forget $\labelf$. 
        \item Introduce gadget vertices of $F'(p_{n+t,D},r_t)$, join $\lp$ and $\labelfprimeleft$, forget $\labelfprimeleft$, join $\lr$ and $\labelfprimeright$, forget~$\labelfprimeright$.
        \item Introduce gadget vertices of $F'(p_{n+t,D}, d_2')$, join $\lwtwoprime$ and $\labelfprimeleft$, forget $\labelfprimeleft$, join $\lp$ and $\labelfprimeright$, forget~$\labelfprimeright$, forget $\lp$.
        \item Introduce gadget vertices of $F'(r_t, d_2')$. Join $\lr$ and $\labelfprimeleft$, forget $\labelfprimeleft$, join $\labelfprimeright$ and $\lwtwoprime$, forget $\labelfprimeright$.
        \item Introduce gadget vertices of $F(r_t, d_2)$. Join $\lr$ and $\labelf$, join $\labelf$ and $\lwtwo$, forget $\labelf$, forget $\lr$.
        \item Join $\lfive$ and $\lfiveold$, relabel $\lfive$ to $\lfiveold$.
    \end{enumerate}
    Then forget $\lfiveold$. This concludes the construction of the multi-expression of $G^*$.

    Now we bound the running time of the process.
    The values $C$ and $D$ used in the definition of the gadgets are polynomial in $n'$ and $k'$. 
    The number $m$ of edges of $G$ is upper-bounded by ${{k'} \choose 2} \cdot (n')^2$ and we have $\abs*{\cS} = k'$, $k \in \Theta(\log{k'})$, and $n = n'-1$. 
    Going through all combinations of~$j \in [m+1]$, $S \in \cS$, and $i \in [n]$ thus takes a polynomial in $n'$ and $k'$ number of iterations.
    A straightforward check of the operations together with the polynomial bounds on $C$ and $D$ imply that one iteration takes time polynomial in $n'$ and $k'$.
    The graph $G^*$ and the value $\budget$ can be computed in the same time along this process.
\end{proof}
Now we are finally ready to prove \cref{thm:maxcut-mcw-lb}.
\begin{proof}[Proof of \cref{thm:maxcut-mcw-lb}]
    Suppose there is a computable function $f$ and an algorithm $\cA$ that gets an instance $(\widehat G, \widehat{\budget})$ and a multi-$\widehat{k}$-expression of a graph $\widehat G$ and solves the \textsc{Max Cut} problem in time $f(\widehat{k}) \cdot \widehat{n}^{2^{o(\widehat{k})}}$ where $\widehat n$ denotes the number of vertices in $\widehat G$.
    
    Then we can solve the \multicolored{} problem as follows.
    Let~$G = (U_1 \cup \dots \cup U_{k'}, E)$ be a graph with pairwise disjoint independent sets $U_1, \dots, U_{k'}$ of size~$n'$ each. 
    First, as described in this section in time polynomial in $n'$ and $k'$ we compute an equivalent instance $(G^*, \budget^*)$ of \textsc{Max Cut} together with a multi-$k^*$-expression of $G^*$ where~$k^*$ satisfies~$k^* \in \cO(\log{k'})$.
    In particular, the number $n^*$ of vertices of $G^*$ is polynomial in $n'$ and $k'$.
    And we run $\cA$ to solve this instance.
    Running $\cA$ then takes time
    \[
        f(k^*) \cdot (n^*)^{2^{o(k^*)}} \leq f(k^*) \cdot (n^*)^{2^{o(\log{k'})}} \leq g(k') \cdot (n')^{2^{o(\log{k'})}} \leq g(k') \cdot (n')^{o(k')}
    \]
    where $g$ is some computable function.
    So altogether the time spent on solving the given instance of \multicolored{} is also upper-bounded by $g(k') \cdot (n')^{o(k')}$ which by \cref{thm:multicolored-lb} contradicts ETH.
\end{proof}

Fomin et al.~\cite{FominGLS14} provided an $n^{\cO(k)}$ algorithm for \textsc{Max Cut} working on $k$-expressions, i.e., parameterized by clique-width:
\begin{theorem}[Subsection 4.2 in~\cite{FominGLS14}]
    Given a graph $G$ on $n$ vertices and a $k$-expression of $G$, in time $n^{\cO(k)}$ the \textsc{Max Cut} problem can be solved.
\end{theorem}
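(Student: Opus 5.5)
This theorem is cited from Fomin et al., so a plan reconstructs their dynamic program over a $k$-expression $\phi$ of $G$. For every node $x$ of $\phi$ and every vector $(s_1,\dots,s_k)$ with $0\le s_\ell\le |\lab_x^{-1}(\ell)|$, we store the value $c_x(s_1,\dots,s_k)$. This is the maximum number of edges of $G_x$ crossed by a partition $(V_1,V_2)$ of $V_x$ that places exactly $s_\ell$ vertices of label $\ell$ into $V_1$, for every $\ell\in[k]$, or $-\infty$ if no such partition exists. Each table has at most $(n+1)^k$ entries. The answer is $\max_{s} c_r(s)\ge \budget$ at the root $r$.

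The justification rests on one fact. In an ordinary $k$-expression every vertex has exactly one label, and all vertices with the same label receive the same future neighbours. Hence how many edges a future join crosses depends only on the counts $s_\ell$, not on which vertices were chosen. I would prove this invariant by induction over the expression and then derive the update rules.

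\emph{Introduce} $v\langle \ell\rangle$: both $s_\ell\in\{0,1\}$ give value $0$.

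\emph{Union}: take the max-plus convolution $c_x(s)=\max_{s'+s''=s} c_{y}(s')+c_{z}(s'')$. This costs $n^{\cO(k)}$ time.

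\emph{Relabel} $\rho_{i\to j}$: move the count, so $c_x(s)=\max$ over entries $s'$ of the child with $s'_i+s'_j=s_j$, $s_i=0$, and the other coordinates equal.

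\emph{Join} $\eta_{i,j}$: every vertex labelled $i$ becomes adjacent to every vertex labelled $j$. The number of newly crossed edges is $s_i(n_j-s_j)+(n_i-s_i)s_j$, where $n_\ell=|\lab_x^{-1}(\ell)|$. Assuming, as usual, that no join recreates an existing edge, we add this term to $c_y(s)$.

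Correctness follows by induction: an optimal partition restricted to $G_x$ realises some count vector. The crossed edges outside $G_x$ depend only on that vector, so keeping only the best partial value per vector loses nothing. The expression can be assumed to have size polynomial in $n$ and $k\le n$, and every node costs at most $n^{\cO(k)}$, which gives the total bound.

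The main obstacle is the join step. It is justified only because labels are single and edges are never doubled, and this is exactly why the approach breaks for multi-expressions: there the relevant counts would range over label \emph{sets}, giving the $2^k$ dimensions behind the matching upper bound.
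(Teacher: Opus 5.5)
Your proposal is correct and is essentially the dynamic program of Fomin et al.\ that the paper cites without reproving. That program keeps, for each node, a table indexed by the number $s_\ell$ of vertices of each label placed in $V_1$, and at a join it adds $s_i(n_j-s_j)+(n_i-s_i)s_j$ crossed edges. The one assumption you flag, that no join recreates an existing edge (irredundancy), is standard and harmless: every $k$-expression can be converted in linear time into an irredundant one with the same number of labels (Courcelle and Olariu).
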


Because it is straightforward to transform a multi-$k$-expression into a $2^k$-expression of the same graph~\cite{Furer17}, we obtain the following algorithm for \textsc{Max Cut} for multi-clique-width implying the tightness of our lower bound under ETH:
\begin{theorem}
    Given a graph $G$ on $n$ vertices and a multi-$k$-expression of $G$, in time $n^{\cO(2^k)}$ we can solve the \textsc{Max Cut} problem.
\end{theorem}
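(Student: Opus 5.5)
The plan is to turn the multi-$k$-expression into an ordinary $2^k$-expression of the same graph and then run the algorithm of Fomin et al.\ from the preceding theorem. First I would apply \cref{lem:mcw-special-relabels}, though only for convenience: it bounds the expression size by $\cO(k^2 n)$. The translation works with arbitrary multi-expressions anyway.

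The simulation uses one ordinary label for each subset $X \subseteq [k]$, so a vertex with label set $X$ in the multi-labeled graph gets the single label $X$. This gives $2^k$ labels. I would translate the operations node by node, bottom-up, and keep the invariant that at every node the constructed $2^k$-labeled graph is the same graph and each vertex carries exactly its current label set as its label.
\begin{itemize}
\item An introduce node $v\langle i_1,\dots,i_q\rangle$ becomes an introduction of $v$ with label $\{i_1,\dots,i_q\}$.
\item A union node stays a union.
\item A join $\eta_{i,j}$ becomes the sequence of ordinary joins $\eta_{X,Y}$ over all pairs $X \ni i$, $Y \ni j$. These labels are distinct, since no vertex contains both $i$ and $j$, so $X \neq Y$ whenever vertices carrying them exist. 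Pairs with $X = Y$ can only arise for labels used by no vertex, so I simply omit them.
\item A relabel $\rho_{i \to S}$ becomes the relabelings $\rho_{X \to (X\setminus\{i\})\cup S}$ over all $X \ni i$. These must be ordered so that no label is merged and then relabeled again. The simplest safe way is to route through a fresh copy of the label space, giving $2 \cdot 2^k$ labels, which is still $2^{\cO(k)}$.
\end{itemize}
Under the invariant, each translated operation produces exactly the same edges and label sets as the original operation. The number of nodes grows by a factor of $2^{\cO(k)}$.

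Running the $n^{\cO(k')}$ algorithm with $k' = 2^{\cO(k)}$ gives time $n^{\cO(2^k)}$, up to an additive $2^{\cO(k)}\cdot\mathrm{poly}(n)$ for the translation, which is absorbed because $k \le n$. The main obstacle is the relabel step: a $k$-expression relabels one label at a time, so a simultaneous relabeling of many labels cannot collide labels incorrectly. The fix is the doubled label space, which keeps the label count at $2^{\cO(k)}$ and the exponent at $\cO(2^k)$.
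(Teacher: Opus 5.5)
Your proposal is correct and follows the paper's route: translate the multi-$k$-expression into an ordinary clique-width expression with $2^{\cO(k)}$ labels indexed by subsets of $[k]$ (the paper cites F\"urer for this translation, with $2^k$ labels) and then apply the $n^{\cO(k)}$ algorithm of Fomin et al. The only difference is cosmetic: the fresh copy of the label space in your relabel step is unnecessary, because the map $X \mapsto (X\setminus\{i\})\cup S$ on sets containing $i$ satisfies $f(f(X)) = f(X)$, so the sequential relabelings can be performed in any order using exactly $2^k$ labels.
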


\section{Algorithm for Hamiltonian Cycle}\label{sec:hamiltonian-cycle}
A \emph{Hamiltonian cycle} in a graph $G$ is a cycle visiting each vertex of $G$ precisely once.
In the \textsc{Hamiltonian Cycle} problem, given a graph $G$, we are asked whether $G$ contains a Hamiltonian cycle.

In this section, we show that \textsc{Hamiltonian Cycle} can be solved in time $n^{\cO(k)}$ on graphs provided with multi-$k$-expressions. 
The crux behind the algorithm can be sketched as follows. 
A partial solution of \textsc{Hamiltonian Cycle} is usually a path packing of a subgraph created in the expression so far.
Next, every vertex is incident with precisely two edges in any Hamiltonian cycle of the final graph~$G$.  
So for every endpoint of a path in such a path packing, it suffices to guess one label  ``responsible'' for the creation of the missing incident edge of the sought Hamiltonian cycle later.
Here, for a path consisting of a single vertex, we treat this vertex as two independent endpoints.
With this idea in hand, we are able to adapt the algorithm by Bergougnoux et al.~\cite{BergougnouxKK20} working on clique-expressions to multi-expressions.

Let us remind the reader that clique-expressions are essentially multi-expressions where no vertex holds multiple labels simultaneously.
To solve the problem, Bergougnoux et al.~\cite{BergougnouxKK20} study the unique auxiliary multigraph of a partial solution, i.e., a path packing of the current graph. 
In this multigraph, the vertex set is equal to the label set $[k]$ of the provided clique-expression. And for every path, say with endpoints $u$ and $v$, of the path packing, there is one edge in this multigraph between the labels of $u$ and $v$---this edge may, in general, be a loop.
Even though the number of possible auxiliary multigraphs can be as large as~$n^{\Theta(k^2)}$---indeed, for every pair of labels the multiplicity of the corresponding edge can vary between $0$ and $n$---Bergougnoux et al.~\cite{BergougnouxKK20} apply the technique of representative sets to argue that what really matters about such an auxiliary multigraph is its degree sequence, i.e., the degree of every label, and the partition of labels into connected components.
This reduces the number of relevant partial solutions to be at most~$n^{\cO(k)} \cdot k^{\cO(k)} = n^{\cO(k)}$.

In our algorithm for multi-$k$-expressions, every path packing now yields, in general, multiple auxiliary graphs. 
Namely, for every path and each of its endpoints, any of its up to $k$ labels can potentially be later used at some join-node to create the missing incident edge.
Here, for a path on exactly one vertex, we treat this vertex as two endpoints and thus also get two labels for this path.
Crucially, the number of types of auxiliary multigraphs based on their degree sequences and connected components remains bounded by $n^{\cO(k)}$.
In the remainder of this section, we work out the details of this idea and prove the correctness.
Let us remark for clarification that, as Bergougnoux et al.~\cite{BergougnouxKK20}, we first focus on solving the \textsc{Hamiltonian Path} problem, i.e., finding a path visiting every vertex exactly once, and in the very end, show how \textsc{Hamiltonian Cycle} reduces to it.
We rely on the notation introduced by Bergougnoux et al.~\cite{BergougnouxKK20} as much as possible.
To simplify the following definitions, from now on the graph $G$ on $n$ vertices and a multi-$k$-expression of $G$ are fixed.
In this section, all multigraphs have vertex set $[k]$.
Two multigraphs $A$ and $B$ are \emph{equal} and we write~$A = B$, if for every pair of labels~$a, b \in [k]$, the multiplicity of $\{a, b\}$ is equal in $A$ and~$B$.

For a node $x$ of the multi-expression, a \emph{path packing} $\cP$ of $x$ is a spanning, i.e., containing all vertices of $G_x$, set of vertex-disjoint paths, possibly of length $0$ in the graph~$G_x$.
We use~$E(\cP)$ to denote the set of edges used by the path packing~$\cP$.
Because any path packing contains all vertices of $G_x$, with a slight abuse of notation we sometimes treat $\cP$ as the graph~$\big(V_x, E(\cP)\big)$.
Unless specified otherwise, with a \emph{path} in $\cP$ we always refer to a maximal path of~$\cP$, i.e., to a connected component of the graph~$\big(V_x, E(\cP)\big)$.
A \emph{label choice} $\phi$ of~$\cP$ (in~$x$) is a mapping $\phi \colon \cP \to {[k] \choose 1} \cup {[k] \choose 2}$ such that the following is satisfied for every path~$P \in \cP$.
Let $u$ and $v$ be the endpoints of $P$ (with $u = v$ if and only if $P$ has length $0$) and let the labels~$i, j \in [k]$ be such that $\{i, j\} = \phi(\cP)$ (with possibly $i = j$).
Then we have 
\begin{itemize}
    \item either $i \in \lab_x(u)$ and $j \in \lab_x(v)$
    \item or $i \in \lab_x(v)$ and $j \in \lab_x(u)$.
\end{itemize} 
In simple words, for every endpoint of a path in $\cP$, the mapping $\phi$ remembers one label.
And if the path has length 0, i.e., both endpoints are the same vertex, we treat this vertex as two independent endpoints.
Note that if some endpoint of a path in $\cP$ has an empty label set in $\lab_x$, then the path packing has no label choice.

\begin{figure}[t]
    \includegraphics{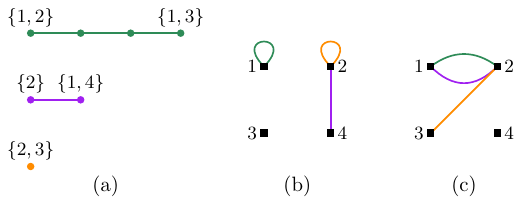}
    \centering
    \caption{(a) A path packing of a graph together with label sets of endpoints of the paths and (b), (c) two examples of auxiliary multigraphs for it.}
    \label{fig:ham-cycle-aux}
\end{figure}

A partial solution of a node $x$ is a pair $(\cP, \phi)$ such that $\cP$ is a path packing of $x$ and~$\phi$ is a label choice of $\cP$ in $x$.
And the set $\Pi(x)$ is defined as the set of all partial solutions of $x$.
The \emph{auxiliary multigraph} $\aux(\cP, \phi)$ of $(\cP, \phi)$ has vertex set $[k]$ and for all labels~$i, j \in [k]$, the multiplicity of $\{i,j\}$ in $\aux(\cP, \phi)$ is equal to $\abs*{\phi^{-1}\big(\{i,j\}\big)}$.
In simple words, for every path in $\cP$, the multigraph contains one edge with endpoints being the labels to which this path is mapped by $\phi$~(see \cref{fig:ham-cycle-aux} for an illustration).
We remind the reader that such multigraphs are allowed to have loops. 
Following the convention by Bergougnoux et al.~\cite{BergougnouxKK20}, we refer to (the edges of) such multigraphs as \emph{red}.
This allows us to speak about \emph{red-blue Eulerian trails} in a moment.
We also define the set 
\[
    \aux(x) = \big\{\aux(\cP, \phi) \mid (\cP, \phi) \in \Pi(x)\big\}.
\]

For two multigraphs $A$ and $B$, we write $A \simeq B$ if for every label~$i \in [k]$, we have~$\deg_A(i) = \deg_B(i)$ and the sets of connected components of $A$ and $B$ coincide.
For a set $\cA$ of multigraphs, the operation $\rreduce(\cA)$ returns a subset of $\cA$ that contains precisely one element from each equivalence class of $\cA / \simeq$.
This output has a bounded size and can be computed efficiently as shown by Bergougnoux et al.~\cite {BergougnouxKK20}:
\begin{lemma}[Lemma 2 in \cite{BergougnouxKK20}]
\label{lem:size-of-repr}
    For every subset~$\cA \subseteq \aux(x)$, we have $\abs*{\rreduce(\cA)} \leq n^k \cdot 2^{k \cdot (\log(k)+1)}$ and we can moreover compute $\rreduce(\cA)$ in time $\mathcal{O}\big(\abs*{\cA} \cdot n \cdot k^2 \cdot \log(n \cdot k)\big)$.
\end{lemma}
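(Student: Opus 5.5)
The plan is to bound the number of $\simeq$-equivalence classes among multigraphs that can appear in $\aux(x)$. The algorithmic part then follows by bucketing: compute an invariant for each element of $\cA$ that determines its class, sort $\cA$ by this invariant, and keep one element per bucket. An element of $\aux(x)$ comes from a path packing of $G_x$, which has at most $|V_x| \le n$ paths. Hence such a multigraph has at most $n$ edges, and every label has degree in $[2n]_0$, since each edge contributes at most $2$ to the degrees of its endpoints.

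For the counting, I first observe that the degree sequence $(\deg(1),\dots,\deg(k))$ is a vector over $[2n]_0$. This naive count gives $(2n+1)^k$, which is weaker than the claimed $n^k$. To reach the stated bound I would instead use the parity structure: the sum of all degrees is twice the number of edges, and each edge corresponds to a path. A tighter encoding, e.g.\ recording $\lfloor \deg(i)/2\rfloor$ or matching the exact encoding of Bergougnoux et al., brings the count down to roughly $n^k$ choices. Since the statement is quoted as Lemma 2 of \cite{BergougnouxKK20}, and $\aux(x)$ consists of multigraphs on $[k]$ with at most $n$ edges, I would mainly verify that their proof uses only this property and not anything specific to clique-expressions. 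Their lemma is a statement about families of such multigraphs, so it transfers verbatim.

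Next, I bound the partition of $[k]$ into connected components. The number of set partitions of $[k]$ is the Bell number $B_k \le k^k = 2^{k\log k}$. There is an additional factor $2^k$, for example for marking which labels are isolated versus incident to an edge (an isolated label with degree $0$ is already determined, so this may be slack). Together these give $2^{k(\log k +1)}$. Multiplying by the degree count yields $n^k\cdot 2^{k(\log k+1)}$.

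For the running time, each multigraph in $\cA$ is stored as a $k\times k$ multiplicity matrix. Degrees are computed in $\cO(k^2)$ time and components via union-find in $\cO(k^2)$ time, giving a canonical key of length $\cO(k)$ with entries of size $\cO(\log(nk))$ bits. Sorting or inserting into a trie and keeping one representative per key costs $\cO(|\cA|\cdot k^2 \cdot \log(nk))$ comparisons, within the stated bound (the extra factor $n$ absorbs any arithmetic on multiplicities up to $n$). I expect the main obstacle to be matching the exact constant form $n^k$ rather than $(2n+1)^k$. If the parity trick does not give it directly, I would defer to the original argument in \cite{BergougnouxKK20}, noting that only the bound on edge count is needed.
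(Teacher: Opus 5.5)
Your fallback, that every element of $\aux(x)$ is a multigraph on $[k]$ with one edge per path of a path packing of $G_x$ (hence at most $n$ edges) and that $\simeq$ depends only on the multigraph, is exactly the implicit justification the paper uses. The paper proves nothing here; it just invokes Lemma~2 of \cite{BergougnouxKK20}. Your own counting, however, does not establish the bound, for two reasons.

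First, recording $\lfloor \deg(i)/2\rfloor$ does not encode the $\simeq$-class. $\simeq$ demands equal degrees, and $\lfloor d/2\rfloor$ does not determine $d$. The only parity information available is global: the degree sum is $2|\cP|$, which saves a factor $2$ in total, not $2^k$. Putting the per-label parity bit back just returns the naive $2n+1$ values per label.

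Second, the extra $2^k$ ``for marking isolated labels'' pays for nothing, since isolated labels are already singleton blocks of the component partition.

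So the factor $n^k$ in your product is never justified. The $2^k$ in the statement really belongs to the degrees, and the slack you need must come from the partition count instead.

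Here is how to close the gap. A degree vector has coordinate sum $2|\cP|\le 2n$, so there are at most ${2n+k \choose k} \le (2n+k)^k/k!$ of them. There are at most $k!$ partitions of $[k]$: send each label to the least label in its block; this map determines the partition and sends $i$ into $[i]$. Hence $\rreduce(\cA)$ has at most $(2n+k)^k \le (2nk)^k = n^k\cdot 2^{k(\log k+1)}$ elements when $k\ge 2$, because $2n(k-1)\ge k$. For $k=1$ the single degree is even, giving at most $n+1\le 2n$ classes.

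On the running time, ``$\cO(|\cA|k^2\log(nk))$ comparisons'' is not what a comparison sort costs, but this is harmless. Since $|\cA|\le (n+1)^{k(k+1)/2}$, sorting keys of length $\cO(k)$ takes $\cO(|\cA|\,k^3\log n)$ time. That lies within the stated bound under the paper's standing assumption $k\le n$.
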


Let $C$ be a multigraph each of whose edges is colored red or blue and such that $C$ contains at least one edge.
A \emph{red-blue Eulerian trail} of $C$ is a closed (i.e., the first and the last vertex coincide) walk visiting each edge of $C$ precisely once such that red and blue edges alternate on this walk, in particular, the first and the last edge of this walk have different colors.
Note that if $C$ admits a red-blue Eulerian trail, then it contains at least one blue and at least one red edge.
In our applications, red edges will represent paths of partial solutions created so far while blue edges correspond to hypothetic paths possibly created later in the expression. 

For two multigraphs $A$ and $B$, we use $A \uplus B$ to denote the edge-disjoint union of $A$ and~$B$, i.e., the multigraph such that for all labels~$i, j \in [k]$, the multiplicity of~$\{i, j\}$ in~$A \uplus B$ is equal to the sum of the multiplicities of~$\{i, j\}$ in~$A$ and~$B$.

Now let $\cA$ and $\cB$ be two families of multigraphs such that the edges of those multigraphs are colored red.
We write~$\cA \lesssim \cB$ if for every multigraph $M$ on the vertex set $[k]$ whose edges are colored blue, the following holds: if there is a multigraph~$B \in \cB$ such that~$B \uplus M$ admits a red-blue Eulerian trail, then there also exists a multigraph~$A \in \cA$ such that~$A \uplus M$ admits a red-blue Eulerian trail.
In this case, we also say that the family~$\cA$ \emph{represents} the family~$\cB$.
Observe that $\lesssim$ is a transitive relation.
Bergougnoux et al.~\cite {BergougnouxKK20} showed that the operation~$\rreduce$ returns a representative, namely:
\begin{lemma}[Lemma 6 in \cite{BergougnouxKK20}]
\label{lemma:reduce-representation}
    For every family $\cA \subseteq \aux(x)$, it holds that $\rreduce_H(\cA) \lesssim \cA$.
\end{lemma}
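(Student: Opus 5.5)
The plan is to reduce the statement to a purely combinatorial characterization of red-blue Eulerian trails. Once we have that characterization, the equivalence $\simeq$ preserves exactly the data it depends on. Fix $\cA \subseteq \aux(x)$, a blue multigraph $M$ on $[k]$, and some $B \in \cA$ such that $B \uplus M$ admits a red-blue Eulerian trail. By definition, $\rreduce(\cA)$ contains some $A'$ with $A' \simeq B$. It then suffices to show that $A' \uplus M$ admits a red-blue Eulerian trail.

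\emph{Step 1 (characterization).} We show that a red/blue multigraph $C$ with at least one edge admits a red-blue Eulerian trail if and only if two conditions hold. First, $\deg_{\mathrm{red}}(v) = \deg_{\mathrm{blue}}(v)$ for every vertex $v$, where loops count twice as in the degree definition. Second, all vertices of positive degree lie in one connected component of $C$.

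For necessity, consider each passage of the closed alternating walk through $v$, including the wrap-around at the start and end. Each passage consumes one red and one blue edge-end at $v$; a loop traversal contributes one end to each of two consecutive passages, which is consistent with counting loops twice. Connectivity of the non-isolated vertices is immediate.

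For sufficiency, at every vertex $v$ pair its red edge-ends bijectively with its blue edge-ends. Following the pairings decomposes $C$ into closed alternating walks. If there are at least two such walks, connectivity yields a vertex $v$ visited by two different walks. Swapping the partners of one red-blue pair from each walk at $v$ merges the two walks into one alternating closed walk. Iterating the merge ends with a single walk, which is a red-blue Eulerian trail. I expect this merging argument, with its careful bookkeeping of loops and of the cyclic wrap-around, to be the main technical point. It is nevertheless the standard Euler-type argument.

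\emph{Step 2 (transfer via $\simeq$).} Every vertex $v$ satisfies
\[
\deg_{\mathrm{red}, A' \uplus M}(v) = \deg_{A'}(v) = \deg_{B}(v) = \deg_{\mathrm{blue}, B \uplus M}(v) = \deg_{M}(v),
\]
so the degree condition carries over from $B \uplus M$ to $A' \uplus M$. For connectivity, the connected components of $A' \uplus M$ are obtained from the partition of $[k]$ into components of $A'$ by merging classes along edges of $M$. The same holds for $B \uplus M$ with the components of $B$. Since $A'$ and $B$ have the same set of components, $A' \uplus M$ and $B \uplus M$ have identical components. Moreover, a vertex has degree $0$ in $A' \uplus M$ exactly when it has degree $0$ in $B \uplus M$, because the degrees in $A'$ and $B$ coincide. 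Hence the vertices of positive degree again lie in a single component. Finally, $B \uplus M$ contains an edge, so it has a vertex of positive degree; by the degree equality that vertex also has positive degree in $A' \uplus M$, which therefore contains an edge as well.

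By Step 1, $A' \uplus M$ admits a red-blue Eulerian trail. Since $M$ and $B$ were arbitrary, this proves $\rreduce(\cA) \lesssim \cA$.
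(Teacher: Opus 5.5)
Your proof is correct and follows essentially the same route as the source of this lemma; the paper itself only cites Lemma~6 of Bergougnoux et al.\ without proof. There, too, the existence of a red-blue Eulerian trail is characterized by red/blue degree balance at every vertex together with connectivity of the non-isolated vertices, and the lemma follows because $\simeq$ preserves both conditions after taking $\uplus M$. Your pairing-and-merging proof of that characterization handles loops and the cyclic wrap-around correctly.
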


Let $r$ denote the root of the multi-expression.
Similarly to Bergougnoux et al.~\cite{BergougnouxKK20}, we will later show that computing a set $\cA_r$ representing the family $\aux(r)$ suffices to solve the problem (cf.\ \cref{lem:hc-root-checking}).
So in the following we focus on the computation, for every node $x$ of the multi-expression, of a bounded-size set~$\cA_x$ of red auxiliary graphs satisfying~$\cA_x \subseteq \aux(x)$ and~$\cA_x \lesssim \aux(x)$.
For an introduce-node, there is only one auxiliary multigraph whose computation is straightforward.
\begin{lemma}\label{lem:hc-leaf}
	Let $(G_x, \lab_x) = v\langle i \rangle$ for some vertex $v$ and label $i \in [k]$. 
	Then the family~$\aux(x)$ consists of a single multigraph whose only edge is the loop at the vertex $i$.
\end{lemma}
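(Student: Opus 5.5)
This follows directly by unfolding the definitions of path packings, label choices, and auxiliary multigraphs for the one-vertex graph produced by the introduce-node. No earlier lemma is needed beyond these definitions.

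First, determine the path packings of $x$. The graph $G_x$ consists of the single vertex $v$ and no edges. A path packing is a spanning set of vertex-disjoint paths, so the only path packing is $\cP = \{P\}$, where $P$ is the path of length $0$ consisting of $v$ alone. Both endpoints of $P$ are $v$.

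Second, determine the label choices of $\cP$. We have $\lab_x(v) = \{i\}$. A label choice $\phi$ must assign to $P$ a set $\{a,b\}$ with $a \in \lab_x(v)$ and $b \in \lab_x(v)$, where the endpoint $v$ is counted twice. This forces $a = b = i$. Hence $\phi(P) = \{i\}$, which lies in ${[k] \choose 1}$ and is therefore admissible. So $\Pi(x)$ consists of exactly one partial solution $(\cP, \phi)$.

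Third, compute the auxiliary multigraph. By definition, the multiplicity of $\{a,b\}$ in $\aux(\cP,\phi)$ equals $\abs*{\phi^{-1}(\{a,b\})}$. This is $1$ for $\{a,b\} = \{i\}$ and $0$ otherwise. So $\aux(\cP,\phi)$ has exactly one edge, namely the loop at $i$, and $\aux(x)$ is the singleton containing this multigraph.

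There is no real obstacle. The only point needing care is the convention that a length-$0$ path has two endpoints that coincide; this is what makes $\{i\}$, rather than a two-element set, the forced value of $\phi(P)$, and hence yields a loop.
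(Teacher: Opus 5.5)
Your proof is correct and follows the paper's argument exactly: there is a unique path packing consisting of the single-vertex path, its unique label choice is forced to be $\{i\}$ because $\lab_x(v) = \{i\}$, and the resulting auxiliary multigraph is therefore the single loop at $i$. Your additional step of computing the multiplicities explicitly is slightly more detailed than the paper's one-line conclusion, but it is the same argument.
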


\begin{proof}
	Because the graph $G_x$ contains a single vertex $v$, the unique path packing of $x$ has a single path consisting of the vertex $v$ only.
	The label set of $v$ in $\lab_x$ is the set $\{i\}$ so the unique label choice of this path packing maps this path to the set $\{i\}$.
	So the lemma follows.
\end{proof}

In the next four lemmas, we show how the desired set $\cA_x$ for a node $x$ can be computed from such sets of the children of $x$.
If $x$ forgets a label $i$, it suffices to simply discard the multigraphs in which the degree of $i$ is non-zero.
We show that this maintains representation.
\begin{lemma}\label{lem:hc-forget-label}
	Let $(G_x, \lab_x) = \rho_{i \to \emptyset}(G_y, \lab_y)$ for some label $i \in [k]$. 
	Further, let a subset~$\cA_y \subseteq \aux(y)$ satisfy $\cA_y \lesssim \aux(y)$.
	We define the set $\cA_x$ as
	\[
		\cA_x = \big\{A \in \cA_y \mid \deg_A(i) = 0\big\}.
	\]
	Then we have $\cA_x \subseteq \aux(x)$ and $\cA_x \lesssim \aux(x)$.
\end{lemma}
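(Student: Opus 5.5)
We prove the two claims separately; the key observation is that the forget operation does not change the graph, so $G_x = G_y$ and path packings of $x$ and $y$ coincide. Only label sets change: $\lab_x(v) = \lab_y(v) \setminus \{i\}$ for every vertex $v$. Consequently, a mapping $\phi$ is a label choice of $\cP$ in $x$ if and only if it is a label choice of $\cP$ in $y$ that never uses label $i$. In terms of auxiliary multigraphs, this gives
\[
    \aux(x) = \big\{A \in \aux(y) \mid \deg_A(i) = 0\big\}.
\]
Indeed, $\deg_{\aux(\cP,\phi)}(i) = 0$ holds exactly when $i$ appears in no $\phi(P)$; here a loop at $i$ contributes two to the degree, so it is also excluded. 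With this identity in hand, the inclusion $\cA_x \subseteq \aux(x)$ is immediate from $\cA_y \subseteq \aux(y)$.

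For $\cA_x \lesssim \aux(x)$, fix a blue multigraph $M$ and some $B \in \aux(x)$ such that $B \uplus M$ admits a red-blue Eulerian trail. Since $B \in \aux(y)$ and $\cA_y \lesssim \aux(y)$, there is $A \in \cA_y$ such that $A \uplus M$ admits a red-blue Eulerian trail. It remains to show $\deg_A(i) = 0$, so that $A \in \cA_x$. The natural tool is a degree argument. In a closed walk with alternating colours, each visit to a vertex pairs one red edge-end with one blue edge-end. Hence in any multigraph admitting a red-blue Eulerian trail, every vertex has red degree equal to blue degree; loops are handled correctly since a loop contributes two ends.

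From $B \uplus M$ we get $\deg_M(i) = \deg_B(i) = 0$. Then from $A \uplus M$ we get $\deg_A(i) = \deg_M(i) = 0$, as required.

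The only step needing care is justifying the red-degree-equals-blue-degree property, including at the start/end vertex of the walk and for loops. This works because the closed walk alternates colours cyclically, and the first and last edges differ in colour, so the wrap-around at the start vertex is also a red-blue pair. Everything else is bookkeeping.
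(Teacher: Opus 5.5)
Your proof is correct and takes essentially the same route as the paper. Both arguments observe that label choices in $x$ are exactly the label choices in $y$ that avoid $i$, which gives the inclusion. For representation, both pull back through $\mathcal{A}_y$ and use the alternation of the trail to get first $\deg_M(i)=0$ and then $\deg_A(i)=0$. The only difference is cosmetic: you use the global fact that red and blue degrees agree at every vertex, while the paper argues locally that each blue edge at $i$ is adjacent in the trail to a red edge at $i$, and vice versa.
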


\begin{proof}
	We start by showing the inclusion $\cA_x \subseteq \aux(x)$.
	For this, consider a multigraph~$A \in \cA_x$.
	Due to $A \in \cA_y \subseteq \aux(y)$, we have $A = \aux(\cP, \phi)$ for some partial solution $(\cP, \phi)$ of~$y$.
    Because the graphs~$G_x$ and~$G_y$ are equal, $\cP$ is also a path packing of $x$.
	The property~$\deg_A(i) = 0$ implies that $\phi$ maps no path in $\cP$ to any set containing $i$.
    Furthermore, for every label $j \neq i \in [k]$, if a vertex of $G_y$ contains a label $j$ in $\lab_y$, then it also does in $\lab_x$.
    So $\phi$ is a label choice of $\cP$ in $x$ too, the pair~$(\cP, \phi)$ is thus a partial solution of~$x$, and we have~$A = \aux(\cP, \phi) \in \aux(x)$ so the claim follows.
	
	It remains to prove the representation $\cA_x \lesssim \aux(x)$.
    First, observe that by definition of a forget-label-operator, any partial solution of $x$ is a partial solution of $y$ as well so we have~$\aux(x) \subseteq \aux(y)$. 
	Now consider an arbitrary partial solution $(\cP, \phi)$ of $x$ (i.e., a red multigraph~$\aux(\cP, \phi) \in \aux(x)$) and a blue multigraph $M$ and a such that $\aux(\cP, \phi) \uplus M$ admits a red-blue Eulerian trail.
	Because no vertex of $G_x$ contains a label~$i$ in $\lab_x$, we have~$\deg_{\aux(\cP, \phi)}(i) = 0$.
    This implies~$\deg_{M}(i) = 0$.
    Indeed, if some blue edge of $M$ contained an endpoint $i$, then in the red-blue Eulerian trail, some consequent red edge of $\aux(\cP, \phi)$ with an endpoint in $i$ would exist too.
	Now, due to $\aux(\cP,\phi) \in \aux(x) \subseteq \aux(y)$ and~$\cA_y \lesssim \aux(y)$, there exists a red multigraph~$A^* \in \cA_y$ such that~$A^* \uplus M$ also admits a red-blue Eulerian trail.
    Due to~$\deg_{M}(i) = 0$, we analogously get $\deg_{A^*}(i) = 0$ as for any blue edge visited by the red-blue Eulerian trail, there is a consequent red edge with the same endpoint.    
    Hence,~$A^* \in \cA_x$ holds.
	This proves~$\cA_x \lesssim \aux(x)$.
\end{proof}

To handle add-label nodes, we define the following notation.
Let~$i \neq j \in [k]$ be two labels.
We define the mapping $R^{i \to j} \colon [k] \to 2^{[k]}$ as 
$R^{i \to j}(i) = \{i, j\}$ and $R^{i \to j}(\ell) = \{\ell\}$ for every label~$\ell \neq i \in [k]$.
For a multigraph $A$, we define the family $A^{i \to j}$ consisting of all multigraphs~$A'$ for which there exists a bijection $\psi \colon E(A) \to E(A')$ with the following property.
For every edge $\big(e, \{a, b\}\big)$ of $A$, let $\big(e', \{a', b'\}\big) = \psi\big(e, \{a, b\}\big)$, then we have
\begin{align}\label{eq:bijection-psi-case-1}
    e' = e && a' \in R^{i \to j}(a) && b' \in R^{i \to j}(b)
\end{align}
or 
\begin{align}\label{eq:bijection-psi-case-2}
    e' = e && a' \in R^{i \to j}(b) && b' \in R^{i \to j}(a).
\end{align}

Now if a node $x$ of the multi-expression adds label $j$ to label $i$, for every endpoint of a path in a path packing mapped to $i$, we can decide whether it remains mapped to $i$ or ``switches'' to $j$.
So on the level of some red multigraph~$A$, for every endpoint $i$ of some edge of~$A$, we can freely decide whether this endpoint remains in~$i$ or switches to~$j$. 
The definition of the family $A^{i \to j}$ then reflects all possibilities for such switches.
Also observe that even though deciding on the switches of each single edge of $A$ naively yields the running time of~$2^{\Theta(n)}$, we may compute the family $A^{i \to j}$ in time $n^{\cO(k)}$ using the following observation.
For each of the~$k$ labels $a \in [k]$, there are only $\cO(n^2)$ possibilities for the number of edges with the endpoints in $i$ and $a$ that switch.
Indeed, if $a \neq i$, then up to $n$ edges may switch the endpoint $i$ to $j$.
And if $a = i$, then up to $n$ edges may switch only one endpoint $i$ to $j$ and keep the other and up to $n$ further edges may switch both endpoints to $j$.
We now show that this process maintains representation.
The proof uses some ideas by Bergougnoux et al.~\cite{BergougnouxKK20} for a relabel-node $\rho_{i \to j}$ in clique-expressions.
But in our case it gets more involved since an arbitrary number of endpoints of paths may remain mapped to $i$.

\begin{lemma}\label{lem:hc-add-label}
	Let $(G_x, \lab_x) = \rho_{i \to \{i,j\}}(G_y, \lab_y)$ for some $i \neq j \in [k]$.
	Further, let a subset~$\cA_y \subseteq \aux(y)$ satisfy~$\cA_y \lesssim \aux(y)$.
	We define the set $\cA_x$ as
	\[
		\cA_x = \bigcup_{A \in \cA_y} A^{i \to j}.
	\]
	Then we have $\cA_x \subseteq \aux(x)$ and $\cA_x \lesssim \aux(x)$.
\end{lemma}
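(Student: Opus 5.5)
The plan has two parts: the inclusion $\cA_x \subseteq \aux(x)$, and the representation $\cA_x \lesssim \aux(x)$.

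\textbf{The inclusion.} Take $A' \in A^{i \to j}$ with $A \in \cA_y$, and write $A = \aux(\cP, \phi)$ for a partial solution $(\cP,\phi)$ of $y$. Since $G_x = G_y$, $\cP$ is also a path packing of $x$. Define $\phi'$ via the bijection $\psi$: every path $P$ corresponds to the edge $e$ of $A$, and we set $\phi'(P) = \{a',b'\}$ where $\psi(e) = (e,\{a',b'\})$. By \eqref{eq:bijection-psi-case-1} or \eqref{eq:bijection-psi-case-2}, each endpoint label $a'$ lies in $R^{i\to j}(a)$. Here $a$ is a label of the corresponding endpoint in $\lab_y$, and $\lab_x(u) \supseteq \lab_y(u)$, with $j \in \lab_x(u)$ whenever $i \in \lab_y(u)$. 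Hence $a' \in \lab_x(u)$, so $\phi'$ is a label choice in $x$ and $A' = \aux(\cP,\phi') \in \aux(x)$.

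\textbf{Reducing the representation to an undo step.} Fix $(\cP,\phi) \in \Pi(x)$ and a blue $M$ such that $A_x := \aux(\cP,\phi)$ gives $A_x \uplus M$ a red-blue Eulerian trail. Define $\phi_y$ by replacing every chosen label $j$ by $i$ at each endpoint $u$ with $j \notin \lab_y(u)$. Such an endpoint has $i \in \lab_y(u)$, since $j$ was acquired from $i$. Then $(\cP,\phi_y)$ is a partial solution of $y$ and $A_x \in (A_y)^{i\to j}$ for $A_y := \aux(\cP,\phi_y)$. The remaining task is to transfer the Eulerian trail: from a trail for $A_x \uplus M$, build a blue multigraph $M_y$ such that $A_y \uplus M_y$ has a red-blue Eulerian trail. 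Then $\cA_y \lesssim \aux(y)$ yields $A^* \in \cA_y$ with $A^* \uplus M_y$ Eulerian, and from it we recover some $A'^* \in (A^*)^{i\to j}$ with $A'^* \uplus M$ Eulerian.

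\textbf{Constructing $M_y$ and the way back (main obstacle).} This is the step I expect to be hardest: the switched endpoints are arbitrary, and the construction must be reversible for an arbitrary $A^*$ about which we know nothing except that $A^* \uplus M_y$ is Eulerian. I would follow the idea of Bergougnoux et al. for $\rho_{i\to j}$ and move blue endpoints instead of red ones. Walk along the trail of $A_x \uplus M$. At every visit of vertex $j$ where a red edge-end switched from $i$ to $j$ meets a blue edge-end, redirect that blue end to $i$ in $M_y$. The resulting closed walk in $A_y \uplus M_y$ stays alternating, because at each such passage both the red end and the adjacent blue end now sit at $i$.

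For the way back, take the trail $W^*$ of $A^* \uplus M_y$. At each passage through $i$, the blue end is one of two kinds:
\begin{itemize}
\item if it was redirected (it originally sat at $j$ in $M$), move the adjacent red end of $A^*$ from $i$ to $j$;
\item if it was not redirected (it sits at $i$ in $M$ as well), keep the red end at $i$.
\end{itemize}
Each red edge-end is paired with exactly one blue edge-end in $W^*$, so this defines a consistent set of switches. That set gives a graph $A'^* \in (A^*)^{i \to j}$, and $W^*$ becomes a red-blue Eulerian trail of $A'^* \uplus M$. The delicate point is loops and edges whose two ends both lie at $i$. There I would argue per edge-end rather than per edge, which is exactly the freedom provided by the two cases \eqref{eq:bijection-psi-case-1} and \eqref{eq:bijection-psi-case-2} in the definition of $A^{i\to j}$.
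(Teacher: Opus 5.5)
Your proposal is correct and follows essentially the same route as the paper's proof. It has the same inclusion argument via the bijection $\psi$ and the same undo step to a label choice in $y$. There is one inconsequential difference: you switch a $j$-endpoint back to $i$ only when forced ($j \notin \lab_y(u)$), whereas the paper switches every $j$-endpoint whose vertex carries $i$ in $\lab_y$, and both choices are valid. The rest matches the paper: $M'$ is built by redirecting blue ends adjacent in the trail to marked red ends from $j$ to $i$, and the way back moves each red end of $A^*$ that is adjacent in $T^*$ to a redirected blue end from $i$ to $j$, with loops handled per edge-end via the traversal direction, exactly as the paper does.
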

\begin{proof}
	We start with the inclusion~$\cA_x \subseteq \aux(x)$.
	For this, consider a multigraph $A \in \cA_y$ and a multigraph $A' \in A^{i \to j}$.
    Then there exists a bijection $\psi \colon E(A) \to E(A')$ satisfying equality~\eqref{eq:bijection-psi-case-1} or~\eqref{eq:bijection-psi-case-2} for the image every edge $\big(e, \{a, b\}\big)$ of $A$.
    Consider a partial solution~$(\cP, \phi)$ of $y$ with~$\aux(\cP, \phi) = A$.
    Because the graphs $G_x$ and $G_y$ are equal, $\cP$ is also a path packing of $x$.
    Now we define the label choice $\phi'$ of $\cP$ in $x$ as follows.
    First, to every path~$P \in \cP$, we bijectively assign an edge~$\big(e_P, \{a_P, b_P\}\big)$ of the multigraph~$A = \aux(\cP, \phi)$ satisfying~$\phi(P) = \{a_P, b_P\}$.
    Note that such a bijection exists by definition of the auxiliary multigraph~$\aux(\cP, \phi)$.
    Then we define the image $\phi'(P) = \{a_P', b_P'\}$  where 
    \[
        \big(e_P, \{a_P', b_P'\}\big) = \psi\big(e_P, \{a_P, b_p\}\big)
    \]
    for every path $P \in \cP$.
    We now argue that this is indeed a label choice for $\cP$ in $x$.
    Let again~$P \in \cP$ be an arbitrary path. And let $u$ and $v$ be the endpoints of $P$ with $a_P \in \lab_y(u)$ and $b_P \in \lab_y(v)$, their existence follows from $\phi(P) = \{a_P, b_P\}$.
    Because one of equalities~\eqref{eq:bijection-psi-case-1} or \eqref{eq:bijection-psi-case-2} is satisfied, we may assume without loss of generality that $a_P' \in R^{i \to j}(a_P)$ and~$b_P' \in R^{i \to j}(b_P)$ hold.
    By definition of the mapping~$R^{i \to j}$, if $a_P = i$, then $a_P'$ is either $i$ or $j$, and if~$a_P \neq i$, then~$a_P' = a_P$ holds.
    By definition of an add-label-node, the property $a_P \in \lab_y(u)$ then implies $a_P' \in \lab_x(u)$.
    And a symmetric argument implies~$b_P' \in \lab_x(v)$.
    So~$\phi'$ is indeed a path labeling of~$\cP$ in~$x$, i.e., we have~$(\cP, \phi) \in \Pi(x)$.
    Furthermore, by the choice of $\psi$ and the construction of $\phi'$, we have~$A' = \aux(\cP, \phi')$, i.e., $A' \in \aux(x)$.
    Indeed, every path $P$ of $\cP$ bijectively corresponds to the edge $\big(e_P, \{a_P', b_P'\}\big) = \psi\big(e_P, \{a_P, b_P\}\big)$ of $A'$.
	This implies the inclusion $\cA_x \subseteq \aux(x)$.
    
	Now we prove the representation~$\cA_x \lesssim \aux(x)$.
	For this, consider a red multigraph~$A \in \aux(x)$ and a blue multigraph $M$ such that $A \uplus M$ admits a red-blue Eulerian trail.
	Let~$(\cP, \phi)$ be a partial solution of $x$ with $A = \aux(\cP, \phi)$.
	We define the label choice $\phi'$ of $\cP$ in $y$ obtained from $\phi$ as follows.
    Let $P \in \cP$ be a path of the path packing $\cP$, let $\{c,d\} = \phi(P)$, and let $u$ and $v$ be the endpoints of $P$ such that $c \in \lab_x(u)$ and $d \in \lab_x(v)$.
    Let us emphasize that these elements as well as several objects below depend on the path $P$ but we omit the subscript $P$ for the sake of readability.
    We define the labels $c'$ and $d'$ as
    \begin{align*}
        c' = 
        \begin{cases}
            i & \text{if } c = j \text{ and } i \in \lab_y(u) \\
            c & \text{otherwise }
        \end{cases}
        &&
        d' = 
        \begin{cases}
            i & \text{if } d = j \text{ and } i \in \lab_y(u) \\
            d & \text{otherwise }
        \end{cases}
        .
    \end{align*}
    And we define the image~$\phi'(P) = \{c', d'\}$
    and we repeat the procedure for every path $P \in \cP$.
    By definition of an add-label node, the constructed mapping~$\phi'$ is a label choice of $\cP$ in $y$.
    Indeed, the label $c$ can only belong to the set~$\lab_x(u) \setminus \lab_y(u)$ if $c = j$ and $i \in \lab_y(u)$, and the analogous property applies to the vertex~$v$ and label~$d$.
    So we have $(\cP, \phi') \in \aux(y)$.
    Let now $A' = \aux(\cP, \phi') \in \aux(y)$ (see \cref{fig:ham-cycle-add-label} for an illustration).

    \begin{figure}[t]
    \includegraphics{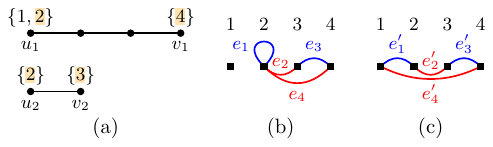}
    \centering
    \caption{Add-label operation $(G_x, \lab_x) = \rho_{1 \to \{1,2\}}(G_y, \lab_y)$. (a) A path packing $\cP$ of the node~$x$ together with a label choice in $x$ (in orange). (b) In red: the corresponding auxiliary graph $A$, in blue: a multigraph $M$ such that $A \uplus M$ admits a red-blue Eulerian trail $e_1, e_2, e_3, e_4$. (c) In the labeling function~$\lab_y$, the vertex~$u_1$ certainly contained label $1$ but not neccesarily label $2$ while the vertex~$u_2$ had label $2$ in $\lab_y$ already. So in red: an auxiliary multigraph $A'$ of $\cP$ where the corresponding label choice in $y$ maps $u_1$ to $1$ and remaining values stay unchanged. The endpoint $1$ of the edge $e_4'$ is marked. So when constructing the blue multigraph $M'$ from $M$ the edge~$\big(e_4, \{2,4\}\big)$ is replaced by the edge~$\big(e_4', \{1,4\}\big)$. Now the multigraph $A' \uplus M'$ admits a red-blue Eulerian trail $e'_1, e'_2, e'_3, e'_4$.}
    \label{fig:ham-cycle-add-label}
    \end{figure}

    Similarly to the first part of the proof, to every path in $\cP$ we bijectively assign an edge of~$A$ relying on the the label choice $\phi$, and we bijectively assign an edge of~$A'$ relying on the the label choice $\phi'$.
    In the following, speaking of an edge of $A$ or $A'$ \emph{corresponding} to a path of $\cP$, we refer to those bijections. 
    With this, there is a natural pairing of edges (and pairs of endpoints of those edges) of $A$ and $A'$ given by the paths in $\cP$.
    Namely, for every path~$P \in \cP$, using the notation above, there is a corresponding edge, say~$\big(e, \{c, d\}\big)$, in~$A$ and a corresponding edge, say~$\big(e', \{c', d'\}\big)$, in $A'$.
    We pair the edges~$e$ and~$e'$.
    Further, we pair the endpoint $c$ (resp.\ $d$) of $e$ with the endpoint $c'$ (resp.\ $d'$) of $e'$. 
    We say that an endpoint~$c$ (resp.\ $d$) of the edge $e$ of $A$ is \emph{marked} if we have $c' \neq c$ (resp.\ $d' \neq d$).
    Observe that an endpoint of an edge of $A$ can only be marked if it is equal to $j$.
    We emphasize that the label $j$ may be marked as an endpoint of one edge in $A$ and not marked as an endpoint of a different edge.
    Similarly, for a loop at $j$ in $A$, the label~$j$ may be marked as one endpoint of this loop and not marked as the other endpoint.
    
    Let now $T$ be a red-blue Eulerian trail of $A \uplus M$ that exists by assumption.
    Let~$e$ be an arbitrary blue edge of~$M$.
    And let $a_1$ and $a_2$ be the endpoints of $e$ so that in $T$ the edge $e$ is traversed from $a_1$ to $a_2$.
    Further, let $e_1$ be the red edge of $A$ preceding $e$ in $T$, and let $e_2$ be the red edge of $A$ following $e$ in $T$. 
    We define the labels $a_1'$ and $a_2'$ as
    \[
        a_i' =
        \begin{cases}
            i & \text{if $a_i$ is a marked endpoint of $e_i$} \\
            a_i & \text{otherwise }
        \end{cases}
    \]
    for every index $i \in [2]$.
    We replace the edge $e$ of $M$ by an edge, say $e'$, with endpoints~$a_1'$ and~$a_2'$, repeat this process for every edge of $M$, and define~$M'$ to be the arising blue multigraph.
    For the arguments below, we will use the following pairing.
    We pair every edge~$e$ of~$M$ with the edge~$e'$ of~$M'$ and we pair the endpoint~$a_1$ (resp.~$a_2$) of~$e$ with the endpoint~$a_1'$ (resp.~$a_2'$) of~$e'$---this pairs the edges of~$M$ and~$M'$ and their endpoints. 
    And we say that $a_1'$ (resp.\ $a_2'$) is a \emph{marked} endpoint of $e'$ if $a_1' \neq a_1$ (resp.\ $a_2' \neq a_2$).
    Observe that an endpoint of an edge in $M'$ can only be marked if it is equal to $i$.

    Now the multigraph $A' \uplus M'$ admits a red-blue Eulerian trail $T'$ constructed by starting with $T$ and replacing each edge by its pair, i.e., every edge of $A$ is replaced by its pair in~$A'$ and every edge of $M$ is replaced by its pair in $M'$.
    By construction of $M'$, the sequence~$T'$ remains a closed walk.
    Indeed, by construction, we replace some visits of the label $j$ by the label $i$ in such a way that such a replacement either occurs for both edges involved in this visit, or for none of the two edges.
    Clearly, it also remains red-blue and visits every edge of~$A' \uplus M'$ precisely once.

    So due to $A' \in \aux(y)$ and $\cA_y \lesssim \aux(y)$, there exists a red multigraph~$A^* \in \cA_y$ such that~$A^* \uplus M'$ also admits a red-blue Eulerian trail, say $T^*$.
    We will now show how to construct a red multigraph $\widetilde A \in \cA_x$ from $A^*$ and $T^*$ such that $\widetilde A \uplus M$ admits a red-blue Eulerian trail, this then concludes the proof of $\cA_x \lesssim \aux(x)$.
    Vaguely speaking, the construction of $\widetilde A$ could be imagined as the reversion of the construction of $M'$.

    Let $e^*$ be an arbitrary edge of $A^*$ and let $a^*_1$ and $a^*_2$ be the endpoints of $e^*$ so that in $T^*$ the edge $e^*$ is traversed from $a_1^*$ to $a_2^*$.
    Further, let $e'_1$ be the edge of $M'$ preceding $e^*$ in $T^*$ and let 
    $e'_2$ be the edge of $M'$ following $e^*$ in $T^*$. 
    We define the vertices $\widetilde{a}_1$ and $\widetilde{a}_2$ as
    \[
        \widetilde{a}_i =
        \begin{cases}
            j & \text{if $a^*_i$ is a marked endpoint of $e'_i$} \\
            a^*_i & \text{otherwise }
        \end{cases}
    \]
    for every index~$i \in [2]$.
    We now replace the edge $e^*$ of $A^*$ by an edge, say $\widetilde{e}$, with endpoints~$\widetilde{a}_1$ and~$\widetilde{a}_2$.
    Repeating this process for every edge of $A^*$, we obtain a red multigraph $\widetilde{A}$.
    We pair the edge $e^*$ of $A^*$ with the edge $\widetilde{e}$ of $\widetilde{A}$, and we pair the endpoint $a^*_1$ (resp.\ $a^*_2$) of $e^*$ with the endpoint $\widetilde{a}_1$ (resp.\ $\widetilde{a}_2$) of $\widetilde{e}$. 

    Because only the label $i$ can be a marked endpoint of some edge in $M'$, to obtain the multigraph $\widetilde{A}$, we take a multigraph $A^* \in \cA_y$ and for some of its edges, replace an endpoint~$i$ (possibly both in case of a loop) by an endpoint $j$.
    So the multigraph $\widetilde{A}$ belongs to the family~$(A^*)^{i \to j} \subseteq \cA_x$ by definition.
    And to obtain a red-blue Eulerian trail $\widetilde T$ of $\widetilde A \uplus M'$, we start with $T^*$ and replace each edge by its pair, i.e., every edge of $A^*$ is replaced by its pair in~$\widetilde A$ and every edge of $M'$ is replaced by its pair in $M$.
    Recall that to obtain the multigraph~$M$ from~$M'$, we take every marked endpoint (being in label~$i$) of every edge of $M'$ and replace it by~$j$.
    So by construction of $\widetilde A$, the sequence $\widetilde T$ indeed remains a closed walk.
    Clearly, it remains red-blue and visits every edge of $\widetilde A \uplus M$ precisely once.
    This concludes the proof of $\cA_x \lesssim \aux(x)$.
\end{proof}

If the node $x$ forms a union of two vertex-disjoint graphs, the partial solutions for node $x$ are simply given by pairwise unions of partial solutions of the children. 
And this, in fact, also maintains representation as shown in the next lemma.
The proof closely follows the proof for clique-expressions by Bergougnoux et al.~\cite{BergougnouxKK20}.
\begin{lemma}\label{lem:hc-union-node}
	Let $(G_x, \lab_x) = (G_{y_1}, \lab_{y_1}) \oplus (G_{y_2}, \lab_{y_2})$ where $G_{y_1}$ and $G_{y_2}$ are vertex-disjoint non-empty graphs.
	Further, let a subset $\cA_{y_1} \subseteq \aux(y_1)$ satisfy $\cA_{y_1} \lesssim \aux(y_1)$ and let a subset~$\cA_{y_2} \subseteq \aux(y_2)$ satisfy $\cA_{y_2} \lesssim \aux(y_2)$.
	We define the set $\cA_x$ as
	\[
		\cA_x = \{A_1 \uplus A_2 \mid A_1 \in \cA_{y_1}, A_2 \in \cA_{y_2}\}.
	\]
	Then we have $\cA_x \subseteq \aux(x)$ and $\cA_x \lesssim \aux(x)$.
\end{lemma}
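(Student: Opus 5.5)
We prove the inclusion directly and then the representation by two successive replacement steps, one per child, in the style of Bergougnoux et al.~\cite{BergougnouxKK20}.

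\emph{Inclusion $\cA_x \subseteq \aux(x)$.} Let $A_1 \in \cA_{y_1}$ and $A_2 \in \cA_{y_2}$, realized by partial solutions $(\cP_1,\phi_1)$ of $y_1$ and $(\cP_2,\phi_2)$ of $y_2$. Since $G_{y_1}$ and $G_{y_2}$ are vertex-disjoint and $G_x$ is their disjoint union, $\cP_1 \cup \cP_2$ is a spanning set of vertex-disjoint paths of $G_x$. Its maximal paths are exactly those of $\cP_1$ and $\cP_2$, because no edges between the two parts are added. The union node preserves label sets, so $\lab_x$ restricted to $V_{y_\ell}$ equals $\lab_{y_\ell}$. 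Hence $\phi_1 \cup \phi_2$ is a label choice of $\cP_1 \cup \cP_2$ in $x$, and each path contributes its own edge, so $\aux(\cP_1\cup\cP_2, \phi_1\cup\phi_2) = A_1 \uplus A_2$.

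\emph{Decomposition of partial solutions of $x$.} Conversely, every partial solution $(\cP,\phi)$ of $x$ splits as follows: every path of $\cP$ lies entirely in $G_{y_1}$ or entirely in $G_{y_2}$, since $G_x$ has no edges between them. Restricting $\phi$ gives partial solutions $(\cP_\ell,\phi_\ell)$ of $y_\ell$ with $\aux(\cP,\phi) = \aux(\cP_1,\phi_1) \uplus \aux(\cP_2,\phi_2)$. Thus every element of $\aux(x)$ has the form $B_1 \uplus B_2$ with $B_\ell \in \aux(y_\ell)$.

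\emph{Representation $\cA_x \lesssim \aux(x)$.} Let $B_1 \uplus B_2 \in \aux(x)$ and let $M$ be blue with $B_1 \uplus B_2 \uplus M$ admitting a red-blue Eulerian trail. The key trick is to treat the edges of $B_2$ as blue, so that $B_1$ is the red part and $M \uplus B_2$ the blue part. The trail does not alternate correctly under this recoloring, so we contract: every maximal alternating segment \emph{blue}, \emph{red of $B_2$}, \emph{blue}, \dots, \emph{blue} between two consecutive $B_1$-edges on the trail is replaced by a single blue edge joining the segment's first and last vertices. This yields a blue multigraph $M_1$ such that $B_1 \uplus M_1$ admits a red-blue Eulerian trail, with consecutive $B_1$-edges now separated by exactly one blue edge.

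This contraction is the main obstacle, because it must be reversible. Each contracted edge records the exact segment it replaced, and the segments involve only $B_2$ and $M$, which do not change in this step. Apply $\cA_{y_1} \lesssim \aux(y_1)$ to get $A_1 \in \cA_{y_1}$ and a red-blue Eulerian trail of $A_1 \uplus M_1$. Expanding every edge of $M_1$ back into its segment gives a closed walk of $A_1 \uplus B_2 \uplus M$ that alternates red-blue and uses every edge exactly once. Expansion preserves endpoints, so the walk stays closed, and it covers each edge of $B_2 \uplus M$ exactly once because the segments partition those edges.

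One degenerate case must be handled: if $B_1$ has no edges, no contraction is possible. This cannot happen, since $G_{y_1}$ is non-empty, so $\cP_1$ contains at least one path and $B_1$ has at least one edge. The same holds for $B_2$ and for the $A_\ell$, since they are auxiliary graphs of non-empty nodes.

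Now repeat the argument symmetrically for the second child. Treat $A_1$ as blue, contract the segments between consecutive $B_2$-edges into a blue multigraph $M_2$, apply $\cA_{y_2} \lesssim \aux(y_2)$ to get $A_2 \in \cA_{y_2}$, and expand. The result is a red-blue Eulerian trail of $A_1 \uplus A_2 \uplus M$ with $A_1 \uplus A_2 \in \cA_x$, which proves $\cA_x \lesssim \aux(x)$.
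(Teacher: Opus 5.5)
Your proposal is correct and follows essentially the same route as the paper. You first identify $\aux(x)$ with the sums $B_1 \uplus B_2$ for $B_\ell \in \aux(y_\ell)$. You then prove representation by contracting the maximal subwalks of the red-blue Eulerian trail that avoid the first child's red edges into single blue edges, applying $\cA_{y_1} \lesssim \aux(y_1)$, re-expanding, and repeating symmetrically for the second child, with non-emptiness of $G_{y_1}$ and $G_{y_2}$ used at exactly the same point as in the paper.
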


\begin{proof}
	By definition of a union-node, the graph $G_x$ is a disjoint union of $G_{y_1}$ and $G_{y_2}$ and every vertex of $G_{y_1}$ and $G_{y_2}$ preserves its label from $\lab_{y_1}$ and $\lab_{y_2}$, respectively, in~$\lab_x$.
	Consider a pair $A_1 \in \aux(y_1)$ and $A_2 \in \aux(y_2)$.
	And consider partial solutions~$(\cP_1, \phi_1)$ of $y_1$ and $(\cP_2, \phi_2)$ of $y_2$ satisfying $\aux(\cP_1, \phi_1) = A_1$ and $\aux(\cP_2, \phi_2) = A_2$.
	Then the pair~$(\cP_1 \cup \cP_2, \phi_1 \cup \phi_2)$
    is trivially a partial solution of $x$ and we have $\aux(\cP_1 \cup \cP_2, \phi_1 \cup \phi_2) = A_1 \uplus A_2$ implying $A_1 \uplus A_2 \in \aux(x)$.
	On the other hand, consider a multigraph $A \in \aux(x)$ and a partial solution $(\cP, \phi)$ of~$x$ with $A = \aux(\cP, \phi)$. 
    Observe that every path of $\cP$ is contained in precisely one of $G_{y_1}$ and $G_{y_2}$.
    So for every index~$i \in [2]$, let $\cP_i$ denote the restriction of $\cP$ to the paths contained in the graph~$G_{y_i}$.
    Then $\cP_i$ is a path packing of $y_i$ and the pair $(\cP_{i}, \phi_{|_{\cP_i}})$
    is a partial solution of $y_i$, i.e., $\aux(\cP_i, \phi_{|_{\cP_i}}) \in \aux(y_i)$.
    Further,~$\cP$ is a disjoint union of $\cP_1$ and $\cP_2$ so we have~$\aux(\cP_1, \phi_{|_{\cP_1}}) \uplus \aux(\cP_2, \phi_{|_{\cP_2}}) = \aux(\cP, \phi) = A$.
	This proves the equality
	\[
		\big\{A_1 \uplus A_2 \mid A_1 \in \aux(y_1), A_2 \in \aux(y_2)\big\} = \aux(x).
	\]
	In particular, this implies the inclusion $\cA_x \subseteq \aux(x)$.
	
	Now we prove the representation~$\cA_x \lesssim \aux(x)$.
    For this, consider an arbitrary red multigraph $A \in \aux(x)$ and a blue multigraph $M$ such that $A \uplus M$ admits a red-blue Eulerian trail, say $T$.   
    By above equality we have $A = A_1 \uplus A_2$ for some multigraphs~$A_1 \in \aux(y_1)$ and $A_2 \in \aux(y_2)$.
     First, we show that there exists a red multigraph $A_1^* \in \cA_{y_1}$ such that~$(A_1^* \uplus A_2) \uplus M$ also admits a red-blue Eulerian trail.
	   For this, consider the blue multigraph $M'$ such that every maximal subwalk of $T$ that does not contain edges from~$A_1$ bijectively corresponds to an edge in $M'$ with the same endpoints as in this subwalk.
     Then every edge of $A_2 \uplus M$ is contained in exactly one such subwalk.
	 Note that because $G_{y_1}$ is a non-empty graph, the multigraph~$A_1$ contains at least one edge so no such subwalk is the whole $T$.
     Also note that by maximality, both end-edges of any such subwalk are some blue edges from $M$---this is because red-blue edges alternate on $T$ and every edge of~$A_1$ and every edge of $A_2$ is red.
	   Now the multigraph~$A_1 \uplus M'$ admits a red-blue Eulerian trail~$T'$ obtained from $T$ by replacing each such subwalk by the corresponding edge from~$M'$.
	 Because the family~$\cA_{y_1}$ represents~$\aux(y_1)$, there exists a red multigraph~$A_1^* \in \cA_{y_1}$ such that~$A_1^* \uplus M'$ admits a red-blue Eulerian trail~$T^*$.
	 By replacing every edge of $M'$ in $T^*$ by the corresponding subwalk of $T$, we obtain a red-blue Eulerian trail $\widetilde{T}$ of $A_1^* \uplus A_2 \uplus M$.
	 Now, analogously, we construct the blue multigraph~$M''$ to have an edge for every maximal subwalk of $\widetilde{T}$ that does not contain edges of $A_2$.
	   A symmetric argument using the representation of $\aux(y_2)$ by~$\cA_{y_2}$ implies that there is a multigraph~$A_2^* \in \cA_{y_2}$ such that $A_1^* \uplus A_2^* \uplus M$ admits a red-blue Eulerian trail. 
       Due to $A_1^* \in \cA_{y_1}$ and $A_2^* \in \cA_{y_2}$, we get $A_1^* \uplus A_2^*\in \cA_x$ by definition.
       So $\cA_x$ represents~$\aux(x)$ as claimed. 
\end{proof}

Now suppose that we want to add an edge between a vertex containing label $i$ and another vertex containing label $j$ to some path packing in such a way that it remains a path packing.
On the level of auxiliary graphs, this corresponds to taking two distinct edges, one with endpoints $i$ and some label $a$ and the other with endpoints $j$ and some label $b$, and replacing them with a single edge with endpoints $a$ and $b$.
At a join-node, at most $n-1$ edges can be added to a path packing this way without breaking the acyclicity.
However, applying the described ``combine-operation'' $n-1$ times might explode the size of the computed set of auxiliary multigraphs.
To avoid this, after every single application of this combination, we will apply the operation $\rreduce$.
In the following lemma, we will show that this approach maintains representation.
But first, we set up the required notation.

Let $i \neq j \in [k]$ be two labels.
For a red multigraph $A$, we define the family $A + \{i,j\}$ of red multigraphs in two steps as follows:
\begin{enumerate}
    \item First, the family~$A + \{i,j\}$ contains~$A$ as a subset.
    \item Second, let $a, b \in [k]$ be two labels and let $\big(e, \{a,i\}\big)$ and $\big(e', \{b,j\}\big)$ be two distinct, i.e., satisfying~$e \neq e'$, edges of $A$.
    Then the multigraph obtained from $A$ by removing the edges $e$ and $e'$ and adding a new edge $\big(e'', \{a,b\}\big)$ belongs to $A + \{i,j\}$.
\end{enumerate}
Let us emphasize that we allow any of $e$ and $e'$ to be a loop.
And we also allow $e$ and $e'$ to have the same set of endpoints: in this case, both edges have endpoints $i$ and $j$ and the multiplicity of $\{i,j\}$ in $A$ is thus at least two. 
It is important that $e$ and $e'$ are distinct: this way, we remove two edges and replace them with a new one.
For the joining sketched above, this corresponds to the fact that we never connect two endpoints of the same path (i.e., ``close a cycle too early'')  but rather make a longer path consisting of two old paths and one new edge.

And for a family $\cA$ of red multigraphs, we define the family $\cA + \{i,j\}$ simply as 
\[
    \cA +  \{i,j\} = \bigcup_{A \in \cA} A +  \{i,j\}.
\]
We also define the family $\cA + 0 \cdot \{i,j\} = \cA$ and for a positive integer $q$, we define the family 
\[
    \cA + q \cdot \{i,j\} = \Bigl(\cA + (q-1) \cdot \{i,j\}\Bigr) +  \{i,j\}.
\]
This reflects starting with the family $\cA$ and applying the operation $+\{i,j\}$ $q$ times.

\begin{lemma}\label{lem:hc-join-node}
	Let $(G_x, \lab_x) = \eta_{i,j}(G_y, \lab_y)$ for some labels~$i \neq j \in [k]$ where $G_y$ is a non-empty graph containing no vertex $v$ with $\{i,j\} \subseteq \lab_y(v)$.
    Then it holds that \[\aux(y) + (n-1) \cdot \{i,j\} = \aux(x).\]
    
	Further, let a subset $\cA_y \subseteq \aux(y)$ satisfy $\cA_y \lesssim \aux(y)$.
    Let $\cA_x^0 = \cA_y$ and for every integer~$q \in [n-2]_0$, let \[\cA_x^{q+1} = \rreduce\big(\cA^q_x + \{i,j\}\big).\]
	We define the set $\cA_x$ as $\cA_x = \cA_x^{n-1}$.
	Then we have $\cA_x \subseteq \aux(x)$ and $\cA_x \lesssim \aux(x)$.
\end{lemma}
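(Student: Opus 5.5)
The plan is to prove the two claims separately. The equality $\aux(y) + (n-1)\cdot\{i,j\} = \aux(x)$ comes from a direct correspondence between partial solutions. Representation then follows from the fact that the operation $+\{i,j\}$ preserves $\lesssim$, combined with \cref{lemma:reduce-representation} and transitivity.

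For the inclusion "$\subseteq$", I would prove by induction on $q$ that $\aux(y) + q\cdot\{i,j\} \subseteq \aux(x)$. The base case holds since $G_y$ is a subgraph of $G_x$ and $\lab_x = \lab_y$, so $\aux(y) \subseteq \aux(x)$. For the step it suffices to show $\aux(x) + \{i,j\} \subseteq \aux(x)$.
\begin{itemize}
\item Take $(\cP,\phi) \in \Pi(x)$ and two distinct edges $(e,\{a,i\})$, $(e',\{b,j\})$ of $\aux(\cP,\phi)$. They correspond to two distinct paths $P,P'$.
\item By the definition of a label choice, $P$ has an endpoint $u$ with $i \in \lab_x(u)$ and $P'$ has an endpoint $v$ with $j \in \lab_x(v)$; for a length-$0$ path, treat the vertex as two endpoints.
\item Since $P \neq P'$ we have $u \neq v$, and $uv \in E_x$ by the definition of $\eta_{i,j}$. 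Moreover $uv \notin E(\cP)$, since $u,v$ lie on different paths.
\item Joining $P$, $uv$ and $P'$ gives a path packing. Keeping the labels $a$ and $b$ at its outer endpoints gives a label choice whose auxiliary multigraph is exactly the merged multigraph.
\end{itemize}

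For "$\supseteq$", take $(\cP,\phi) \in \Pi(x)$ and let $F = E(\cP)\setminus E_y$. Every $uv \in F$ has, say, $i \in \lab_y(u)$ and $j \in \lab_y(v)$. Since $\cP$ is a forest on at most $n$ vertices, $|F| = q \le n-1$.
\begin{itemize}
\item Removing $F$ yields a path packing of $y$. Its label choice keeps $\phi$ on the old endpoints and assigns $i$ at $u$ and $j$ at $v$ for each removed edge $uv \in F$.
\item Re-adding the edges of $F$ one at a time realizes $q$ applications of $+\{i,j\}$. The two merged edges are always distinct, because adding an edge of a path packing never connects two endpoints of the same path.
\item The remaining $n-1-q$ applications use item 1 of the definition, since $A \in A + \{i,j\}$.
\end{itemize}

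For the second part, $\cA_x \subseteq \aux(x)$ is immediate from the following: $\rreduce$ returns a subset, $+\{i,j\}$ is monotone, $\cA_y \subseteq \aux(y)$, and we have just proved the equality. The key auxiliary claim is this: if $\cA \lesssim \cB$, then $\cA + \{i,j\} \lesssim \cB + \{i,j\}$.
\begin{itemize}
\item Let $B' \in B+\{i,j\}$ with $B \in \cB$, and let $M$ be blue with a red-blue Eulerian trail $T$ of $B' \uplus M$. If $B' = B$ we are done directly.
\item Otherwise $B'$ arises by merging $(e,\{a,i\})$ and $(e',\{b,j\})$ into $(e'',\{a,b\})$. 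Let $M' = M$ plus one new blue edge $f$ with endpoints $\{i,j\}$.
\item Replacing $e''$ in $T$ by the subwalk $e, f, e'$ (oriented appropriately) gives a red-blue Eulerian trail of $B \uplus M'$.
\item Hence some $A \in \cA$ has a red-blue Eulerian trail $T^*$ of $A \uplus M'$. In $T^*$, the blue edge $f$ is preceded by a red edge $(g,\{c,i\})$ and followed by a red edge $(g',\{j,d\})$. These are distinct because a trail uses each edge once.
\item Merging $g$ and $g'$ into an edge $\{c,d\}$ gives $A'' \in A+\{i,j\}$. Contracting $g, f, g'$ in $T^*$ to this new edge gives a red-blue Eulerian trail of $A'' \uplus M$.
\end{itemize}
With this claim, induction on $q$ gives $\cA_x^q \lesssim \aux(y) + q\cdot\{i,j\}$. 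The step uses \cref{lemma:reduce-representation}, which applies because $\cA_x^q + \{i,j\} \subseteq \aux(x)$, together with transitivity. Setting $q = n-1$ and applying the equality yields $\cA_x \lesssim \aux(x)$.

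I expect the main obstacle to be the bookkeeping in the equality. One must handle length-$0$ paths as two independent endpoints with possibly both labels involved. One must also check that the re-labeled endpoints after removing $F$ form a valid label choice in $y$, and that the order of re-insertion always merges two distinct paths. The representation claim itself should be routine once the extra blue edge $\{i,j\}$ is introduced.
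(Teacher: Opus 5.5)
Your proposal is correct and follows essentially the same route as the paper. It uses the same one-step closure of $\aux(x)$ under $+\{i,j\}$ for the inclusion $\subseteq$. For the reverse inclusion it decomposes the path packing along its new edges; you remove all of them at once and re-add them, where the paper inducts one edge at a time, which is the same argument unrolled. The representation part is also the same: monotonicity of $\lesssim$ under $+\{i,j\}$ via an auxiliary blue edge $\{i,j\}$, combined with \cref{lemma:reduce-representation} and transitivity.

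One step should be stated more carefully: the distinctness of $g$ and $g'$. ``A trail uses each edge once'' only suffices because $T^*$ has at least four edges. This holds because $M$ must contain a blue edge for $B' \uplus M$ to admit a red-blue Eulerian trail, so $M'$ has at least two blue edges. Otherwise the trail could consist of a single red edge followed by $f$, and the predecessor and successor of $f$ would coincide. The paper makes this point explicitly.
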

\begin{proof}
    The proof of this lemma relies on the proof of the analogous lemma by Bergougnoux et al.~\cite{BergougnouxKK20}. However, for multi-clique-width we need to incorporate non-unique label choices into it. 
    For the sake of completeness, we provide all details.
	We start by proving the following claim:
    \begin{claim}\label{claim:join-one-step-representation}
        For any subset $\cA \subseteq \aux(x)$, we have $\cA + \{i,j\} \subseteq \aux(x)$.
    \end{claim}
    \begin{claimproof}
        Consider a multigraph $A' \in \cA + \{i,j\}$.
        We show that $A' \in \aux(x)$ holds.
        If $A' \in \cA$, the claim follows immediately.
        Otherwise, there exists a multigraph $A \in \cA$ and two distinct edges $\big(e_1, \{a,i\}\big)$ and $\big(e_2, \{b, j\}\big)$ for some labels~$a, b \in [k]$ such that the multigraph~$A'$ is obtained from $A$ by removing the edges $e_1$ and $e_2$ and adding a new edge $\big(e_3, \{a,b\} \big)$.
        We have~$A \in \cA \subseteq \aux(x)$ so let $(\cP, \phi)$ be a partial solution of $x$ with $\aux(\cP, \phi) = A$.
        Then the path packing $\cP$ of $x$ contains two distinct paths, say $P_1$ and $P_2$, with endpoints $v_1,u_1$ and $u_2,v_2$, respectively, such that $a \in \lab_{x}(v_1)$, $i \in \lab_{x}(u_1)$, $j \in \lab_{x}(u_2)$, and $b \in \lab_{x}(v_2)$.
        By definition of a join-node, the graph $G_x$ contains the edge $u_1 u_2$.
        Let $\cP'$ be obtained from~$\cP$ by removing the paths $P_1$ and $P_2$ and adding a new path defined as the concatenation of the path $P_1$, the edge $u_1 u_2$, and the path $P_2$.
        Note that this new path has endpoints $v_1$ and $v_2$ and~$\cP'$ is a path packing of $x$.
        We define the label choice $\phi'$ of $\cP'$ in $x$ so that~$\phi'(P) = \phi(P)$ if $P \in \cP$ and $\phi'(P) = \{a,b\}$ otherwise, i.e., if $P$ is the new path.
        This implies that $(\cP', \phi')$ is a partial solution of $x$, i.e., $(\cP', \phi') \in \Pi(x)$.      
        Finally, observe that we have $A' = \aux(\cP', \phi')$ by construction of $\cP$ and $\phi'$.
        So we get $A' \in \aux(x)$ concluding the proof.
    \end{claimproof}

    Now we prove the first equality of the lemma.
    First, because $G_y$ is a subgraph of $G_x$ on the same vertex set and every vertex has the same label set in $\lab_y$ and $\lab_x$, every partial solution of $y$ is also a partial solution of $x$ implying $\aux(y) \subseteq \aux(x)$.
    So by applying claim~\cref{claim:join-one-step-representation}~$n-1$ times, we get $\aux(y) + (n-1) \cdot \{i,j\} \subseteq \aux(x)$.
    We now show that the other inclusion applies too:
    \begin{claim}
        It holds that $\aux(y) + (n-1) \cdot \{i,j\} \supseteq \aux(x)$.
    \end{claim}
    \begin{claimproof}
        We define the set 
        \[
            \cE = \big\{u v \mid u, v \in V_x, i \in \lab_{x}(u), j \in \lab_{x}(v)\big\}.
        \]
        By precondition on the join-operator $\eta_{i,j}$, no vertex of $G_x$ contains both labels $i$ and $j$ in~$\lab_y = \lab_x$ and we have $E_x \setminus E_y \subseteq \cE \subseteq E_x$ by definition of this operator.
        Now consider a multigraph~$A \in \aux(x)$ and let $(\cP, \phi)$ be a partial solution of $x$ with $A = \aux(\cP, \phi)$.
        Let~$q$ be the number of edges from $\cE$ used by the paths in $\cP$.
        Then we have $0 \leq q \leq n-1$ because the paths of ~$\cP$ form an acyclic subgraph of $G_x$.
        We prove by induction over the value~$q$ that $A$ belongs to $\aux(y) + q \cdot \{i,j\}$ which then implies the claim because~$\aux(y) + q \cdot \{i,j\} \subseteq \aux(y) + (n-1) \cdot \{i,j\}$ holds.
        If $q = 0$, then no edge of $\cP$ belongs to $E_x \setminus E_y$ so~$(\cP, \phi)$ is a partial solution of $y$ and therefore $A \in \aux(y) = \aux(y) + 0 \cdot \{i,j\}$.
        
        Otherwise, we have $q > 0$. Then we fix a path $P$ in $\cP$ that contains some edge $u_1 u_2 \in \cE$ with $i \in \lab_{x}(u_1)$ and $j \in \lab_{x}(u_2)$.
        Further let $P_1$ and $P_2$ be two paths (possibly of zero length) obtained from $P$ by the removal of the edge $u_1 u_2$ and let the vertices~$v_1$ and $v_2$ be such that~$v_1, u_1$ and $u_2,v_2$ are the endpoints of $P_1$ and $P_2$, respectively.
        Then $v_1$ and $v_2$ are the endpoints of $P$. 
        Finally, let the labels~$a, b \in [k]$ be such that $\phi(P) = \{a,b\}$,~$a \in \lab_{x}(v_1)$, and~$b \in \lab_{x}(v_2)$.
        Let $\cP'$ be obtained from $\cP$ by removing the path $P$ and adding the paths~$P_1$ and $P_2$. 
        Then~$\cP'$ is a path packing of $x$ using $q-1$ edges of $\cE$.
        Further, we define a label choice~$\phi'$ of~$\cP'$ in~$x$ so that $\phi'(P) = \phi(P)$ if $P$ belongs to $\cP$, and $\phi'(P_1) = \{a,i\}$, $\phi'(P_2) = \{j, b\}$.
        And let $A' = \aux(\cP', \phi') \in \aux(x)$.
        By induction, we have $A' \in \aux(y) + (q-1) \cdot \{i,j\}$.
        The multigraph $A$ is obtained from $A'$ by removing one edge with endpoints~$a$ and~$i$ and another edge with endpoints~$j$ and $b$ and adding a new edge with endpoints~$a$ and~$b$, i.e., we have~$A \in A' + \{i,j\}$.
        Altogether, we get that
        \[
            A \in \bigl(\aux(y) + (q-1) \cdot \{i,j\}\bigr) + \{i,j\} = \aux(y) + q \cdot \{i,j\}
        \]
        as claimed. This concludes the proof.
    \end{claimproof}
    So we indeed get the equality $\aux(y) + (n-1) \cdot \{i,j\} = \aux(x)$.
    
    Further, the inclusions $\cA_x^0 = \cA_y \subseteq \aux(y) \subseteq \aux(x)$ and the inductive application of \cref{claim:join-one-step-representation} as well as the fact that $\rreduce$ returns a subset of the input set imply that~$\cA_x \subseteq \aux(x)$ 
    holds.
    So we remain with the representation.
    For this, we prove the following claim:
    \begin{claim}\label{claim:join-a-b-one-step-representation}
        Let $\cA, \cB \subseteq \aux(x)$ be such that $\cA \lesssim \cB$.
        Then we have $\cA + \{i,j\}  \lesssim \cB + \{i,j\}$.
    \end{claim}
    \begin{claimproof}
        The proof goes analogously to a similar claim by Bergougnoux et al.~\cite{BergougnouxKK20}.
        However, their claim is about partial solutions while we only argue on the level of auxiliary graphs. 
        So we provide the complete proof for the sake of completeness.
        Consider an arbitrary red multigraph $A \in \cB + \{i,j\}$ and a blue multigraph $M$ such that the multigraph $A \uplus M$ admits a red-blue Eulerian trail, say $T$.
        The property $\cB \subseteq \aux(x)$ and \cref{claim:join-one-step-representation} imply that we have~$A \in \aux(x)$.
        Because the graph~$G_y$ is non-empty, the graph~$G_x$ is also non-empty and the multigraph~$A$ contains at least one edge. 
        Therefore, $M$ also contains at least one edge. 
        If $A$ belongs to $\cB$, then due to $\cA \lesssim \cB$, there exists a multigraph $A' \in \cA \subseteq \cA + \{i,j\}$ such that $A' \uplus M$ admits a red-blue Eulerian trail as desired.

        Otherwise, there exists a red multigraph $A^* \in \cB$ and two distinct edges $\big(e_1, \{a, i\}\big)$ and~$\big(e_2, \{j, b\}\big)$ of $A^*$ for some labels~$a, b \in [k]$ such that $A$ is obtained from $A^*$ by removing the edges $e_1$ and $e_2$ and adding a new edge $\big(e_3, \{a, b\}\big)$.
        Let the blue multigraph $M^*$ be obtained from $M$ by adding a new edge $\big(e^*, \{i, j\}\big)$.
        In particular, $M^*$ now contains at least two edges.
        The multigraph $A^* \uplus M^*$ admits a red-blue Eulerian trail obtained from $T$ by replacing the red edge $\big(e_3, \{a,b\}\big)$ of $A$ by a red-blue-red sequence $\big(e_1, \{a, i\}\big), \big(e^*, \{i,j\}\big), \big(e_2, \{j, b\}\big)$.
        Due to~$\cA \lesssim \cB$, there exists a red multigraph $\widetilde A \in \cA$ such that $\widetilde A \uplus M^*$ also admits a red-blue Eulerian trail, sat $\widetilde T$.
        Let $\big(\widetilde{e}_1, \{a', i\}\big)$ and $\big(\widetilde{e}_2, \{j, b'\}\big)$ be the edges of $\widetilde{A}$ preceding and following, respectively, in $\widetilde T$ the edge $(e^*, \{i,j\})$ of $M^*$  for some $a', b' \in [k]$.
        Because~$M^*$ contains at least two edges, the edges $\widetilde{e}_1$ and $\widetilde{e}_2$ are distinct. 
        So let the multigraph $A'$ be obtained from $\widetilde A$ by removing the edges $\big(\widetilde{e}_1, \{a', i\}\big)$ and $\big(\widetilde{e}_2, \{j, b'\}\big)$ and adding a new edge $\big(e', \{a', b'\}\big)$.
        First, we have $A' \in \cA + \{i, j\}$.
        And second, $A' \uplus M$ admits a red-blue Eulerian trail: to obtain it, we start with $\widetilde T$ and replace the red-blue-red sequence~$\big(\widetilde{e}_1, \{a', i\}\big), \big(e^*, \{i,j\}\big), \big(\widetilde{e}_2, \{j, b'\}\big)$ by a single red edge $\big(e', \{a',b'\}\big)$ of $A'$.
        This concludes the proof of $\cA + \{i,j\} \lesssim \cB + \{i,j\}$.
    \end{claimproof}
    Now we put everything together to prove that the family~$\cA_x$ represents~$\aux(x)$.
    First, we have~$\cA_x^0 = \cA_y \lesssim \aux(y) = \aux(y) + 0 \cdot \{i,j\}$ by assumption.
    Further, if $\cA^q_x \lesssim \aux(y) + q \cdot \{i,j\}$ holds for some integer~$q$, then by \cref{claim:join-a-b-one-step-representation}, we also have 
    \[
        \cA^q_x + \{i,j\} \lesssim \Bigl(\aux(y) + q \cdot \{i,j\}\Bigr) + \{i,j\} = \aux(y) + (q+1) \cdot \{i,j\}.
    \]
    By \cref{lemma:reduce-representation} (i.e., operator~$\rreduce$s returns a representative), we have
    \[
        \cA_x^{q+1} = \rreduce\big(\cA_x^q + \{i,j\}\big) \lesssim \cA_x^q + \{i,j\}. 
    \]
    Because $\lesssim$ is transitive, the last two relations imply
    \[
        \cA_x^{q+1} \lesssim \aux(y) + (q+1) \cdot \{i,j\}.
    \]
    So by induction over $q$, we obtain 
    \[
        \cA_x = \cA_x^{n-1} \lesssim \aux(y) + (n-1) \cdot \{i,j\}.
    \]
    With the equality~$\aux(y) + (n-1) \cdot \{i,j\} = \aux(x)$, the last claim of the lemma follows.
\end{proof}

Now before putting everything together, we provide a procedure to check, at the root of the multi-expression, for the existence of a Hamiltonian path between an arbitrary pair of adjacent vertices.
The statement and the proof are analogous to Bergougnoux et al.~\cite{BergougnouxKK20}:
\begin{lemma}\label{lem:hc-root-checking}
    Let $(G_x, \lab_x)$ be a multi-labeled graph, let $u, v$ be two vertices of $G_x$, and let~$\ell_u \neq \ell_v$ be two labels such that the following holds for every vertex $w$ of $G_x$:
    \[
        \lab_{x}(w) \cap \{\ell_u, \ell_v\} =
        \begin{cases}
            \{\ell_u\} & \text{ if } w = u \\
            \{\ell_v\} & \text{ if } w = v \\
            \emptyset & \text{ otherwise}
        \end{cases}
        .
    \]
    Further, let a subset $\cA_x \subseteq \aux(x)$ satisfy $\cA_x \lesssim \aux(x)$.
    Then the graph~$G_x$ contains a Hamiltonian path with endpoints $u$ and $v$ if and only if $\cA_x$ contains a multigraph whose edge set consists of only one edge and this edge has endpoints $\ell_u$ and $\ell_v$.
\end{lemma}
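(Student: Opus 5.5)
We prove both directions by routing everything through the representation property: we show that the single-edge multigraph is realizable in $\aux(x)$ exactly when a Hamiltonian $u$-$v$ path exists, and then transfer this to $\cA_x$ using a one-edge blue multigraph as witness.

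Let $M$ be the blue multigraph with a single edge having endpoints $\ell_u$ and $\ell_v$. The key observation is that for a red multigraph $A \in \aux(x)$, the multigraph $A \uplus M$ admits a red-blue Eulerian trail if and only if $A$ consists of exactly one edge with endpoints $\ell_u$ and $\ell_v$. The \emph{if} direction is immediate: the trail traverses the red edge from $\ell_u$ to $\ell_v$ and then the blue edge back. For the \emph{only if} direction, note that red and blue edges alternate on a closed trail, so the numbers of red and blue edges coincide. Hence $A$ has exactly one edge, and since this red edge is followed and preceded by the only blue edge, its endpoints must be $\ell_u$ and $\ell_v$. This uses that $\ell_u \neq \ell_v$, which excludes a loop.

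($\Rightarrow$) Let $P$ be a Hamiltonian path of $G_x$ with endpoints $u$ and $v$. Then $\cP = \{P\}$ is a path packing of $x$. By assumption $\ell_u \in \lab_x(u)$ and $\ell_v \in \lab_x(v)$, so $\phi(P) = \{\ell_u,\ell_v\}$ is a valid label choice. Therefore $A = \aux(\cP,\phi) \in \aux(x)$ is the single edge $\{\ell_u,\ell_v\}$, and $A \uplus M$ admits a red-blue Eulerian trail. Since $\cA_x \lesssim \aux(x)$, some $A^* \in \cA_x$ also has the property that $A^* \uplus M$ admits such a trail. By the observation above, $A^*$ is the desired one-edge multigraph.

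($\Leftarrow$) Let $A \in \cA_x \subseteq \aux(x)$ be the single edge $\{\ell_u,\ell_v\}$, and take a partial solution $(\cP,\phi)$ with $\aux(\cP,\phi)=A$. Then $\cP$ has exactly one path $P$. Since $\cP$ is spanning, $P$ is a Hamiltonian path of $G_x$. The label choice gives endpoints $w_1,w_2$ of $P$ with $\ell_u \in \lab_x(w_1)$ and $\ell_v \in \lab_x(w_2)$. By the label hypothesis, only $u$ carries $\ell_u$ and only $v$ carries $\ell_v$, so $w_1 = u$ and $w_2 = v$. Because $u \neq v$, the path $P$ is indeed a $u$-$v$ path; no length-0 case arises unless $G_x$ has a single vertex, which is excluded by $u \neq v$.

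The only delicate step is the characterization of which $A$ make $A \uplus M$ Eulerian in the red-blue sense. The argument rests on alternation forcing equal counts of red and blue edges, so it should go through cleanly.
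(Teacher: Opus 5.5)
Your proof is correct and is essentially the paper's argument: the single blue edge with endpoints $\ell_u$ and $\ell_v$ serves as the witness for $\cA_x \lesssim \aux(x)$, alternation on a red-blue Eulerian trail forces any representing red multigraph to be that same single edge, and the label hypothesis identifies the endpoints of the unique spanning path as $u$ and $v$. You spell out the uniqueness step (equal numbers of red and blue edges) that the paper only asserts. One minor quibble: that step does not actually need $\ell_u \neq \ell_v$, whose role in the statement is rather to force $u \neq v$.
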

\begin{proof}
    First, suppose that the family~$\cA_x$ contains a multigraph $A$ whose edge set consists of only one edge and this edge has endpoints $\ell_u$ and $\ell_v$.
    Due to $\cA_x \subseteq \aux(x)$, we have~$A \in \aux(x)$.
    Let $(\cP, \phi) \in \Pi(x)$ be a partial solution of $x$ with $A = \aux(\cP, \phi)$.
    By the properties of $A$, the path packing $\cP$ of $x$ consists of a single path (which is therefore a Hamiltonian cycle of $G_x$) with endpoints $u'$ and $v'$ satisfying $\ell_u \in \lab_{x}(u')$ and $\ell_v \in \lab_{x}(v')$.
    By the preconditions of the lemma we get $u' = u$ and $v' = v$, i.e., $G_x$ contains a Hamiltonian path with endpoints $u$ and~$v$.

    For the other direction, suppose that $G_x$ contains a Hamiltonian path $P$ with endpoints~$u$ and~$v$.
    For a path packing $\cP$ of $x$ that consists of $P$ only and the label choice $\phi$ of $\cP$ in $x$ mapping $P$ to $\{\ell_u, \ell_v\}$, let $A = \aux(\cP, \phi) \in \aux(x)$.
    Then the red multigraph $A$ contains a single edge and its endpoints are $\ell_u$ and $\ell_v$.
    Now consider a blue multigraph $M$ whose edge set consists of a single edge with endpoints are $\ell_u$ and $\ell_v$.
    Then trivially $A \uplus M$ admits a red-blue Eulerian trail.
    Due to $\cA_x \lesssim \aux(x)$, the family~$\cA_x$ contains a red multigraph $A'$ such that $A' \uplus M$ also admits a red-blue Eulerian trail.
    The multigraph $M$ has a single edge so in fact,~$A$ is the unique red multigraph such that $A \uplus M$ admits a red-blue Eulerian trail.
    So we get $A = A' \in \cA_x$ as claimed concluding the proof.
\end{proof}

Now we are ready to provide the desired algorithm.
\thmHC*
\begin{proof}
    First, by \cref{lem:mcw-special-relabels} we may assume that every relabel-node either adds a label or forgets a label and every introduce-node creates a vertex with precisely one label.
    Furthermore, the number of nodes in the expression is polynomial in $n$. 
    If the graph $G$ contains at most two vertices, the instance is trivial.    
    Otherwise, for every edge $uv$ of $G$ we proceed as follows to check if $G$ admits a Hamiltonian cycle with endpoints $u$ and $v$.
    As observed by Bergougnoux et al.~\cite{BergougnouxKK20}, the graph $G$ admits a Hamiltonian cycle if and only if this is true for at least one edge $uv$: indeed, every Hamiltonian cycle is a union of some edge $uv$ and a Hamiltonian path between $u$ and $v$.
    So let now $uv$ be a fixed edge of $G$.
    Let $\ell_u = k+1$ and $\ell_v = k+2$ be two new labels. 
    We slightly adapt the provided multi-expression as follows.
    Let $u\langle i_u\rangle $ be the node introducing the vertex~$u$ (for some label~$i_u \in [k]$).
    We replace it by an introduce-node~$u\langle \ell_u\rangle$ followed by a relabel-node $\rho_{\ell_u \to \{\ell_u, i_u\}}$.
    Similarly, we replace the node $v\langle i_v\rangle $ introducing the vertex~$v$ (for some label~$i_v \in [k]$) by $v\langle \ell_v\rangle$ followed by $\rho_{\ell_v \to \{\ell_v, i_v\}}$.
    The resulting expression is a multi-$(k+2)$-expression of the same graph $G$.
    And for the label sets in $\lab_r$ at the root~$r$ of this expression, we have
    \[
        \lab_{r}(w) \cap \{\ell_u, \ell_v\} =
        \begin{cases}
            \{\ell_u\} & \text{ if } w = u \\
            \{\ell_v\} & \text{ if } w = v \\
            \emptyset & \text{ otherwise}
        \end{cases}
    \]
    for every vertex $w$ of the graph~$G_r = G$, i.e., the precondition of \cref{lem:hc-root-checking} is satisfied.
    Furthermore, the new multi-expression still satisfies the properties stated in \cref{lem:mcw-special-relabels} and the number of nodes in the expression increased by two, i.e., remains polynomial in $n$.
    We define~$k' = k+2$ to be the number of labels used by this expression.
    All multigraphs below thus have the vertex set $[k']$.
    By \cref{lem:hc-root-checking}, it suffices to compute a set $\cA \subseteq \aux(r)$ with~$\cA \lesssim \aux(r)$.

    For this, we process the constructed multi-$k'$-expression of $G$ in a bottom-up manner and for every node $x$, we compute a set $\cA_x \subseteq \aux(x)$ satisfying $\cA_x \subseteq \aux(x)$ whose size is upper-bounded as $\abs*{\cA_x} \leq n^{k'} \cdot 2^{k' \cdot (\log(k')+1)} \in n^{\cO(k)}$.
    Furthermore, we argue that $\cA_x$ can be computed in time $n^{\cO(k')} = n^{\cO(k)}$ for any node $x$.
    
    If $x$ is a leaf-node, we compute the family $\cA_x$ of size one using \cref{lem:hc-leaf}.
    This family can clearly be computed in time polynomial in $n$.

    If $x$ is a $\rho_{i \to \emptyset}$-node for some label~$i \in [k']$ with a child $y$, we compute a family $\cA_x$ from a family $\cA_y$ using \cref{lem:hc-forget-label}. 
    By construction, it then holds that $\abs*{\cA_x} \leq \abs*{\cA_y}$ so the desired upper-bound on the cardinality of $\abs*{\cA_x}$ still applies.
    Furthermore, this family can trivially computed from $\abs*{\cA_x}$ in time $n^{\cO(k')}$ by checking all elements of $\cA_y$.

    If $x$ is a $\rho_{i \to \{i,j\}}$-node for some labels~$i \neq j \in [k']$ with a child $y$, we first compute the set~$\cA_x' = \bigcup_{A \in \cA_y} A^{i \to j}$ from~$\cA_y$.
    By~\cref{lem:hc-add-label}, the set $\cA_x'$ satisfies $\cA_x' \subseteq \aux(x)$ and~$\cA'_x \lesssim \aux(x)$.
    Then we compute $\cA_x = \rreduce(\cA_x')$. 
    By \cref{lemma:reduce-representation} and transitivity of~$\lesssim$, the family $\cA_x$ then also represents the family~$\aux(x)$.
    And we also have
    \[
        \cA_x = \rreduce(\cA'_x) \subseteq \cA'_x \subseteq \aux(x).
    \]
    The application of $\rreduce$ already ensures the bound on the size of $\cA_x$.
    We now bound the running time of this computation.
    For labels~$a, b \in [k']$ and a multigraph $A$, we define the value~$\#_A^{\{a,b\}}$ as the multiplicity of $\{a,b\}$ in $A$.
    Now fix an arbitrary multigraph $A \in \aux(y)$.
    Every multigraph is uniquely determined by the multiplicity of every pair $a, b \in [k']$ of labels.
    So by definition, the family~$A^{i \to j}$ consists of all multigraphs~$A'$ for which there exist integer values $q_1, \dots, q_{i-1}, q_{i+1}, \dots, q_{k'}, q_{\operatorname{loop}}^1, q_{\operatorname{loop}}^2 \in \bN$ such that
    \begin{align*}
        q_a \leq \#_A^{\{a, i\}} && q_{\operatorname{loop}}^1 + q_{\operatorname{loop}}^2 \leq \#_A^{\{i\}}
    \end{align*} 
    hold for every label $a \in [k'] \setminus \{i\}$ and for every pair $\{a, b\} \subseteq [k']$ of labels, we have
    \begin{itemize}
        \item $\#_{A'}^{\{a, b\}} = \#_{A}^{\{a, b\}}$ if $\{a, b\} \cap \{i, j\} = \emptyset$---because none of the labels $i$ and $j$ is involved, the multiplicity does not change from $A$ to $A'$,
        \item $\#_{A'}^{\{a, i\}} = \#_{A}^{\{a, i\}} - q_a$ and $\#_{A'}^{\{a, j\}} = \#_{A}^{\{a, j\}} + q_a$ if $a \notin \{i, j\}$---this reflects that $q_a$ edges with endpoints $a$ and $i$ become edges with endpoints $a$ and $j$,
        \item $\#_{A'}^{\{i\}} = \#_{A}^{\{i\}} - q_{\operatorname{loop}}^1 - q_{\operatorname{loop}}^2$---this reflects that $q_{\operatorname{loop}}^1$ loops at $i$ become edges with endpoints~$i$ and~$j$ and $q_{\operatorname{loop}}^2$ loops at $i$ become loops at $j$,
        \item $\#_{A'}^{\{i, j\}} = \#_{A}^{\{i, j\}} - q_j + q_{\operatorname{loop}}^1$---reflects that $q_j$ edges with endpoints $i$ and~$j$ become loops at~$j$ and $q_{\operatorname{loop}}^1$ loops at $i$ become edges with endpoints $i$ and~$j$ matching the previous item, 
        \item $\#_{A'}^{\{j\}} = \#_{A}^{\{j\}} + q_{\operatorname{loop}}^2 + q_j$---
        reflects that $q_{\operatorname{loop}}^2$ loops at $i$ become loops at $j$ and $q_j$ edges with endpoints $i$ and $j$ become loops at $j$ matching the previous two items.
    \end{itemize}    
    Observe that the multigraph~$A$ in $\aux(y)$ has at most $n$ edges, this is simply because every path packing of $y$ contains at most $n$ paths.
    Thus, each of the values $q_1, \dots, q_{i-1}, q_{i+1}, \dots, q_{k'}$, $q_{\operatorname{loop}}^1, q_{\operatorname{loop}}^2$ satisfying the above properties is upper-bounded by $n$.
    So our algorithms considers all $n^{\cO(k')}$ options $q_1, \dots, q_{i-1}, q_{i+1}, \dots, q_{k'}, q_{\operatorname{loop}}^1, q_{\operatorname{loop}}^2$ and for each of them, in time polynomial in $n$, it constructs the corresponding multigraph $A' \in A^{i \to j}$ determined by the above five items.
    Iterating over all multigraphs $A \in \cA_y$ yields the total running time of $\abs*{\cA_y} \cdot n^{\cO(k')} = n^{\cO(k')}$.
    In particular, the cardinality of $\cA'_x$ is at most $n^{\cO(k')}$.
    By \cref{lem:size-of-repr}, the set $\cA_x = \rreduce(\cA'_x)$ is then computed in time $n^{\cO(k')}$.

    Now let $x$ be an $\oplus$-node with children $y_1$ and $y_2$.
    We compute a family $\cA'_x$ from the families $\cA_{y_1}$ and $\cA_{y_2}$ as defined in \cref{lem:hc-union-node} and then apply $\rreduce$ to $\cA_x'$ to obtain $\cA_x$.
    The application of $\rreduce$ ensures that the size of $\cA_x$ is bounded as desired and representation is maintained.
    We now bound the running time required for this computation.
    For every pair~$A_1 \in \cA_{y_1}$ and $A_2 \in \cA_{y_2}$, both of those multigraphs have at most $n$ edges so we can compute the edge-disjoint union of $A_1$ and $A_2$ in time polynomial in $n$.
    Using the bound on the sizes of $\cA_{y_1}$ and $\cA_{y_2}$, all such pairs can be computed in time $n^{\cO(k')}$ and the resulting family has cardinality $n^{\cO(k')}$. 
    Finally, by \cref{lem:size-of-repr} the application of $\rreduce$ to this family takes time $n^{\cO(k')}$. 

    Finally, let $x$ be an $\eta_{i,j}$-node with a child $y$, we compute the family $\cA_x$ from $\cA_y$ using \cref{lem:hc-join-node} as follows.
    Note that by definition of a multi-expression, the graph $G_y$ contains no node containing both labels $i$ and $j$ in $\lab_y$ so the precondition of \cref{lem:hc-join-node} is satisfied.
    We start with~$\cA_x^0 = \cA_y$.
    And then for every index~$q \in [n-1]$, we compute the family~$\cA_x^q = \rreduce\big(\cA_x^{q-1} + \{i, j\}\big)$.
    In the end we return $\cA_x = \cA_x^{n-1}$.
    The application of $\rreduce$ in each step ensures that the size of $\cA^q_x$ is bounded by $n^{k'} \cdot 2^{k'\cdot (\log(k')+1)} \in n^{\cO(k')}$ for every $q \in [n-1]$ (cf.~\cref{lem:size-of-repr}).
    In particular, this implies that the size of $\cA_x$ is bounded by this value as desired. 
    So it remains to prove that the computation can be carried out in $n^{\cO(k')}$.
    For this, it suffices to show that for every $q \in [n-1]$, the family $\cA_x^q$ can be computed from $\cA_x^{q-1}$ in time~$n^{\cO(k')}$. 
    So let $q \in [n-1]$ be fixed.
    First, consider an arbitrary multigraph $A \in \cA_x^{q-1}$.
    Because no path packing of $x$ contains more than~$n$ paths, the multigraph $A$ contains at most~$n$ edges.
    Thus, we can construct the family~$A + \{i,j\}$ by considering at most $\cO(n^2)$ pairs of distinct edges of $A$, checking whether one of them has an endpoint in $i$ and the other has an endpoint is $j$, and in the affirmative case replacing the two edges by a corresponding single one.
    For a fixed multigraph $A\in \cA_x^{q-1}$, this can trivially be done in time polynomial in $n$.
    In particular, the size of $A + \{i,j\}$ is bounded by $\cO(n^2)$.
    Doing so for every $A \in \cA_x^{q-1}$, we obtain the family $\cA_x^{q-1}+\{i,j\}$ of size~$\cO\big(\abs*{\cA_x^{q-1}}  \cdot n^2\big) \in n^{\cO(k')}$ that can be computed in time $\abs*{\cA_x^{q-1}} \cdot n^{\cO(1)}$.
    Now the family $\cA_x^q$ is computed by applying $\rreduce$ to $\cA_x^{q-1}+\{i,j\}$.
    By \cref{lem:size-of-repr}, the family $\cA_x^q$ is computed in time $\cO\big(\abs*{\cA_x^{q-1}} \cdot n^2 \cdot n \cdot (k')^2 \cdot \log(n \cdot k')\big) \in n^{\cO(k')}$ as claimed.

    The number of nodes in the multi-$k'$-expression of $G$ is polynomial in $n$ as argued in the beginning.
    The above arguments imply that on each node of the expression we spend time~$n^{\cO(k')} = n^{\cO(k)}$.
    Altogether, for the root-node $r$ of the constructed multi-$k'$-expression, we construct the desired family $\cA_r$ in time $n^{\cO(k)}$.
    Repeating the process for every edge $uv$ of $G$ increases the running time by a factor of $\cO(n^2)$ concluding the proof.
\end{proof} 

Fomin et al.~\cite{FominGLSZ19} showed the following lower bound for clique-width:
\begin{theorem}[Theorem 2 in \cite{FominGLSZ19}]
    Let $G$ be an $n$-vertex graph given together with a $k$-expression of $G$. 
    Then the \textsc{Hamiltonian Cycle} problem cannot be solved in time $f(k) \cdot n^{o(k)}$ for any computable function $f$ unless the ETH fails.
\end{theorem}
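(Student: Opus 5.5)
The plan is a parameter-preserving reduction from \textsc{Multicolored Clique}. By \cref{thm:multicolored-lb} (the footnote there reduces \textsc{Multicolored Clique} to \multicolored{} and back), this problem has no $f(k')\cdot (n')^{o(k')}$ algorithm under ETH. From an instance with colour classes $U_1,\dots,U_{k'}$ of size $n'$, I will build in polynomial time a graph $G^*$ on $(n'k')^{\cO(1)}$ vertices together with a $k$-expression with $k\in\cO(k')$. The reduction should satisfy: $G^*$ is Hamiltonian if and only if a multicoloured clique exists. An $f(k)\cdot n^{o(k)}$ algorithm would then yield an $f'(k')\cdot (n')^{o(k')}$ algorithm, contradicting ETH. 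The logarithmic compression used for \textsc{Max Cut} is not needed here; a linear dependence $k=\cO(k')$ suffices.

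\emph{Encoding a choice as a count.} Selecting $u_i^{\gamma}\in U_i$ will correspond to how many times the Hamiltonian cycle crosses a complete bipartite join between two label classes. Concretely, for colour $i$ I would introduce a class $X_i$ of $\Theta(n')$ vertices and a class $Y_i$, joined completely by one $\eta$-operation. Each vertex of $X_i$ is also attached to a private path gadget that the cycle can traverse in one of two ways. One way uses both of the vertex's cycle-edges inside the gadget. The other way forces exactly one cycle-edge into the join with $Y_i$. Degree counting then shows that the number $\gamma$ of crossing edges between $X_i$ and $Y_i$ is free in $[n']$. Moreover, the whole state of colour $i$ seen from outside is just this number $\gamma$, which can be kept with $\cO(1)$ labels per colour.

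\emph{Propagation and verification.} As in the \textsc{Max Cut} construction of \cref{sec:max-cut}, I would arrange copies of the selection gadget along a path. There is one copy per pair $(i_1,i_2)$ of colours and per forbidden combination (a non-edge $u_{i_1}^{\alpha}u_{i_2}^{\beta}$). Consecutive copies of colour $i$ are linked so that the number of path-ends leaving one copy equals the number entering the next. This is forced because every vertex has degree exactly $2$ in the cycle and the connecting vertices have no other neighbours, so the count $\gamma$ is preserved along the path. In the copy handling $(\alpha,\beta)$, a check gadget compares the counts of colours $i_1$ and $i_2$ with $\alpha$ and $\beta$. It admits a Hamiltonian traversal unless both counts equal the forbidden values simultaneously. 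This is a disjunction ``$<\alpha$, $>\alpha$, $<\beta$, $>\beta$'', implemented by small gadgets with a fixed number of vertices that must be covered. The $k$-expression processes the copies left to right. It keeps $\cO(1)$ ``current'' labels per colour (dead labels are relabelled to a junk label) and $\cO(1)$ reusable labels for the local gadgets, so $k=\cO(k')$. Completeness follows by routing the cycle according to the clique's counts. Soundness follows from showing that every Hamiltonian cycle induces constant counts along the path that pass all checks.

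\emph{Main obstacle.} The hardest step is designing the count/check gadgets for \textsc{Hamiltonian Cycle} rather than for a cut problem. A Hamiltonian cycle is a global connected object, so I must prevent the cycle from splitting into several cycles, and also prevent it from ``cheating'' by entering a copy via one colour's join and leaving via another's. My first attempt would be to give each colour a disjoint ``lane'' whose vertices are adjacent only to its own label classes. The lanes would be stitched together into a single cycle by one global backbone path that visits every copy. Each soundness argument would then be a local degree-counting lemma per gadget, analogous to \cref{lem:H-if-properties}.
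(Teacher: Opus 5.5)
The paper gives no proof of this statement. It imports it from Fomin et al.~\cite{FominGLSZ19} and uses it only to transfer the bound to multi-clique-width, via the trivial conversion of a $k$-expression into a multi-$k$-expression. Your proposal therefore has to stand on its own, and it does not yet. The parameter bookkeeping is fine: a polynomial reduction from \textsc{Multicolored Clique} to an instance with $k=\cO(k')$ labels would indeed give the bound. But every component specific to \textsc{Hamiltonian Cycle} is described only by the properties you want it to have. That covers the selection gadget, the links between consecutive copies, the disjunctive check gadget, and the mechanism that excludes subtours. You explicitly leave the last of these as an unresolved ``main obstacle''. As written, neither completeness nor soundness can be checked.

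Several of the properties you do assert are also unjustified.
\begin{itemize}
\item \emph{Count propagation.} A connecting vertex of cycle-degree $2$ whose neighbours lie only in the two adjacent label classes can spend both of its cycle edges on the same side. For an independent set $W$ of connectors, degree counting therefore only gives a relation like $\gamma_j+\gamma'_{j+1}=2|W|$, not equality of the counts. You would need a design in which the total demand on each side is pinned.
\item \emph{The borrowed architecture.} You transplant the \textsc{Max Cut} architecture, where a vertex's side is a global property. There, the sets $A_S(j)$ can be read simultaneously by several check gadgets and by the $F'$-links to the next copy. In a Hamiltonian cycle every vertex has exactly two solution edges, so reading information consumes it. The path-ends a check gadget uses to compare the count with $\alpha$ are no longer available to transmit that count to the next copy.
\item \emph{The interface of a colour.} It is not ``just $\gamma$''. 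In the representation of Bergougnoux et al.\ used in \cref{sec:hamiltonian-cycle}, the state consists of label degrees together with the partition of labels into connected components. A cycle can split into subtours, or enter a copy through one colour's join and leave through another's.
\end{itemize}
Gadgets that read a count without destroying it and that force a single cycle are exactly the technical content of the cited proof, and they are missing here.
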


Since any $k$-expression of a graph 
can trivially be transformed into a multi-$k$-expression of the same graph~\cite{Furer17}, the lower bound transfers to multi-clique-width showing that our algorithm is tight under ETH.
\begin{theorem}
    Let $G$ be an $n$-vertex graph given together with a multi-$k$-expression of~$G$. 
    Then the \textsc{Hamiltonian Cycle} problem cannot be solved in time $f(k) \cdot n^{o(k)}$ for any computable function $f$ unless the ETH fails.
\end{theorem}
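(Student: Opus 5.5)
This lower bound follows by a direct parameter-preserving transfer from the clique-width lower bound of Fomin et al.\ (Theorem 2 in~\cite{FominGLSZ19}) stated just above. The plan is a proof by contraposition. Suppose there were an algorithm $\cA$ and a computable function $f$ such that $\cA$ solves \textsc{Hamiltonian Cycle} in time $f(k) \cdot n^{o(k)}$ on any $n$-vertex graph given together with a multi-$k$-expression. We then build an algorithm with the same running-time form for graphs given with a $k$-expression. By Theorem 2 of~\cite{FominGLSZ19}, this contradicts ETH.

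The key step is converting a $k$-expression into a multi-$k$-expression of the same graph in polynomial time, with the same number of labels. I would do this node by node.
\begin{itemize}
    \item An introduce-node $v(i)$ becomes $v\langle i\rangle$.
    \item A disjoint union stays $\oplus$.
    \item A relabel $\rho_{i\to j}$ becomes $\rho_{i\to\{j\}}$.
    \item A join $\eta_{i,j}$ with $i\neq j$ stays $\eta_{i,j}$.
\end{itemize}
By induction over the expression tree, each node of the new expression yields a multi-$k$-labeled graph in which every vertex $v$ has label set $\{\lab(v)\}$, where $\lab$ is the labeling of the corresponding node of the original expression. Since no label set ever contains two labels, the precondition of the multi-join (no vertex carries both $i$ and $j$) holds automatically. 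With singleton label sets, the join creates exactly the edges between label-$i$ and label-$j$ vertices, so the edge sets coincide at every node. Hence the root produces the same graph $G$. This is the ``trivial'' transformation cited from~\cite{Furer17}.

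Given a $k$-expression, we therefore compute the multi-$k$-expression in time polynomial in the expression size and run $\cA$ on it. The total time is $f(k)\cdot n^{o(k)}$ plus polynomial overhead, which is still of the form $f(k) \cdot n^{o(k)}$. This contradicts Theorem 2 in~\cite{FominGLSZ19} unless ETH fails.

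No step is a real obstacle. The only point to check carefully is the join precondition and the exact correspondence of the created edges, and both follow immediately from all label sets being singletons.
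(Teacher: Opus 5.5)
Your proposal is correct and follows the paper's argument: the paper also transfers Theorem 2 of Fomin et al.\ by noting that a $k$-expression converts trivially into a multi-$k$-expression in which every label set is a singleton. Your explicit check of the join precondition, and your observation that the polynomial conversion overhead is absorbed into $f(k)\cdot n^{o(k)}$, fill in details the paper leaves implicit.
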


\section{Algorithm for Edge Dominating Set}
\label{sec:eds}
For a graph $G$, a set $Q \subseteq E(G)$ of edges is called an \emph{edge dominating set} of $G$ if the induced subgraph~$G\big[V(G) \setminus V(Q)\big]$ is edgeless, i.e., every edge of $G$ shares at least one endpoint with some edge in $Q$, i.e., the set $V(Q)$ of endpoints of $Q$ is a vertex cover of $G$.
In this section, we show how provided a multi-$k$-expression creating a graph $G$ on $n$ vertices and an integer~$\cardinality$, in time $n^{\cO(k)}$ we can decide whether $G$ admits an edge dominating set of size at most $\cardinality$, i.e., solve the \textsc{Edge Dominating Set} problem.
To simplify the following definitions, from now on the graph~$G$ on~$n$ nodes and a multi-$k$-expression of $G$ are fixed.
We will rely on the following folklore equivalent formulation of the problem.
We provide a proof for the sake of completeness.
\begin{lemma}\label{lem:eds-equivalent-formulation}
    A graph $G$ admits an edge dominating set of size at most $\cardinality$ if and only if there exists a pair $(S, M)$ with the following properties:
    \begin{enumerate}
        \item the set $S$ is a vertex cover of $G$,
        \item ths set $M$ is a matching in $G$ satisfying $V(M) \subseteq S$,
        \item\label{item:vertex-cover-matching-cardinality} and we have $\abs*{M} + \abs*{S \setminus V(M)} \leq \cardinality$.
    \end{enumerate}
\end{lemma}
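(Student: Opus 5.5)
We prove the two directions separately; both are short constructions.

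For the forward direction, let $Q$ be an edge dominating set of size at most $\cardinality$. We set $S = V(Q)$, which is a vertex cover of $G$ by the definition of an edge dominating set. We then choose $M \subseteq Q$ to be a maximal matching inside the subgraph $G[Q]$, so $V(M) \subseteq V(Q) = S$. By maximality, every edge of $Q \setminus M$ shares an endpoint with some edge of $M$. Hence each vertex of $S \setminus V(M)$ is an endpoint of an edge of $Q \setminus M$ whose other endpoint lies in $V(M)$. Such an edge cannot have both endpoints outside $V(M)$, since then it could be added to $M$, contradicting maximality. We then charge every vertex $w \in S \setminus V(M)$ injectively to one edge of $Q \setminus M$ incident to $w$. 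This is injective because each such edge has exactly one endpoint outside $V(M)$. It follows that $\abs*{S \setminus V(M)} \leq \abs*{Q \setminus M}$, and therefore $\abs*{M} + \abs*{S \setminus V(M)} \leq \abs*{Q} \leq \cardinality$.

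For the backward direction, let $(S, M)$ satisfy the three properties. We start with $Q = M$. For every vertex $w \in S \setminus V(M)$ with at least one incident edge in $G$, we add one arbitrary edge incident to $w$. Isolated vertices of $S$ can be ignored, since they are not needed to cover anything. The resulting set satisfies $\abs*{Q} \leq \abs*{M} + \abs*{S \setminus V(M)} \leq \cardinality$, and we have $V(Q) \supseteq S'$, where $S'$ is $S$ with its isolated vertices removed. Now $S'$ is still a vertex cover of $G$, because isolated vertices cover no edges. Since every superset of a vertex cover is again a vertex cover, $V(Q)$ is a vertex cover, so $Q$ is an edge dominating set of size at most $\cardinality$.

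The only delicate point is the counting step in the forward direction. The essential idea is to use maximality of $M$ within $Q$ so that each edge of $Q \setminus M$ contributes at most one uncovered endpoint; the rest is bookkeeping. Note also the degenerate situations: $M$ may be empty, and $S$ may contain isolated vertices. Neither affects the argument.
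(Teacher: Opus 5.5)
Your proof is correct. The backward direction is the same as the paper's: start from $M$, add one incident edge for every non-isolated vertex of $S \setminus V(M)$, and note that discarding isolated vertices from $S$ is harmless.

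The forward direction takes a different route.
\begin{itemize}
\item The paper first passes to an inclusion-minimal edge dominating set $Q$. It shows that $G[Q]$ is then a vertex-disjoint union of stars, since a path with three edges or a triangle in $G[Q]$ would contain a removable edge. It takes $M$ to be one edge per star and obtains the exact identity $|M| + |S \setminus V(M)| = |S| - |M| = |Q|$ from the edge count of a forest.
\item You keep $Q$ arbitrary and take a maximal matching $M \subseteq Q$. You then bound $|S \setminus V(M)| \le |Q \setminus M|$ by charging each vertex of $S \setminus V(M)$ to an incident edge of $Q \setminus M$. Injectivity holds because maximality forces every edge of $Q \setminus M$ to have at most one endpoint outside $V(M)$.
\end{itemize}

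Your version is slightly more economical: it needs neither the minimality reduction nor the classification of $G[Q]$. It yields only an inequality, which is all the lemma needs. The paper's version gives the sharper equality $|Q| = |S| - |M|$ together with the star-forest structure, which shows that the correspondence is exact for minimal solutions.
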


\begin{figure}[h]
    \includegraphics{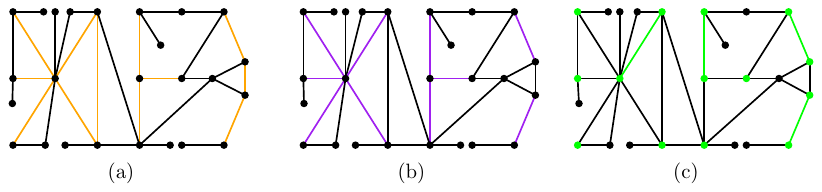}
    \centering
    \caption{(a) An orange edge dominating set of a graph. (b) A purple minimal edge dominating set $Q$. (c) Green vertex cover $V(Q)$ and green matching $M$ with one edge per connected component of $G[Q]$ satisfy $\abs*{Q} = \abs*{V(Q)} - \abs*{M}$.}
    \label{fig:eds-equivalence}
\end{figure}

\begin{proof}
    First, let $Q$ be an edge dominating set of $G$ with $\abs*{Q} \leq \cardinality$ (see \cref{fig:eds-equivalence} for an illustration).
    We may assume that~$Q$ is an inclusion-minimal edge dominating set of~$G$.
    Then the subgraph $G[Q]$ induced by $Q$ forms a vertex-disjoint union of stars on at least two vertices each.
    Indeed, if this subgraph contains a simple path of length three, say on vertices~$u$,~$v$,~$w$,~$y$, then the set~$Q \setminus \{vw\}$ is still an edge dominating set; and if it contains a triangle, say on vertices~$u$,~$v$,~$w$, then the set~$Q \setminus \{vw\}$ is still an \eds contradicting the minimality in both cases.
    So let $C$ denote the set of connected components of the graph~$G[Q]$.
    Further, let $S = V(Q)$ be the set of endpoints of the edges in~$Q$ and let~$M$ be a set containing one edge from every connected component in $C$.
    Then $S$ is a vertex cover of $G$ and $M$ is a matching whose endpoints all belong to $S$.
    We have
    \[
        \abs*{M} + \abs*{S \setminus V(M)} = \abs*{M} + \big(\abs*{S} - 2\abs*{M}\big) = \abs*{S} - \abs*{M} = \abs*{S} - \abs*{C} = \abs*{Q} \leq \cardinality.
    \]
    Here, the penultimate equality holds since the forest $G[Q]$ with $\abs*{S}$ vertices and $\abs*{C}$ connected components contains $\abs*{S} - \abs*{C}$ edges.
    This proves one direction of the claim.

    Now let $(S, M)$ be a pair satisfying the three items of the lemma.
    We may assume that $S$ contains no vertex isolated in $G$ as such vertices can be discarded from $S$ without violating any of the three properties.
    For every vertex $v \in S \setminus V(M)$, choose an arbitrary edge $e_v \in E(G)$ incident with $v$.
    We define the set $Q$ as 
    \[
        Q = M \cup \big\{e_v \mid v \in S \setminus V(M)\big\}.
    \]
    First, the set $Q$ is an \eds of $G$ because the set $V(Q)$ of its endpoints contains the vertex cover $S$ as a subset.
    Second, we have
    \[
        \abs*{Q} = \abs*{M} + \abs*{\big\{e_v \mid v \in S \setminus V(M)\big\}} \leq \abs*{M} + \abs*{S \setminus V(M)} \leq \cardinality
    \]
    where the last equality holds by Item~\ref{item:vertex-cover-matching-cardinality} concluding the proof of the other direction.
\end{proof}

Relying on this equivalence, we now define the partial solutions.
For a node $x$ of the expression, a \emph{partial solution} of $x$ is a pair $(S, M)$ such that $S$ is a vertex cover of the graph~$G_x$ and $M$ is a matching of $G_x$ satisfying $V(M) \subseteq S$.

The crucial idea behind bounding the number of relevant footprints (defined in a moment) is that for every vertex in the set~$S \setminus V(M)$, it suffices to ``guess'' one label that will later lead to the creation of an incident edge added to $M$ (if such an edge will exist at all).
We formalize this idea using label choices.
A \emph{label choice} for the partial solution $(S, M)$ of $x$ is a mapping $\phi \colon S \setminus V(M) \to [k]$ such that $\phi(v) \in \lab_x(v)$ for every vertex $v \in S \setminus V(M)$.
Thus,~$\phi$ maps every vertex of $S$ ``not matched'' by $M$ to one of its labels.
Note that if~$S \setminus V(M)$ contains a vertex whose label set is empty in $\lab_x$, then no label choice for $(S, M)$ exists.

For a partial solution $(S, M)$ of $x$ and a label choice $\phi$ of $(S, M)$, the \emph{footprint} (in $x$) is the triple $(I, \psi, \ell)$ such that the following properties apply.
\begin{enumerate}
    \item The set $I \subseteq [k]$ with $I = \bigcup_{v \in V_x \setminus S} \lab_x(v)$ remembers all labels of vertices in $V_x \setminus S$. It is used to ensure at join-nodes that $V_x \setminus S$ remains an independent set.
    \item The mapping $\psi \colon [k] \to [n]_0$ with $\psi(i) = \abs*{\phi^{-1}(i)}$ for all $i \in [k]$ remembers how many vertices are mapped to each label from $[k]$.
    First, $\psi$ is used to create matching edges at join-nodes and second, it is relevant to keep track of the size of the corresponding edge dominating set.
    \item And the value $\ell \in \bN$ with $\abs*{M} = \ell$ remembers the cardinality of the matching.
\end{enumerate}
Note that even though a partial solution may have multiple footprints, the footprint becomes unique once we fix the label choice.
The set $\cS_x$ of \emph{footprints of $x$} is defined as the set of footprints of all partial solutions of $x$ and all their label choices.
Our algorithm will then follow from the following five lemmas which provide the computation of footprints for every type of nodes.
For a leaf-node, the arising graph consists of a single node so we can compute the footprints directly.
\begin{lemma}\label{lem:eds-leaf}
    Let $(G_x, \lab_x) = v\langle i \rangle$ for some vertex $v$ and some label $i \in [k]$.
    Then there are precisely two footprints of the node~$x$, namely $\cS_x = \Bigl\{\big(\emptyset, \psi_1, 0\big), \big(\{i\}, \psi_2, 0\big)\Bigr\}$ with
    \begin{enumerate}
        \item $\psi_1(i) = 1$ and $\psi_1(j) = 0$ for all $j \neq i \in [k]$,
        \item and $\psi_2(j) = 0$ for all $j \in [k]$.
    \end{enumerate}
\end{lemma}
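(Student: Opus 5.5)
The proof is a direct enumeration, since $G_x$ has a single vertex $v$, no edges, and $\lab_x(v)=\{i\}$. We first list all partial solutions of $x$, then all label choices for each, and finally compute the footprint of every resulting pair.

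Because $G_x$ has no edges, every subset of $\{v\}$ is a vertex cover, and the only matching is $M=\emptyset$. This trivially satisfies $V(M)\subseteq S$. Hence there are exactly two partial solutions, $(\emptyset,\emptyset)$ and $(\{v\},\emptyset)$, and in both cases $\ell=\abs*{M}=0$.

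For $(\emptyset,\emptyset)$ the domain $S\setminus V(M)$ is empty, so there is exactly one label choice, the empty map. The footprint then has $I=\lab_x(v)=\{i\}$, since $V_x\setminus S=\{v\}$, and $\psi\equiv 0$. This is the triple $(\{i\},\psi_2,0)$.

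For $(\{v\},\emptyset)$ a label choice must map $v$ to an element of $\lab_x(v)=\{i\}$, so the unique choice is $\phi(v)=i$. The footprint has $I=\emptyset$, since $V_x\setminus S=\emptyset$, $\psi(i)=1$, and $\psi(j)=0$ for $j\neq i$. This is $(\emptyset,\psi_1,0)$.

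These two triples are distinct, because their first components differ, so $\cS_x$ consists of exactly the two footprints stated in the lemma. No step here is a real obstacle. The only care needed is in matching the roles of $\psi_1$ and $\psi_2$ to the right sets $I$: $\psi_1$ goes with $I=\emptyset$ (vertex in the cover) and $\psi_2$ goes with $I=\{i\}$ (vertex outside the cover).
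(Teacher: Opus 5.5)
Your proposal is correct and follows essentially the same argument as the paper: both enumerate the two partial solutions $(\emptyset,\emptyset)$ and $(\{v\},\emptyset)$, identify the unique label choice of each, and read off the footprints $(\{i\},\psi_2,0)$ and $(\emptyset,\psi_1,0)$. Your extra remarks, that $M=\emptyset$ because there are no edges and that the two triples are distinct, are correct and only make explicit what the paper leaves implicit.
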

\begin{proof}
    The graph $G_x$ contains only one vertex whose label set is $\{i\}$ and it contains no edges. 
    Thus, there are precisely two partial solutions of $x$.
    First, it is $\big(\{v\}, \emptyset\big)$ with the unique label choice mapping $v$ to $i$.
    Since $V_x = \{v\}$, there are no vertices in $V_x \setminus \{v\}$. 
    Thus, the only footprint of this partial solution is $(\emptyset, \psi_1, 0)$.
    And second, the partial solution $(\emptyset, \emptyset)$ whose unique label choice is the empty mapping.
    The only vertex in $V_x \setminus \emptyset$ is $v$.
    So the only footprint of this partial solution is $\big(\{i\}, \psi_2, 0)$.
\end{proof}

When a node of the expression forgets a label $i$, the label choices mapping some vertex to $i$ are ``invalidated'' so we discard the corresponding footprints.
\begin{lemma}\label{lem:eds-remove-label}
    Let $(G_x, \lab_x) = \rho_{i \to \emptyset}(G_y, \lab_y)$ for some label $i \in [k]$ and a multi-$k$-labeled graph~$(G_y, \lab_y)$.
    Then the footprints of the node~$x$ are given as follows:
    \[
        \cS_x = \Bigl\{\big(I \setminus \{i\}, \psi, \ell\big) \mid (I, \psi, \ell) \in \cS_y, \psi(i) = 0\Bigr\}.
    \]
\end{lemma}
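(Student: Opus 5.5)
To prove the equality for the forget-label node, I show both inclusions by comparing partial solutions and label choices of $x$ and $y$ directly. Since $\rho_{i \to \emptyset}$ does not change the graph, $G_x = G_y$. Hence the partial solutions of $x$ and of $y$ coincide: $(S,M)$ is a vertex cover plus matching pair in one iff in the other. What changes are the label sets, $\lab_x(v) = \lab_y(v) \setminus \{i\}$ for every vertex $v$. So the whole argument reduces to understanding how label choices and the three footprint components behave under removing $i$.

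For the inclusion ``$\supseteq$'', I take a footprint $(I,\psi,\ell) \in \cS_y$ with $\psi(i) = 0$, witnessed by a partial solution $(S,M)$ of $y$ and a label choice $\phi$.
\begin{enumerate}
\item Since $\psi(i) = \abs*{\phi^{-1}(i)} = 0$, no vertex is mapped to $i$. Thus $\phi(v) \in \lab_y(v) \setminus \{i\} = \lab_x(v)$, and $\phi$ is a label choice of $(S,M)$ in $x$.
\item The footprint in $x$ of $(S,M)$ with $\phi$ then has the same $\psi$ and $\ell$, because $\phi$ and $M$ are unchanged.
\item Its first component is $\bigcup_{v \in V_x \setminus S} \lab_x(v) = \bigcup_{v \in V_x\setminus S} \big(\lab_y(v)\setminus\{i\}\big) = I \setminus \{i\}$.
\end{enumerate}
So $\big(I\setminus\{i\},\psi,\ell\big) \in \cS_x$.

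For the inclusion ``$\subseteq$'', I take a footprint of $x$ given by $(S,M)$ and a label choice $\phi$ in $x$. Since $\lab_x(v) \subseteq \lab_y(v)$, $\phi$ is also a label choice in $y$, and it never maps to $i$ because $i$ appears in no label set of $\lab_x$. Let $(I,\psi,\ell)$ be the footprint in $y$ of this pair. Then $\psi(i) = 0$, and the same computation as above shows that the footprint in $x$ equals $\big(I \setminus\{i\},\psi,\ell\big)$.

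The only step needing any care is the identity for the first component: the union of label sets with $i$ removed is the union with $i$ removed. This is immediate, so I expect no real obstacle.
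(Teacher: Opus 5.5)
Your proposal is correct and follows the paper's proof essentially step for step. You note that $G_x = G_y$ and $\lab_x(v) = \lab_y(v)\setminus\{i\}$, transfer label choices in both directions using $\psi(i)=0$, and compute the first component as $I\setminus\{i\}$.
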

\begin{proof}
    The unlabeled graphs $G_x$ and $G_y$ are equal and only the label sets differ so that we have $\lab_x(v) = \lab_y(v) \setminus \{i\}$ for every vertex $v \in V_x = V_y$.
    Hence, the sets of partial solutions of~$x$ and $y$ are equal and only the label choices of those partial solutions differ.

    So let first $(I, \psi, \ell) \in \cS_y$ be a footprint of $y$ satisfying $\psi(i) = 0$.
    Then there exist a partial solution $(S, M)$ and a label choice $\phi$ of $y$ with this footprint. 
    In particular, we then have~$\phi^{-1}(i) = \emptyset$.
    The pair $(S, M)$ is a partial solution of $x$ as well.
    Further, for every vertex~$v \in S \setminus V(M)$, the properties $\phi(v) \in \lab_y(v)$ and $\phi(v) \neq i$ imply that we also have~$\phi(v) \in \lab_x(v)$.
    So the mapping $\phi$ is also a label choice of $(S, M)$ in $x$.
    Finally, we have
    \[  
        \bigcup_{v \in V_x \setminus S} \lab_x(v) = \Bigl(\bigcup_{v \in V_y \setminus S} \lab_y(v)\Bigr) \setminus \{i\} = I \setminus \{i\}.
    \]
    So the tuple~$\big(I \setminus \{i\}, \psi, \ell\big)$ is indeed a footprint of $x$, i.e.,~$\big(I \setminus \{i\}, \psi, \ell\big) \in \cS_x$.

    For the other direction, let $(\widetilde I, \widetilde \psi, \widetilde \ell) \in \cS_x$ be a footprint of $x$.
    And let $(S, M)$ be a partial solution and let $\phi$ be a label choice of $x$ with this footprint.
    Then $(S, M)$ is a partial solution of $y$ and $\phi$ its label choice in $y$: indeed, we have $\phi(v) \in \lab_x(v) \subseteq \lab_y(v)$ for every vertex~$v \in S \setminus V(M)$.
    Because no vertex in $G_x$ contains label $i$, we have $\phi^{-1}(i) = \emptyset$ and therefore,~$\widetilde \psi(i) = 0$.
    So let $(I, \psi, \ell) \in \cS_y$ be the footprint of $(S, M)$ and $\phi$ in $y$.
    We then have~$\widetilde \psi = \psi$ and~$\widetilde \ell = \ell$ simply by definition.
    We also have $\widetilde I = \bigcup_{v \in  V_x \setminus S} \lab_x(v)$ and~$I = \bigcup_{v \in  V_x \setminus S} \lab_y(v)$.
    Then by definition of the graph $G_x$, we have $\widetilde I = I \setminus \{i\}$.
\end{proof}

When a node of the expression adds a label $i$ to a label $j$, a label choice may now decide, for every vertex mapped to $i$ whether it stays mapped to $i$ or ``switches'' to $j$.
Since for the footprints only the number of such nodes matters, it suffices to branch on the number of vertices, for which such a switch occurs.
\begin{lemma}\label{lem:eds-add-label}
    Let $(G_x, \lab_x) = \rho_{i \to \{i, j\}}(G_y, \lab_x)$ for some labels~$i \neq j \in [k]$ and a multi-$k$-labeled graph~$(G_y, \lab_y)$.
    Then the footprints of the node $x$ are given as follows:
    \begin{align*}
        \cS_x = \Biggl\{\Big(\widetilde I, \psi\big[i \mapsto \psi(i) - r, j \mapsto \psi(j) + r\big], \ell\Big) \mid & (I, \psi, \ell) \in \cS_y, \widetilde I =
        \begin{cases}
            I & \text{if } i \notin I \\     
            I \cup \{j\} & \text{if } i \in I,
        \end{cases} \\
        &r \in \big[\psi(i)\big]_0 
        \Biggr\}.
    \end{align*}
\end{lemma}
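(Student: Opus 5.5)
The plan is to prove the two inclusions separately, using that $G_x = G_y$ as unlabeled graphs and that only the labels change. For every vertex $v$ we have $\lab_x(v) = \lab_y(v) \cup \{j\}$ if $i \in \lab_y(v)$, and $\lab_x(v) = \lab_y(v)$ otherwise. Hence partial solutions $(S,M)$ of $x$ and of $y$ coincide, and the matching size $\ell$ is unaffected. The set $I$ is also easy: $\bigcup_{v \in V_x \setminus S} \lab_x(v)$ equals $\bigcup_{v\in V_y\setminus S}\lab_y(v)$, with $j$ added exactly when some $v \in V_y \setminus S$ carries label $i$, that is, exactly when $i \in I$. This gives $\widetilde I$ as stated in both directions.

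For the inclusion $\supseteq$, take $(I,\psi,\ell) \in \cS_y$, realized by $(S,M)$ and a label choice $\phi$ in $y$, and fix $r \in [\psi(i)]_0$. Since $\psi(i) = \abs*{\phi^{-1}(i)}$, we can choose a set $R \subseteq \phi^{-1}(i)$ with $\abs*{R} = r$. Define $\phi'$ by $\phi'(v) = j$ for $v \in R$ and $\phi'(v) = \phi(v)$ otherwise. This is a label choice in $x$: vertices in $R$ carry $i$ in $\lab_y$ and hence carry $j$ in $\lab_x$, and every other vertex keeps its old labels. Exactly $r$ vertices move from $i$ to $j$, so the counting function becomes $\psi[i\mapsto \psi(i)-r,\ j\mapsto \psi(j)+r]$, and the footprint is the claimed triple.

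For the inclusion $\subseteq$, take a footprint of $x$ realized by $(S,M)$ and a label choice $\phi'$ in $x$. Let $R$ be the set of vertices $v$ with $\phi'(v) = j$, $j \notin \lab_y(v)$, and $i \in \lab_y(v)$. These are exactly the vertices whose chosen label is not present in $\lab_y$; every other vertex $v$ satisfies $\phi'(v) \in \lab_y(v)$. Set $\phi(v) = i$ for $v \in R$ and $\phi(v) = \phi'(v)$ otherwise. Then $\phi$ is a label choice in $y$. Let its footprint be $(I,\psi,\ell)$. We have $R \subseteq \phi^{-1}(i)$, so $r := \abs*{R} \le \psi(i)$, and $\phi'$ arises from $\phi$ by switching exactly the vertices in $R$ from $i$ to $j$. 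The counts therefore match the stated formula, and $\widetilde I$ matches by the first paragraph. The only point needing care is this converse step: we must reassign precisely the vertices whose chosen label $j$ is new, rather than every vertex with $\phi'(v) = j$, so that $r$ stays within $[\psi(i)]_0$.
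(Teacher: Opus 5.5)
Your proof is correct and follows the paper's argument: the paper handles both inclusions the same way, switching $r$ vertices of $\phi^{-1}(i)$ to $j$ in one direction and switching vertices back to $i$ in the other. The only deviation is that the paper switches back every vertex $v$ with $\phi'(v)=j$ and $i\in\lab_y(v)$, a superset of your $R$, and this works equally well. So your closing remark slightly misplaces the concern: the switched-back vertices must satisfy $i\in\lab_y(v)$ for $\phi$ to be a label choice in $y$, whereas $r\le\psi(i)$ holds automatically for any switched-back set.
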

\begin{proof}
    The unlabeled graphs $G_x$ and $G_y$ are equal and only the label sets differ so that we have 
    \[
        \lab_x(v) = 
        \begin{cases}
            \lab_y(v) \cup \{j\} & \text{if } i \in \lab_y(v) \\
            \lab_y(v) & \text{otherwise}
        \end{cases} 
    \]
    for every vertex $v \in V_x = V_y$.
    Hence, the sets of partial solutions of $x$ and $y$ are equal and only the label choices of those partial solutions differ.
    
    First, let $(I, \psi, \ell) \in \cS_y$ be a footprint of $y$ and let $r \in \big[\psi(i)\big]_0$ be an integer.
    Further, let~$(S, M)$ be a partial solution and let $\phi$ be a label choice of $(S, M)$ in $y$ with this footprint.
    In particular, we have $\abs*{\phi^{-1}(i)} = \psi(i) \geq r$.
    Let now $v_1, \dots, v_r$ be $r$ pairwise distinct vertices from $\phi^{-1}(i)$.
    Then $(S, M)$ is a partial solution of $x$ as well and $\widetilde \phi := \phi[v_1 \mapsto j, \dots, v_r \mapsto j]$ is a label choice for $(S, M)$ in $x$.
    Indeed, for every vertex $v \in S \setminus \big(V(M) \cup \{v_1, \dots, v_r\}\big)$, we have~$\widetilde\phi(v) = \phi(v) \in \lab_y(v) \subseteq \lab_x(v)$, and we also have $\widetilde \phi(v_q) = j \in \lab_x(v_q)$ due to~$i = \phi(v_q) \in \lab_y(v_q)$ for every $q \in [r]$.
    Let $(\widetilde I, \widetilde \psi, \widetilde \ell) \in \cS_x$ be the footprint of $(S, M)$ and $\widetilde \phi$ in $x$.
    We trivially have~$\ell = \widetilde \ell$.
    Further, we have $\widetilde \psi = \psi\big[i \mapsto \psi(i) - r, j \mapsto \psi(j) + r\big]$ by construction.
    Finally, we have $I = \bigcup_{v \in V_x \setminus S} \lab_y(v)$ and $\widetilde I = \bigcup_{v \in V_x \setminus S} \lab_x(v)$.
    By definition of $\lab_x$, we then have $\widetilde I =
    \begin{cases}
        I & \text{if } i \notin I \\     
        I \cup \{j\} & \text{if } i \in I
    \end{cases}$.

    For the other direction, let $(\widetilde I, \widetilde \psi, \widetilde \ell) \in \cS_x$ be a footprint of $x$.
    And let $(S, M)$ be a partial solution and $\widetilde \phi$ a label choice of $(S, M)$ in $x$ with this footprint.
    Let $r$ be the cardinality of the set $\big\{v \in \widetilde \phi^{-1}(j) \mid i \in \lab_y(v)\big\}$ and let $v_1, \dots, v_r$ be the elements of this set.
    So the vertices~$v_1, \dots, v_r$ are precisely the vertices mapped to $j$ by $\phi$ such that their label set in $\lab_y$ (in fact, in both $\lab_y$ and $\lab_x$) contains label $i$.
    We define the mapping~$\phi = \widetilde\phi[v_1 \mapsto i, \dots, v_r \mapsto i]$.
    The pair $(S, M)$ is also a partial solution of $y$ and we now argue that $\phi$ is a label choice for~$(S, M)$ in $y$.
    For this consider an arbitrary vertex $v \in S \setminus V(M)$.
    \begin{itemize}
        \item If~$\widetilde \phi(v) \neq j$ (and in particular,~$v \notin \{v_1, \dots, v_r\}$), then~$j \neq \widetilde \phi(v) \in \lab_x(v)$ implies $ \phi(v) = \widetilde \phi(v) \in \lab_y(v)$---indeed, from $\lab_y$ to $\lab_x$ the label set of a vertex can only gain label~$j$.
        \item If $\widetilde \phi(v) = j$ and $i \notin \lab_y(v)$ (and in particular,~$v \notin \{v_1, \dots, v_r\}$), then~$\widetilde \phi(v) \in \lab_x(v)$ implies $\phi(v) = \widetilde \phi(v) \in \lab_y(v)$---indeed, the label set of a vertex can change from $\lab_y$ to~$\lab_x$ only if it contained label~$i$.
        \item Finally, if $\widetilde \phi(v) = j$ and $i \in \lab_y(v)$ (i.e., if $v \in \{v_1, \dots, v_r\}$), then we have $\phi(v) = i \in \lab_y(v)$.
    \end{itemize}
    So $\phi$ is indeed a label choice for the partial solution $(S, M)$ in $y$.
    Let now $(I, \psi, \ell) \in \cS_y$ be the footprint of $(S, M)$ and $\phi$ in $y$.
    We then have $\widetilde I = \bigcup_{v \in  V_x \setminus S} \lab_x(v)$ and $I = \bigcup_{v \in  V_x \setminus S} \lab_y(v)$.
    By definition of $\lab_x$, we then have 
    \[
        \widetilde I =
        \begin{cases}
            I & \text{if } i \notin I \\     
            I \cup \{j\} & \text{if } i \in I
        \end{cases}.
    \]
    Next, by construction, we have~$\widetilde \psi = \psi\big[i \mapsto \psi(i) - r, j \mapsto \psi(j) + r\big]$.
    In particular, we have~$0 \leq r \leq \psi(i)$, i.e., $r \in \big[\psi(i)\big]_0$.
    And we also trivially have~$\widetilde \ell = \ell$.
\end{proof}

For a disjoint union of two graphs, the set of partial solutions is simply given by pairwise unions of partial solutions of the two graphs.
And the analogous property applies to label choices.
So for the footprints, we can combine all such pairs as follows.
\begin{lemma}\label{lem:eds-union}
    Let $(G_x, \lab_x) = (G_{y_1}, \lab_{y_1}) \oplus (G_{y_2}, \lab_{y_2})$ where $G_{y_1}$ and $G_{y_2}$ are vertex-disjoint graphs.
    Then the footprints of the node $x$ are given as follows:
    \[
        \cS_x = \Bigl\{\big(I_1 \cup I_2, i \mapsto \psi_1(i) + \psi_2(i), \ell_1 + \ell_2\big) \mid (I_1, \psi_1, \ell_1) \in \cS_{y_1}, (I_2, \psi_2, \ell_2) \in \cS_{y_2} \Bigr\}.
    \]
\end{lemma}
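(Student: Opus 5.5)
We prove the equality by showing both inclusions, in the same style as \cref{lem:eds-remove-label,lem:eds-add-label}. The key structural fact is this. The graph $G_x$ is the disjoint union of $G_{y_1}$ and $G_{y_2}$, with no edges between them, and every vertex keeps its label set. Hence a pair $(S, M)$ is a partial solution of $x$ if and only if $(S \cap V_{y_q}, M \cap E_{y_q})$ is a partial solution of $y_q$ for both $q \in [2]$. The same splitting works for label choices: a mapping $\phi$ on $S \setminus V(M)$ is a label choice in $x$ if and only if its restrictions to $V_{y_1}$ and $V_{y_2}$ are label choices in $y_1$ and $y_2$, respectively.

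For the inclusion $\supseteq$, fix footprints $(I_q, \psi_q, \ell_q) \in \cS_{y_q}$, each realized by a partial solution $(S_q, M_q)$ with a label choice $\phi_q$. Set $S = S_1 \cup S_2$ and $M = M_1 \cup M_2$, and take the label choice $\phi_1 \cup \phi_2$.
\begin{itemize}
    \item Every edge of $G_x$ lies in $E_{y_1}$ or in $E_{y_2}$, so $S$ is a vertex cover of $G_x$.
    \item $M$ is a matching because $M_1$ and $M_2$ are vertex-disjoint, and $V(M) \subseteq S$ holds.
    \item Since $V(M_q) \subseteq V_{y_q}$, we have $S \setminus V(M) = (S_1 \setminus V(M_1)) \cup (S_2 \setminus V(M_2))$, and $\phi_1 \cup \phi_2$ maps each such vertex to one of its own labels.
\end{itemize}
We then read off the footprint. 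The first component is $\bigcup_{v \in V_x \setminus S} \lab_x(v) = I_1 \cup I_2$. The second is $\abs*{(\phi_1\cup\phi_2)^{-1}(i)} = \psi_1(i) + \psi_2(i)$ for every $i \in [k]$, since the two preimages are disjoint. The third is $\abs*{M} = \ell_1 + \ell_2$.

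For the inclusion $\subseteq$, start from a partial solution $(S, M)$ of $x$ with a label choice $\phi$, and restrict everything to the two sides. $S \cap V_{y_q}$ is a vertex cover of $G_{y_q}$ because $G_{y_q}$ is a subgraph of $G_x$. $M$ contains no edge between the two sides, so it splits into matchings $M_q = M \cap E_{y_q}$ with $V(M_q) \subseteq S \cap V_{y_q}$. The restriction of $\phi$ to $(S \cap V_{y_q}) \setminus V(M_q)$ is a label choice in $y_q$, because labels are preserved. The resulting footprints in $y_1$ and $y_2$ combine exactly to the footprint of $(S, M, \phi)$, by the same three computations as before.

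There is no real obstacle here. The only point needing care is the identity $S \setminus V(M) = \bigcup_q (S_q \setminus V(M_q))$, and it rests on matching edges never crossing between the two sides. This holds because the union operation creates no edges.
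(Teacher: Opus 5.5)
Your proposal is correct and follows the paper's proof essentially step for step. Both prove the two inclusions the same way: take unions of partial solutions and label choices in one direction, restrict them to $V_{y_1}$ and $V_{y_2}$ in the other, and compute the footprint component-wise, relying on the identity $S \setminus V(M) = \bigl(S_1 \setminus V(M_1)\bigr) \cup \bigl(S_2 \setminus V(M_2)\bigr)$ that you single out.
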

\begin{proof}
        First, let $(I_1, \psi_1, \ell_1) \in \cS_{y_1}$ be a footprint of $y_1$ and let $(I_2, \psi_2, \ell_2) \in \cS_{y_2}$ be a footprint of $y_2$.
        And for every index~$a \in [2]$, let $(S_a, M_a)$ be a partial solution of $y_a$ and let~$\phi_a$ be a label choice of $(S_a, M_a)$ in $y_a$ with footprint $(I_a, \psi_a, \ell_a)$.
        Then $(S_1 \cup S_2, M_1 \cup M_2)$ is a partial solution of $x$: because $G_x$ is a vertex-disjoint union of $G_{y_1}$ and $G_{y_2}$, the set~$S_1 \cup S_2$ is a vertex cover of $G_x$ and the set $M_1 \cup M_2$ is a matching in $G_x$.
        Furthermore, since the vertices from $G_{y_1}$ and $G_{y_2}$ preserve their labels from $\lab_{y_1}$ and $\lab_{y_2}$, respectively, in $\lab_x$, the mapping~$\phi_1 \cup \phi_2$ is a label choice for $(S_1 \cup S_2, M_1 \cup M_2)$ in $x$. 
        Let $(\widetilde I, \widetilde \psi, \widetilde \ell) \in \cS_x$ be the footprint of~$(S_1 \cup S_2, M_1 \cup M_2)$ and $\phi_1 \cup \phi_2$ in $x$.
        First, we have 
        \[
            \widetilde I = \bigcup_{v \in V_x \setminus (S_1 \cup S_2)} \lab_x(v) = \left(\bigcup_{v \in V_{y_1} \setminus S_1} \lab_{y_1}(v)\right) \cup \left(\bigcup_{v \in V_{y_2} \setminus S_2} \lab_{y_2}(v)\right) = I_1 \cup I_2
        \]
        because $S_1 \subseteq V_{y_1}$, $S_2 \subseteq V_{y_2}$, and the vertex sets $V_{y_1}$ and $V_{y_2}$ are disjoint. 
        Second, we have~$\widetilde \ell = \ell_1 + \ell_2$ because $M_1$ and $M_2$ are disjoint.
        And finally, by construction we have 
        \[
            \widetilde \psi(i) = \abs*{(\phi_1 \cup \phi_2)^{-1}(i)} = \abs*{\phi_1^{-1}(i)} + \abs*{\phi_2^{-1}(i)} = \psi_1(i) + \psi_2(i)
        \]
        for every $i \in [k]$.

        For the other direction, consider a footprint $(\widetilde I, \widetilde \psi, \widetilde \ell) \in \cS_x$ of $x$.
        And let $(\widetilde S, \widetilde M)$ be a partial solution of $x$ and let $\widetilde \phi$ be a label choice of $(\widetilde S, \widetilde M)$ in $x$ with this footprint.
        For every index~$a \in [2]$, let $S_a = S \cap V_{y_a}$, $M_a = M \cap E_{y_a}$, and $\phi_a = \widetilde\phi_{|_{S_a \setminus V(M_a)}}$.
        Because the graphs~$G_{y_1}$ and $G_{y_2}$ are vertex-disjoint, the set~$S_a$ is a vertex cover of $G_{y_a}$ and the set~$M_a$ is a matching in $G_{y_a}$ with $V(M_a) \subseteq S_a$ for every $a \in [2]$, i.e., $(S_a, M_a)$ is a partial solution of the node~$y_a$. 
        And since the labels of the vertices from $G_{y_1}$ and $G_{y_2}$ are preserved from $\lab_{y_1}$ and $\lab_{y_2}$, respectively, in~$\lab_x$, the mapping $\phi_a$ is a label choice for the partial solution~$(S_a, M_a)$ in $y_a$ for every index~$a \in [2]$.
        So let $(I_a, \psi_a, \ell_a) \in \cS_{y_a}$ be the footprint of $(S_a, M_a)$ and $\phi_a$ in~$y_a$.
        The following properties hold because $G_x$ is a vertex-disjoint union of the graphs $G_{y_1}$ and $G_{y_2}$.
        First, we have
        \[
            \widetilde I = \bigcup_{v \in V_x \setminus \widetilde S} \lab_x(v) = \left(\bigcup_{v \in V_{y_1} \setminus S_1} \lab_{y_1}(v)\right) \cup \left(\bigcup_{v \in V_{y_2} \setminus S_2} \lab_{y_2}(v)\right) = I_1 \cup I_2.
        \]
        Second, 
        \[
            S \setminus V(M) = \big(S_1 \setminus V(M_1)\big) \cup \big(S_2 \setminus V(M_2)\big)
        \]
        holds so we also have
        \[
            \widetilde \psi(i) = \abs*{\widetilde \phi^{-1}(i)} = \abs*{\phi_1^{-1}(i)} + \abs*{\phi_2^{-1}(i)} = \psi_1(i) + \psi_2(i)
        \]
        for every $i \in [k]$.
        And finally we have $\widetilde \ell = \abs*{\widetilde M} = \abs*{M_1} + \abs*{M_2} = \ell_1 + \ell_2$.
\end{proof}

When labels $i$ and $j$ are joined, we can add some new edges with endpoints in $S \setminus V(M)$ to $M$ to obtain a new partial solution.
For every set of edges that can possibly be added this way, there is a label choice which maps one endpoint of every edge to $i$ the other to~$j$.
So the number $r$ of such edges is bounded by the number of vertices mapped to each of labels~$i$ and~$j$.
Formally, we have the following relation.
\begin{lemma}\label{lem:eds-join}
    Let $(G_x, \lab_x) = \eta_{i, j}(G_y, \lab_y)$ for some labels $i \neq j \in [k]$ where the graph~$G_y$ contains no vertex~$v$ with $\{i,j\} \subseteq \lab_y(v)$.
    Then the footprints of the node~$x$ are given as follows:
    \begin{align*}
        \cS_x = \biggl\{\Big(I, \psi\big[i \mapsto \psi(i) - r, j \mapsto \psi(j) - r\big], \ell+r\Big) \mid &(I, \psi, \ell) \in \cS_y, \{i,j\}\not\subseteq I, \\ 
        &r \in \Big[\min\big\{\psi(i), \psi(j)\big\}\Big]_0\biggr\}.
    \end{align*}
\end{lemma}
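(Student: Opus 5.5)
We prove the two inclusions separately, in the same style as \cref{lem:eds-remove-label,lem:eds-add-label}. The basic observations are these. The graphs $G_x$ and $G_y$ have the same vertex set and labeling, so $\lab_x = \lab_y$. Moreover, $E_x \setminus E_y \subseteq \cE := \{uv \mid i \in \lab_y(u), j \in \lab_y(v)\}$, and by the precondition the two endpoints of every edge of $\cE$ are distinct. We also record the equivalence: a set $S$ is a vertex cover of $G_x$ if and only if $S$ is a vertex cover of $G_y$ and $V_x \setminus S$ does not contain both a vertex with label $i$ and a vertex with label $j$. Since $I = \bigcup_{v \in V_x \setminus S} \lab_x(v)$, the latter condition is exactly $\{i,j\} \not\subseteq I$.

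For ``$\supseteq$'', fix $(I,\psi,\ell) \in \cS_y$ with $\{i,j\}\not\subseteq I$ and an integer $r \le \min\{\psi(i),\psi(j)\}$. Take a partial solution $(S,M)$ of $y$ and a label choice $\phi$ realizing this footprint. Then $S$ is a vertex cover of $G_x$ by the equivalence above. Choose distinct $u_1,\dots,u_r \in \phi^{-1}(i)$ and $w_1,\dots,w_r \in \phi^{-1}(j)$; these $2r$ vertices are pairwise distinct because $\phi$ is a function. Each $u_q w_q$ is an edge of $G_x$, since $i \in \lab_x(u_q)$ and $j \in \lab_x(w_q)$. All of these vertices lie in $S \setminus V(M)$, so $M' = M \cup \{u_qw_q \mid q \in [r]\}$ is a matching with $V(M') \subseteq S$. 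The restriction of $\phi$ to $S \setminus V(M')$ is then a label choice of $(S, M')$, and its footprint is exactly the claimed tuple: $I$ is unchanged because $S$ is unchanged, $\psi$ drops by $r$ at $i$ and at $j$, and the matching size becomes $\ell + r$.

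For ``$\subseteq$'', start from a partial solution $(S,\widetilde M)$ of $x$ with label choice $\widetilde\phi$. Split $\widetilde M$ as $M \cup N$, where $N = \widetilde M \setminus E_y$ and hence $N \subseteq \cE$; set $r = |N|$. Every edge of $N$ has a unique endpoint $u$ containing label $i$ and a unique endpoint $w$ containing label $j$, and these are distinct vertices by the precondition. Define $\phi$ on $S \setminus V(M)$ by extending $\widetilde\phi$, mapping each $i$-endpoint of an edge of $N$ to $i$ and each $j$-endpoint to $j$. Then $(S,M)$ is a partial solution of $y$: $S$ covers $E_y \subseteq E_x$, and $M$ is a matching in $G_y$. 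The map $\phi$ is a label choice for it. Let its footprint be $(I,\psi,\ell)$. We check that $I$ coincides with the first component of the footprint in $x$, and that $\{i,j\} \not\subseteq I$ holds because $S$ covers $E_x$. Moreover $\psi(i) = \widetilde\psi(i) + r$, $\psi(j) = \widetilde\psi(j) + r$, and $\ell = |\widetilde M| - r$. This also gives $r \le \min\{\psi(i),\psi(j)\}$. Hence the footprint of $(S,\widetilde M)$ with $\widetilde\phi$ has the required form.

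The main obstacle is the bookkeeping in the ``$\subseteq$'' direction: ensuring $\phi$ is well defined and that exactly $r$ vertices are added to each of $\phi^{-1}(i)$ and $\phi^{-1}(j)$. This relies on $N$ being a matching, so that its endpoints are pairwise distinct, and on the precondition that no vertex carries both labels, so that each edge of $N$ contributes one vertex to each label class. A second point to keep in mind is that an edge of $\widetilde M$ may lie in $\cE \cap E_y$; assigning all of $\widetilde M \cap E_y$ to $M$ avoids any ambiguity there.
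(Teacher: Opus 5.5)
Your proof is correct and follows essentially the same route as the paper in both directions. The only difference is cosmetic: in the ``$\subseteq$'' direction the paper removes from $\widetilde M$ every edge joining a vertex with label $i$ to a vertex with label $j$, including such edges already present in $E_y$, whereas you remove only the new edges $\widetilde M \setminus E_y$; both choices give a partial solution of $y$ with the required footprint.
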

\begin{proof}
    Let $(I, \psi, \ell) \in \cS_y$ be a footprint of $y$ satisfying $\{i, j\} \not\subseteq I$ and let $r$ be an integer satisfying 
    $r \in \Big[\min\big\{\psi(i), \psi(j)\big\}\Big]_0$.
    Further, let $(S, M)$ be a partial solution and let $\phi$ be a label choice of $(S, M)$ in $y$ with this footprint.
    Since 
    \[
        \{i, j\} \not\subseteq I = \bigcup_{v \in V_x \setminus S} \lab_y(v),
    \]
    the set~$E_x \setminus E_y$ contains no edge with both endpoints in $V_x \setminus S$ by definition of a join-node.
    Further, the graph $G_y$ is a subgraph of~$G_x$.
    Together with the fact that $S$ is a vertex cover of~$G_y$, this implies that~$S$ is also a vertex cover of $G_x$.
    Because $G_x$ is a subgraph of $G_y$, the set~$M$ is also a matching of the graph~$G_x$. 
    Due to 
    \[
        r \leq \min\big\{\psi(i), \psi(j)\big\} = \min\big\{\phi^{-1}(i), \phi^{-1}(j)\big\},
    \]
    there exist pairwise distinct vertices $u_1, \dots, u_r \in S \setminus V(M)$ with $\phi(u_1) = \dots = \phi(u_r) = i$.
    In particular, we have $i \in \bigcap_{z \in [r]} \lab_x(u_z)$.
    Similarly, there exist pairwise distinct vertices $v_1, \dots, v_r \in S \setminus V(M)$ with $\phi(v_1) = \dots = \phi(v_r) = j$.
    In particular, we have $j \in \bigcap_{z \in [r]} \lab_x(v_z)$.
    We define the set 
    \[
        Y = \big\{u_z v_z \mid z \in [r]\big\}.
    \]
    First, by definition of a join-node, we have $Y \subseteq E_x$.
    Second, by the precondition of the join-operator $\eta_{i,j}$, no vertex in $G_y$ (and hence, in $G_x$ too) contains both labels $i$ and $j$.
    This, in particular, implies that we have $u_z \neq v_q$ for all $z, q \in [r]$ and therefore, the set~$Y$ is a matching in the graph~$G_x$.
    Finally, by the choice of the vertices~$u_1, \dots, u_r$ and~$v_1, \dots, v_r$, the sets $M$ and $Y$ have disjoint sets of endpoints so the set~$M \cup Y$ is a matching in $G_x$.
    Altogether,~$(S, M \cup Y)$ is a partial solution and $\phi_{|_{S \setminus (V(M) \cup V(Y))}}$ is label choice of $(S, M \cup Y)$ in $x$.
    So let $(\widetilde I, \widetilde \psi, \widetilde \ell) \in \cS_x$ be the footprint of $(S, M \cup Y)$ and $\phi_{|_{S \setminus (V(M) \cup V(Y))}}$ in~$x$.
    First, we have $\widetilde I = I$ since the label set of every vertex is the same in $\lab_y$ and $\lab_x$.
    Second, by the choice of the vertices $u_1, \dots, u_r$ and $v_1, \dots, v_r$, we have
    \[
        \widetilde \psi(i) = \abs*{\phi_{|_{S \setminus (V(M) \cup V(Y))}}^{-1}(i)} = \abs*{\phi^{-1}(i)} - r = \psi(i) - r, 
    \]
    the analogous equality holds for $j$, and we have
    \[
        \widetilde \psi(z) = \abs*{\phi_{|_{S \setminus (V(M) \cup V(Y))}}^{-1}(z)} = \abs*{\phi^{-1}(z)} = \psi(z)
    \]
    for all $z \in [k] \setminus \{i, j\}$.
    And finally, we have $\widetilde \ell = \abs*{M \cup Y} = \abs*{M} + \abs*{Y} = \ell + r$.

    For the other direction, consider a footprint $(\widetilde I, \widetilde \psi, \widetilde \ell) \in \cS_x$ of $x$.
    And let $(\widetilde S, \widetilde M)$ be a partial solution and let $\widetilde \phi$ be its label choice in $x$ with this footprint.
    Then $\widetilde S$ is also a vertex cover of the subgraph $G_y$ of $G_x$.
    First, suppose for the sake of contradiction that~$\{i, j\} \subseteq \widetilde I$ holds.
    Then there exist vertices~$u, v \in V_x \setminus \widetilde S$ such that $i \in \lab_x(u)$ and $j \in \lab_x(v)$.
    By the precondition on the join-operator~$\eta_{i,j}$ we have $u \neq v$.
    And by definition of this join, we then have~$uv \in E_x$ contradicting the fact that $\widetilde S$ is a vertex cover of $G_x$.
    So we have $\{i, j\} \not\subseteq \widetilde I$.
    Next we define the set 
    \[
        Y = \big\{uv \in \widetilde M \mid i \in \lab_y(u), j \in \lab_y(v)\big\}
    \]
    and let $r = \abs*{Y}$ be its cardinality.
    Then by definition of the join-operator~$\eta_{i,j}$, all elements of~$\widetilde M \setminus Y$ belong to $E_y$ and therefore, $(\widetilde S, \widetilde M \setminus Y)$ is a partial solution of the node $y$.
    We have
    \[
        \widetilde S \setminus \big(V(\widetilde M \setminus Y)\big) = \big(\widetilde S \setminus V(\widetilde M)\big) \cup V(Y)
    \]
    and the labeling functions $\lab_x$ and $\lab_y$ coincide.
    So consider the label choice $\phi$ of $(\widetilde S, \widetilde M \setminus Y)$ in $y$ defined as 
    \[
        \phi(v) =
        \begin{cases}
            \widetilde \phi(v) & \text{if } v \in S \setminus V(\widetilde M) \\
            i & \text{if } v \in V(Y) \text{ and } i \in \lab_y(v) \\
            j & \text{if } v \in V(Y) \text{ and } j \in \lab_y(v) \\
        \end{cases}
    \]
    for every vertex $v \in \big(\widetilde S \setminus V(\widetilde M)\big) \cup V(Y)$.
    Recall that no vertex of $G_y$ contains both labels~$i$ and~$j$ in $\lab_y$ so $\phi$ is well-defined.
    And now let $(I, \psi, \ell) \in \cS_y$ be the footprint of $(\widetilde S, \widetilde M \setminus Y)$ and the label choice $\phi$ in $y$.
    First, we trivially have $\widetilde I = I$ and $\widetilde \ell = \ell+r$.
    And second, we have $\widetilde \psi(i) = \psi(i) - r$, $\widetilde \psi(j) = \psi(j) - r$,
    and $\widetilde \psi(z) = \psi(z)$ for all $z \in [k] \setminus \{i,j\}$ by construction of~$\phi$.
    In particular, we have $r \in \Big[\min\big\{\psi(i), \psi(j)\big\}\Big]_0$.
\end{proof}

With those lemmas in hand we are ready to provide the desired algorithm:
\thmEDS*
\begin{proof}
    First, by \cref{lem:mcw-special-relabels} we may assume that every relabel-node either adds a label or forgets a label and every introduce-node creates a vertex with precisely one label.
    Furthermore, the number of nodes in the expression is polynomial in $n$. 
    Now we slightly change the expression by adding a new label $\star \coloneqq k+1$ and right after each introduce-node, say $v \langle i \rangle$ for some element~$v$ and some label $i \in [k]$, we add an add-label-node $\rho_{i \to \{i, \star\}}$.
    We denote this expression by~$\xi$.
    The properties from \cref{lem:mcw-special-relabels} remain satisfied in $\xi$.
    And $\xi$ is a multi-$(k+1)$-expression of the same graph~$G$ such that for the root $r$ of this expression, every vertex of the graph $G_r = G$ has a non-empty label set in $\lab_r$.
    The algorithm is a bottom-up dynamic programming along $\xi$ that for every node $x$, computes the set $\cS_x$ of all footprints of all partial solutions and their label choices in $x$.
    For the leaf-computation, we use \cref{lem:eds-leaf}.
    And for every inner node $x$, we compute the set~$\cS_x$ from the footprints of its child / children using \cref{lem:eds-remove-label}, \cref{lem:eds-add-label}, \cref{lem:eds-union}, or \cref{lem:eds-join} depending on the type of the node $x$.
    At the root of the multi-expression $\xi$, the following equivalence holds.
    \begin{claim}
        The graph $G$ admits an edge dominating set of size at most $\cardinality$ if and only if for the root $r$ of $\xi$, the set $\cS_r$ contains a footprint $(I, \psi, \ell)$ with $I \subseteq [k+1]$, $\psi \colon [k+1] \to [n]_0$, and $\ell \in \bN$ such that $\ell + \sum_{i \in [k] \cup \{\star\}} \psi(i) \leq \cardinality$.
    \end{claim}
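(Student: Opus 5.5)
We prove both directions separately, using \cref{lem:eds-equivalent-formulation} together with the fact that the dynamic programming computes exactly $\cS_r$. The lemmas for the leaf, forget, add-label, union and join nodes each give an exact characterization of the footprints. Hence, by induction over $\xi$, the computed set at the root $r$ is precisely the set of footprints of all partial solutions of $r$ and all their label choices. Since $G_r = G$, a partial solution of $r$ is exactly a pair $(S,M)$ where $S$ is a vertex cover of $G$ and $M$ is a matching with $V(M)\subseteq S$. For any label choice $\phi$ of $(S,M)$ with footprint $(I,\psi,\ell)$ we have
\[
    \sum_{i \in [k]\cup\{\star\}} \psi(i) = \abs*{\phi^{-1}\big([k+1]\big)} = \abs*{S \setminus V(M)},
\]
because $\phi$ is a total map on $S\setminus V(M)$. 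Also $\ell = \abs*{M}$. Therefore $\ell + \sum_i \psi(i) = \abs*{M} + \abs*{S\setminus V(M)}$.

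($\Rightarrow$) Suppose $G$ has an edge dominating set of size at most $\cardinality$. By \cref{lem:eds-equivalent-formulation} there is a pair $(S,M)$ satisfying the three items of that lemma, and it is a partial solution of $r$. We need a label choice for it. This is where the new label $\star$ is used. Every vertex $v$ received $\star$ right after its introduction, and $\star$ is never forgotten: it is a fresh label, and no later node of the original expression forgets it or joins on it. So $\star \in \lab_r(v)$ for every vertex $v$. Setting $\phi \equiv \star$ on $S\setminus V(M)$ therefore gives a valid label choice, and its footprint lies in $\cS_r$ with value $\abs*{M}+\abs*{S\setminus V(M)}\le\cardinality$.

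($\Leftarrow$) Suppose $\cS_r$ contains a footprint $(I,\psi,\ell)$ with $\ell + \sum_i \psi(i) \le \cardinality$. It is the footprint of some partial solution $(S,M)$ of $r$ under some label choice. By the identity above, $\abs*{M}+\abs*{S\setminus V(M)}\le\cardinality$. Then \cref{lem:eds-equivalent-formulation} yields an edge dominating set of size at most $\cardinality$.

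The main obstacle is justifying that $\star$ survives to the root with every vertex carrying it. One must check that inserting the add-label nodes $\rho_{i\to\{i,\star\}}$ does not change the graph. It does not, because $\star$ is never used in a join and never forgotten. The remaining concern is the join precondition: no vertex may hold both labels of a join. Since $\star$ is never one of the joined labels, the precondition is unaffected. The rest is bookkeeping.

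After the claim, the running time follows by observing that there are at most $2^{k+1}\cdot (n+1)^{k+1}\cdot (n+1)$ footprints per node. Each of \cref{lem:eds-leaf}, \cref{lem:eds-remove-label}, \cref{lem:eds-add-label}, \cref{lem:eds-union} and \cref{lem:eds-join} computes the new footprint set in time polynomial in these set sizes. With polynomially many nodes, the total running time is $n^{\cO(k)}$.
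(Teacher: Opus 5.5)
Your proof is correct and follows the paper's argument. In the forward direction you map every vertex of $S \setminus V(M)$ to $\star$ to get a valid label choice; in the backward direction you use that $\phi$ is total on $S \setminus V(M)$, so $\ell + \sum_{i} \psi(i) = \abs*{M} + \abs*{S \setminus V(M)}$, and then invoke \cref{lem:eds-equivalent-formulation}. Your explicit check that $\star$ survives to the root, because no original relabel or join node involves $\star$, spells out what the paper leaves as ``by construction of $\xi$''.
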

    \begin{claimproof}
        First, if $G$ admits an edge dominating set of size at most $\cardinality$, then by \cref{lem:eds-equivalent-formulation}, there exists a pair $(S, M)$ such that $S$ is a vertex cover of $G = G_r$, $M$ is a matching in~$G$ satisfying~$V(M) \subseteq S$, and we have $\abs*{M} + \abs*{S \setminus V(M)} \leq \cardinality$.
        Then $(S, M)$ is a partial solution of $r$.
        Let now $\phi \colon S \setminus V(M) \to [k] \cup \{\star\}$ be a mapping that maps every vertex from~$V(M) \setminus S$ to $\star$.
        By construction of $\xi$, the function $\phi$ is a label choice for~$(S, M)$ in~$r$.
        For the footprint~$(I, \psi, \ell) \in \cS_r$ of $(S, M)$ and $\phi$ in $r$, we then in particular have~$\psi(\star) = \abs*{S \setminus V(M)}$,~$\psi(i) = 0$ for all $i \in [k]$, and $\ell = \abs*{M}$.
        So we get 
        \[
            \ell + \sum_{i \in [k] \cup \{\star\}} \psi(i) = \abs*{M} + \abs*{S \setminus V(M)} \leq \cardinality.
        \]

        On the other hand, if the set~$\cS_r$ contains a footprint $(I, \psi, \ell)$ such that $\ell + \sum_{i \in [k] \cup \{\star\}} \psi(i) \leq \cardinality$, there exists a partial solution $(S, M)$ of $r$ and a label choice $\phi \colon S \setminus V(M) \to [k] \cup \{\star\}$ of~$(S, M)$ in $r$ such that $\psi(i) = \abs*{\phi^{-1}(i)}$ for all $i \in [k] \cup \{\star\}$ and $\ell = \abs*{M}$.
        Then the set~$S$ is a vertex cover of $G_r = G$, the set~$M$ is a matching in $G_r = G$ satisfying $V(M) \subseteq S$, and we have
        \[
            \abs*{M} + \abs*{S \setminus V(M)} = 
            \ell + \sum_{i \in [k] \cup \{\star\}} \abs*{\phi^{-1}(i)} = 
            \ell + \sum_{i \in [k] \cup \{\star\}} \psi(i) \leq \cardinality.
        \]
        By \cref{lem:eds-equivalent-formulation}, the graph $G$ then admits an edge dominating set of size at most $\cardinality$.
    \end{claimproof}
    So after computing the set $\cS_r$, we iterate through its elements and check whether it contains a footprint  $(I, \psi, \ell)$ satisfying $\ell + \sum_{i \in [k] \cup \{\star\}} \psi(i) \leq \cardinality$.

    It remains to bound the running time of the algorithm.
    First, for every node $x$ of the multi-$k+1$-expression~$\xi$ of $G$, the graph $G_x$ contains at most $n$ vertices so there are at most
    \[
        2^{k+1} \cdot (n+1)^{k+1} \cdot \left(\left\lceil\frac{n}{2}\right\rceil + 1\right) = n^{\cO(k)}
    \]
    footprints in~$\cS_x$.
    Indeed, for a subset $I$ of $[k+1]$ there are $2^{k+1}$ options.
    For a mapping~$\psi \colon [k+1] \to [n]_0$, there are $\abs*{[n]_0}^{k+1} = (n+1)^{k+1}$ options.
    And for the cardinality $\ell$ of a matching~$M$ in $G_x$, there are at most $\left(\left\lceil\frac{n}{2}\right\rceil + 1\right)$ options because this cardinality lies between $0$ and~$\lceil\frac{n}{2}\rceil$.
    For every node of $\xi$, after finishing the procedure for this node, in time $n^{\cO(k)}$ we remove the duplicates of computed footprints to ensure that the number of kept footprints never exceeds~$n^{\cO(k)}$.
    The leaf computation via \cref{lem:eds-leaf} can be done in time $\cO(k)$.
    The forget-label computation via \cref{lem:eds-remove-label} can clearly be done in time $n^{\cO(k)}$.
    The add-label computation via \cref{lem:eds-add-label} can be done in time~$n^{\cO(k)}$ as there are $\cO(n)$ options for the integer~$r$.
    The union computation can also be handled in time~$n^{\cO(k)}$ because $n^{\cO(k)} \cdot n^{\cO(k)} \in n^{\cO(k)}$.
    And finally, we can also process a join-node in time~$n^{\cO(k)}$ because, again, there are $\cO(n)$ options for the integer~$r$.
    Since the number of nodes in the expression $\xi$ is polynomial in~$n$, this yields the total running time of $n^{\cO(k)}$.
\end{proof}

Fomin et al.~\cite{FominGLS14} showed the following lower bound for clique-width:

\begin{theorem}[Theorem 5.1 in \cite{FominGLS14}]
    Let $G$ be an $n$-vertex graph given together with a $k$-expression of $G$. 
    Then the \textsc{Edge Dominating Set} problem cannot be solved in time~$f(k) \cdot n^{o(k)}$ for any computable function $f$ unless the ETH fails.
\end{theorem}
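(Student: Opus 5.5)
The plan is a parameter-preserving reduction from \multicolored{}: by \cref{thm:multicolored-lb}, a construction that runs in time polynomial in $n'$ and $k'$, outputs an \textsc{Edge Dominating Set} instance with $\mathrm{poly}(n',k')$ vertices, and comes with a $k$-expression using $k=\cO(k')$ labels turns an $f(k)\cdot n^{o(k)}$ algorithm into a $g(k')\cdot (n')^{o(k')}$ algorithm for \multicolored{}, contradicting ETH. Throughout I would argue with the reformulation of \cref{lem:eds-equivalent-formulation}. A solution is a pair $(S,M)$ whose cost is $|M|+|S\setminus V(M)|$, i.e.\ $|S|-|M|$. This lets me design gadgets whose vertices are forced into $S$, so that the only freedom left is which matching edges are used, and each matching edge saves exactly one unit of the budget.

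\textbf{Forcing and selection.} The forcing primitive is to attach to a vertex $w$ two private pendant leaves. Any vertex cover containing neither leaf must contain $w$, and taking $w$ costs at most as much as taking the leaves, so in a cheapest pair $(S,M)$ we may assume $w\in S$ and both leaves are outside $S$. For each color class $U_i$ I would build a selection gadget from forced vertices: a set $P_i$ of $n$ vertices together with two ``partner'' sets $L_i$ and $R_i$, and all edges between $P_i$ and $L_i\cup R_i$. Every forced vertex can be matched at most once, and the budget is tuned so that an optimal solution must match every vertex of $P_i$. Then selecting $u_i^\gamma$ is encoded as matching exactly $\gamma$ vertices of $P_i$ into $L_i$ and $n-\gamma$ into $R_i$. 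The unused partners in $L_i$ and $R_i$ are what later connections see. For example, exactly $n-\gamma$ vertices of $L_i$ stay unmatched, and these are available to be matched into a neighbouring gadget. So ``how many vertices of $L_i$ are still free'' transmits the selected index. As in the \textsc{Max Cut} construction, I would not attach all edge checks to a single selection gadget. Instead I would create one copy of all $k'$ selection gadgets per edge $v_jw_j$ of $G$, and chain consecutive copies by complete bipartite connections between the free partners of copy $j$ and the partners of copy $j+1$. Budget tightness should force every matching to be perfect on forced vertices, and this in turn forces the counts in consecutive copies to agree.

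\textbf{Edge checks and clique-width.} In copy $j$ I would attach a check gadget to the selection gadgets of colors $i_1(j)$ and $i_2(j)$. Its job is to make the budget unattainable exactly when the two counts are simultaneously $\alpha(j)$ and $\beta(j)$. I would implement it as a disjunction over the four cases ``$<\alpha$'', ``$>\alpha$'', ``$<\beta$'', ``$>\beta$''. Each case would be a threshold gadget: a set of forced vertices that can all be matched only if at least (respectively at most) a given number of partners in the adjacent selection gadget are still free. The disjunction is realised by a hub that needs at least one of the four threshold gadgets to be ``released'', mirroring the $\gadgetIf(Z(j))$ trick used for \textsc{Max Cut}.

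For the $k$-expression, I would introduce the vertices copy by copy. During copy $j$ only the current and previous copies of the $k'$ selection gadgets need live labels, with $\cO(1)$ labels per gadget (for its $P$, $L$, $R$ parts, plus ``old'' labels for the previous copy). The check gadgets of copy $j$ use $\cO(1)$ reusable labels and are finished before moving on. This gives $\cO(k')$ labels, and since consecutive copies are connected by complete bipartite graphs, no label distinctions are needed inside a part.

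\textbf{Main obstacle.} I expect the hardest step to be the threshold and disjunction gadgets together with the global budget argument. EDS is unweighted, so a single violated gadget must provably cost at least one extra unit, and no slack elsewhere may compensate for it. Keeping the argument clean requires that every gadget has an exactly known optimum that is attained only by perfect matchings on its forced vertices, so that the total budget is the sum of the gadget optima and all of them must be attained. My first attempt would be to realise the thresholds by counting arguments in complete bipartite graphs between a size-$t$ set of forced vertices and the free partners, where a perfect matching of the forced side exists iff there are enough free partners (Hall's condition is trivial in complete bipartite graphs). I would then verify both directions of correctness along the lines of the two correctness lemmas for \textsc{Max Cut}.
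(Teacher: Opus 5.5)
\textbf{Gap: the reduction cannot be correct as designed.} The paper does not prove this statement; it imports it from Fomin et al. Your reduction therefore has to stand on its own, and it fails at your central design principle: all gadget vertices are forced into $S$, so that ``the only freedom left is which matching edges are used''.

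Let $F$ be the set of forced vertices. In your construction the only unforced vertices are the pendant leaves, or more generally vertices all of whose neighbours are forced, so $V(G)\setminus F$ is independent. Take any admissible pair $(S,M)$ with $F\subseteq S$ and put $X=S\setminus F$. Adding $|X|$ vertices raises the matching number by at most $|X|$, so $|S|-|M|\ge |F|+|X|-(\nu(G[F])+|X|)=|F|-\nu(G[F])$. The choice $S=F$ attains this bound. Hence the optimum of your instance is exactly $|F|-\nu(G[F])$, which Edmonds' algorithm computes in polynomial time.

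All your components are statements about whether $G[F]$ has a large or perfect matching: the selection gadgets, the chaining between copies, and the threshold gadgets (``can all be matched only if enough partners are free''). No arrangement of such gadgets can encode the multicolored independent set problem. A polynomial-time reduction of this kind would give $\mathrm{P}=\mathrm{NP}$, so under ETH it cannot be correct. The counts $\gamma$ you want to transmit are also not pinned down by anything. Existence of a perfect matching does not depend on which of the many perfect matchings is used. Moreover, the edges into copy $j+1$ must be present from all of $L_i(j)$, because the ``free'' partners are not fixed by the graph.

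\textbf{What is missing.} The combinatorial choice must live in the vertex cover, i.e.\ in which unforced vertices are left outside $S$. For this the unforced vertices must span edges; otherwise the computation above shows they are irrelevant. The cost $|S|-|M|$ must then be used to price different covers differently. Your ``hub'' whose threshold gadgets get ``released'' is exactly where such a mechanism would be needed, but it is left undefined. To repair the proposal you would have to redesign both selection and the edge checks so that selecting $u_i^\gamma$ corresponds to a specific set of vertices outside the vertex cover. Forced vertices and matchings should only ensure that every deviation costs at least one extra unit of the budget.
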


Since any $k$-expression of a graph 
can trivially be transformed into a multi-$k$-expression of the same graph~\cite{Furer17}, the lower bound transfers to multi-clique-width showing that our algorithm is tight under ETH.
\begin{theorem}
    Let $G$ be an $n$-vertex graph given together with a multi-$k$-expression of $G$ and let $\cardinality$ be an integer. 
    Then the \textsc{Edge Dominating Set} problem cannot be solved in time $f(k) \cdot n^{o(k)}$ for any computable function $f$ unless the ETH fails.
\end{theorem}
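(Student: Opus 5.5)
The statement is the final lower bound for \textsc{Edge Dominating Set} parameterized by multi-clique-width. I would derive it from the clique-width lower bound of Fomin et al.\ (Theorem 5.1 in~\cite{FominGLS14}) by a trivial parameter-preserving transfer.

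The key step is to observe that every $k$-expression is already a multi-$k$-expression. Introduce $v(i)$ becomes $v\langle i\rangle$. Disjoint union stays $\oplus$. Join $\eta_{i,j}$ stays $\eta_{i,j}$. Relabel $\rho_{i\to j}$ becomes $\rho_{i\to\{j\}}$.

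Two checks are needed. First, in a $k$-expression every vertex holds exactly one label at every node, so the precondition of the join operator (no vertex carries both $i$ and $j$) is automatically satisfied. Second, a straightforward induction over the expression tree shows that the resulting multi-labeled graphs coincide with the labeled graphs of the original expression, where label $\ell$ corresponds to label set $\{\ell\}$. In particular, the constructed graph is the same. The transformation takes linear time in the size of the expression and does not change $k$ or $n$.

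Now suppose, for contradiction, that some algorithm solves \textsc{Edge Dominating Set} in time $f(k)\cdot n^{o(k)}$ when given a multi-$k$-expression. Given an instance with a $k$-expression, we convert the expression as above and run this algorithm. This solves the clique-width version in time $f(k)\cdot n^{o(k)}$ plus polynomial overhead, which contradicts Theorem 5.1 of~\cite{FominGLS14} under ETH.

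There is no real obstacle here. The only point requiring care is the join precondition, and it holds trivially because all label sets are singletons.
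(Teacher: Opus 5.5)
Your proposal is correct and follows the same route as the paper. The paper also obtains this bound directly from Theorem~5.1 of~\cite{FominGLS14}, using that every $k$-expression is trivially a multi-$k$-expression of the same graph; your explicit operation-by-operation translation and your singleton-label check of the join precondition just spell out details the paper leaves implicit.
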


\section{Conclusion}

We have answered three open questions left by Chekan and Kratsch~\cite{ChekanK23} about the fine-grained complexity of \textsc{Max Cut}, \textsc{Hamiltonian Cycle}, and \textsc{Edge Dominating Set} with respect to multi-clique-width. For \textsc{Max Cut} we ruled out time $n^{2^{o(k)}}$ assuming ETH, while a matching algorithm can be obtained by using that multi-clique-width $k$ implies clique-width is at most $2^k$. In contrast, for \textsc{Hamiltonian Cycle} and \textsc{Edge Dominating Set} we were able to give algorithms with the same time bound $n^{\cO(k)}$ as for clique-width, which matches known lower bounds that extend to multi-clique-width. It would be interesting to study further problems that are XP (but likely not FPT) and have tight bounds with respect to clique-width and to determine such tight bounds relative to multi-clique-width and fusion-width. Are there problems that separate two or even all three of these parameters?

The same question can also be posed for problems that are FPT for clique-width. So far, for a small selection of problems, the same tight bounds were obtained for all three parameters~\cite{Furer14,Furer17,ChekanK23}. It would similarly be interesting to find separations between them.

The fact that multi-clique-width generalizes treewidth (and clique-width) with only a constant increase in value arguably makes it an attractive parameter for fine-grained parameterized complexity. This would be much strengthened if there were a better way to approximate it, say in FPT time. So far, only a doubly-exponential approximation follows from its relation to clique-with. Can this be improved?

\bibliographystyle{plain}
\bibliography{ref}

@inproceedings{BergougnouxKN23,
  author       = {Benjamin Bergougnoux and
                  Tuukka Korhonen and
                  Jesper Nederlof},
  editor       = {Petra Berenbrink and
                  Patricia Bouyer and
                  Anuj Dawar and
                  Mamadou Moustapha Kant{\'{e}}},
  title        = {Tight Lower Bounds for Problems Parameterized by Rank-Width},
  booktitle    = {40th International Symposium on Theoretical Aspects of Computer Science,
                  {STACS} 2023, Hamburg, Germany, March 7-9, 2023},
  series       = {LIPIcs},
  volume       = {254},
  pages        = {11:1--11:17},
  publisher    = {Schloss Dagstuhl - Leibniz-Zentrum f{\"{u}}r Informatik},
  year         = {2023},
  url          = {https://doi.org/10.4230/LIPIcs.STACS.2023.11},
  doi          = {10.4230/LIPICS.STACS.2023.11},
  bibsource    = {dblp computer science bibliography, https://dblp.org}
}

@article {BodlaenderCKN15,
	AUTHOR = {Bodlaender, Hans L. and Cygan, Marek and Kratsch, Stefan and
	Nederlof, Jesper},
	TITLE = {Deterministic single exponential time algorithms for
	connectivity problems parameterized by treewidth},
	JOURNAL = {Information and Computation},
	VOLUME = {243},
	YEAR = {2015},
	PAGES = {86--111},
	ISSN = {0890-5401},
	MRCLASS = {05C85 (68Q25)},
	MRNUMBER = {3352777},
	DOI = {10.1016/j.ic.2014.12.008},
	URL = {http://dx.doi.org/10.1016/j.ic.2014.12.008},
}

@inproceedings{GreilhuberSW25,
  author       = {Jakob Greilhuber and
                  Philipp Schepper and
                  Philip Wellnitz},
  editor       = {Olaf Beyersdorff and
                  Michal Pilipczuk and
                  Elaine Pimentel and
                  Kim Thang Nguyen},
  title        = {Residue Domination in Bounded-Treewidth Graphs},
  booktitle    = {42nd International Symposium on Theoretical Aspects of Computer Science,
                  {STACS} 2025, March 4-7, 2025, Jena, Germany},
  series       = {LIPIcs},
  volume       = {327},
  pages        = {41:1--41:20},
  publisher    = {Schloss Dagstuhl - Leibniz-Zentrum f{\"{u}}r Informatik},
  year         = {2025},
  url          = {https://doi.org/10.4230/LIPIcs.STACS.2025.41},
  doi          = {10.4230/LIPICS.STACS.2025.41},
  bibsource    = {dblp computer science bibliography, https://dblp.org}
}

@article{abs-2506-01645,
  author       = {Jakob Greilhuber and
                  D{\'{a}}niel Marx},
  title        = {The Price of Being Partial: Complexity of Partial Generalized Dominating
                  Set on Bounded-Treewidth Graphs},
  journal      = {CoRR},
  volume       = {abs/2506.01645},
  year         = {2025},
  url          = {https://doi.org/10.48550/arXiv.2506.01645},
  doi          = {10.48550/ARXIV.2506.01645},
  eprinttype    = {arXiv},
  eprint       = {2506.01645},
  bibsource    = {dblp computer science bibliography, https://dblp.org}
}

@article{FominGLS14,
  author    = {Fedor V. Fomin and
               Petr A. Golovach and
               Daniel Lokshtanov and
               Saket Saurabh},
  title     = {{Almost Optimal Lower Bounds for Problems Parameterized by Clique-Width}},
  journal   = {{SIAM} J. Comput.},
  volume    = {43},
  number    = {5},
  pages     = {1541--1563},
  year      = {2014},
  url       = {https://doi.org/10.1137/130910932},
  doi       = {10.1137/130910932},
  bibsource = {dblp computer science bibliography, https://dblp.org}
}

@inproceedings{ChekanK23,
  author       = {Vera Chekan and
                  Stefan Kratsch},
  editor       = {J{\'{e}}r{\^{o}}me Leroux and
                  Sylvain Lombardy and
                  David Peleg},
  title        = {{Tight Algorithmic Applications of Clique-Width Generalizations}},
  booktitle    = {48th International Symposium on Mathematical Foundations of Computer
                  Science, {MFCS} 2023, August 28 to September 1, 2023, Bordeaux, France},
  series       = {LIPIcs},
  volume       = {272},
  pages        = {35:1--35:15},
  publisher    = {Schloss Dagstuhl - Leibniz-Zentrum f{\"{u}}r Informatik},
  year         = {2023},
  url          = {https://doi.org/10.4230/LIPIcs.MFCS.2023.35},
  doi          = {10.4230/LIPICS.MFCS.2023.35},
  bibsource    = {dblp computer science bibliography, https://dblp.org}
}

@article{BergougnouxKK20,
  author       = {Benjamin Bergougnoux and
                  Mamadou Moustapha Kant{\'{e}} and
                  O{-}joung Kwon},
  title        = {An Optimal {XP} Algorithm for Hamiltonian Cycle on Graphs of Bounded
                  Clique-Width},
  journal      = {Algorithmica},
  volume       = {82},
  number       = {6},
  pages        = {1654--1674},
  year         = {2020},
  url          = {https://doi.org/10.1007/s00453-019-00663-9},
  doi          = {10.1007/S00453-019-00663-9},
  bibsource    = {dblp computer science bibliography, https://dblp.org}
}

@article{FominGLSZ19,
  author    = {Fedor V. Fomin and
               Petr A. Golovach and
               Daniel Lokshtanov and
               Saket Saurabh and
               Meirav Zehavi},
  title     = {Clique-width {III:} Hamiltonian Cycle and the Odd Case of Graph Coloring},
  journal   = {{ACM} Trans. Algorithms},
  volume    = {15},
  number    = {1},
  pages     = {9:1--9:27},
  year      = {2019},
  url       = {https://doi.org/10.1145/3280824},
  doi       = {10.1145/3280824},
  bibsource = {dblp computer science bibliography, https://dblp.org}
}

@article{LokshtanovMS18,
  author       = {Daniel Lokshtanov and
                  D{\'{a}}niel Marx and
                  Saket Saurabh},
  title        = {Known Algorithms on Graphs of Bounded Treewidth Are Probably Optimal},
  journal      = {{ACM} Trans. Algorithms},
  volume       = {14},
  number       = {2},
  pages        = {13:1--13:30},
  year         = {2018},
  url          = {https://doi.org/10.1145/3170442},
  doi          = {10.1145/3170442},
  bibsource    = {dblp computer science bibliography, https://dblp.org}
}

@inproceedings{Furer14,
  author    = {Martin F{\"{u}}rer},
  editor    = {Alberto Pardo and
               Alfredo Viola},
  title     = {A Natural Generalization of Bounded Tree-Width and Bounded Clique-Width},
  booktitle = {{LATIN} 2014: Theoretical Informatics - 11th Latin American Symposium,
               Montevideo, Uruguay, March 31 - April 4, 2014. Proceedings},
  series    = {Lecture Notes in Computer Science},
  volume    = {8392},
  pages     = {72--83},
  publisher = {Springer},
  year      = {2014},
  url       = {https://doi.org/10.1007/978-3-642-54423-1\_7},
  doi       = {10.1007/978-3-642-54423-1\_7},
  bibsource = {dblp computer science bibliography, https://dblp.org}
}

@inproceedings{Furer17,
  author    = {Martin F{\"{u}}rer},
  editor    = {Christos H. Papadimitriou},
  title     = {Multi-Clique-Width},
  booktitle = {8th Innovations in Theoretical Computer Science Conference, {ITCS}
               2017, January 9-11, 2017, Berkeley, CA, {USA}},
  series    = {LIPIcs},
  volume    = {67},
  pages     = {14:1--14:13},
  publisher = {Schloss Dagstuhl - Leibniz-Zentrum f{\"{u}}r Informatik},
  year      = {2017},
  url       = {https://doi.org/10.4230/LIPIcs.ITCS.2017.14},
  doi       = {10.4230/LIPIcs.ITCS.2017.14},
  bibsource = {dblp computer science bibliography, https://dblp.org}
}

@article{CyganKN18,
  author       = {Marek Cygan and
                  Stefan Kratsch and
                  Jesper Nederlof},
  title        = {Fast Hamiltonicity Checking Via Bases of Perfect Matchings},
  journal      = {J. {ACM}},
  volume       = {65},
  number       = {3},
  pages        = {12:1--12:46},
  year         = {2018},
  url          = {https://doi.org/10.1145/3148227},
  doi          = {10.1145/3148227},
  bibsource    = {dblp computer science bibliography, https://dblp.org}
}

@article{FockeMINSSW25,
  author       = {Jacob Focke and
                  D{\'{a}}niel Marx and
                  Fionn Mc Inerney and
                  Daniel Neuen and
                  Govind S. Sankar and
                  Philipp Schepper and
                  Philip Wellnitz},
  title        = {Tight Complexity Bounds for Counting Generalized Dominating Sets in
                  Bounded-Treewidth Graphs - Part {I:} Algorithmic Results},
  journal      = {{ACM} Trans. Algorithms},
  volume       = {21},
  number       = {3},
  pages        = {27:1--27:45},
  year         = {2025},
  url          = {https://doi.org/10.1145/3731452},
  doi          = {10.1145/3731452},
  bibsource    = {dblp computer science bibliography, https://dblp.org}
}

@article{toct/FockeMINSSW25,
  author       = {Jacob Focke and
                  D{\'{a}}niel Marx and
                  Fionn Mc Inerney and
                  Daniel Neuen and
                  Govind S. Sankar and
                  Philipp Schepper and
                  Philip Wellnitz},
  title        = {Tight Complexity Bounds for Counting Generalized Dominating Sets in
                  Bounded-Treewidth Graphs Part {II:} Hardness Results},
  journal      = {{ACM} Trans. Comput. Theory},
  volume       = {17},
  number       = {2},
  pages        = {10:1--10:101},
  year         = {2025},
  url          = {https://doi.org/10.1145/3708509},
  doi          = {10.1145/3708509},
  bibsource    = {dblp computer science bibliography, https://dblp.org}
}

@article{Lampis20,
  author    = {Michael Lampis},
  title     = {Finer Tight Bounds for Coloring on Clique-Width},
  journal   = {{SIAM} J. Discret. Math.},
  volume    = {34},
  number    = {3},
  pages     = {1538--1558},
  year      = {2020},
  url       = {https://doi.org/10.1137/19M1280326},
  doi       = {10.1137/19M1280326},
  bibsource = {dblp computer science bibliography, https://dblp.org}
}

@inproceedings{BodlaenderLRV10,
  author       = {Hans L. Bodlaender and
                  Erik Jan van Leeuwen and
                  Johan M. M. van Rooij and
                  Martin Vatshelle},
  editor       = {Petr Hlinen{\'{y}} and
                  Anton{\'{\i}}n Kucera},
  title        = {Faster Algorithms on Branch and Clique Decompositions},
  booktitle    = {Mathematical Foundations of Computer Science 2010, 35th International
                  Symposium, {MFCS} 2010, Brno, Czech Republic, August 23-27, 2010.
                  Proceedings},
  series       = {Lecture Notes in Computer Science},
  volume       = {6281},
  pages        = {174--185},
  publisher    = {Springer},
  year         = {2010},
  url          = {https://doi.org/10.1007/978-3-642-15155-2\_17},
  doi          = {10.1007/978-3-642-15155-2\_17},
  bibsource    = {dblp computer science bibliography, https://dblp.org}
}

@article{KoblerR03,
  author    = {Daniel Kobler and
               Udi Rotics},
  title     = {Edge dominating set and colorings on graphs with fixed clique-width},
  journal   = {Discret. Appl. Math.},
  volume    = {126},
  number    = {2-3},
  pages     = {197--221},
  year      = {2003},
  url       = {https://doi.org/10.1016/S0166-218X(02)00198-1},
  doi       = {10.1016/S0166-218X(02)00198-1},
  bibsource = {dblp computer science bibliography, https://dblp.org}
}

@inproceedings{HegerfeldK23,
  author       = {Falko Hegerfeld and
                  Stefan Kratsch},
  editor       = {Inge Li G{\o}rtz and
                  Martin Farach{-}Colton and
                  Simon J. Puglisi and
                  Grzegorz Herman},
  title        = {Tight Algorithms for Connectivity Problems Parameterized by Clique-Width},
  booktitle    = {31st Annual European Symposium on Algorithms, {ESA} 2023, Amsterdam,
                  The Netherlands, September 4-6, 2023},
  series       = {LIPIcs},
  volume       = {274},
  pages        = {59:1--59:19},
  publisher    = {Schloss Dagstuhl - Leibniz-Zentrum f{\"{u}}r Informatik},
  year         = {2023},
  url          = {https://doi.org/10.4230/LIPIcs.ESA.2023.59},
  doi          = {10.4230/LIPICS.ESA.2023.59},
  bibsource    = {dblp computer science bibliography, https://dblp.org}
}

@article{CorneilR05,
  author       = {Derek G. Corneil and
                  Udi Rotics},
  title        = {On the Relationship Between Clique-Width and Treewidth},
  journal      = {{SIAM} J. Comput.},
  volume       = {34},
  number       = {4},
  pages        = {825--847},
  year         = {2005},
  url          = {https://doi.org/10.1137/S0097539701385351},
  doi          = {10.1137/S0097539701385351},
  bibsource    = {dblp computer science bibliography, https://dblp.org}
}

@inproceedings{BojikianK24,
  author       = {Narek Bojikian and
                  Stefan Kratsch},
  editor       = {Karl Bringmann and
                  Martin Grohe and
                  Gabriele Puppis and
                  Ola Svensson},
  title        = {A Tight Monte-Carlo Algorithm for Steiner Tree Parameterized by Clique-Width},
  booktitle    = {51st International Colloquium on Automata, Languages, and Programming,
                  {ICALP} 2024, Tallinn, Estonia, July 8-12, 2024},
  series       = {LIPIcs},
  volume       = {297},
  pages        = {29:1--29:18},
  publisher    = {Schloss Dagstuhl - Leibniz-Zentrum f{\"{u}}r Informatik},
  year         = {2024},
  url          = {https://doi.org/10.4230/LIPIcs.ICALP.2024.29},
  doi          = {10.4230/LIPICS.ICALP.2024.29},
  bibsource    = {dblp computer science bibliography, https://dblp.org}
}

@inproceedings{BojikianK25,
  author       = {Narek Bojikian and
                  Stefan Kratsch},
  editor       = {Akanksha Agrawal and
                  Erik Jan van Leeuwen},
  title        = {Tight Bounds for Connected Odd Cycle Transversal Parameterized by
                  Clique-Width},
  booktitle    = {20th International Symposium on Parameterized and Exact Computation,
                  {IPEC} 2025, Warsaw, Poland, September 17-19, 2025},
  series       = {LIPIcs},
  volume       = {358},
  pages        = {19:1--19:15},
  publisher    = {Schloss Dagstuhl - Leibniz-Zentrum f{\"{u}}r Informatik},
  year         = {2025},
  url          = {https://doi.org/10.4230/LIPIcs.IPEC.2025.19},
  doi          = {10.4230/LIPICS.IPEC.2025.19},
  bibsource    = {dblp computer science bibliography, https://dblp.org}
}

@article{abs-2307-04628,
  author       = {Vera Chekan and
                  Stefan Kratsch},
  title        = {Tight Algorithmic Applications of Clique-Width Generalizations},
  journal      = {CoRR},
  volume       = {abs/2307.04628},
  year         = {2023},
  url          = {https://doi.org/10.48550/arXiv.2307.04628},
  doi          = {10.48550/ARXIV.2307.04628},
  eprinttype    = {arXiv},
  eprint       = {2307.04628},
  bibsource    = {dblp computer science bibliography, https://dblp.org}
}

@article{CyganNPPRW22,
  author       = {Marek Cygan and
                  Jesper Nederlof and
                  Marcin Pilipczuk and
                  Micha{\l} Pilipczuk and
                  Johan M. M. van Rooij and
                  Jakub Onufry Wojtaszczyk},
  title        = {Solving Connectivity Problems Parameterized by Treewidth in Single
                  Exponential Time},
  journal      = {{ACM} Trans. Algorithms},
  volume       = {18},
  number       = {2},
  pages        = {17:1--17:31},
  year         = {2022},
  url          = {https://doi.org/10.1145/3506707},
  doi          = {10.1145/3506707},
  bibsource    = {dblp computer science bibliography, https://dblp.org}
}

@inproceedings{RooijBR09,
  author       = {Johan M. M. van Rooij and
                  Hans L. Bodlaender and
                  Peter Rossmanith},
  editor       = {Amos Fiat and
                  Peter Sanders},
  title        = {Dynamic Programming on Tree Decompositions Using Generalised Fast
                  Subset Convolution},
  booktitle    = {Algorithms - {ESA} 2009, 17th Annual European Symposium, Copenhagen,
                  Denmark, September 7-9, 2009. Proceedings},
  series       = {Lecture Notes in Computer Science},
  volume       = {5757},
  pages        = {566--577},
  publisher    = {Springer},
  year         = {2009},
  url          = {https://doi.org/10.1007/978-3-642-04128-0\_51},
  doi          = {10.1007/978-3-642-04128-0\_51},
  bibsource    = {dblp computer science bibliography, https://dblp.org}
}

@article{LokshtanovMS11,
  author       = {Daniel Lokshtanov and
                  D{\'{a}}niel Marx and
                  Saket Saurabh},
  title        = {Lower bounds based on the Exponential Time Hypothesis},
  journal      = {Bull. {EATCS}},
  volume       = {105},
  pages        = {41--72},
  year         = {2011},
  url          = {https://sites.cs.ucsb.edu/~daniello/papers/surveyETH.pdf},
  bibsource    = {dblp computer science bibliography, https://dblp.org}
}

@inproceedings{Bonnet25,
  author       = {{\'{E}}douard Bonnet},
  editor       = {Michal Kouck{\'{y}} and
                  Nikhil Bansal},
  title        = {Treewidth Inapproximability and Tight {ETH} Lower Bound},
  booktitle    = {Proceedings of the 57th Annual {ACM} Symposium on Theory of Computing,
                  {STOC} 2025, Prague, Czechia, June 23-27, 2025},
  pages        = {2130--2135},
  publisher    = {{ACM}},
  year         = {2025},
  url          = {https://doi.org/10.1145/3717823.3718117},
  doi          = {10.1145/3717823.3718117},
  bibsource    = {dblp computer science bibliography, https://dblp.org}
}

@inproceedings{FoucaudGK0IST24,
  author       = {Florent Foucaud and
                  Esther Galby and
                  Liana Khazaliya and
                  Shaohua Li and
                  Fionn Mc Inerney and
                  Roohani Sharma and
                  Prafullkumar Tale},
  editor       = {Karl Bringmann and
                  Martin Grohe and
                  Gabriele Puppis and
                  Ola Svensson},
  title        = {Problems in {NP} Can Admit Double-Exponential Lower Bounds When Parameterized
                  by Treewidth or Vertex Cover},
  booktitle    = {51st International Colloquium on Automata, Languages, and Programming,
                  {ICALP} 2024, Tallinn, Estonia, July 8-12, 2024},
  series       = {LIPIcs},
  volume       = {297},
  pages        = {66:1--66:19},
  publisher    = {Schloss Dagstuhl - Leibniz-Zentrum f{\"{u}}r Informatik},
  year         = {2024},
  url          = {https://doi.org/10.4230/LIPIcs.ICALP.2024.66},
  doi          = {10.4230/LIPICS.ICALP.2024.66},
  bibsource    = {dblp computer science bibliography, https://dblp.org}
}

@article{LokshtanovMS18slighly,
  author       = {Daniel Lokshtanov and
                  D{\'{a}}niel Marx and
                  Saket Saurabh},
  title        = {Slightly Superexponential Parameterized Problems},
  journal      = {{SIAM} J. Comput.},
  volume       = {47},
  number       = {3},
  pages        = {675--702},
  year         = {2018},
  url          = {https://doi.org/10.1137/16M1104834},
  doi          = {10.1137/16M1104834},
  bibsource    = {dblp computer science bibliography, https://dblp.org}
}

@book{0086373,
  author       = {Michael Sipser},
  title        = {Introduction to the theory of computation},
  publisher    = {{PWS} Publishing Company},
  year         = {1997},
  isbn         = {978-0-534-94728-6},
  bibsource    = {dblp computer science bibliography, https://dblp.org}
}

@article{FellowsFLRSST11,
  author       = {Michael R. Fellows and
                  Fedor V. Fomin and
                  Daniel Lokshtanov and
                  Frances A. Rosamond and
                  Saket Saurabh and
                  Stefan Szeider and
                  Carsten Thomassen},
  title        = {On the complexity of some colorful problems parameterized by treewidth},
  journal      = {Inf. Comput.},
  volume       = {209},
  number       = {2},
  pages        = {143--153},
  year         = {2011},
  url          = {https://doi.org/10.1016/j.ic.2010.11.026},
  doi          = {10.1016/J.IC.2010.11.026},
  bibsource    = {dblp computer science bibliography, https://dblp.org}
}

@inproceedings{HegerfeldK23-mtw,
  author       = {Falko Hegerfeld and
                  Stefan Kratsch},
  editor       = {Dani{\"{e}}l Paulusma and
                  Bernard Ries},
  title        = {Tight Algorithms for Connectivity Problems Parameterized by Modular-Treewidth},
  booktitle    = {Graph-Theoretic Concepts in Computer Science - 49th International
                  Workshop, {WG} 2023, Fribourg, Switzerland, June 28-30, 2023, Revised
                  Selected Papers},
  series       = {Lecture Notes in Computer Science},
  volume       = {14093},
  pages        = {388--402},
  publisher    = {Springer},
  year         = {2023},
  url          = {https://doi.org/10.1007/978-3-031-43380-1\_28},
  doi          = {10.1007/978-3-031-43380-1\_28},
  bibsource    = {dblp computer science bibliography, https://dblp.org}
}

@inproceedings{BojikianCHK23,
  author       = {Narek Bojikian and
                  Vera Chekan and
                  Falko Hegerfeld and
                  Stefan Kratsch},
  editor       = {Petra Berenbrink and
                  Patricia Bouyer and
                  Anuj Dawar and
                  Mamadou Moustapha Kant{\'{e}}},
  title        = {Tight Bounds for Connectivity Problems Parameterized by Cutwidth},
  booktitle    = {40th International Symposium on Theoretical Aspects of Computer Science,
                  {STACS} 2023, Hamburg, Germany, March 7-9, 2023},
  series       = {LIPIcs},
  volume       = {254},
  pages        = {14:1--14:16},
  publisher    = {Schloss Dagstuhl - Leibniz-Zentrum f{\"{u}}r Informatik},
  year         = {2023},
  url          = {https://doi.org/10.4230/LIPIcs.STACS.2023.14},
  doi          = {10.4230/LIPICS.STACS.2023.14},
  bibsource    = {dblp computer science bibliography, https://dblp.org}
}

@article{GeffenJKM20,
  author       = {Bas A. M. van Geffen and
                  Bart M. P. Jansen and
                  Arnoud A. W. M. de Kroon and
                  Rolf Morel},
  title        = {Lower Bounds for Dynamic Programming on Planar Graphs of Bounded Cutwidth},
  journal      = {J. Graph Algorithms Appl.},
  volume       = {24},
  number       = {3},
  pages        = {461--482},
  year         = {2020},
  url          = {https://doi.org/10.7155/jgaa.00542},
  doi          = {10.7155/JGAA.00542},
  bibsource    = {dblp computer science bibliography, https://dblp.org}
}

@article{JansenN19,
  author       = {Bart M. P. Jansen and
                  Jesper Nederlof},
  title        = {Computing the chromatic number using graph decompositions via matrix
                  rank},
  journal      = {Theor. Comput. Sci.},
  volume       = {795},
  pages        = {520--539},
  year         = {2019},
  url          = {https://doi.org/10.1016/j.tcs.2019.08.006},
  doi          = {10.1016/J.TCS.2019.08.006},
  bibsource    = {dblp computer science bibliography, https://dblp.org}
}

@inproceedings{GroenlandMNS22,
  author       = {Carla Groenland and
                  Isja Mannens and
                  Jesper Nederlof and
                  Krisztina Szil{\'{a}}gyi},
  editor       = {Petra Berenbrink and
                  Benjamin Monmege},
  title        = {Tight Bounds for Counting Colorings and Connected Edge Sets Parameterized
                  by Cutwidth},
  booktitle    = {39th International Symposium on Theoretical Aspects of Computer Science,
                  {STACS} 2022, Marseille, France (Virtual Conference), March 15-18,
                  2022},
  series       = {LIPIcs},
  volume       = {219},
  pages        = {36:1--36:20},
  publisher    = {Schloss Dagstuhl - Leibniz-Zentrum f{\"{u}}r Informatik},
  year         = {2022},
  url          = {https://doi.org/10.4230/LIPIcs.STACS.2022.36},
  doi          = {10.4230/LIPICS.STACS.2022.36},
  bibsource    = {dblp computer science bibliography, https://dblp.org}
}

@inproceedings{BojikianCK25,
  author       = {Narek Bojikian and
                  Vera Chekan and
                  Stefan Kratsch},
  editor       = {Anne Benoit and
                  Haim Kaplan and
                  Sebastian Wild and
                  Grzegorz Herman},
  title        = {Tight Bounds for Some Classical Problems Parameterized by Cutwidth},
  booktitle    = {33rd Annual European Symposium on Algorithms, {ESA} 2025, Warsaw,
                  Poland, September 15-17, 2025},
  series       = {LIPIcs},
  volume       = {351},
  pages        = {13:1--13:17},
  publisher    = {Schloss Dagstuhl - Leibniz-Zentrum f{\"{u}}r Informatik},
  year         = {2025},
  url          = {https://doi.org/10.4230/LIPIcs.ESA.2025.13},
  doi          = {10.4230/LIPICS.ESA.2025.13},
  bibsource    = {dblp computer science bibliography, https://dblp.org}
}

@article{LampisMV25,
  author       = {Michael Lampis and
                  Nikolaos Melissinos and
                  Manolis Vasilakis},
  title        = {Parameterized Max Min Feedback Vertex Set},
  journal      = {{SIAM} J. Discret. Math.},
  volume       = {39},
  number       = {3},
  pages        = {1587--1620},
  year         = {2025},
  url          = {https://doi.org/10.1137/23m1605247},
  doi          = {10.1137/23M1605247},
  bibsource    = {dblp computer science bibliography, https://dblp.org}
}

@article{LampisV24,
  author       = {Michael Lampis and
                  Manolis Vasilakis},
  title        = {Structural Parameterizations for Two Bounded Degree Problems Revisited},
  journal      = {{ACM} Trans. Comput. Theory},
  volume       = {16},
  number       = {3},
  pages        = {17:1--17:51},
  year         = {2024},
  url          = {https://doi.org/10.1145/3665156},
  doi          = {10.1145/3665156},
  bibsource    = {dblp computer science bibliography, https://dblp.org}
}

@inproceedings{LampisV25,
  author       = {Michael Lampis and
                  Manolis Vasilakis},
  editor       = {Henning Fernau and
                  Philipp Kindermann},
  title        = {Structural Parameterizations for Induced and Acyclic Matching},
  booktitle    = {Graph-Theoretic Concepts in Computer Science - 51st International
                  Workshop, {WG} 2025, Otzenhausen, Germany, June 11-13, 2025, Revised
                  Selected Papers},
  series       = {Lecture Notes in Computer Science},
  volume       = {16124},
  pages        = {360--376},
  publisher    = {Springer},
  year         = {2025},
  url          = {https://doi.org/10.1007/978-3-032-11835-6\_26},
  doi          = {10.1007/978-3-032-11835-6\_26},
  bibsource    = {dblp computer science bibliography, https://dblp.org}
}

@inproceedings{LampisV25-node,
  author       = {Michael Lampis and
                  Manolis Vasilakis},
  editor       = {Akanksha Agrawal and
                  Erik Jan van Leeuwen},
  title        = {Parameterized Maximum Node-Disjoint Paths},
  booktitle    = {20th International Symposium on Parameterized and Exact Computation,
                  {IPEC} 2025, Warsaw, Poland, September 17-19, 2025},
  series       = {LIPIcs},
  volume       = {358},
  pages        = {3:1--3:15},
  publisher    = {Schloss Dagstuhl - Leibniz-Zentrum f{\"{u}}r Informatik},
  year         = {2025},
  url          = {https://doi.org/10.4230/LIPIcs.IPEC.2025.3},
  doi          = {10.4230/LIPICS.IPEC.2025.3},
  bibsource    = {dblp computer science bibliography, https://dblp.org}
}

@inproceedings{HanakaLMN0V25-steiner,
  author       = {Tesshu Hanaka and
                  Michael Lampis and
                  Nikolaos Melissinos and
                  Edouard Nemery and
                  Hirotaka Ono and
                  Manolis Vasilakis},
  editor       = {Ho{-}Lin Chen and
                  Wing{-}Kai Hon and
                  Meng{-}Tsung Tsai},
  title        = {Structural Parameters for Steiner Orientation},
  booktitle    = {36th International Symposium on Algorithms and Computation, {ISAAC}
                  2025, Tainan, Taiwan, December 7-10, 2025},
  series       = {LIPIcs},
  volume       = {359},
  pages        = {38:1--38:14},
  publisher    = {Schloss Dagstuhl - Leibniz-Zentrum f{\"{u}}r Informatik},
  year         = {2025},
  url          = {https://doi.org/10.4230/LIPIcs.ISAAC.2025.38},
  doi          = {10.4230/LIPICS.ISAAC.2025.38},
  bibsource    = {dblp computer science bibliography, https://dblp.org}
}

@inproceedings{Lampis25-pw-seth,
  author       = {Michael Lampis},
  editor       = {Yossi Azar and
                  Debmalya Panigrahi},
  title        = {The Primal Pathwidth {SETH}},
  booktitle    = {Proceedings of the 2025 Annual {ACM-SIAM} Symposium on Discrete Algorithms,
                  {SODA} 2025, New Orleans, LA, USA, January 12-15, 2025},
  pages        = {1494--1564},
  publisher    = {{SIAM}},
  year         = {2025},
  url          = {https://doi.org/10.1137/1.9781611978322.47},
  doi          = {10.1137/1.9781611978322.47},
  bibsource    = {dblp computer science bibliography, https://dblp.org}
}

@inproceedings{Lampis26,
  author       = {Michael Lampis},
  editor       = {Kasper Green Larsen and
                  Barna Saha},
  title        = {{k-SUM Hardness Implies Treewidth-SETH}},
  booktitle    = {Proceedings of the 2026 Annual {ACM-SIAM} Symposium on Discrete Algorithms,
                  {SODA} 2026, Vancouver, BC, Canada, January 11-14, 2026},
  pages        = {1916--1944},
  publisher    = {{SIAM}},
  year         = {2026},
  url          = {https://doi.org/10.1137/1.9781611978971.70},
  doi          = {10.1137/1.9781611978971.70},
  bibsource    = {dblp computer science bibliography, https://dblp.org}
}

@inproceedings{Lampis26a,
  author       = {Michael Lampis},
  editor       = {Kasper Green Larsen and
                  Barna Saha},
  title        = {{Circuits and Backdoors: Five Shades of the {SETH}}},
  booktitle    = {Proceedings of the 2026 Annual {ACM-SIAM} Symposium on Discrete Algorithms,
                  {SODA} 2026, Vancouver, BC, Canada, January 11-14, 2026},
  pages        = {1945--2001},
  publisher    = {{SIAM}},
  year         = {2026},
  url          = {https://doi.org/10.1137/1.9781611978971.71},
  doi          = {10.1137/1.9781611978971.71},
  bibsource    = {dblp computer science bibliography, https://dblp.org}
}

@inproceedings{DumontLL0T25,
  author       = {Joanne Dumont and
                  Michael Lampis and
                  Mathieu Liedloff and
                  Anthony Perez and
                  Ioan Todinca},
  editor       = {Akanksha Agrawal and
                  Erik Jan van Leeuwen},
  title        = {On Maximum 2-Clubs},
  booktitle    = {20th International Symposium on Parameterized and Exact Computation,
                  {IPEC} 2025, Warsaw, Poland, September 17-19, 2025},
  series       = {LIPIcs},
  volume       = {358},
  pages        = {13:1--13:14},
  publisher    = {Schloss Dagstuhl - Leibniz-Zentrum f{\"{u}}r Informatik},
  year         = {2025},
  url          = {https://doi.org/10.4230/LIPIcs.IPEC.2025.13},
  doi          = {10.4230/LIPICS.IPEC.2025.13},
  bibsource    = {dblp computer science bibliography, https://dblp.org}
}

@article{FeldmannL25,
  author       = {Andreas Emil Feldmann and
                  Michael Lampis},
  title        = {Parameterized Algorithms for Steiner Forest in Bounded Width Graphs},
  journal      = {{ACM} Trans. Algorithms},
  volume       = {21},
  number       = {4},
  pages        = {47:1--47:26},
  year         = {2025},
  url          = {https://doi.org/10.1145/3748724},
  doi          = {10.1145/3748724},
  bibsource    = {dblp computer science bibliography, https://dblp.org}
}

@article{HanakaKL26,
  author       = {Tesshu Hanaka and
                  Noleen K{\"{o}}hler and
                  Michael Lampis},
  title        = {Core stability in additively separable hedonic games of low treewidth},
  journal      = {J. Comput. Syst. Sci.},
  volume       = {157},
  pages        = {103748},
  year         = {2026},
  url          = {https://doi.org/10.1016/j.jcss.2025.103748},
  doi          = {10.1016/J.JCSS.2025.103748},
  bibsource    = {dblp computer science bibliography, https://dblp.org}
}

@inproceedings{BliznetsH24,
  author       = {Ivan Bliznets and
                  Markus Hecher},
  editor       = {Xujin Chen and
                  Bo Li},
  title        = {Tight Double Exponential Lower Bounds},
  booktitle    = {Theory and Applications of Models of Computation - 18th Annual Conference,
                  {TAMC} 2024, Hong Kong, China, May 13-15, 2024, Proceedings},
  series       = {Lecture Notes in Computer Science},
  volume       = {14637},
  pages        = {124--136},
  publisher    = {Springer},
  year         = {2024},
  url          = {https://doi.org/10.1007/978-981-97-2340-9\_11},
  doi          = {10.1007/978-981-97-2340-9\_11},
  bibsource    = {dblp computer science bibliography, https://dblp.org}
}

@inproceedings{LampisM17,
  author       = {Michael Lampis and
                  Valia Mitsou},
  editor       = {Daniel Lokshtanov and
                  Naomi Nishimura},
  title        = {Treewidth with a Quantifier Alternation Revisited},
  booktitle    = {12th International Symposium on Parameterized and Exact Computation,
                  {IPEC} 2017, Vienna, Austria, September 6-8, 2017},
  series       = {LIPIcs},
  volume       = {89},
  pages        = {26:1--26:12},
  publisher    = {Schloss Dagstuhl - Leibniz-Zentrum f{\"{u}}r Informatik},
  year         = {2017},
  url          = {https://doi.org/10.4230/LIPIcs.IPEC.2017.26},
  doi          = {10.4230/LIPICS.IPEC.2017.26},
  bibsource    = {dblp computer science bibliography, https://dblp.org}
}

@inproceedings{MarxM16,
  author       = {D{\'{a}}niel Marx and
                  Valia Mitsou},
  editor       = {Ioannis Chatzigiannakis and
                  Michael Mitzenmacher and
                  Yuval Rabani and
                  Davide Sangiorgi},
  title        = {Double-Exponential and Triple-Exponential Bounds for Choosability
                  Problems Parameterized by Treewidth},
  booktitle    = {43rd International Colloquium on Automata, Languages, and Programming,
                  {ICALP} 2016, Rome, Italy, July 11-15, 2016},
  series       = {LIPIcs},
  volume       = {55},
  pages        = {28:1--28:15},
  publisher    = {Schloss Dagstuhl - Leibniz-Zentrum f{\"{u}}r Informatik},
  year         = {2016},
  url          = {https://doi.org/10.4230/LIPIcs.ICALP.2016.28},
  doi          = {10.4230/LIPICS.ICALP.2016.28},
  bibsource    = {dblp computer science bibliography, https://dblp.org}
}

@inproceedings{HanakaL22,
  author       = {Tesshu Hanaka and
                  Michael Lampis},
  editor       = {Shiri Chechik and
                  Gonzalo Navarro and
                  Eva Rotenberg and
                  Grzegorz Herman},
  title        = {Hedonic Games and Treewidth Revisited},
  booktitle    = {30th Annual European Symposium on Algorithms, {ESA} 2022, Berlin/Potsdam,
                  Germany, September 5-9, 2022},
  series       = {LIPIcs},
  volume       = {244},
  pages        = {64:1--64:16},
  publisher    = {Schloss Dagstuhl - Leibniz-Zentrum f{\"{u}}r Informatik},
  year         = {2022},
  url          = {https://doi.org/10.4230/LIPIcs.ESA.2022.64},
  doi          = {10.4230/LIPICS.ESA.2022.64},
  bibsource    = {dblp computer science bibliography, https://dblp.org}
}

@article{DubloisLP22,
  author       = {Louis Dublois and
                  Michael Lampis and
                  Vangelis Th. Paschos},
  title        = {Upper Dominating Set: Tight algorithms for pathwidth and sub-exponential
                  approximation},
  journal      = {Theor. Comput. Sci.},
  volume       = {923},
  pages        = {271--291},
  year         = {2022},
  url          = {https://doi.org/10.1016/j.tcs.2022.05.013},
  doi          = {10.1016/J.TCS.2022.05.013},
  bibsource    = {dblp computer science bibliography, https://dblp.org}
}

@article{LampisM24,
  author       = {Michael Lampis and
                  Valia Mitsou},
  title        = {Fine-grained Meta-Theorems for Vertex Integrity},
  journal      = {Log. Methods Comput. Sci.},
  volume       = {20},
  number       = {4},
  year         = {2024},
  url          = {https://doi.org/10.46298/lmcs-20(4:18)2024},
  doi          = {10.46298/LMCS-20(4:18)2024},
  bibsource    = {dblp computer science bibliography, https://dblp.org}
}

@inproceedings{MarxSS21,
  author       = {D{\'{a}}niel Marx and
                  Govind S. Sankar and
                  Philipp Schepper},
  editor       = {Nikhil Bansal and
                  Emanuela Merelli and
                  James Worrell},
  title        = {Degrees and Gaps: Tight Complexity Results of General Factor Problems
                  Parameterized by Treewidth and Cutwidth},
  booktitle    = {48th International Colloquium on Automata, Languages, and Programming,
                  {ICALP} 2021, Glasgow, Scotland (Virtual Conference), July 12-16,
                  2021},
  series       = {LIPIcs},
  volume       = {198},
  pages        = {95:1--95:20},
  publisher    = {Schloss Dagstuhl - Leibniz-Zentrum f{\"{u}}r Informatik},
  year         = {2021},
  url          = {https://doi.org/10.4230/LIPIcs.ICALP.2021.95},
  doi          = {10.4230/LIPICS.ICALP.2021.95},
  bibsource    = {dblp computer science bibliography, https://dblp.org}
}

@inproceedings{EsmerM25,
  author       = {Baris Can Esmer and
                  D{\'{a}}niel Marx},
  editor       = {Anne Benoit and
                  Haim Kaplan and
                  Sebastian Wild and
                  Grzegorz Herman},
  title        = {Generalized Graph Packing Problems Parameterized by Treewidth},
  booktitle    = {33rd Annual European Symposium on Algorithms, {ESA} 2025, Warsaw,
                  Poland, September 15-17, 2025},
  series       = {LIPIcs},
  volume       = {351},
  pages        = {3:1--3:15},
  publisher    = {Schloss Dagstuhl - Leibniz-Zentrum f{\"{u}}r Informatik},
  year         = {2025},
  url          = {https://doi.org/10.4230/LIPIcs.ESA.2025.3},
  doi          = {10.4230/LIPICS.ESA.2025.3},
  bibsource    = {dblp computer science bibliography, https://dblp.org}
}

@inproceedings{EsmerFMR24,
  author       = {Baris Can Esmer and
                  Jacob Focke and
                  D{\'{a}}niel Marx and
                  Pawel Rzazewski},
  editor       = {Timothy M. Chan and
                  Johannes Fischer and
                  John Iacono and
                  Grzegorz Herman},
  title        = {List Homomorphisms by Deleting Edges and Vertices: Tight Complexity
                  Bounds for Bounded-Treewidth Graphs},
  booktitle    = {32nd Annual European Symposium on Algorithms, {ESA} 2024, Royal Holloway,
                  London, United Kingdom, September 2-4, 2024},
  series       = {LIPIcs},
  volume       = {308},
  pages        = {39:1--39:20},
  publisher    = {Schloss Dagstuhl - Leibniz-Zentrum f{\"{u}}r Informatik},
  year         = {2024},
  url          = {https://doi.org/10.4230/LIPIcs.ESA.2024.39},
  doi          = {10.4230/LIPICS.ESA.2024.39},
  bibsource    = {dblp computer science bibliography, https://dblp.org}
}

@inproceedings{HegerfeldK22,
  author       = {Falko Hegerfeld and
                  Stefan Kratsch},
  editor       = {Holger Dell and
                  Jesper Nederlof},
  title        = {Towards Exact Structural Thresholds for Parameterized Complexity},
  booktitle    = {17th International Symposium on Parameterized and Exact Computation,
                  {IPEC} 2022, Potsdam, Germany, September 7-9, 2022},
  series       = {LIPIcs},
  volume       = {249},
  pages        = {17:1--17:20},
  publisher    = {Schloss Dagstuhl - Leibniz-Zentrum f{\"{u}}r Informatik},
  year         = {2022},
  url          = {https://doi.org/10.4230/LIPIcs.IPEC.2022.17},
  doi          = {10.4230/LIPICS.IPEC.2022.17},
  bibsource    = {dblp computer science bibliography, https://dblp.org}
}

@inproceedings{EsmerFMR24-source,
  author       = {Baris Can Esmer and
                  Jacob Focke and
                  D{\'{a}}niel Marx and
                  Pawel Rzazewski},
  editor       = {Karl Bringmann and
                  Martin Grohe and
                  Gabriele Puppis and
                  Ola Svensson},
  title        = {Fundamental Problems on Bounded-Treewidth Graphs: The Real Source
                  of Hardness},
  booktitle    = {51st International Colloquium on Automata, Languages, and Programming,
                  {ICALP} 2024, Tallinn, Estonia, July 8-12, 2024},
  series       = {LIPIcs},
  volume       = {297},
  pages        = {34:1--34:17},
  publisher    = {Schloss Dagstuhl - Leibniz-Zentrum f{\"{u}}r Informatik},
  year         = {2024},
  url          = {https://doi.org/10.4230/LIPIcs.ICALP.2024.34},
  doi          = {10.4230/LIPICS.ICALP.2024.34},
  bibsource    = {dblp computer science bibliography, https://dblp.org}
}

@inproceedings{CurticapeanM16,
  author       = {Radu Curticapean and
                  D{\'{a}}niel Marx},
  editor       = {Robert Krauthgamer},
  title        = {Tight conditional lower bounds for counting perfect matchings on graphs
                  of bounded treewidth, cliquewidth, and genus},
  booktitle    = {Proceedings of the Twenty-Seventh Annual {ACM-SIAM} Symposium on Discrete
                  Algorithms, {SODA} 2016, Arlington, VA, USA, January 10-12, 2016},
  pages        = {1650--1669},
  publisher    = {{SIAM}},
  year         = {2016},
  url          = {https://doi.org/10.1137/1.9781611974331.ch113},
  doi          = {10.1137/1.9781611974331.CH113},
  bibsource    = {dblp computer science bibliography, https://dblp.org}
}

@inproceedings{GroenlandMNPR24,
  author       = {Carla Groenland and
                  Isja Mannens and
                  Jesper Nederlof and
                  Marta Piecyk and
                  Pawel Rzazewski},
  editor       = {Karl Bringmann and
                  Martin Grohe and
                  Gabriele Puppis and
                  Ola Svensson},
  title        = {Towards Tight Bounds for the Graph Homomorphism Problem Parameterized
                  by Cutwidth via Asymptotic Matrix Parameters},
  booktitle    = {51st International Colloquium on Automata, Languages, and Programming,
                  {ICALP} 2024, Tallinn, Estonia, July 8-12, 2024},
  series       = {LIPIcs},
  volume       = {297},
  pages        = {77:1--77:21},
  publisher    = {Schloss Dagstuhl - Leibniz-Zentrum f{\"{u}}r Informatik},
  year         = {2024},
  url          = {https://doi.org/10.4230/LIPIcs.ICALP.2024.77},
  doi          = {10.4230/LIPICS.ICALP.2024.77},
  bibsource    = {dblp computer science bibliography, https://dblp.org}
}

@article{GanianHKOS24,
  author       = {Robert Ganian and
                  Thekla Hamm and
                  Viktoriia Korchemna and
                  Karolina Okrasa and
                  Kirill Simonov},
  title        = {The Fine-Grained Complexity of Graph Homomorphism Parameterized by
                  Clique-Width},
  journal      = {{ACM} Trans. Algorithms},
  volume       = {20},
  number       = {3},
  pages        = {19},
  year         = {2024},
  url          = {https://doi.org/10.1145/3652514},
  doi          = {10.1145/3652514},
  bibsource    = {dblp computer science bibliography, https://dblp.org}
}

@inproceedings{PiecykR21,
  author       = {Marta Piecyk and
                  Pawel Rzazewski},
  editor       = {Markus Bl{\"{a}}ser and
                  Benjamin Monmege},
  title        = {Fine-Grained Complexity of the List Homomorphism Problem: Feedback
                  Vertex Set and Cutwidth},
  booktitle    = {38th International Symposium on Theoretical Aspects of Computer Science,
                  {STACS} 2021, Saarbr{\"{u}}cken, Germany (Virtual Conference),
                  March 16-19, 2021},
  series       = {LIPIcs},
  volume       = {187},
  pages        = {56:1--56:17},
  publisher    = {Schloss Dagstuhl - Leibniz-Zentrum f{\"{u}}r Informatik},
  year         = {2021},
  url          = {https://doi.org/10.4230/LIPIcs.STACS.2021.56},
  doi          = {10.4230/LIPICS.STACS.2021.56},
  bibsource    = {dblp computer science bibliography, https://dblp.org}
}

@article{FockeMR24,
  author       = {Jacob Focke and
                  D{\'{a}}niel Marx and
                  Pawel Rzazewski},
  title        = {Counting List Homomorphisms from Graphs of Bounded Treewidth: Tight
                  Complexity Bounds},
  journal      = {{ACM} Trans. Algorithms},
  volume       = {20},
  number       = {2},
  pages        = {11},
  year         = {2024},
  url          = {https://doi.org/10.1145/3640814},
  doi          = {10.1145/3640814},
  bibsource    = {dblp computer science bibliography, https://dblp.org}
}

@article{JaffkeLL24,
  author       = {Lars Jaffke and
                  Paloma T. Lima and
                  Daniel Lokshtanov},
  title        = {b-Coloring Parameterized by Clique-Width},
  journal      = {Theory Comput. Syst.},
  volume       = {68},
  number       = {4},
  pages        = {1049--1081},
  year         = {2024},
  url          = {https://doi.org/10.1007/s00224-023-10132-0},
  doi          = {10.1007/S00224-023-10132-0},
  bibsource    = {dblp computer science bibliography, https://dblp.org}
}

\end{document}